\definecolor{darkred}{rgb}{0.55, 0.0, 0.0}
\definecolor{darkspringgreen}{rgb}{0.09, 0.45, 0.27}
\newcommand{\bflip}{\mathsf{bit\text{-}flip}}
\newcommand{\srank}{\mathsf{sgn\text{-}rank}}
\newcommand{\bb}{\mathbf{b}}
\newcommand{\strings}[2]{ 
	\genfrac{\{}{\}}{0pt}{}{#1}{#2} 
}
\newcommand{\sstrings}[2]{ 
	\genfrac{[}{]}{0pt}{}{#1}{#2} 
}
\theoremstyle{plain}
\newtheorem{thm}{Theorem}[section]  
\newtheorem{corr}[thm]{Corollary}
\newtheorem{prop}[thm]{Proposition}
\newtheorem{rem}[thm]{Remark}
\newtheorem{defn}[thm]{Definition}
\newtheorem{lem}[thm]{Lemma}
\title{Succinct Fermion Data Structures}
\author[1]{Joseph Carolan\thanks{jcarolan@umd.edu}}
\author[1,2]{Luke Schaeffer\thanks{lschaeffer@uwaterloo.ca}}
\affil[1]{University of Maryland, College Park}
\affil[2]{University of Waterloo}
\begin{document}

\maketitle
 
\begin{abstract}
    Simulating fermionic systems on a quantum computer requires representing fermionic states using qubits. The complexity of many simulation algorithms depends on the complexity of implementing rotations generated by fermionic creation-annihilation operators, and the space depends on the number of qubits used. While standard fermion encodings like Jordan-Wigner are space optimal for arbitrary fermionic systems, physical symmetries like particle conservation can reduce the number of physical configurations, allowing improved space complexity. Such space saving is only feasible if the gate overhead is small, suggesting a (quantum) data structures problem, wherein one would like to minimize space used to represent a fermionic state, while still enabling efficient rotations. 
    
    We define a structure which naturally captures mappings from fermions to systems of qubits. We then instantiate it in two ways, giving rise to two new second-quantized fermion encodings of $F$ fermions in $M$ modes. An information theoretic minimum of $\mathcal{I}:=\lceil\log_2 \binom{M}{F}\rceil$ qubits is required for such systems, a bound we nearly match over the entire parameter regime. \begin{enumerate}[label=(\arabic*)]
        \item Our first construction uses $\mathcal I + o(\mathcal I)$ qubits when $F=o(M)$, and allows rotations generated by creation-annihilation operators in $O(\mathcal I)$ gates and $O(\log M \log \log M)$ depth.
        \item Our second construction uses $\mathcal I + O(1)$ qubits when $F=\Theta(M)$, and allows rotations generated by creation-annihilation operators in $O(\mathcal I^3)$ gates.
    \end{enumerate}
  In relation to comparable prior work, the first represents a polynomial improvement in both space and gate complexity (against Kirby et al. 2022), and the second represents an exponential improvement in gate complexity at the cost of only a constant number of additional qubits (against Harrison et al. or Shee et al. 2022), in the described parameter regimes.
\end{abstract}
\pagebreak
\tableofcontents
\pagebreak

\section{Introduction}

\begin{table}
    \centering
    {
    \rowcolors{3}{black!05}{white}
    \begin{tabular}{|c|c|c|c|}
        \hline
        Encoding & Qubits & Gates \\\hline
        \rowcolor{black!20}
        \multicolumn{3}{|c|}{Regime $F=o(M)$ and $F=\omega(1)$.} \\\hline
        Optimal Degree\cite{KFHM22} & $\Omega(\mathcal I^2 \log^2  M)$ & $\Omega(\mathcal I^2 \log^3 M)$  \\
        Qubit Tapering/Segment\cite{BGMT17, SW18, S19} & $M-o(M)$ & $\Omega(F^2)$  \\
        Permutation Basis\cite{HNAW22} & $\mathcal{I}$ & $\Omega(M^2 2^{\mathcal I})$ \\
        Theorem \ref{thm:fewQubitsLowDepth} & $\mathcal{I} + o(\mathcal I)$ & $O(\mathcal{I})$  \\\hline
        \rowcolor{black!20}
        \multicolumn{3}{|c|}{Regime $F=\Theta(M)$} \\\hline

        Qubit Tapering/Segment\cite{BGMT17, SW18, S19} & $M-O(1)$ & $\Omega(\mathcal I^2)$  \\
        Permutation Basis\cite{HNAW22} & $\mathcal{I}$ & $\Omega(\mathcal I^2 2^{\mathcal I})$ \\
        Theorem \ref{thm:implicitEncoding} & $\mathcal{I} + O(1)$ & $O(\mathcal{I}^3)$  \\\hline
        \rowcolor{black!20}
        \multicolumn{3}{|c|}{General regime} \\\hline
        Theorem \ref{thm:fewQubitsLowDepth} & $\mathcal{I} + O(F)$ & $O(\mathcal{I})$  \\
        Theorem \ref{thm:implicitEncoding} & {$\mathcal{I} + O(\log(M/F))$} & {$O(MF\mathcal I)$}  \\\hline
    \end{tabular}
    }
    \caption{Comparison to prior space efficient second-quantized encodings for $F$ fermion, $M$ mode systems, adapted from \cite{KFHM22}. Gates refers to the complexity of implementing rotations generated by creation-annihilation operators. Here $\mathcal I:=\left\lceil \log \binom{M}{F}\right\rceil$, which satisfies $\mathcal I = \Theta\left(F \log \left(\frac{M}{F}\right)\right)$.}
    \label{tab:fewFermionEncTable}
\end{table}

The simulation of many interacting fermions is a promising application of quantum computers. Such systems quickly become difficult to simulate classically, with accurate simulation being computationally intractable for complex systems \cite{troyer05intractable}. A quantum computer can perform highly accurate simulations with only polynomial resources for local systems, a wide and powerful class beyond what can be simulated classically.

Many quantum algorithms for fermionic simulation require a mapping from fermionic Fock states and creation/annihilation operators to qubit states and operators. Two of the most well known such mappings are the ones due to Jordan-Wigner \cite{JW28} and Bravyi-Kitaev \cite{BK02}. These require $M$ qubits to represent a fermion system with $M$ modes, which is prohibitive in some cases. In particular, for particle preserving systems where the number of fermions $F$ is much smaller than $M$, these encodings have a large amount of redundancy. In this case, the information theoretic lower bound of \begin{align}
    \mathcal{I}:=\left\lceil \log \binom{M}{F} \right\rceil \label{eqn:I-defn}
\end{align} qubits is potentially much smaller than $M$.

Encodings which take advantage of this fact are relevant whenever the number of fermions is small compared to the number of modes, for instance in quantum chemistry simulations where the number of electrons ($F$) is fixed by a physical system yet the orbital basis size ($M$) should be as large as possible to correctly resolve the continuum behaviour. The discretization error in such systems scales like $1/M$ for reasonable bases \cite{SBWRB21}, so large $M$ is required for high accuracy---this is especially important when using basis sets that have not been classically optimized, such as the plane wave basis. This motivates developing simulation algorithms with as mild a dependence on $M$ as possible in both space and gate usage, and a key component of many simulation algorithms is the qubit-to-fermion mapping used. Utilizing as few qubits as possible is important both for near term and fault tolerant devices, due to their limited size and the expense of qubits. However, space savings that incur an exponential gate overhead are not scalable. We therefore seek to minimize space complexity while still allowing relevant operations to be performed efficiently.

\subsection{Prior work}
\label{subsec:prior-work}

There has been a line of work on second-quantized fermion encodings for saving space when the number of fermions is fixed. Bravyi et al.\ give a scheme for utilizing symmetries in particle preserving Hamiltonians to remove degrees of freedom, using parity symmetries to achieve $M-O(1)$ qubits generically and LDPC codes for few fermion systems to achieve $M-O(M/F)$ qubits \cite{BGMT17}, but with gate overhead $O(M^3)$ in the worst case\footnote{Gate complexity here refers to the complexity of performing a single $\exp(-i\theta K)$, for some angle $\theta$ and number-preserving product of $O(1)$ many creation/annihilation operators $K$. For our discussions, we will not consider geometric locality.}. Steudtner and Wehner give a scheme based on segmenting the space of fermions which again achieves $M-O(M/F)$ qubits, though allows more efficient $O(F^2)$ gate overhead \cite{SW18,S19}. The same work also proposed a binary addressing code which shares some similarities with the encodings described here, but did not provide circuits for fermion operations nor bound the resource costs in the general case. The work of Babbush et al.\ \cite{BBS17} describes a way to store and manipulate few fermion systems using a sparse oracle, but do not provide explicit circuits for particle preserving rotations generated by creation-annihilation operators.

The work of Kirby et al.\ \cite{KFHM22} describes a second quantized fermion encoding achieving $O(F^2 \log^4 M)$ space complexity and $O(F^2 \log^5 M)$ gate complexity, subject to a certain conjecture about Hermite interpolation. Additionally, the work of Harrison et al.\ \cite{HNAW22} and Shee et al.\ \cite{STHCG22} give methods for achieving exactly $\mathcal{I}$ qubit usage in second quantization, but at the cost of an exponential $M^{O(F)}$ gate complexity.  

There are also first-quantized representations of fermion systems \cite{SBWRB21, KJLMA08, BBMN19}, where a list of occupied fermion modes is anti-symmetrized. This can save significant space when $F$ is much smaller than $M$. However, such encodings allow for a different class of operations\footnote{In particular, first-quantized operations. See \cite{SBWRB21} for examples of usage of these encodings in quantum simulation.} to be performed efficiently as compared to second quantized encodings, making circuit complexity results within these two paradigms not directly comparable. Other works consider utilizing more than $M$ qubits and restricting to an entangled subspace, allowing certain sets of $k$-local fermion operations to be extremely efficient (see e.g. \cite{VC05, WHT16, DKBC21, BK02}). These encodings are most relevant for geometrically local systems, and in fact increase space beyond the standard Jordan-Wigner encoding---we therefore omit them from comparisons.

\subsection{Contributions}
We describe a quantum data structure which naturally captures the problem of encoding (second quantized) fermion systems in qubits. This structure represents bitstrings of length $M$ and Hamming weight $F$, which naturally correspond to fermionic Fock states. We consider the (quantum circuit) gate/depth complexity of a \textsf{sgn-rank} (``sign rank'') operator, which applies a phase conditioned on how many ones exist in some prefix of the bitstring, as well as a \textsf{bit-flip} operator which flips one of the bits. These operators allow us to straightforwardly express the fermionic operations which are relevant to quantum simulation, as we show in Section \ref{sec:jw-data-structure}.

Within this framework, the Jordan-Wigner encoding can be seen as a data structure which encodes bitstring $b \in \{0,1\}^M$ by the quantum state $\ket{b}$, e.g. the trivial encoding. The sign rank operation is then simply a contiguous string of $Z$ operators on some prefix, and the bit flip operator is a single $X$ operator. Our encodings employ a different representation of bitstrings which is more efficient when $F$ is small, requiring different quantum circuits.

\paragraph{Succinct structure} We first give a second quantized encoding for number preserving fermion systems which is nearly optimal in space usage, yet allows fermionic rotations with linear complexity and low depth. Our encoding uses a sublinear amount of redundancy, i.e. it is within a factor $1+o(1)$ of optimal space usage (this is the meaning of ``succinct'' in the context of data structures) whenever $F=o(M)$. In particular, we prove the following theorem.

\begin{thm}[Informal version of Theorem \ref{thm:fewQubitsLowDepth}]
    A system of $F$ fermions in $M$ modes can be represented by $\mathcal{I}+O(F)$ qubits such that particle preserving fermionic rotations\footnote{By fermionic rotation, we mean the unitary $\exp(-i\theta K)$ for angle $\theta$, where $K$ is a number-preserving product of creation/annihilation operators, e.g. $K=a_j^\dagger a_k$ for some $j, k \in [M]$. We require that $K$ is $O(1)$ local in the sense that it is the product of $O(1)$ creation/annihilation operators (but not geometrically local).} have circuits of complexity $O(\mathcal{I})$ and depth $O(\log M \log \log M)$.
    \label{thm:informalMainEncoding}
\end{thm}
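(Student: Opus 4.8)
The plan is to exhibit an explicit classical encoding $E$ of the weight-$F$ strings $\bb\in\{0,1\}^M$ into $\mathcal I+O(F)$ bits, represent each Fock state by the computational basis state $\ket{E(\bb)}$, and then implement the two data-structure primitives $\srank$ and $\bflip$ directly on this representation. Since (as sketched in the contributions) a particle-preserving rotation $\exp(-i\theta K)$ for an $O(1)$-local product $K$ of creation/annihilation operators expands into $O(1)$ bit-flips together with the $\srank$ phases that realize the Jordan--Wigner signs, bounding the cost of the two primitives by $O(\mathcal I)$ gates and $O(\log M\log\log M)$ depth immediately yields the theorem.

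For the encoding I would use an Elias--Fano-style layout of the sorted occupied modes $0\le p_1<\cdots<p_F<M$. Split each $p_i$ into a low part of $w:=\lceil\log_2(M/F)\rceil$ bits and a high part of $\lceil\log_2 F\rceil$ bits. The low parts are stored verbatim in $F$ registers of $w$ qubits, contributing $Fw=\mathcal I+O(F)$ qubits; the high parts are nondecreasing and range over $\Theta(F)$ values, so they can be recorded in a monotone unary bucket-occupancy vector of length $O(F)$ (one bit per mode plus one separator per bucket). This meets the $\mathcal I+O(F)$ space budget, and the qubit count is dominated by the $\approx\mathcal I$ low-part bits.

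To implement $\srank$ at a position $j$ I must apply the phase $(-1)^{r}$, where $r$ is the number of occupied modes strictly below $j$. Writing $j=(j_h,j_\ell)$ in the same high/low split, $r$ equals the number of modes in buckets below $j_h$ plus the number of modes in bucket $j_h$ whose low part is less than $j_\ell$. The first quantity is a prefix sum over the unary vector, computable with a parallel-prefix (tree) circuit in $O(\log M)$ depth; the second is obtained by comparing, in parallel, the $w$-bit low parts of the modes lying in bucket $j_h$ against $j_\ell$, each comparison of $w=O(\log M)$-bit words costing $O(\log\log M)$ depth via carry-lookahead. Summing the resulting indicator bits modulo $2$ in a binary tree, phasing on the parity, and then uncomputing all scratch costs $O(\mathcal I)$ gates and $O(\log M\log\log M)$ depth.

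The main obstacle is $\bflip$, which must toggle the occupancy of mode $j$ while keeping the array sorted and the whole map unitary on the encoded subspace. Conceptually this is an insert/delete: locate the rank of $j$ (via the same prefix-sum and comparison computation used for $\srank$), open or close a slot in the low-part array, and update the unary vector. Realizing the data movement reversibly and in low depth is the crux---a naive cascade of controlled shifts has depth $\Theta(F)$, so I would instead route the shift through a balanced binary tree of conditional word-swaps, giving $O(\log F)=O(\log M)$ levels with $O(\log\log M)$-depth $w$-bit operations per level, for $O(\log M\log\log M)$ depth and $O(F\cdot w)=O(\mathcal I)$ gates overall. Care is needed to uncompute the search scratch so that no garbage remains entangled with the register; once $\srank$ and $\bflip$ are in hand, composing $O(1)$ of them as above delivers the claimed rotation complexity.
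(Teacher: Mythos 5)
Your encoding is the same one the paper uses: the Elias--Fano split into verbatim low bits plus a unary (stars-and-bars) vector for the nondecreasing high bits is exactly the encoding $\mathcal E^{(3)}$/$\mathcal E^{(4)}$ of Sections \ref{sec:fewerQubits}--\ref{sec:fullEnc}, and your $\srank$ circuit (select into the bucket, fan out the boundary registers, parallel $w$-bit comparisons, parity tree, uncompute) mirrors Lemma \ref{lem:fewQubLowDepEffZp}. Your one real deviation there---computing the prefix-sum/select tree on the fly as uncomputed scratch rather than storing it persistently as the paper does---is a legitimate alternative: it trades the paper's tree-maintenance routine (Lemma \ref{lem:lowDepthRotationTree}) for recomputation, and the costs ($O(F)$ ancillas, $O(F)$ gates, $O(\log F\log\log F)$ depth) fit the budget.

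The genuine gap is in $\bflip$, which is precisely where the paper spends its main technical effort. You assert that the data-dependent shift---opening or closing a slot at a position $t$ that lives in a quantum register---can be ``routed through a balanced binary tree of conditional word-swaps'' in $O(\log F)$ levels, but no such construction is given, and the natural instantiations fail: swaps controlled on \emph{position} relative to a quantum value $t$ require either $O(F^2)$ potentially-active swap pairs (rotation-by-reversals with data-dependent pairings) or recursive fix-ups that push the depth to $\Omega(\log^2 F)$, and Bene\v{s}-style routing requires computing $O(F\log F)$ switch settings from $t$. The paper's resolution is different in kind: because the list is sorted, it replaces position tests by \emph{value} flags $f_{\geq p}$ computed locally in parallel at each register, and then cycles the flagged suffix using interleaved single-qubit buffers, bit-plane by bit-plane (Lemma \ref{lem:lowDepthRotationList}), with a companion routine that shifts the unary MSB vector (Lemma \ref{lem:lowDepthRotationTree}); this is not a tree of word-swaps. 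Relatedly, your closing remark that ``care is needed to uncompute the search scratch'' glosses over a necessary idea, not a detail: scratch computed from the pre-update state cannot be reversed against the post-update state. The paper's Lemma \ref{lem:fewQubLowDepEffXp} resolves this by uncomputing $p_{start},p_{end}$ \emph{before} the shift, then recomputing fresh boundaries $p_{start}',p_{end}'$ from the \emph{new} state and using the insert/delete duality (the flag $f_{\geq p}$ is preserved by the cycle, while the presence flag flips) to clear the remaining flags. Without these two ingredients your $\bflip$ is not a valid reversible circuit meeting the claimed $O(\mathcal I)$-gate, $O(\log M\log\log M)$-depth bounds.
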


A summary of comparable prior work is shown in Table \ref{tab:fewFermionEncTable}; notably, all previous second-quantized encodings either incur a polynomial overhead in space complexity\footnote{i.e. have asymptotic space usage $\Omega(\mathcal{I}^c)$ for some $c > 1$ when $F \ll M$.}, or the gate complexity of fermionic rotations is exponential in the number of qubits. We build up to this main result by presenting intermediary encodings achieving some, but not all of the aforementioned scalings. These encodings may be of independent interest due to their simplicity.

{\paragraph{Implicit structure} We also give a construction which uses essentially the exact information-theoretic space minimum, except for an $O(1)$ number of ancilla (at constant filling). This construction achieves $\textsf{poly}(M)$ gate complexity. The relevant regime for this encoding is constant filling, e.g. $F=M/3$ or more generally $F=\Theta(M)$. In this regime, the prior space complexity of $\mathcal{I}+O(F)$ may be a significant overhead, whereas when $F=o(M)$ the additive $O(F)$ term is asymptotically negligible. The information theoretic limit in the $F=\Theta(M)$ regime goes like $\mathcal{I} =\Theta(F)$, meaning there can be significant space to save if $F$ is any constant fraction below $M/2$.}

{\begin{thm}[Informal version of Theorem \ref{thm:implicitEncoding}]
    A system of $F$ fermions in $M$ modes can be represented by $\mathcal{I}+O(\log (M/F))$ qubits such that particle preserving fermionic rotations have circuits of complexity $O(MF\mathcal I)$.
    \label{thm:informalImplicitEncoding}
\end{thm}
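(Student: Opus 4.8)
The plan is to encode a weight-$F$ bitstring by its index in the combinatorial number system---say its colexicographic rank $r=\sum_{i=1}^{F}\binom{s_i}{i}$, where $s_1<\dots<s_F$ are the occupied modes---stored as a binary integer on $\mathcal I$ qubits. This is a bijection onto $\{0,\dots,\binom{M}{F}-1\}$, so it is space-optimal up to the ceiling in $\mathcal I$. By the reduction of Section~\ref{sec:jw-data-structure} it then suffices to implement the $\srank$ and $\bflip$ operators on this representation, since every particle-preserving rotation $\exp(-i\theta K)$ decomposes into $O(1)$ of them together with single-qubit rotations. A single $\bflip$ changes the Hamming weight by one, so I would allow the register to also represent weights $F\pm1$; this costs only $\log\binom{M}{F\pm1}-\log\binom{M}{F}=O(\log(M/F))$ additional qubits, which accounts for the stated ancilla overhead.

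To act on the rank in place, I would view unranking as a streaming arithmetic decoder that sweeps the modes $p=M-1,\dots,0$, at each step peeling off the bit $b_p$ together with its contribution $\binom{p}{f}$ to $r$, where $f$ is the number of not-yet-placed fermions. Crucially, every $\binom{p}{f}$ is a fixed classical constant, so each peel/fold step is a controlled constant-adder of cost $O(\mathcal I)$ gates rather than a quantum multiplication. The operator is applied on the fly: when the sweep reaches the target mode(s) of a $\bflip$ we flip the freshly decoded bit before folding it back, and for $\srank$ we accumulate the prefix-parity phase from the running fermion count. Because $f$ is a quantum quantity, the adder at mode $p$ must be multiplexed over its $O(F)$ possible values, giving $O(F\mathcal I)$ gates per mode and $O(MF\mathcal I)$ gates overall, matching the claim (and collapsing to $O(\mathcal I^3)$ when $F=\Theta(M)$).

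The central difficulty is carrying out this sweep within the tight $\mathcal I+O(\log(M/F))$ space budget: we can neither materialize the length-$M$ bitstring nor the $F$ occupied indices, and when $F=\Theta(M)$ we cannot even afford a $\Theta(\log F)$-bit counter for $f$. The resolution I would pursue is to maintain a \emph{single} integer that simultaneously encodes the already-processed ``output'' rank, the not-yet-processed ``input'' rank, and the fermion count, bundled by the Vandermonde identity $\sum_{a}\binom{M-1-p}{a}\binom{p+1}{F-a}=\binom{M}{F}$: at every mode this triple has exactly $\binom{M}{F}$ joint values and so fits in $\mathcal I$ qubits with no counter ancilla, the arithmetic-coding step being an in-place mixed-radix reshuffle. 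The remaining $O(\log(M/F))$ ancilla then absorb only the transient weight-$(F\pm1)$ excursion of a $\bflip$ and $O(1)$ arithmetic temporaries. Verifying that each reversible transition respects this packing---so that no step overflows the register---and that every temporary is uncomputed exactly is, I expect, the main obstacle; the gate count and correctness should follow more directly once the space discipline is established.
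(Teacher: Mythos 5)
Your high-level architecture is the same as the paper's: encode a configuration by its rank among feasible strings, sweep over the $M$ bit positions maintaining a stage-dependent ordering in which the current bit is easy to read off, pay $O(F\mathcal I)$ per stage (multiplexing over the possible weight splits), and handle the weight-$(F\pm 1)$ excursions with $O(\log(M/F))$ extra qubits. Your Vandermonde packing of (processed rank, unprocessed rank, weight split) into a single $\mathcal I$-qubit integer is exactly the paper's block decomposition of the list $\mathcal L_j$ (Equations \ref{eqn:arrayDecompDirect}--\ref{eqn:arrayDecompFactored}), and your per-bit operations correspond to Lemma~\ref{lem:efficientSingleQubsOrder}.

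However, there is a genuine gap precisely at the step you defer: the claim that each peel/fold step is a ``controlled constant-adder.'' The stage transition is not an addition. Under the packed encoding at stage $p$, the pairs sharing a value of the newly decoded bit do not form a contiguous interval: within a block of split $a$ they form $\binom{M-p}{a}$ runs of length $\binom{p-1}{F-a}$ at stride $\binom{p}{F-a}$, and re-packing them into the stage-$(p-1)$ blocks requires changing that stride, i.e.\ an interleave/de-interleave permutation of the form $i \mapsto (i \bmod s)\cdot t + \lfloor i/s \rfloor$. Rotations and constant adders move contiguous blocks but cannot change strides, and when the block sizes $s$ are binomial coefficients (not powers of two) an in-place de-interleave naively requires division circuits and ancilla well beyond your $O(1)$ budget. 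This is where the paper's technical content lives: it implements the transition $\pi_{j\mapsto j+1}$ (Theorem~\ref{thm:efficient-cycle-perm}) via explicit \textsf{SWAP} and \textsf{DEINTERLEAVE} subroutines, where \textsf{DEINTERLEAVE} is only tractable because every chapter is kept \emph{padded to a power of two} (Lemma~\ref{lem:power_interleave}, Theorem~\ref{thm:power_interleave_full}, and the padding-pool management of Section~\ref{sec:padding}); the padding is also exactly what the $O(1)$ ancillas are spent on. Your proposal neither identifies the stride-change obstruction nor supplies a substitute for the padding/interleaving machinery, and you explicitly flag the ``space discipline'' as unverified---so the central claim of the theorem (the stated gate count \emph{within} the stated space) remains unproven in your write-up.
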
}

{The most comparable prior work is that of Harrison et al. \cite{HNAW22} and Shee et al. \cite{STHCG22}, which both achieve exactly $\mathcal I$ qubit complexity. However, these require an $\Omega\left(\binom{M}{F} \textsf{poly}(M)\right)$ (e.g. exponential in $M$ at constant filling) circuit complexity overhead for performing rotations generated by creation-annihilation operators, in the worst case. Our encoding improves exponentially on this gate complexity, achieving $\textsf{poly}(M)$ in this regime. We do this at the cost of $O(1)$ ancilla qubits.}

\subsection{Technical overview}

We begin in Section \ref{sec:jw-data-structure} by developing a data structure which naturally captures fermion to qubit mappings of number preserving systems. This definition is not fully general, but provides a framework for thinking about such mappings. We then instantiate this data structure in two ways.

\paragraph{First result: a succinct structure.} To build up to the first main result, we begin by developing a fermion data structure in Section \ref{sec:efficientEnc} based on a straightforward idea. Instead of storing the explicit string of ones and zeros, store a sorted list of pointers to the $\leq F$ positions containing a one. This approach builds off ideas presented in prior fermion encodings \cite{BBS17, SW18} as well as work compressing quantum states \cite{buhrman22compress, gidney2022dictionaries}, though we present these ideas in a way which will facilitate understanding our full encoding. Observe that, in contrast to first quantized representations \cite{SBWRB21} (which also store a list of pointers to occupied positions), we will not antisymmetrize the state. This enables us to efficiently implement $\srank$ and $\bflip$ queries (analogous to second-quantized rather than first-quantized operations), which we provide in Section~\ref{subsec:simpleEfficientCircs}. In Section \ref{subsec:varLowDepth} we describe a procedure using buffer registers that reduces the depth in this construction to logarithmic.

From this simple starting point, our next fermion data structure in Section \ref{sec:fewerQubits} reduces the space requirements using combinatorial ideas, without affecting gate complexity. We accomplish this by splitting the most and least significant bits of each register, and using different representations for each. The most significant bits have a large amount of redundancy from being non-increasing, which we avoid by storing them using the stars and bars method. We give circuits for combining information between these two representations to efficiently implement the required operations.

In Section \ref{sec:fullEnc} we combine the two previous ideas with a succinct data structure that enables low-depth access to the most significant bits to show Theorem \ref{thm:informalMainEncoding}.

The key technical idea in this construction is a tree data structure used to store prefix sums related to the most significant bits. This prefix sum tree allows retrieving information about any specific entry in low depth, but requires less total space than a sorted list of numbers. This data structure therefore allows log depth fermionic rotations while being near-optimal in space usage, our first main result.

\paragraph{Second result: an implicit structure.} Using a different approach, in Section \ref{sec:implicit-struct} we describe our second data structure which achieves $\mathcal I +O(1)$ qubits and $\textsf{poly}(M)$ gate complexity when $F=\Theta(M)$. We represent a bit string $b$ (which itself represents a Fock state) by storing $b$'s rank among feasible strings, where we determine rank by the number of bit-strings which are lexicographically before $b$ or have a smaller Hamming weight than $b$. This encoding uses $\mathcal I+O(1)$ many qubits, but it is unclear how to interpret the encoded string to perform efficient operations. This obscurity is, essentially, the reason for the exponential gate overhead in prior works achieving this level of space efficiency.

We show how to perform both bit-flip and sign-rank efficiently on the first (unencoded) position, intuitively because the first and most-significant bit determines whether a given string has rank at most $\binom{M-1}{F-1}$ (first bit a $0$) or rank above $\binom{M-1}{F-1}$ (first bit a $1$) among strings of a given Hamming weight. This means that we can extract the first bit by comparing the encoded label against a fixed value, which can be done efficiently and with a small number of ancilla.

We then transform the ordering to one in which bit-flip and sign-rank can be performed efficiently on the second bit, and so on for each position. This is based on a construction of a certain set of orderings, which we denote $<_j$ for $j \in [0,...,M]$, of the labels. These orderings have the property that, given just the rank of a configuration over order $<_j$, operations on the $j$-th encoded bit can be performed efficiently ($O(F\mathcal I)$ gates) and with few $(O(1))$ ancillas on the label string. Further, we show how to transform a label under the order $<_j$ to a label under either $<_{j+1}$ or $<_{j-1}$ in the same complexity. By cycling over the $M$ possible orderings, we can perform relevant operations with $O(MF\mathcal I)$ gates and $O(1)$ ancillas.

\section{Preliminaries}
\label{sec:prelims}

\subsection{Fermions}

A system of fermions with $M$ modes can be defined by the algebra of creation ($a_i^\dagger$ for $i \in [M]$) and annihilation ($a_i$ for $i \in [M]$) operators---this formalism is referred to as second quantization. These operators are determined by the anti-commutation relations 
\begin{align}
    \{a_i^\dagger, a_j\} &= \delta_{ij}, & \{a_i^\dagger, a_j^\dagger\} &= 0, & \{a_i, a_j\} &= 0.\label{eqn:fermiAnticomm}
\end{align}
Let $\ket{00...0}_{f}$ denote the vacuum state, the unique state which is not annihilated by any of the $a_i^\dagger$. We will primarily work with the Fock states, which are of the form \begin{align}
    \ket{\psi^{\bb}}_{f} :=& (a_1^\dagger)^{b_1} (a_2^\dagger)^{b_2} ... (a_M^\dagger)^{b_M} \ket{00...0}_f & \text{(for all $\textbf{b} \in \{0,1\}^M$)} \nonumber \\
    =& \ket{b_1,b_2, ... ,b_M}_f.
\end{align}
We denote the span of these states $\mathcal{H}^{(fock)}$. Note that the $M$-mode Fock states are in natural correspondence with $M$-bit strings---this observation underlies the famous Jordan Wigner encoding \cite{JW28}, and will similarly underlie our data structures. It will be convenient for us to work in the Majorana basis determined by $\gamma_j$ for $j \in [2m]$. These operators are defined as follows: 
\begin{align}
    \gamma_{2j-1} &= \frac{a_j^\dagger + a_j}{2}, &
    \gamma_{2j} &= \frac{i(a_j^\dagger - a_j)}{2}, & \text{such that: }
    \{\gamma_j, \gamma_k\} = \delta_{jk}.
    \label{eqn:majoranaAnticomm}
\end{align}

The action of Majorana operators on the Fock states can now be written as follows, where $\neg b$ is the negation of bit $b$, \begin{align}
    \gamma_{2j-1} \ket{b_1 ... b_j ... b_M}_f =& \left(\prod_{n=0}^{j-1} (-1)^{b_n}\right)\ket{b_1 ... (\neg b_j) ... b_M}_f, \nonumber\\ \gamma_{2j} \ket{b_1 ... b_j ... b_M}_f =& \, i \cdot \left(\prod_{n=0}^{j} (-1)^{b_n}\right) \ket{b_1 ... (\neg b_j) ... b_M}_f. \label{eqn:majoranaAction}
\end{align}
In particular, these operators flip the occupation of a certain mode and apply a phase depending on the occupation of all preceding modes.

\subsection{Fermion to Qubit Mappings}
\label{subsec:fermToQubit}
A fermion to qubit mapping can be defined by a linear mapping $\mathcal{E}_s$ from Fock states to qubit states, as well as a mapping $\mathcal{E}_o$ from Majorana operators to qubit operators. The qubit states/operators should be isomorphic to Fock states and Majorana operators, i.e. satisfy Equations \ref{eqn:majoranaAnticomm}, \ref{eqn:majoranaAction}. Though some encodings do not fit into this paradigm (discussed in Section \ref{subsec:prior-work}), this definition suffices for the encodings relevant to this paper. In particular, to define a mapping of an $M$ mode system it suffices to construct $2M$ mutually anti-commuting qubit operators which each square to the identity.

We  note that it is not always necessary to encode every possible Fock state, e.g. when simulating a system that does not explore every state. Particularly relevant for us will be particle preserving systems, which live in the subspace of Fock states with exactly $F$ many fermions. 

A well known example of a fermion-to-qubit mapping is the Jordan-Wigner encoding \cite{JW28}, which defines $M$-qubit operators as follows (where $P_i$ denotes a Pauli $P$ on the $i$-th qubit, acting trivially on the rest) 
\begin{align}
    \mathcal{E}_{JW}(\gamma_{2i-1}) &= \left(\prod_{j=1}^{i-1}Z_j \right)\otimes X_i, &  
    \mathcal{E}_{JW}(\gamma_{2i}) &= \left(\prod_{j=1}^{i-1}Z_j\right) \otimes Y_i.
\end{align}
The mapped qubit operators in this encoding act on $M$ qubits, which means $M$ qubits are used to store fermionic states. Additionally, the circuit complexity of the mapped operators is $\Omega(M)$, as some mapped operators act non-trivially on all qubits.

\subsection{Simulating Fermions}
\label{subsec:simulating-fermions}
The main use case for fermion to qubit mappings is in the simulation of interacting fermions. In many such applications the dynamics are governed by a Hamiltonian $H$ which is both $k$-local and term-wise number preserving (i.e. each term commutes with $\sum_i a_i^\dagger a_i$). If we wish to simulate such a system with $M$ modes, the most general Hamiltonian is
\begin{align}
    H &= \sum_{l \leq k; l \text{ even};i_1i_2...i_l \in [M]} h_{i_1i_2...i_l} a_{i_1}^\dagger a_{i_2}...a_{i_{l-1}}^\dagger a_{i_l}+ h.c. \label{eqn:generalKlocalHam}
\end{align} 
where the indices run over the $M$ modes and $h.c.$ denotes the hermitian conjugate of the preceding term. This type of system is ubiquitous in quantum chemistry and physics. To simulate a fermionic system, one requires a mapping from fermion states/operators to qubit states/operators, as discussed in Section~\ref{subsec:fermToQubit}. Almost all second quantized algorithms for approximating evolution $\exp(-iHt)$ involve performing rotations generated by terms in $H$ \cite{C19, CW12, BCC+15, BCK15, LC17, BCS+20, LC19}, so it is better for these rotations to require few gates and act on few qubits. Let $V$ be a product of $k$ Majorana operators, where $k$ is even. Our encodings give efficient circuits for rotations of the form $\exp(-i\theta V)$ ($k$ a multiple of $4$) or $\exp(-\theta V)$ ($k$ not a multiple of $4$)\footnote{Actually, in most places we give circuits for the operator $V$ up to a global phase. In Appendix \ref{app:ptclePresRotations}, we discuss how to generically implement the aforementioned rotations in the same complexity. Further, whether $k$ is a multiple of $4$ dictates whether $\exp(-i\theta V)$ or $\exp(-\theta V)$ is unitary; we implement the unitary one.}. This is sufficient to implement any rotation generated by the product of $O(1)$ creation/annihilation operators and its hermitian conjugate. For an illustrative example, consider a hopping term $a_j^\dagger a_k+h.c.$ for $j\neq k$,
\begin{align*}
    \exp(-i\theta (a_j^\dagger a_k + a_k^\dagger a_j)) =& \exp(-i\theta((\gamma_{2j-1}-i\gamma_{2j})(\gamma_{2k-1}+i\gamma_{2k}) + (\gamma_{2k-1}-i\gamma_{2k})(\gamma_{2j-1}+i\gamma_{2j}))) \\
    =& \exp(2\theta(\gamma_{2j-1}\gamma_{2k} - \gamma_{2j}\gamma_{2k-1})) \\
    =& \exp(2\theta\gamma_{2j-1}\gamma_{2k} )\exp(-2\theta \gamma_{2j}\gamma_{2k-1}),
\end{align*}
from which it is clear that it suffices to implement two Majorana rotations. Note that although each Majorana operator itself need not preserve particle number, it changes the occupation number by at most $k$ (in this case at most two). Further, the whole operator does preserve particle number. Therefore, so long as our encoding has enough ``space'' to fit the intermediate states, we will remain in the proper subspace after implementing the full operator $\exp(-i\theta(a_j^\dagger a_k + a_k^\dagger a_j))$. We refer the reader to Appendix \ref{app:fermionOverflow} for a more general discussion of this procedure.

\section{Fermion Data Structures}
\label{sec:jw-data-structure}

In this section we introduce an alternative perspective of fermion encodings, in a way which makes the connection to data structures explicit. This perspective will facilitate the development of efficient fermion-to-qubit mappings later on.

\subsection{Definitions}
We first define two useful combinatorial operations, $\srank$ and $\bflip$. These are analogous to rank and bit flip queries respectively, which are well-studied in succinct classical data structures for bit-vectors \cite{jacobson88succinct, navarro16compact}\footnote{Note that the differing models of complexity (RAM programs/cell probe versus quantum circuits) prevent most of these classical results from being directly applicable to our problems.}. These operators act on a bit string $\bb=(b_1,...,b_M) \in \{0,1\}^M$ and depend on an index $j \in [M]$, and are combinatorial operations with no direct relation to fermions. However, they both show up naturally when writing down the action of fermionic creation-annihilation operators on Fock states: one can notice the similarities between Definitions \ref{defn:srank}, \ref{defn:bflip} and the action of Majorana operators in Equation \ref{eqn:majoranaAction}.

\begin{defn}
    The operator $\srank$ (``sign rank'') of an index $j \in [M]$ and bit string $\bb \in \{0,1\}^M$ is the operator which returns the parity of the first $j$ bits of $\bb$ (represented as $+1$ for even, $-1$ for odd). In particular, \begin{align*}
        \srank(j, \bb) := \prod_{n=0}^{j} (-1)^{b_n}.
    \end{align*}
    \label{defn:srank}
\end{defn}

\begin{defn}
    The $\bflip$ operator of an index $j \in [M]$ and bit string $\bb \in \{0,1\}^M$ flips the $j$-th bit of $\bb$. In particular, we have the following, 
    where $\neg$ is logical negation, \begin{align*}
        \bflip(j, \bb) :=& (b_1, ... ,(\neg b_j), ... ,b_M).
    \end{align*}
    \label{defn:bflip}
\end{defn}

With these two operations in hand, we can now describe the relevant type of (quantum) data structure for encoding fermionic states. Intuitively, we would like to represent a bit string $\bb \in \{0,1\}^M$ in such a way that both $\srank$ and $\bflip$ queries can be implemented efficiently. Furthermore, in the settings we will care about we will be promised that the Hamming weight of $\bb$ (the number of $1$'s) will never exceed $F+k$ or subceed $F-k$, for some $F < M$ and constant $k=O(1)$ (see Appendix \ref{app:fermionOverflow} for more discussion). We use the notation $\mathcal{H}^{(fock)}_{M, F,k}$ to denote the Fock space of an $M$-mode system of fermions, restricted to states with occupation between $F-k$ and $F+k$: we call this range the capacity. We use the notation $\mathcal{H}^{(qubit)}_n$ to denote the Hilbert space of $n$ qubits.

\begin{defn}
    A \emph{fermion data structure} of size $M$ and capacity $F, k$ is a qubit representation of Fock states with $M$ modes and occupation at most $F+k$ and at least $F-k$, i.e. a linear and invertible mapping $\mathcal{E}_s : \mathcal{H}^{(fock)}_{M, F,k} \rightarrow \mathcal{H}^{(qubit)}_n$ that maps Fock states to computational basis states. We denote $\mathcal{E}_s(\ket{\bb}_f)$ as $\ket{\bb}$. We further require two $n$-qubit quantum circuit families $F_j$ and $R_j$, answering $\bflip$ and $\srank$ queries respectively. Formally, we require \begin{align}
        F_j \ket{\bb} :=& \ket{\bflip(j, \bb)} &\text{(if $\bflip(j, \bb)$ has $F-k \leq \mathsf{HW} \leq F+k$)}, \label{eqn:flip-query-defn} \\
        R_j \ket{\bb} :=& \, \srank(j, \bb)\ket{\bb}. \label{eqn:rank-query-defn}
        \end{align}
    \label{defn:fermion-data-structure}
\end{defn}
The gate/depth complexity of the above structure is the max gate count/depth, respectively, of the $F_j, R_j$ over all $j \in [M]$. The space complexity is $n$---note that ancillas used in $F_j$, $R_j$ are counted in this complexity, as they are circuits acting on exactly $n$ qubits.

We remark that we consider quantum circuit complexity as the relevant cost measure, which prevents us from using many standard data structures results. Classical results often use the cell probe model \cite{gal2007cell-probe}, which is based on a RAM machine that allows random access to data. This model may not capture complexity in a real world quantum computer \cite{jaques2023qramsurveycritique}. This necessitates new data structures and new ideas, which will be the focus of this paper.

\subsection{Properties of fermion data structures}
We first observe a subtlety in Definition \ref{defn:fermion-data-structure}: the operation of $\bflip$ is only required to be ``correct'' so long as the capacity is not exceeded. When one is interested in simulating a $k$-local Hamiltonian on a system with exactly $F$ Fermions, then it often suffices to instantiate a Fermion data structure with capacity $F,k$\footnote{Note that this does incur some space overhead, but it is only an additive $O(1)$ (if $F=\Theta(M)$) or $O(\log M)$ (if $F=O(M / \log M)$)}. In particular, it is straightforward to show that such a data structure can be used to instantiate a Trotterization scheme, or any simulation algorithm which relies on applying $k$-local particle-preserving rotations generated by products of $k$ creation/annihilation operators. Intuitively, so long as we can ``fit'' the intermediate states necessary while performing these rotations, we will be able to perform any arbitrary sequence of rotations.

\begin{rem}
    Let $\ket{\psi}_f$ be a fermionic Fock state of $F$ fermions in $M$ modes, and $C_f$ be a sequence of $k$-local fermionic rotations which preserve particle number, followed by a measurement in the Fock basis. If there is a fermion data structure of size $M$ and capacity $F-k, ..., F+k$, using space $n$ and gate complexity $m$, then there is a qubit state $\ket{\psi}$ on $n$ qubits and a quantum circuit $C$ of size $O(|C_f| \cdot m)$ such that measuring $C \ket{\psi}$ in the computational basis samples from the same distribution as $C_f \ket{\psi}_f$ (up to permuting bit-string labels).
\end{rem}

\begin{proof}
    The initial state $\ket{\psi}$ is simply the encoding $\mathcal{E}_s(\ket{\psi}_f)$. To construct the circuit $C$, we will replace every gate of $C_f$ (which are of the form $g_f=\exp(i\theta V)$ for $k$-local, particle preserving $V$) with the corresponding encoded circuit for performing the rotation, as detailed in Section \ref{subsec:simulating-fermions}. Let us call this encoded circuit $g$. Letting $\ket{\phi}$ be some superposition of Fock states with $F$ fermions, then by Appendix \ref{app:fermionOverflow}, with a data structure $\mathcal{E}_s$ of capacity $F-k,...,F+k$ we have the guarantee that \begin{align*}
        g \mathcal{E}_s(\ket{\phi}) =& \mathcal{E}_s(g_f \ket{\phi}),
    \end{align*}
    as $g_f$ corresponds to a sequence of $\srank$ and $\bflip$ operations satisfying the necessary promises. Noting that $g_f \ket{\phi}$ will also have $F$ fermions by the hypothesis that $g_f$ preserves particle number, we can similarly perform this transformation for every following gate. Finally, by the invertibility of $\mathcal{E}_s$ we will obtain the same final output distribution by measuring in the computational basis, up to decoding the bit-strings (i.e. some permutation on the labels).
\end{proof}

Given the above discussion, the curious reader may note that our definition of fermionic data structures encodes all states of Hamming weight between $F-k$ and$F+k$ for some constant $k$. This is redundant in that $F+k>F$, and also that there are more states having Hamming weight between $F-k$ and $F+k$ than there are of Hamming weight exactly equal to $F+k$. However, as we discuss in Appendix \ref{app:fermionOverflow}, both of these redundancies result in negligible additive space overheads (either $O(1)$ or $O(\log M)$ depending on the regime). The conclusions we draw about space efficiency includes these overheads.

We remark that our encoding can also apply to some post-Trotter methods, e.g. certain linear combination of unitaries algorithms, as we discuss in \ref{app:coherentlyControlled}. We also remark for clarity that the encodings in Sections \ref{sec:efficientEnc}, \ref{sec:fewerQubits}, \ref{sec:fullEnc} will allow encoding every Fock state of occupation at most $F+k$ (though Section \ref{sec:implicit-struct} will only encode occupations $F-k$ through $F+k$).

\section{Sorted List Fermion Data Structure}
\label{sec:efficientEnc}

In this section, we describe a fermion data structure (Definition \ref{defn:fermion-data-structure}) based on storing a sorted list of pointers, rather than a full sparse string. While this encoding has some similarities to first-quantized encodings \cite{SBWRB21} and sparse oracle encodings \cite{BBS17}, we emphasize that we present a generic second-quantized encoding capable of performing any $O(1)$-local particle preserving rotation generated by creation/annihilation operators in the described complexity. Furthermore, this encoding will lay the groundwork for our later, more efficient encodings.

\begin{thm}
    There is a fermion data structure for strings of $M$ qubits having Hamming weight at most $F$ that uses $O(F \log M)$ qubits, with gate complexity $O(F \log M)$.
    \label{thm:mainEncodingThm}
\end{thm}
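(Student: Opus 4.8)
The plan is to encode a weight-$\le F$ string $\bb$ not by its $M$ bits but by the sorted list of positions of its ones. Concretely, let $p_1 < p_2 < \cdots < p_w$ be the positions of the $w \le F$ ones; I store the tuple $(p_1, \ldots, p_w, \infty, \ldots, \infty)$ in $F$ registers of $\lceil \log_2(M+1)\rceil$ bits each, where unused slots hold a sentinel value $\infty$ larger than every real position. Sorted order is canonical, so this gives a linear, invertible map $\mathcal{E}_s$ from Fock basis states to computational basis states using $O(F \log M)$ qubits, matching the claimed space bound. The sgn-rank circuit $R_j$ is then immediate: since $\srank(j,\bb)$ is $(-1)$ raised to the number of ones among the first $j$ positions, and phases multiply, I apply to each register independently a $(-1)$ phase conditioned on its value lying in $[1, j]$ (the sentinel $\infty > j$ is automatically excluded). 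Each conditional phase costs a single comparison against the fixed constant $j$, i.e.\ $O(\log M)$ gates, for $O(F \log M)$ in total.

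The bit-flip circuit $F_j$ is the crux. First I compute, into $O(\log F)$ ancilla qubits, the index $s := 1 + |\{m : \text{slot}_m < j\}|$ (the rank of $j$) together with a one-bit presence flag $e := [\,\exists m : \text{slot}_m = j\,]$; both are sums of $F$ comparisons against $j$ and cost $O(F \log M)$. When $j$ is absent ($e=0$) I insert it, and when $j$ is present ($e=1$) I delete it. Insertion uses a transit register initialized to the constant $j$: sweeping $m = 1, \ldots, F$, I apply a swap between the transit register and $\text{slot}_m$ controlled on the comparison $m \ge s$ of the loop constant $m$ against the ancilla $s$. For $m < s$ nothing happens; for $m \ge s$ the swaps cascade, dropping $j$ into slot $s$ and shifting each subsequent value up by one, with the final sentinel falling into the transit register. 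Deletion is exactly this cascade run in reverse (transit initialized to $\infty$, sweeping downward). Each controlled swap of two $O(\log M)$-bit registers costs $O(\log M)$, so the full sweep is $O(F \log M)$, and selecting insertion versus deletion by the flag $e$ only adds a constant factor.

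The main obstacle is reversibility: the naive compare-and-swap of sorting networks is not unitary, since mapping a pair to $(\min,\max)$ discards their original order. Two features of this setting rescue us. First, the value being inserted or deleted is the known constant $j$, so the transit register begins and ends at the fixed constants $j$ and $\infty$ (within capacity) and can be allocated and freed cleanly rather than left as garbage. Second, the count $|\{m : \text{slot}_m < j\}|$ of positions below $j$ is unchanged by flipping bit $j$, so after the cascade I recompute $s$ from the updated list and XOR it out, uncomputing the ancilla; likewise the presence flag $e$ merely negates and is uncomputed from the new list. Verifying that the controlled-swap cascade realizes exactly $\bflip(j, \cdot)$ on all in-capacity inputs, and that every ancilla returns to $\ket{0}$, is the one routine-but-delicate check. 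The sole edge case is inserting into a full (weight-$F$) register, which would exceed capacity; here Definition \ref{defn:fermion-data-structure} imposes no correctness requirement and the circuit remains a well-defined unitary. Combining $R_j$ and $F_j$ yields a fermion data structure with $O(F \log M)$ qubits and $O(F \log M)$ gate complexity, as claimed.
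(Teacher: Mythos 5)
Your proposal is correct, and it uses the paper's encoding exactly: a sorted list of one-positions padded by $\infty$ sentinels in $F$ registers of $\lceil\log(M+1)\rceil$ bits, with $\srank$ realized as independent per-register comparisons against $j$ followed by a $Z$ phase. Where you genuinely diverge is the $\bflip$ circuit. The paper (Lemma \ref{lem:simpleXpComplexity}) avoids computing any rank or presence flag at all: it chains self-inverse ``ordered swap'' gates $U_p$ (swap two adjacent registers iff one equals $p$ and the other is $>p$) in ascending order, applies a fixed classical exchange $p \leftrightarrow \infty$ on the last register, then chains the $U_p$ gates back in descending order. Because each $U_p$ conditions on the \emph{contents} of the registers it touches, a single uniform circuit bubbles $p$ to the end and deletes it when present, or inserts it and bubbles it into sorted position when absent, with reversibility automatic from $U_p = U_p^\dagger$ and no ancilla bookkeeping to discharge. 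Your version instead precomputes the insertion rank $s$ and presence flag $e$, drives a transit-register cascade by \emph{position} comparisons $m \geq s$, and then must argue uncomputation --- which you do correctly via the observation that $|\{m : \mathrm{slot}_m < j\}|$ is invariant under flipping bit $j$ and that the presence flag simply negates. The paper's route is leaner for this theorem, but your flag-and-rank technique is not wasted generality: it is essentially the strategy the paper itself is forced to adopt for the succinct encoding (Lemma \ref{lem:fewQubitEffXp}, with its pointer $t$ and flag $f_{del}$), where the most significant bits are stored separately and purely content-local swaps like $U_p$ are no longer available. Both implementations meet the stated $O(F\log M)$ gate and $O(F\log M)$ qubit bounds, and your handling of the out-of-capacity edge case matches what Definition \ref{defn:fermion-data-structure} requires.
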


We will constructively establish this theorem throughout Section \ref{subsec:algDef} and \ref{subsec:simpleEfficientCircs}---in particular Lemma \ref{lem:simpleZZZComplexity} and Lemma \ref{lem:simpleXpComplexity} demonstrate the requisite circuit complexities. An immediate corollary is that a similar statement holds for fermion encodings.
\begin{corr}
    There exists a fermion encoding of an $M$ mode system having at most $F$ fermions which uses $O(F \log M)$ qubits such that any $k$-local, particle preserving rotation can be implemented with $O(F \log M)$ gates.

    \label{corr:fermiEncFromQubits}
\end{corr}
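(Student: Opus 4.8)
\textbf{Proof proposal for Corollary \ref{corr:fermiEncFromQubits}.}

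The plan is to derive the fermion encoding directly from the fermion data structure of Theorem \ref{thm:mainEncodingThm} via the dictionary established in Sections \ref{subsec:fermToQubit} and \ref{subsec:simulating-fermions}. The key observation is that a fermion data structure already supplies exactly the two primitive operations, $\srank$ and $\bflip$, out of which every Majorana operator is built: comparing Equation \ref{eqn:majoranaAction} with Definitions \ref{defn:srank} and \ref{defn:bflip}, the action of $\gamma_{2j-1}$ on a Fock state is to flip bit $j$ (a $\bflip(j, \cdot)$ query) while multiplying by the sign determined by the parity of the preceding bits (an $\srank(j-1, \cdot)$ query), and similarly $\gamma_{2j}$ is the same bit-flip together with an $\srank(j, \cdot)$ query and an overall factor of $i$. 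Thus I would first define the encoded Majorana operators $\mathcal{E}_o(\gamma_{2j-1}) := R_{j-1} F_j$ and $\mathcal{E}_o(\gamma_{2j}) := i\, R_j F_j$, using the circuit families $R_j, F_j$ from Definition \ref{defn:fermion-data-structure}, and set $\mathcal{E}_s$ to be the data-structure encoding map.

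The next step is to verify that these encoded operators satisfy the defining relations of Equations \ref{eqn:majoranaAnticomm} and \ref{eqn:majoranaAction}, so that they constitute a legitimate fermion-to-qubit mapping in the sense of Section \ref{subsec:fermToQubit}. Since $\mathcal{E}_s$ maps Fock basis states bijectively onto computational basis states and $R_j, F_j$ implement $\srank$ and $\bflip$ correctly on this basis (within the promised capacity), checking the anticommutation and action relations reduces to a basis-state computation that mirrors Equation \ref{eqn:majoranaAction} verbatim; this is routine given the definitions. The only subtlety is the capacity restriction: $\bflip$ is guaranteed correct only when the Hamming weight stays in $[F-k, F+k]$. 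I would invoke the discussion of Section \ref{subsec:simulating-fermions} and Appendix \ref{app:fermionOverflow}, together with the Remark following Definition \ref{defn:fermion-data-structure}, to argue that because the final rotation $\exp(-i\theta K)$ preserves particle number and $K$ is a product of $O(1)$ Majoranas, all intermediate states stay within a capacity window of constant width, so a data structure of capacity $F, O(1)$ suffices; Theorem \ref{thm:mainEncodingThm} as stated handles weight up to $F$, and the constant-width overhead only inflates the space by $O(\log M)$ additively (negligible against $O(F\log M)$).

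For the resource counts, a $k$-local particle-preserving rotation $\exp(-i\theta K)$ decomposes, exactly as in the hopping-term example of Section \ref{subsec:simulating-fermions}, into $O(1)$ products of Majorana operators (since $k = O(1)$), and each Majorana operator is a single $\bflip$ query composed with a single $\srank$ query. By Theorem \ref{thm:mainEncodingThm} each such query costs $O(F \log M)$ gates on $O(F \log M)$ qubits, so the whole rotation costs $O(F \log M)$ gates, matching the claim; implementing the rotation $\exp(-i\theta V)$ itself from the operator $V$ (rather than just applying $V$) costs only a constant overhead, as discussed in Appendix \ref{app:ptclePresRotations}. I expect the main obstacle to be bookkeeping rather than conceptual: carefully confirming that the products of $\srank$ and $\bflip$ queries reproduce the correct Majorana phases and anticommutation relations on every basis state, and tracking that the capacity promise is respected throughout the decomposition of a single rotation into its constituent Majorana bit-flips. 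Once those checks are in place, the corollary follows immediately from Theorem \ref{thm:mainEncodingThm}.
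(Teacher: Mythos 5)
Your proposal takes essentially the same route as the paper, which treats this corollary as immediate from Theorem \ref{thm:mainEncodingThm} via the dictionary of Sections \ref{subsec:simulating-fermions} and \ref{sec:jw-data-structure} (each Majorana operator is one $\bflip$ plus one $\srank$ query), the involutory-rotation circuit of Appendix \ref{app:ptclePresRotations}, and the capacity discussion of Appendix \ref{app:fermionOverflow}. The one slip to correct during the verification you already flag as necessary: $\mathcal{E}_o(\gamma_{2j})$ should be $i\,F_j R_j$ rather than $i\,R_j F_j$, since flipping bit $j$ before reading $\srank(j,\cdot)$ changes that parity by a sign, so your stated operator equals $-\gamma_{2j}$ (for $\gamma_{2j-1}$ the ordering is immaterial because $F_j$ and $R_{j-1}$ commute).
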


\subsection{Algebraic Definition}
\label{subsec:algDef}
We formally define the state encoding function $\mathcal{E}_s^{(1)}$ and give the algebraic properties it satisfies. Let $e_i \in \{0,1\}^M$ be a binary string that is all $0$'s, except for a single $1$ at position $i$ (using $1$-based indexing). Consider strings of the form \begin{align}
    x &= e_{i_1} \oplus e_{i_2} \oplus ... \oplus e_{i_f} \nonumber\\
    i_1&< i_2 < ... < i_f \nonumber
\end{align}
with $f\leq F$. Define  $\mathcal{E}^{(1)} : \mathcal{D} \rightarrow \{0,1\}^{F \lceil \log (M + 1) \rceil}$, where $\mathcal{D} \subset \{0, 1\}^M$ is the set of strings of Hamming weight at most $F$, as 
\begin{align}
    \mathcal{E}^{(1)}(x) &= \ket{i_1} \ket{i_2} \cdots \ket{i_f} \ket{\infty} \cdots \ket{\infty} \label{eqn:stateEncDefn}
\end{align}
where $\infty$ is a placeholder for ``no fermion'', and is the larger of any comparison with a non-$\infty$. Concretely, we adopt the convention that $\infty$ is represented by a register of all $1$'s. We can interpret the output as a sorted array of $f$ elements, each element (also referred to as register) of size $\lceil \log (M + 1) \rceil$, and padded out to $F$ entries by $\infty$'s. We will now describe the action of sign-rank and bit-flip queries on this list.
\begin{enumerate}[label=(\arabic*)]
    \item A $\srank(j, \bb)$ query should apply a $-1$ phase if there are an odd number of registers having values less than or equal to $j$. Otherwise, it should act like the identity.
    \item A $\bflip(j, \bb)$ should insert a register $\ket{j}$ into the list if it is not present, and otherwise delete $\ket{j}$. This operation should maintain sorted order, and act as described so long as both input and output states have Hamming weight less or equal to $F$.
\end{enumerate}
Observe that both of the aforementioned operations are reversible. This suffices to define a fermion data structure.

\subsection{Efficient Circuits} 
\label{subsec:simpleEfficientCircs}

We will utilize comparison circuits between unsigned integers as a black box, see Appendix~\ref{app:compCircs} for a more detailed discussion. As described in Section \ref{subsec:algDef}, we will assume an upper limit $F$ on the number of pointers we will ever need to store, and therefore only utilize this many registers.
\begin{lem}
    \label{lem:simpleZZZComplexity}
    For the sorted list encoding described in Section \ref{subsec:algDef}, a $\srank(p, \bb)$ query has a circuit of $O(F \log M)$ gates and $O(F \log M)$ ancilla.
\end{lem}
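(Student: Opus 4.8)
The goal is to apply a $-1$ phase exactly when an odd number of the $F$ registers hold a value $\leq p$. The plan is to compute the parity of this count into a single ancilla qubit, apply a $Z$ (phase) on that qubit, and then uncompute to restore the ancilla. Since each register is an integer in $\{0,\ldots,M\}$ (or $\infty$), and $\infty$ is represented by all-ones (hence always strictly greater than $p \leq M$), comparisons against $p$ automatically treat the padding $\infty$ registers correctly as ``not $\leq p$,'' so they never contribute to the count.

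\emph{Key steps, in order.} First, for each register $r \in \{1,\ldots,F\}$ I would run a comparison circuit (the black-box from Appendix~\ref{app:compCircs}) that compares the register's $\lceil \log(M+1)\rceil$-bit contents against the constant $p$, writing the single-bit result ``register $\leq p$'' into a fresh ancilla bit $c_r$. By the cited comparison circuits, each such comparison costs $O(\log M)$ gates and $O(\log M)$ ancilla, so over all $F$ registers this is $O(F \log M)$ gates and $O(F \log M)$ ancilla. Second, I would compute the parity $\bigoplus_{r=1}^{F} c_r$ into one additional ancilla qubit using a chain of $F$ CNOT gates (cost $O(F)$). Third, I would apply a $Z$ gate to this parity ancilla, which imparts the desired $(-1)$ phase precisely when the parity is odd. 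Fourth, I would uncompute steps two and one in reverse, restoring all ancillas to $\ket{0}$ so the net effect on the data registers is exactly the diagonal phase $\srank(p,\bb)$, leaving the encoded state $\ket{\bb}$ otherwise unchanged. The total gate count is dominated by the $F$ comparisons, giving $O(F \log M)$ gates and $O(F \log M)$ ancilla as claimed.

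\emph{Correctness and the main subtlety.} The phase applied is $(-1)^{(\#\{r : \text{reg}_r \leq p\})}$. Because the list is sorted and each value $\leq p$ appears in a distinct register, the number of registers holding a value $\leq p$ equals the number of one-bits of $\bb$ in positions $1,\ldots,p$, which is exactly the quantity whose parity defines $\srank(p,\bb)$ in Definition~\ref{defn:srank}. I expect the main (minor) obstacle to be the bookkeeping around the $\infty$ placeholders and the boundary convention at $p$: one must verify that the all-ones encoding of $\infty$ compares as strictly greater than every genuine index $p \in [M]$ (so empty slots are excluded) and that the comparison is the non-strict ``$\leq p$'' rather than ``$< p$,'' so that a register holding exactly $p$ is counted. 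Both follow directly from the encoding conventions in Section~\ref{subsec:algDef} and the specification of the comparison circuits, so no essential difficulty arises; the argument is otherwise a routine compute–phase–uncompute.
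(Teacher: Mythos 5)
Your proposal is correct and matches the paper's proof: both perform a per-register comparison against $p$ (compute, phase, uncompute), with the $\infty$ convention handling padded registers, yielding $O(F\log M)$ gates and ancilla. The only cosmetic difference is that the paper applies a $Z$ to each comparison outcome bit (the phases multiply to the parity phase), whereas you XOR the outcomes into a single ancilla and apply one $Z$ — these are trivially equivalent.
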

\begin{proof}
    Such a circuit should induce a $-1$ phase for every $1$ present (in the original bit string) before position $p$. To achieve this, for each pointer register $\ket{x}$ we compute $x \leq p$, apply a $Z$ to the outcome bit, then uncompute. This is depicted in Figure \ref{fig:VpSubroutineForZZZ}. Each comparison takes $O(\log M)$ gates to compare $O(\log M)$ size numbers, and there are $O(F)$ comparisons to make; the overall circuit complexity is therefore $O(F \log M)$.
\end{proof}

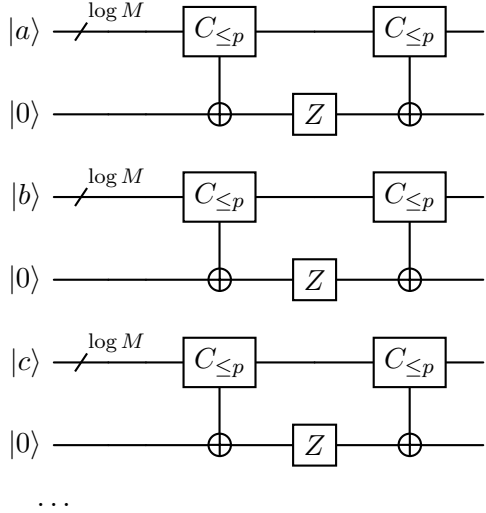
\begin{figure}[H]
    \begin{center}
        \begin{quantikz}[wire types={q,q}]
            \lstick{$\ket{a}$} & \qwbundle{\log M} & & \gate{C_{\leq p}} \wire[d][1]{q} & & \gate{C_{\leq p}} \wire[d][1]{q} & \\
            \lstick{$\ket{0}$} & & &\targ{} & \gate{Z} & \targ{} &  \\
            \lstick{$\ket{b}$} & \qwbundle{\log M} & & \gate{C_{\leq p}} \wire[d][1]{q} & & \gate{C_{\leq p}} \wire[d][1]{q} & \\
            \lstick{$\ket{0}$} & & &\targ{} & \gate{Z} & \targ{} &  \\
            \lstick{$\ket{c}$} & \qwbundle{\log M} & & \gate{C_{\leq p}} \wire[d][1]{q} & & \gate{C_{\leq p}} \wire[d][1]{q} & \\
            \lstick{$\ket{0}$} & & &\targ{} & \gate{Z} & \targ{} &  \\
            \ldots
        \end{quantikz}
    \end{center}
    \caption{Flips the phase of any $\ket{x}$ if $x \leq p$. $\srank$ is achieved by applying such a phase in parallel to all registers.}
    \label{fig:VpSubroutineForZZZ}
\end{figure}

\begin{lem}
    \label{lem:simpleXpComplexity}
     
    For the sorted list encoding described in Section \ref{subsec:algDef}, a $\bflip(p, \bb)$ query has a circuit of $O(F \log M)$ gates and $O(\log M)$ ancilla.
\end{lem}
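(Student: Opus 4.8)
The plan is to realize $\bflip(p,\cdot)$ as a reversible insert-or-delete on the sorted array $\ket{i_1}\cdots\ket{i_f}\ket{\infty}\cdots\ket{\infty}$ produced by $\mathcal{E}^{(1)}$. Toggling the $p$-th bit means inserting a register $\ket{p}$ in sorted position when $p$ is absent, and deleting it (closing the gap with the trailing $\infty$'s) when $p$ is present. The entire difficulty is performing this shift in place while keeping the array sorted using only $O(\log M)$ ancilla — essentially a single extra register — rather than the $O(F\log M)$ scratch one would need to copy the whole array out and back.

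The core subroutine I would build is insertion by a conditional-swap sweep. Introduce one buffer register $b$ of $\lceil\log(M+1)\rceil$ qubits and sweep $i=1,\ldots,F$ left to right: at step $i$ compute the comparison bit $[\,b<r_i\,]$ into $O(\log M)$ reusable scratch using the comparison circuit of Appendix~\ref{app:compCircs} (as a black box), perform a swap of $b$ with $r_i$ controlled on that bit, then uncompute the scratch. Initializing $b=p$, a straightforward invariant shows that after the sweep the array equals the original with $p$ inserted in order while $b$ holds the sentinel $\infty$ pushed off the end; each step is $O(\log M)$ gates, so the sweep costs $O(F\log M)$. Deletion is then exactly the inverse sweep (run $i=F,\ldots,1$, swapping when $[\,r_i<b\,]$): starting from $b=\infty$ it removes the register equal to $p$, shifts the tail left, and returns $b=p$.

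To assemble the full involution, I would first compute a presence flag $c=[\,p\in\{i_1,\ldots,i_f\}\,]$ by comparing each register to the classical constant $p$ for equality and XORing the outcomes into one qubit, at cost $O(F\log M)$ with $O(\log M)$ reused scratch; since the array is sorted, at most one comparison fires. Controlled on $\neg c$ I set $b\mathrel{{\oplus}{=}}p$ and run the insertion sweep (each swap additionally gated by $\neg c$), then restore $b$ by XOR-ing the constant $\infty$; controlled on $c$ I set $b\mathrel{{\oplus}{=}}\infty$, run the inverse (deletion) sweep gated by $c$, then XOR the constant $p$ to return $b$ to $\ket{0}$. In either branch $b$ is restored to $\ket{0}$, so it is genuine ancilla, and the live ancilla at any time is just the buffer, the $O(\log M)$ comparison scratch, and the single flag — total $O(\log M)$ — while the gate count is $O(F\log M)$.

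The main obstacle is the in-place reversible shift just described: getting a correct, invertible bubble that keeps the array sorted with only a single buffer, for which the reversibility already noted in Section~\ref{subsec:algDef} guarantees the deletion sweep is literally the inverse of the insertion sweep. The one remaining subtlety is resetting the flag $c$, since after the conditional insert/delete the array has changed and recomputing the presence of $p$ now returns $\neg c$; I would therefore uncompute $c$ by recomputing presence on the updated array, XOR-ing it into $c$ (yielding the constant $1$), and applying an $X$, restoring $c=\ket{0}$ at an additional $O(F\log M)$ cost with no extra ancilla. Beyond the construction, the only external facts used are this reversibility and the $O(\log M)$-gate, $O(\log M)$-ancilla comparison circuits treated as a black box.
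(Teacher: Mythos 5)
Your high-level plan (a single-buffer conditional-swap sweep, plus a presence flag) is in the spirit of the paper's construction, but the step you dismiss as routine --- ``compute the comparison bit $[\,b<r_i\,]$, perform the controlled swap, then uncompute the scratch'' --- is exactly where the construction breaks, and it is the crux of the lemma. A compare-and-swap is $2$-to-$1$ on the data: after the controlled swap, the pair $(b,r_i)$ is ordered with $b\geq r_i$ regardless of whether a swap occurred, so recomputing $[\,b<r_i\,]$ on the post-swap values always returns $0$ and does \emph{not} clear the scratch bit. Concretely, if $(b,r_i)=(5,7)$, the bit is set to $1$, the swap yields $(7,5)$, and recomputation XORs in $[\,7<5\,]=0$, leaving the bit at $1$. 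Since you reuse the same $O(\log M)$ scratch at every step, the very next comparison is then corrupted. This failure propagates to your deletion routine: because the forward sweep is not a well-defined clean unitary on the data alone, ``the inverse sweep'' cannot be described by simply reversing the order and flipping the comparison. Indeed, sweeping $i=F,\ldots,1$ with $b=\infty$ and swapping when $[\,r_i<b\,]$ has no dependence on $p$ at all; on a sorted list it cyclically shifts everything and deletes the \emph{smallest} element $r_1$, not $p$. (A correct inverse step must use a different, $p$-dependent predicate on the post-swap position value, e.g.\ $[\,p\leq r_i<\infty\,]$, and proving that this uncomputes the forward comparison bit is precisely the nontrivial invariant your write-up omits. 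The appeal to the abstract reversibility of $\bflip$ from Section 4.1 does not help: that says the encoded map is a permutation, not that your particular sweep admits clean ancilla bookkeeping.)

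The paper sidesteps this issue with a different swap primitive: the gate $U_p$ exchanges two adjacent registers iff one equals $p$ and the other is $>p$. This predicate is \emph{symmetric under the swap it controls} (the XORs $[\,x=p\,]\oplus[\,y=p\,]$ and $[\,x>p\,]\oplus[\,y>p\,]$ are invariant under exchanging $x$ and $y$), so after the conditional swap the ancillas are uncomputed by simply recomputing the same predicate, and $U_p$ is an involution. Chaining $U_p$ in ascending order bubbles $p$ to the last register when present (and does nothing when absent), a fixed exchange $p\leftrightarrow\infty$ on the last register toggles presence, and a descending chain bubbles $p$ back into sorted position when it was just inserted --- all without any presence flag, since every gate is data-conditioned. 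If you want to salvage your asymmetric-comparison sweep, you must replace ``uncompute the scratch'' with the explicit post-swap predicate above and verify the sweep invariant that makes it valid; as written, the proposal's core subroutine is not a reversible circuit.
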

\begin{proof}
    Such a circuit should add a pointer $p$ to the list if there is none, otherwise it should remove the pointer $p$. At a high level, we will do this in three steps. \begin{enumerate}[label=(\arabic*)]
        \item If $p$ exists, move it to the last register
        \item On the last register exchange $p$ and $\infty$
        \item If the last register is $p$, move it to the sorted position
    \end{enumerate}
    This is depicted as a circuit in Figure \ref{fig:XpFullCircuit}. To accomplish this reversibly, we first implement a reversible ordered-swap $U_p$. This operator swaps two registers if one is $p$ and the other is $>p$, pictured in Figure \ref{fig:UpSubroutineForBubble}. By chaining these together in ascending order, we will move $p$ to the end if it exists. Exchanging classical states can be done by the circuit in Figure \ref{fig:exchangeClassicalsCircuit}, then chaining more $U_p$ together in descending order will move $p$ to sorted order if it was at the end. The full circuit requires $O(F)$ calls to subroutine $U_p$, and each takes $O(\log M)$ (from comparisons); exchanging two classical values on a register has negligible complexity. The overall complexity is therefore $O(F \log M)$.
\end{proof}

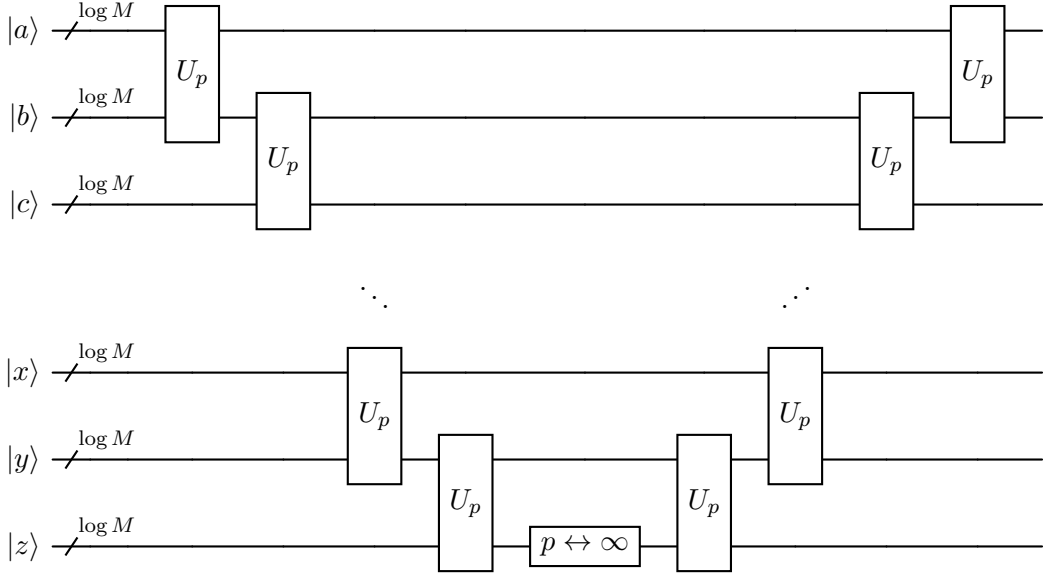
\begin{figure}[H]
    \begin{center}
        \begin{quantikz}[wire types={q,q,q,n,q,q,q}, classical gap=0.7mm]
            \lstick{$\ket{a}$} & \qwbundle{\log M} & & \gate[2]{U_p} & & & & & & & & \gate[2]{U_p} & \\
            \lstick{$\ket{b}$} & \qwbundle{\log M} & & & \gate[2]{U_p}  & & & & & & \gate[2]{U_p} & & \\
            \lstick{$\ket{c}$} & \qwbundle{\log M} & & & & & & & & & & & \\
            & & & & & \ddots & & & & \iddots & & & \\
            \lstick{$\ket{x}$} & \qwbundle{\log M} & & & & \gate[2]{U_p} & & & & \gate[2]{U_p} & & & \\
            \lstick{$\ket{y}$} & \qwbundle{\log M} & & & & & \gate[2]{U_p} & & \gate[2]{U_p} & & & &  \\
            \lstick{$\ket{z}$} & \qwbundle{\log M} & & & & & & \gate{p \leftrightarrow \infty} & & & & & \\
        \end{quantikz}
    \end{center}
    \caption{$X_p$ is achieved by pushing $p$ down with $U_p$ gates and exchanging it out with $\infty$, or exchanging $p$ in and pushing it up with $U_p$ gates. See Appendix \ref{app:circSubroutines} for a definition of circuit subroutines $U_p$ and $p \leftrightarrow \infty$.}
    \label{fig:XpFullCircuit}
\end{figure}

\subsection{Lower Depth}
\label{subsec:varLowDepth}
A notable problem with the previously described encoding is the serial nature of the bit flip operator. In particular, the depth of the circuits as described is $O(F \log M)$, primarily due to sequentially ``bubbling'' a targeted register to the end of the list when implementing a $\bflip(p, \bb)$ operation. In this section, we describe a modification using $O(F \log M)$ ancillas that allows exponentially smaller depth, $O(\log F + \log \log M)$. Further, the buffer register idea developed here will prove useful when lowering depth in our succinct construction.

To achieve $O(\log F + \log \log M)$ depth, we will pad the encoded state with an $\ket{\infty}$ between registers and on either end. We will refer to these as buffer registers, in contrast with logical registers representing pointers. We will also pad the start with a $\ket{-\infty}$ and the end with $\ket{\infty}$ logical registers for convenience. Formally, consider strings of the form \begin{align}
    x &= e_{i^{(1)}} \oplus e_{i^{(2)}} \oplus ... \oplus e_{i^{(f)}} \nonumber\\
    i^{(1)}&< i^{(2)} < ... < i^{(f)} \nonumber
\end{align}
where $f<F$. Define  $\mathcal{E}^{(2)} : \{0, 1\}^M \rightarrow \{0,1\}^{O(F \log M)}$ as
\begin{align}
    \mathcal{E}^{(2)}(x) =& \ket{-\infty} \textcolor{blue}{\ket{\infty}}\ket{i^{(1)}_l}\textcolor{blue}{\ket{\infty}}\ket{i^{(2)}_l}
    \textcolor{blue}{\ket{\infty}}\cdots \textcolor{blue}{\ket{\infty}}\ket{i^{(F)}_l} \textcolor{blue}{\ket{\infty}} \cdots \ket{\infty},
\end{align}
where the blue registers are buffers (this coloring is purely conceptual; the same data is stored in each). This leaves $O(F \log M)$ qubits, as the buffer registers are just a constant factor overhead. The correct action of $\srank$ and $\bflip$ exactly mirrors that of Section \ref{sec:efficientEnc} on the logical registers (excluding the initial padded $-\infty$), except now we must also leave the buffer registers invariant after the computation is finished.

\begin{lem}
    Under the list-with-buffers encoding described in Section \ref{subsec:varLowDepth}, a $\srank(p, \bb)$ query has a circuit of $O(\log \log M)$ depth, $O(F \log M)$ gates, and $O(F \log M)$ ancilla.
    \label{lem:lowDepthZZZ}
\end{lem}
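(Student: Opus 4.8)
The plan is to show that $\srank$ is inherently parallelizable, so that the depth improvement over the serial gate count comes essentially for free. Recall the construction from Lemma \ref{lem:simpleZZZComplexity}: for each register $\ket{x}$ we compute the predicate $x \leq p$ into a fresh ancilla, apply a $Z$ to that ancilla, and then uncompute the comparison. The crucial observation is that the phase contributions from distinct registers are completely independent---each touches only its own register and its own private ancilla---so there is no need to perform them sequentially. The serial depth $O(F \log M)$ in the earlier construction was an artifact of the $\bflip$ circuit's bubbling, not of $\srank$ itself.

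First I would handle the role of the buffer registers and the padding introduced in Section \ref{subsec:varLowDepth}. A buffer register always holds $\ket{\infty}$, which by convention is strictly larger than any $p \in [M]$; hence the predicate ``$\infty \leq p$'' is always false and a buffer contributes no phase. We may therefore safely run the comparison-and-phase routine on every register, logical and buffer alike, without affecting correctness. The single leading $\ket{-\infty}$ register satisfies $-\infty \leq p$ for every $p$, so it would contribute an identical $-1$ to every computational basis state; this is a global phase with no physical effect and can be ignored (or the register simply excluded from the loop). Thus the action on the logical registers agrees exactly with Definition \ref{defn:srank}, as required by Equation \ref{eqn:rank-query-defn}.

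Next I would bound the resources. There are $O(F)$ registers in total (buffers are only a constant-factor blowup). Allocating a private block of $O(\log M)$ ancilla to each register for its comparison workspace uses $O(F \log M)$ ancilla overall, matching the stated bound, and lets all comparisons execute simultaneously. Each comparison of two $O(\log M)$-bit integers costs $O(\log M)$ gates, so across all registers the gate count is $O(F \log M)$. For the depth, I would invoke a low-depth (parallel-prefix / tree-structured) comparator of the kind discussed in Appendix \ref{app:compCircs}: comparing two $n$-bit numbers can be done in $O(\log n)$ depth, which for $n = O(\log M)$ is $O(\log \log M)$. The compute step, the layer of $Z$ gates, and the uncompute step then each take $O(\log \log M)$ depth, and since the per-register subcircuits act on disjoint qubits they compose in parallel, giving total depth $O(\log \log M)$.

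The only real point requiring care---the ``obstacle,'' such as it is---is confirming that a single $O(\log M)$-bit comparison admits an $O(\log \log M)$-depth implementation, which is exactly what a parallel-prefix comparator provides and which I would cite from the comparison-circuit appendix. Everything else is bookkeeping: the buffers are phase-neutral, the per-register routines are disjoint and hence parallel, and the ancilla budget is ample. This is in deliberate contrast to $\bflip$, whose depth reduction genuinely needs the buffer registers; for $\srank$ the buffers are merely inert, and the low depth is obtained purely by parallelizing the independent comparisons of the original Lemma \ref{lem:simpleZZZComplexity} construction.
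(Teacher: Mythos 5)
Your proposal is correct and takes essentially the same approach as the paper: apply the comparison-and-phase routine of Lemma \ref{lem:simpleZZZComplexity} to all registers in parallel with private ancilla blocks, invoking the $O(\log\log M)$-depth, linear-ancilla comparators of Lemmas \ref{lem:equalityComplexity} and \ref{lem:comparisonComplexity}. The only cosmetic difference is that the paper simply excludes the buffer registers and the $\ket{-\infty}$ pad from the comparisons, whereas you additionally argue that including the buffers is harmless and that the $\ket{-\infty}$ register must be excluded (or its global phase tolerated)---both observations are sound.
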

\begin{proof}
    This operation can be done in a similar way to Section \ref{subsec:simpleEfficientCircs}. We again apply a conditional phase to each logical register, as in Figure \ref{fig:VpSubroutineForZZZ} (note we do not do this on any buffer register, nor the padded $\ket{-\infty}$ at the beginning). The gate count $O(F \log M)$ follows from the analysis in Lemma \ref{lem:simpleZZZComplexity}, and one can see that the depth is fully determined by the depth of a comparison. By Lemma \ref{lem:equalityComplexity} and Lemma \ref{lem:comparisonComplexity}, these can be done in $O(\log \log M)$ depth using linearly many ancillae, so the overall depth is $O(\log \log M)$.
\end{proof}

\begin{lem}
    Under the list-with-buffers encoding described in Section \ref{subsec:varLowDepth}, a $\bflip(p, \bb)$ query has a circuit of $O(\log F + \log \log M)$ depth, $O(F \log M)$ gates, and $O(F \log M)$ ancilla.
    \label{lem:lowDepthXp}
\end{lem}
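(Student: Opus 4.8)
The plan is to retain the three-stage structure of the serial $\bflip$ circuit of Lemma \ref{lem:simpleXpComplexity}, but to replace the sequential ``bubbling'' (the chain of $U_p$ gates in Figure \ref{fig:XpFullCircuit}, which by itself costs $\Theta(F)$ layers) with a constant number of \emph{parallel} conditional-swap layers enabled by the buffer registers. First I would, in a single parallel step, compute for every logical register $\ket{x}$ the comparison bits $[x<p]$, $[x=p]$, and $[x>p]$ into dedicated ancillas; by Lemma \ref{lem:equalityComplexity} and Lemma \ref{lem:comparisonComplexity} each comparison has depth $O(\log\log M)$ using $O(\log M)$ ancilla, and since the $F$ registers are disjoint these run simultaneously in depth $O(\log\log M)$ and $O(F\log M)$ total gates and ancilla.

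The core idea is that a buffer $\ket{\infty}$ to the right of each logical register provides exactly the scratch space needed to shift an entire suffix by one logical slot in $O(1)$ swap layers. To insert $p$, I would shift every register with value $>p$ right by one logical position in two layers: one parallel layer of controlled swaps exchanges each such register (on an even physical slot) with the buffer immediately to its right, and a second layer exchanges the now-occupied odd slots with the even slot to their right, which the first layer has just vacated. Because the list is sorted and occupied/buffer slots alternate, each layer consists of swaps on pairwise-disjoint register pairs $(2k,2k+1)$ and $(2k+1,2k+2)$, so each is a single parallel layer of controlled-swaps of $O(\log M)$-bit registers, of depth $O(\log\log M)$. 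After the shift, the unique slot into which $p$ must be written is the one that is now $\ket{\infty}$ whose left logical neighbour has value $\le p$; this is a purely local predicate, so $p$ is deposited in parallel with no serial search. Deletion is handled by running this insertion subcircuit in reverse.

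To package this into a single reversible operator that inserts when $p$ is absent and deletes when $p$ is present, I would compute a global ``present'' bit $c=\bigvee_x [x=p]$ from the equality flags via a balanced OR-tree of depth $O(\log F)$, then apply the insertion subcircuit controlled on $c=0$ and its inverse controlled on $c=1$. Fanning the single bit $c$ out to the $O(F\log M)$ controlled swaps takes depth $O(\log F+\log\log M)$. Summing the contributions --- $O(\log\log M)$ for the comparisons and the shift layers, $O(\log F)$ for the OR-tree, and $O(\log F+\log\log M)$ for the fan-out --- yields the claimed depth $O(\log F+\log\log M)$, while the gate and ancilla counts stay at $O(F\log M)$ exactly as in Lemma \ref{lem:simpleXpComplexity}.

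The step I expect to be the main obstacle is reversibility, and in particular the clean uncomputation of ancillas. The per-register comparison flags can be uncomputed at the end by recomputation, but the present bit $c$ cannot be recomputed directly from the output, since $\bflip$ toggles membership and the post-operation present bit equals $\neg c$. The fix I would use is to recompute the present bit $c'$ of the \emph{output} state (again depth $O(\log F+\log\log M)$), XOR it into the ancilla holding $c$, and flip by a constant; because $c'=\neg c$ this zeroes the ancilla, after which $c'$ is uncomputed, leaving the total depth unchanged. A secondary point to verify carefully is that the two swap layers exactly restore the buffer pattern --- every buffer returns to $\ket{\infty}$ and each logical value lands back on an even slot --- and that the assumption $f<F$ guarantees a free slot for the inserted element; both follow from tracking the alternating occupancy through the two layers.
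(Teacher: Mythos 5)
Your overall strategy is the same as the paper's: use the interleaved $\ket{\infty}$ buffers to replace the serial bubbling with $O(1)$ staggered layers of parallel conditional swaps on disjoint adjacent pairs, identify the insertion point by a purely local predicate, compute a global presence flag (the paper's $f_{del}$, your $c$) and fan it out in depth $O(\log F)$ to steer insertion versus deletion, and clean that flag by recomputing presence on the \emph{output} state and flipping --- the final CNOT-then-$X$ on the flag wire in Figure \ref{fig:lowDepthXp} is exactly your XOR-and-flip trick. The depth, gate, and ancilla accounting also matches the paper's.

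However, one step fails as written: the claim that ``the per-register comparison flags can be uncomputed at the end by recomputation.'' Because the shift relocates values between slots, the flags computed on the input no longer agree with what recomputation on the output produces, so the XOR does not return every ancilla to $\ket{0}$. Concretely, for an insertion whose deposit slot $j$ originally held the first value $x_j > p$: the input flag $[x_j > p] = 1$, but after the operation slot $j$ holds $p$, so recomputation XORs in $[p > p] = 0$ and leaves that ancilla dirty; the $[x = p]$ ancilla at slot $j$ is likewise left holding $1$. The same defect hits your deposit step itself: the predicate ``slot is $\ket{\infty}$ and left logical neighbour $\le p$'' is destroyed by the very deposit it controls, so its ancilla also cannot be cleaned by recomputation. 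You correctly diagnosed this phenomenon for the global bit $c$ (post-operation presence equals $\neg c$) but missed that it recurs at the per-register level. The paper sidesteps all of this by making every local subroutine's predicate invariant under the operation it controls: $S_p$ (Figure \ref{fig:swapIfGreaterP}) swaps two registers iff \emph{both} are $\geq p$, a condition symmetric under the swap, and $E_p$ (Figure \ref{fig:delPLocal}) conditions only on the two \emph{neighbours} of the register it modifies, which the exchange leaves untouched. Your construction is repairable --- use a neighbour-based predicate (left neighbour $\le p$ and right logical neighbour $> p$) for the deposit, and clean the stale per-register flags with one extra round of fresh comparisons against the output, the same flavour of fix you already apply to $c$ --- but as stated the circuit leaves ancillas entangled with the data, so it does not implement the unitary $F_j$ required by Definition \ref{defn:fermion-data-structure}.
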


\begin{proof}
    We are promised that whenever an insertion is made, the last element will be an $\infty$. Conversely, whenever a deletion is made, the element added to the end will be an $\infty$. This prevents us from needing to perform a full cycle, instead relying on the buffer registers to provide $\infty$'s wherever needed. At a high level, it suffices to do the following:
    \begin{enumerate}[label=(\arabic*)]
        \item Compute a flag $f_{del}$ depending on whether $p$ appears in the logical registers, and fan it out to all registers (in Figure \ref{fig:lowDepthXp} this fan-out is implicit).
        \item If $p$ is not present ($f_{del}=\top$), insert $p$ at the buffer register between it's immediate predecessor and successor.
        \item If $p$ is present, shuffle registers $>p$ upwards one position, and move $p$ to it's preceding buffer. Otherwise, shuffle all registers $>p$ downwards one position, and move $p$ out of it's preceding buffer.
        \item If $p$ was present ($f_{del}=\bot$), remove $p$ from the buffer between $p$'s predecessor and successor.
        \item Uncompute $f_{del}$.
    \end{enumerate}
    This is depicted in Figure \ref{fig:lowDepthXp}. Note that steps (2) and (4) do not need to explicitly be controlled by $f_{del}$ due to the assumed structure of the data. In step (3) however, the direction in which to shuffle depends on whether $p$ is present or not, so a light cone argument implies that the local operations must depend on whether $p$ is present. It is worth pointing out that, except for the fan-in/fan-out of $f_{del}$, all operations are nearest-register in a one dimensional layout, which could aid implementation on a quantum computer with geometric constraints.

    Using linearly many ancilla registers, a comparison between $n$ bit integers can be done in depth $\log n$ from Lemma \ref{lem:equalityComplexity} and Lemma \ref{lem:comparisonComplexity}. Swaps between registers can be done in $O(1)$ depth, the largest comparison is between $O(\log M)$ bit integers, and a single bit can be fanned out to $F$ positions (one for each register) in depth $O(\log F)$. The total depth is therefore $O(\log F + \log \log M)$ (or $O(\log \log M)$ with unbounded fan-in and fan-out), using $O(F \log M)$ ancillae. Each register of size $\log M$ is acted on by only a constant number of comparison/swaps, so the gate complexity is $O(F \log M)$.
\end{proof}

\begin{figure}[H]
    \centering
    \begin{quantikz}[wire types={q,q,q,q,q,q,q,q,q,q,q,q,n}, classical gap=0.7mm]
        \lstick{$\ket{0}$} & \qw & \qw & \targ{}\slice[style=black]{(1): Compute flag} & \midstick{$\ket{f_{del}}$} & \qw \slice[style=black]{(2): Insert $p$}& \octrl{9} & \qw & \ctrl{9} \slice[style=black]{(3): Shift suffix $>p$}& \qw & \qw & \qw\slice[style=black]{(4): Delete $p$} & \targ{} & \gate{X} & \qw \\
        \lstick{$\ket{-\infty}$} & \qwbundle{\log M} & \qw & \qw & \qw & \gate[3]{E_p} & \qw & \qw & \qw & \qw & \qw & \gate[3]{E_p} & \qw & \qw & \qw \\
        \lstick{$\ket{\infty}$} & \qwbundle{\log M} & \qw & \qw & \qw & \qw & \qw & \gate[2]{S_p} & \qw & \qw & \qw & \qw & \qw & \qw & \qw \\
        \lstick{$\ket{x}$} & \qwbundle{\log M} & \qw & \gate{C_{=p}}\wire[u][3]{q} &\gate[3]{E_p} & \qw & \gate[2]{S_p} & \qw & \gate[2]{S_p} & \qw & \gate[3]{E_p} & \qw & \gate{C_{=p}}\wire[u][3]{q}& \qw & \qw\\
        \lstick{$\ket{\infty}$} & \qwbundle{\log M} & \qw & \qw & \qw & \qw & \qw & \gate[2]{S_p} & \qw & \qw & \qw & \qw & \qw & \qw & \qw \\
        \lstick{$\ket{y}$} & \qwbundle{\log M} & \qw & \gate{C_{=p}}\wire[u][2]{q} & \qw & \gate[3]{E_p} & \gate[2]{S_p} & \qw & \gate[2]{S_p} & \qw & \qw & \gate[3]{E_p} & \gate{C_{=p}}\wire[u][2]{q} & \qw & \qw \\
        \lstick{$\ket{\infty}$} & \qwbundle{\log M} & \qw & \qw & \qw & \qw & \qw & \gate[2]{S_p} & \qw & \qw & \qw & \qw & \qw & \qw & \qw\\
        \lstick{$\ket{z}$} & \qwbundle{\log M} & \qw & \gate{C_{=p}}\wire[u][2]{q} &\gate[3]{E_p} & \qw & \gate[2]{S_p} & \qw & \gate[2]{S_p} & \qw & \gate[3]{E_p} & \qw & \gate{C_{=p}}\wire[u][2]{q}& \qw & \qw \\
        \lstick{$\ket{\infty}$} & \qwbundle{\log M} & \qw & \qw & \qw & \qw & \qw & \gate[2]{S_p} & \qw & \qw & \qw & \qw & \qw & \qw & \qw \\
        \lstick{$\ket{w}$} & \qwbundle{\log M} & \qw & \gate{C_{=p}}\wire[u][2]{q} & \qw &\gate[3]{E_p} & \gate[2]{S_p} & \qw & \gate[2]{S_p} & \qw & \qw & \gate[3]{E_p} & \gate{C_{=p}}\wire[u][2]{q} & \qw & \qw\\
        \lstick{$\ket{\infty}$} & \qwbundle{\log M} & \qw & \qw & \qw & \qw & \qw & \gate[2]{S_p} & \qw & \qw & \qw & \qw & \qw & \qw & \qw \\
        \lstick{\ldots} & & & & & & & & & & & & & &
    \end{quantikz}
    \caption{The full low-depth circuit for an $\bflip(p, \bb)$ operation. Large width operations that target/are controlled by a single qubit can be implemented with additional ancillae and $O(1)$ many fan-in/fan-out gates, or in logarithmic depth by $1$ and $2$ qubit gates. Registers $x,y,z,w$ are the logical registers, with adjacent registers being buffers. Note also the list is padded by a $-\infty$ logical register at the beginning and an $\infty$ at the end, though these are only used for comparisons and so could be removed and replaced by a hard-coded comparison. See Appendix \ref{app:circSubroutines} for circuit subroutines.}
    \label{fig:lowDepthXp}
\end{figure}
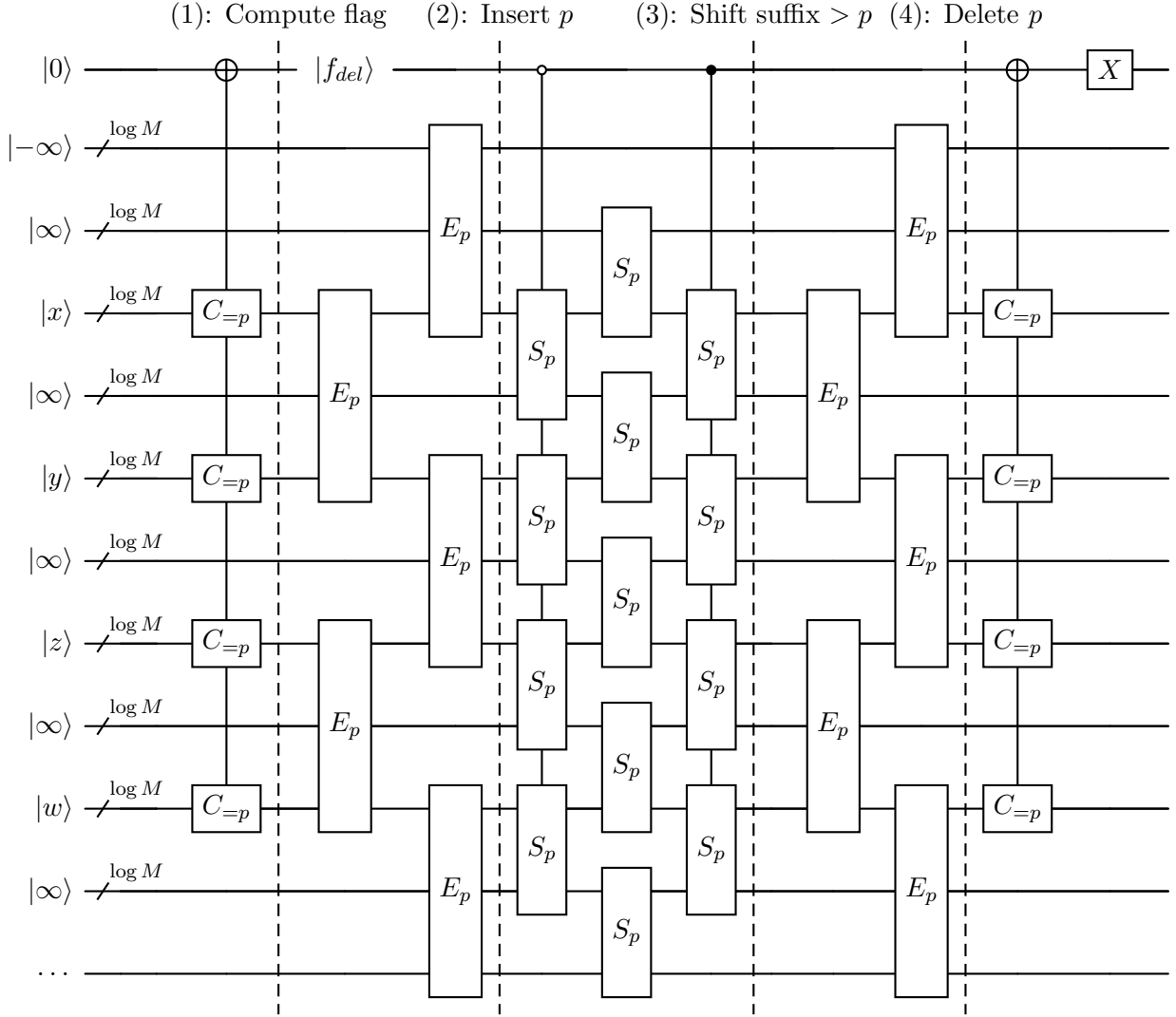

\section{Achieving Succinctness}
\label{sec:fewerQubits}

Recall that the information theoretic limit for encoding $F$ fermions in $M$ modes is $\mathcal{I}:= \lceil \log \binom{M}{F} \rceil$ qubits. From Stirling's approximation we have \begin{align*}
    \mathcal{I} = F \log \frac{M}{F} + O(F),
\end{align*} so the encoding in Section \ref{subsec:algDef} is redundant---for instance when $F=\sqrt{M}$, approximately half of the qubits are unnecessary. Here we present an encoding in which only a sublinear $O(F)$ many qubits are unnecessary (there are known constructions that achieve the \emph{exact} information theoretic minimum $\mathcal{I}$ \cite{HNAW22, STHCG22}, but the gate complexity becomes $M^{O(F)}$). For the example of $F=\sqrt{M}$, this new encoding has only a negligible $o(1)$ fraction of unnecessary qubits, i.e. it is roughly twice as space efficient as in Section \ref{sec:efficientEnc}. These techniques make crucial use of the fact that the list is not antisymmetrized.

\subsection{Succinct encoding}
\label{subsec:simpleSuccinct}

We formally state the main theorem of this section below.

\begin{thm}
    There is a fermion data structure for strings of $M$ qubits having Hamming weight at most $F$ that uses $\mathcal{I}+O(F)$ qubits, with gate complexity $O(F \log M)$.
    \label{thm:fewQubits}
\end{thm}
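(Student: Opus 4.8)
The plan is to keep the sorted-list data structure of Section~\ref{sec:efficientEnc} but to store each register in an Elias--Fano-style split representation. Write each pointer $v \in [M]$ in $\lceil \log M\rceil$ bits and fix a cutoff $s := \lceil \log(M/F)\rceil$, splitting $v$ into its $s$ least significant bits $v_\ell$ and its $\lceil\log M\rceil - s = \log F + O(1)$ most significant bits $v_h$. Because the pointers are stored in increasing order $i_1 < \cdots < i_f$, the sequence of high parts $v_h^{(1)} \le \cdots \le v_h^{(F)}$ (padding the $\infty$ registers with the maximal high value) is non-decreasing and takes only $N_h := \lceil M/2^s\rceil = \Theta(F)$ distinct values. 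I would store the low parts explicitly, as $F$ registers of $s$ bits each, and store the high parts in the stars-and-bars encoding of this non-decreasing sequence: a bit string of length $F + N_h - 1$ recording, for each of the $N_h$ possible high values, how many registers attain it (stars), separated by bars.

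For the space bound, the low parts occupy $Fs = F\log(M/F) + O(F)$ qubits and the stars-and-bars string occupies $F + N_h - 1 = O(F)$ qubits, so the total is $F\log(M/F) + O(F) = \mathcal{I} + O(F)$ by Stirling's approximation. This is exactly where the redundancy of the monotone high parts is removed relative to the naive $F\lceil \log(M+1)\rceil$ of Section~\ref{sec:efficientEnc}.

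To answer queries I would move, reversibly, between this compressed form and the explicit sorted list of Section~\ref{sec:efficientEnc}, apply the circuits of Lemma~\ref{lem:simpleZZZComplexity} and Lemma~\ref{lem:simpleXpComplexity}, and convert back. The conversion decodes the stars-and-bars string into explicit per-register high parts: reading left to right, the high value of a register equals the number of bars preceding its star, which is a prefix sum (equivalently, run-length decoding of the gap sequence $g_k := v_h^{(k)} - v_h^{(k-1)}$, with $v_h^{(0)} := 0$). Prefix sums of $F$ numbers of $O(\log F)$ bits are reversible in $O(F\log F) = O(F\log M)$ gates, and since the low parts are already explicit this realizes the full pointer list $v^{(k)} = v_h^{(k)} 2^s + v_\ell^{(k)}$ in the Section~\ref{sec:efficientEnc} format. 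Then $\srank(p,\bb)$ and $\bflip(p,\bb)$ are precisely the operations of Lemmas~\ref{lem:simpleZZZComplexity} and~\ref{lem:simpleXpComplexity}; for $\bflip$ the high multiset changes, so after applying the explicit $\bflip$ I re-run the inverse conversion to recompress the updated high parts and uncompute scratch. Each query is thus $O(F\log M)$ gates.

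The main obstacle I expect is making the conversion a genuine two-way reversible bijection of the right cost, with the decompression cleanly uncomputable after the operation. Because $\bflip$ alters both the low registers and the high multiset simultaneously, the delicate point is verifying that recompression correctly inverts the decoding on the \emph{updated} list rather than the original one. I would handle this by treating the gap sequence as the canonical intermediate, establishing the values-to-gaps and gaps-to-stars-and-bars maps as explicit reversible circuits, and checking that an insertion or deletion changes only $O(1)$ gap entries together with a global shift of register contents, so that recompression is simply the inverse decoding applied to the new sorted list. A secondary concern is keeping the auxiliary space used during decompression controlled, since the explicit high parts occupy $O(F\log F)$ additional qubits during the operation.
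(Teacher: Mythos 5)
Your static encoding is essentially identical to the paper's---splitting each pointer into monotone high parts stored via stars and bars plus explicit low parts---and your space analysis of the representation itself is correct. The genuine gap is in how you answer queries. You propose to reversibly decompress into the explicit sorted list of Section~\ref{sec:efficientEnc}, apply Lemmas~\ref{lem:simpleZZZComplexity} and~\ref{lem:simpleXpComplexity}, and recompress. But materializing the explicit high parts requires $\Theta(F\log F)$ ancilla qubits, and by Definition~\ref{defn:fermion-data-structure} ancillas used by the query circuits $F_j$, $R_j$ \emph{are counted in the space complexity}: the circuits act on exactly $n$ qubits, so peak space during a query is the space of the data structure. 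Your construction therefore uses $\mathcal{I}+O(F\log F)$ qubits, not $\mathcal{I}+O(F)$. At $F=\sqrt{M}$ this is roughly $2\mathcal{I}$, which is precisely the redundancy the theorem is supposed to eliminate; at $F=\Theta(M)$ the overhead is even larger relative to $\mathcal{I}=\Theta(F)$. You flag this as ``a secondary concern,'' but it is fatal: any strategy that round-trips through the uncompressed list can never meet the stated space bound. Indeed, the paper states this constraint explicitly at the start of Section~\ref{subsec:simpleSuccinctCircuits}: retaining the space advantage limits the query circuits to $O(F)$ ancillas.

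What the paper does instead---and this is the actual content of the theorem beyond the (comparatively easy) static encoding---is to answer $\srank$ and $\bflip$ \emph{directly on the compressed representation} with only $O(F)$ ancilla (Lemmas~\ref{lem:fewQubitEffZZZ} and~\ref{lem:fewQubitEffXp}). Concretely, one scans the stars-and-bars string to compute two $O(\log F)$-bit counters $p_{start}$, $p_{end}$ delimiting the range of LSB registers whose high part matches that of $p$; each comparison against $p$ is then decided from a register's index relative to $[p_{start},p_{end})$ together with an LSB comparison, and $\bflip$ is realized by swap ladders acting on the LSB registers and on the compressed MSB bit string itself (shuffling a single bit of $m$ into or out of position), never reconstructing per-register high parts. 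To repair your proof you would need to replace the decompress--apply--recompress step with circuits of this kind, keeping all scratch to $O(F)$ qubits.
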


Such an encoding is constructed in the remainder of this section, with Lemma \ref{lem:fewQubitEffZZZ} and Lemma \ref{lem:fewQubitEffXp} demonstrating that it satisfies the requisite efficiencies. Similar to Corollary \ref{corr:fermiEncFromQubits}, this implies a fermion to qubit mapping with the same complexities. 

\begin{corr}
    There exists a fermion encoding of an $M$ mode system having at most $F$ fermions which uses $\mathcal{I}+O(F)$ qubits such that any $k$-local, particle preserving rotation can be implemented with $O(F \log M)$ gates.
    \label{corr:fermiEncSuccinct}
\end{corr}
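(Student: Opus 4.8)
The plan is to start from the sorted-list data structure of Theorem \ref{thm:mainEncodingThm} and compress away the redundancy in the high-order bits of the pointer registers. First I would fix a split parameter $h := \lceil \log F \rceil$ and write each of the $F$ sorted pointers $i_1 < \cdots < i_f$ (padded to $F$ registers by $\infty$) as a pair (high $h$ bits, low $L-h$ bits), where $L = \lceil \log(M+1)\rceil$. The low bits are stored verbatim as a sorted-list-style array, costing $F(L-h) = F\log(M/F) + O(F)$ qubits. The crucial observation is that the $F$ high parts form a \emph{non-decreasing} sequence over the alphabet $\{0,\dots,2^h-1\}$ (the sorted pointers, followed by the $\infty$ padding, which all sit in the top bin $2^h-1$), so they are a size-$F$ multiset and can be stored by the stars-and-bars string of $F$ stars and $2^h-1$ bars. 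Since $2^h = \Theta(F)$, this string has length $F + 2^h - 1 = O(F)$. Adding these two parts gives state size $F\log(M/F) + O(F) = \mathcal{I} + O(F)$, matching the claimed space (with the $O(\log F)=O(F)$ further ancilla accounted below).

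The second step is the $\srank$ circuit. Writing the classical query index as $p = (p_{hi}, p_{lo})$, a register is $\le p$ exactly when its high part is $< p_{hi}$, or equals $p_{hi}$ while its low part is $\le p_{lo}$. Because the list is sorted, the registers with high part $< p_{hi}$ are exactly the first $A$, and those with high part $= p_{hi}$ are the next $B$, where $A$ and $A+B$ are prefix counts of stars in the stars-and-bars string. I would compute $A$ and $A+B$ (each $O(\log F)$ bits) by a single left-to-right scan of the $O(F)$-length string, freezing the star-counter when the bar-counter first reaches $p_{hi}$ and $p_{hi}+1$; this is $O(F\log F)$ gates. Then for each register index $j$ I compare the classical $j$ to $A$ and $A+B$ to learn which of the three cases it is, apply a $Z$-phase outright when $j \le A$, apply it conditioned on $[\mathrm{LSB}_j \le p_{lo}]$ when $A < j \le A+B$, and do nothing otherwise, uncomputing the scratch after each register. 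Each register costs at most one $O(\log M)$-gate comparison, so the total is $O(F\log M)$.

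Third, for $\bflip(p,\bb)$ I would reuse the boundaries $A, A+B$ to locate the block of registers sharing high part $p_{hi}$, test whether some low part in that block equals $p_{lo}$ to set a deletion flag $f_{del}$, and then update both representations consistently: on the low-bit array, bubble $p_{lo}$ into (resp.\ out of) its sorted slot exactly as in Lemma \ref{lem:simpleXpComplexity}, except that the pairwise comparison is replaced by the block-aware comparison (registers before index $A$ are smaller, registers after $A+B$ are larger, and within the block one compares low parts); on the stars-and-bars string, move a single star from the top bin to bin $p_{hi}$ (insertion) or back (deletion), which is an $O(F)$-length shift. Converting an $\infty$ register into a real pointer corresponds precisely to moving its star out of the top bin and overwriting its all-ones low part with $p_{lo}$, so the two updates stay in sync and the map is reversible. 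This is $O(F)$ bubble steps of $O(\log M)$ gates each, totalling $O(F \log M)$, and finally $f_{del}$ is uncomputed.

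The main obstacle is exactly this bookkeeping of combining the compressed (stars-and-bars) high bits with the explicit low bits reversibly and within budget: one must recover enough information about individual registers --- here, only the two block boundaries $A, A+B$ --- to drive the comparisons, without ever materializing all $F$ high parts (which would cost $\Theta(F\log F)$ ancilla and blow the space bound). I expect the delicate points to be (i) handling the top bin, where genuine pointers at high part $2^h-1$ must be distinguished from $\infty$ padding by their low parts, and (ii) keeping ancilla at $O(F)$: since this theorem constrains only gate count and not depth, I would implement every comparator sequentially with $O(1)$ ancilla and retain only the $O(\log F)$-bit counters, so the additive overhead is $O(\log F) = O(F)$ and the total space is $\mathcal{I}+O(F)$ as claimed. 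Invoking the fermion-data-structure definition then yields the stated encoding.
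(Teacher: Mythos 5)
Your proposal is correct and matches the paper's own construction essentially step for step: the same $\lceil \log F\rceil$ MSB/LSB split with stars-and-bars storage of the non-decreasing high parts, the same linear scan of the $O(F)$-length string to recover the block boundaries (your $A$, $A+B$ are the paper's $p_{start}$, $p_{end}$), the same block-aware comparisons driving the $\srank$ phases and the bubble-style insert/delete for $\bflip$, and the same $O(\log F)$-counter ancilla accounting. No gaps; this is the paper's proof of Theorem \ref{thm:fewQubits} and Corollary \ref{corr:fermiEncSuccinct} in all but notation.
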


To reduce the number of necessary qubits, the key insight is noting that the most significant $G:=\lceil\log F\rceil$ qubits in each register of the encoding in Section \ref{subsec:algDef} are not pulling their weight. They use $F \cdot G = \Omega(F \log F)$ qubits to encode $F$ integers in the range $1$ to $F$, but they are in \emph{sorted order}, reducing the number of possible sequences--- see Figure \ref{fig:list-redundancy} for an illustration. We can instead store the same data (the most significant bits) in $O(F)$ space as a bit string where $2^{G}-1=O(F)$ zeroes delimit $2^G$ bins, and $F$ ones correspond to $F$ fermions. The $i$-th bin is the space from the $i$-th zero (or the start of the bit string) to the $i+1$-st zero (or the end of the bit string), and the number of ones in that range indicate elements with most significant bits $i$. This idea is commonly referred to as the stars and bars method.

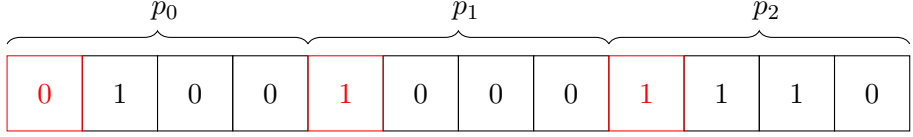
\begin{figure}
    \centering      
        \begin{tikzpicture}[scale=1.0]

            \foreach \i in {3,4,6,7,8,12} {
                \pgfmathtruncatemacro{\bin}{\i-1}
                \draw (\i,0) rectangle ++(1,1) node[midway] {$0$};
            }
            
            \foreach \i in {2,10,11} {
                \pgfmathtruncatemacro{\bin}{\i-1}
                \draw (\i,0) rectangle ++(1,1) node[midway] {$1$};
            }

            \draw [red] (1,0) rectangle ++(1,1) node[midway] {$0$};
            \foreach \i in {5,9} {
                \pgfmathtruncatemacro{\bin}{\i-1}
                \draw [red] (\i,0) rectangle ++(1,1) node[midway] {$1$};
            }

            \foreach \i in {1,2,3} {
                \pgfmathtruncatemacro{\im}{\i-1}
                \draw[fill=none] (4 * \i - 0.9, 1.6)  node {$p_\im$};
            }
            
            \draw [decorate,decoration={brace,amplitude=5pt,raise=10ex}]
  (1,-0.5) -- (5,-0.5) node{};
            \draw [decorate,decoration={brace,amplitude=5pt,raise=10ex}]
  (5,-0.5) -- (9,-0.5) node{};
            \draw [decorate,decoration={brace,amplitude=5pt,raise=10ex}]
  (9,-0.5) -- (13,-0.5) node{};

        \end{tikzpicture}
    \caption{An illustration of the redundancy in a sorted list, for a list of $3$ things out of a universe of $16$ things. The \textcolor{red}{red} bits are non-decreasing, with possibilities: \textcolor{red}{$000,001,011,111$}. Four possibilities means $2$ bits suffice, rather than $3$.}
    \label{fig:list-redundancy}
\end{figure}

The remaining least-significant bits can be stored in the usual way, where the order depends on the value determined by both least significant and most significant parts. So long as this order is maintained, the least-significant bits can be paired with the corresponding most-significant bits by the order in which they appear. This saves $F \lceil\log F\rceil$ qubits and adds $O(F)$ many, so the total qubit usage is \begin{align*}
    \text{Space}(M, F) =& F \lceil \log(M + 1) \rceil - F \lceil \log F \rceil + O(F) \\
    =& F \log \frac{M}{F} + O(F) \\
    =& \mathcal{I}+O(F).
\end{align*}
To define the encoding formally, for strings of the form \begin{align}
    x &= e_{i^{(1)}} \oplus e_{i^{(2)}} \oplus ... \oplus e_{i^{(F)}} \nonumber\\
    i^{(1)}&< i^{(2)} < ... < i^{(f)} \nonumber
\end{align}
for $f<F$, let $i_m$ denote the $G$ most significant bits of $i$, and $i_l$ denote the remaining least significant bits. Let $r_j$ be the number of pointers having most significant bits equal to $j$, i.e. the number of values of $k$ such that $i^{(k)}_m=j$. Now define  $\mathcal{E}^{(3)} : \{0, 1\}^M \rightarrow \{0,1\}^{\mathcal{I}+O(F)}$ as
\begin{align}
    \mathcal{E}^{(3)}(x) =& \ket{i^{(1)}_l} \ket{i^{(2)}_l} \cdots \ket{i^{(f)}_l} \ket{\infty_l} \cdots \ket{\infty_l} || 1^{r_0} 0 1^{r_1} 0 ... 0 1^{r_{2^G-1}}  
\end{align}
Where $||$ is a conceptual divider between information about the most and least significant bits, and $1^n$ denotes $n$ many $1$'s. The bits to the left of $||$ store the least significant bits of each pointer, and are referred to as $l$ or LSBs, and $i$-th register in this section is $l_i$. The bits to the right of $||$ are referred to as $m$ or MSBs and store the most significant bits of each pointer. Define $\mathcal{E}_s^{(3)} : \mathcal{H}_{2^M} \rightarrow \mathcal{H}_{2^{\mathcal{I}+O(F)}}$ as the unique linear extension of $\mathcal{E}^{(3)}$. A simple example of this encoding is shown in Figure \ref{fig:succinctEncodingExample}.

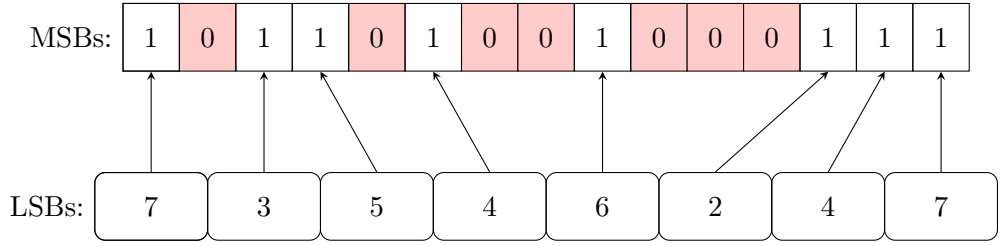
\begin{figure}
    \centering
    \begin{tikzpicture}[
  level 1/.style={sibling distance=10em},
  level 2/.style={sibling distance=5em}, scale=1.5,>=stealth]

    \tikzstyle{bits} = [rectangle, draw, minimum width = 0.75cm, minimum height = 0.9cm]
    \tikzstyle{register} = [rectangle, rounded corners, draw, minimum width = 2.25cm, minimum height = 0.9cm]
    \tikzstyle{registernar} = [rectangle, rounded corners, draw, minimum width = 1.5cm, minimum height = 0.9cm]

    \node[bits, label=left:MSBs:] at (-3.5, 0){ };
    
    \foreach \n in {1,4,6,7,9,10,11}{
        \node[bits, fill=red!20] (bits\n) at (0.5 * \n-3.5, 0){$0$};
    }
    \foreach \n in {0,2,3,5,8,12,13,14}{
        \node[bits] (bits\n) at (0.5 * \n-3.5, 0){$1$};
    }

    \node[registernar, label=left:LSBs:] at (-3.5, -1.5){ };
    \foreach \msbval [count=\n] in {7,3,5,4,6,2,4,7}{
        \node[registernar] (register\n) at (\n-4.5, -1.5){$\msbval$};
    }

    \foreach \bitval [count=\regval] in {0,2,3,5,8,12,13,14}{
        \draw [->] (register\regval.north) -- (bits\bitval.south);
    }
    
    \end{tikzpicture}
    \caption{Illustrative example of the succinct representation, with parameters $F=8$, $M=63$ and fermions at $[7,11,13,20,38,58,60,\infty]$; recall the convention that the largest storable value $(63)$ is identified with $\infty$. Red registers demarcate ``bins'' of different MSB values, out of $8$ possible values. The LSBs then only require $3$ qubits each, so the total storage is $15+24=39$ qubits. Using first quantization/sorted list encodings would use $8\cdot\lceil \log 63\rceil = 48$ qubits, and Jordan-Wigner or similar second quantized encodings would require $63$ qubits.}
    \label{fig:succinctEncodingExample}
\end{figure}

For correctness, it suffices if encoded $\bflip$ and $\srank$ operators satisfy the rules described in Section \ref{subsec:algDef}, on the list implicitly represented by our succinct string. In particular, the encoded $\bflip(p, \bb)$ operator should unitarily insert/delete $p$ from the sorted list defined by the information stored in the MSBs and LSBs, and update this data accordingly. The $\srank(p, \bb)$ operator should apply a minus phase if and only if there are an odd number of elements less or equal to $p$ in the sorted list.

\subsection{Efficient circuits}
\label{subsec:simpleSuccinctCircuits}
We now turn to constructing efficient circuits for the necessary operations on this succinct encoding. We note that retaining our space complexity advantage limits the number of ancillas that can be used to perform our operations to be $O(F)$ (and, naturally, we require these ancillas to be uncomputed by the end of each operation).
\begin{lem}
    Under the succinct list encoding $\mathcal E_s^{(3)}$ described in Section \ref{subsec:simpleSuccinct}, a $\srank(p, \bb)$ query has a circuit of $O(F \log M)$ gates, and $O(F)$ ancilla.
    \label{lem:fewQubitEffZZZ}
\end{lem}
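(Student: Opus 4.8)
The plan is to implement $\srank(p, \bb)$ by separately counting, for each pointer, whether its full value $i^{(k)} = (i^{(k)}_m, i^{(k)}_l)$ is $\leq p$, applying a $Z$ phase to a scratch bit for each such pointer, and then uncomputing. The obstacle compared to Lemma \ref{lem:simpleZZZComplexity} is that the full value of each pointer is no longer stored in one place: the most significant bits live implicitly in the stars-and-bars string to the right of $||$, while the least significant bits are stored explicitly in the registers to the left. So the comparison $i^{(k)} \leq p$ must combine information from both representations, using only $O(F)$ ancilla in total.

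First I would split the threshold $p$ into its most and least significant parts $p = (p_m, p_l)$, mirroring the split of each pointer. The comparison then factors: $i^{(k)} \leq p$ holds exactly when $i^{(k)}_m < p_m$, or $i^{(k)}_m = p_m$ and $i^{(k)}_l \leq p_l$. The key subroutine is therefore to determine, for the $k$-th logical register, which bin its MSB value sits in. This is where the stars-and-bars string does the work: the $k$-th one (reading left to right) corresponds to the $k$-th fermion in sorted order, and the number of zeros preceding it equals its MSB value $i^{(k)}_m$. So I would compute, by a prefix-count over the MSB string, a running tally of how many ones occur in bins with index $< p_m$, in bins with index $= p_m$, and handle the boundary. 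Because the pointers are sorted, all fermions with $i_m < p_m$ come before those with $i_m = p_m$, which come before those with $i_m > p_m$; this monotonicity lets me apply the $Z$ phases in a structured sweep rather than pairing up registers one at a time.

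Concretely, the phases decompose into two contributions. Every fermion in a bin with index strictly less than $p_m$ should receive a $-1$ phase unconditionally, and this count is simply the number of ones to the left of the $p_m$-th zero in the MSB string — a quantity I can extract with $O(F)$ gates by scanning the $O(F)$-length stars-and-bars register. Every fermion in the bin with index exactly $p_m$ should receive a $-1$ phase only if its least significant bits satisfy $i^{(k)}_l \leq p_l$; these are exactly the LSB registers whose position matches the ones inside bin $p_m$, and I compare each such $O(\log M)$-bit LSB register against $p_l$, apply $Z$ to the outcome, and uncompute, just as in Figure \ref{fig:VpSubroutineForZZZ}. Fermions in bins with index greater than $p_m$ receive no phase. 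Summing the costs: the scan of the $O(F)$-bit MSB string and the bin-membership bookkeeping use $O(F)$ gates and $O(F)$ ancilla, and the $O(F)$ comparisons of $O(\log M)$-bit LSBs against $p_l$ contribute $O(F \log M)$ gates, for an overall $O(F\log M)$ gate count and $O(F)$ ancilla.

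The main obstacle I anticipate is the bookkeeping that links each LSB register to its implicit MSB value while keeping the ancilla budget at $O(F)$ rather than $O(F\log M)$. Naively decompressing the stars-and-bars string into $F$ explicit $G$-bit MSB values would already cost $O(F\log F)$ ancilla, which is acceptable, but I must be careful that the comparisons against $p_l$ are only triggered on registers genuinely inside bin $p_m$ and that every scratch bit — the prefix counts, the bin-membership flags, and the comparison outcomes — is uncomputed at the end so that the operator acts as a clean diagonal phase on $\ket{\bb}$. I would handle this by computing the bin-membership flags reversibly from the MSB string, using them to gate the comparison-and-$Z$ step, and then reversing the flag computation, ensuring the net action is exactly the product of $(-1)$ phases over pointers with value $\leq p$ with all ancilla returned to $\ket{0}$.
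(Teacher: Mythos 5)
Your proposal is correct and takes essentially the same route as the paper: the paper also factors the comparison into MSB and LSB parts, scans the stars-and-bars string to obtain the range $[p_{start}, p_{end})$ of registers whose most significant bits match $p_m$ (your bin-membership bookkeeping, in index form), applies an unconditional phase to registers below that range, an LSB-conditioned phase inside it, nothing above it, and then uncomputes everything. One small correction: your aside that decompressing the MSB string into $F$ explicit $G$-bit values at a cost of $O(F\log F)$ ancilla ``is acceptable'' is not --- that would violate the $O(F)$ ancilla bound of the lemma --- but your actual construction (prefix counts, one-bit flags, and comparison outcomes, all uncomputed) never needs this and stays within budget.
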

\begin{proof}
    At a high level, this operation can be performed as follows: \begin{enumerate}[label=(\arabic*)]
        \item Scan the most significant bits $m$ to determine a range $[p_{start}, p_{end})$ of least significant bit registers whose most significant bits match $p$
        \item Use this range to implement comparisons $\leq p$ on all registers
        \item Phase the result of this comparison with a $Z$ gate
        \item Uncompute comparisons, then $p_{start}$ and $p_{end}$
    \end{enumerate}
    This is pictured in Figure \ref{fig:FewQubitZZZ}. Scanning the register $m$ takes $O(F)$ iterations costing $O(\log F)$ each. Each comparison afterwards costs $O(\log M/F)$ gates, and there are $O(F)$ many to make, so the total gate complexity is $O(F \log M)$. The only space used beyond comparisons are registers of size $O(\log F)$, satisfying the ancilla requirement.
\end{proof}
\begin{lem}
    Under the succinct list encoding $\mathcal E_s^{(3)}$ described in Section \ref{subsec:simpleSuccinct}, a $\bflip(p, \bb)$ query has a circuit of $O(F \log M)$ gates, and $O(F)$ ancilla.
    \label{lem:fewQubitEffXp}
\end{lem}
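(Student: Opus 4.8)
The plan is to mirror the sorted-list $\bflip$ of Lemma \ref{lem:simpleXpComplexity}, but to split the work between the LSB register array and the MSB stars-and-bars string, reusing the bin-scan already developed for Lemma \ref{lem:fewQubitEffZZZ}. Write $p=(p_m,p_l)$ for the $G=\lceil\log F\rceil$ most significant bits $p_m$ and the remaining least significant bits $p_l$. As in the $\srank$ circuit, I first scan the MSB string to compute the half-open range $[p_{start},p_{end})$ of LSB-register indices whose pointers lie in bin $p_m$; since $p_{start}$ (resp.\ $p_{end}$) is just the number of $1$'s preceding the $p_m$-th (resp.\ $(p_m+1)$-th) delimiting $0$, this is an $O(F)$-step scan over the $O(F)$-length string with $O(\log F)$-bit arithmetic, costing $O(F\log F)$ gates. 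This range is exactly the information needed to compare any LSB register against $p$ consistently with the global (MSB-then-LSB) order: register $k$ is below $p$ if $k<p_{start}$, above $p$ if $k\ge p_{end}$, and otherwise compared by its stored LSB value against $p_l$.

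With the range in hand, I compute a flag $f_{del}$ recording whether $p$ already occurs among the registers in $[p_{start},p_{end})$, together with the sorted insertion index $p_{ins}=p_{start}+|\{k\in[p_{start},p_{end}) : l_k<p_l\}|$, each from $O(F)$ comparisons of $O(\log(M/F))$-bit LSB values gated by $O(\log F)$-bit range checks, for $O(F\log M)$ gates total. The LSB array is then updated by a conditional cascade, as in Figure \ref{fig:lowDepthXp} but without buffers (no depth bound is required here): holding $p_l$ in an ancilla, I sweep a controlled swap from index $p_{ins}$ to the final register, conditioned on $f_{del}$ to select the direction, so as to shift the suffix $\ge p_{ins}$ rightward, dropping the trailing $\ket{\infty_l}$, for an insertion, or leftward, refilling the end with $\ket{\infty_l}$, for a deletion. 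Because the capacity promise guarantees a trailing $\ket{\infty_l}$ is available on insertion and is regenerated on deletion, this cascade is reversible and uses $O(F)$ controlled swaps on $O(\log(M/F))$-bit registers, again $O(F\log M)$ gates.

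The remaining piece is the MSB bookkeeping. Since every $\ket{\infty}$ placeholder lives in the top bin $2^G-1$, an insertion consumes one such placeholder and creates a real element in bin $p_m$, so $r_{p_m}$ must increase by one and $r_{2^G-1}$ decrease by one (and symmetrically for a deletion). In the stars-and-bars string this is precisely a single-position cyclic shift of the substring running from the end-of-bin-$p_m$ delimiter to the end of the string: insert a $1$ just before that delimiter and drop the trailing $1$, which exists because $r_{2^G-1}\ge 1$ whenever the list is not full. I implement this with $O(F)$ conditional single-bit swaps along the $O(F)$-length MSB string, selecting the direction by $f_{del}$, for $O(F)$ gates, and finally uncompute the range, $p_{ins}$, the cascade controls, and $f_{del}$.

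Summing the four stages gives $O(F\log M)$ gates, and the only persistent scratch is the $O(\log F)$-bit range/index registers, the one-bit flag, and the $O(F)$ cascade-control bits, i.e.\ $O(F)$ ancilla. The main obstacle I anticipate is keeping the MSB and LSB updates mutually consistent and jointly reversible: the LSB cascade and the MSB cyclic shift must consume and regenerate the \emph{same} trailing-$\infty$ degree of freedom, and every comparison driving the cascade must be expressed entirely through the precomputed bin range rather than the unavailable full pointer values, so that uncomputation of all scratch succeeds and the encoded list remains correctly sorted.
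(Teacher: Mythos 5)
Your proposal is correct and follows essentially the same route as the paper's proof: compute the bin range $[p_{start},p_{end})$ from the MSB scan, use it to form the presence flag and target/insertion index, perform conditional swap-ladder shifts on the LSB array with the $p_l\leftrightarrow\infty_l$ exchange absorbing the capacity slack, apply the corresponding single-bit shift in the stars-and-bars MSB string, and uncompute all scratch by recomputation. The only differences are cosmetic implementation choices (you inject $p_l$ at the insertion index via a cyclic shift rather than exchanging at the fixed last register and bubbling back, and you insert the MSB one at the bin's delimiter rather than at the sorted position, both of which are equivalent), and your stated resource bounds match the paper's.
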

\begin{proof}
    Comparisons require information from the most significant bits, which is gathered first. Following this, analogous to the approach in Lemma \ref{lem:simpleXpComplexity} we will bubble the target element to the end if present, exchange with $\infty$, and then bubble back. At a high level: \begin{enumerate}[label=(\arabic*)]
        \item Compute and store a pointer $t$ to the target position in the list of $F$ elements, defined as the first register with value $\geq p$ (based on both it's most and least significant bits).
        \item Compute a flag $f_{del}$, which is on if and only if $p$ is present in the list.
        \item If $f_{del}$, shuffle $l_t$ to the end of $l$.
        \item Exchange $\infty_l \leftrightarrow p_l$ on the last register of $l$.
        \item If not $f_{del}$, shuffle the end of $l$ to $t$.
        \item If $f_{del}$, shuffle the $(t+p)$-th bit of $m$ to the end
        \item If not $f_{del}$, shuffle the last bit of $m$ to the $(t+p)$-th position.
        \item Uncompute $f_{del}$ and $t$.
    \end{enumerate}
    This is pictured in Figure \ref{fig:FewQubitX}. Steps (1) and (2) can be done in $O(F \log M)$ gates by computing temporary $p_{start}, p_{end}$ ranges of indices in $l$ which have the same most significant bits as $p$, implementing comparisons using these, then uncomputing. Steps (3), (4), (6), (7) are simple swap ladders which can be performed in $O(F \log M)$ gates, and Step (5) has negligible complexity. Step (8) has the same cost as steps (1), (2), so the overall gate complexity is $O(F \log M)$. Again, the only additional space stored is $O(\log F)$ size pointers $p_{start}$ and $p_{end}$, satisfying the ancilla requirement.
\end{proof}

\section{Achieving Low Depth and Succinctness}
\label{sec:fullEnc}
Combining the insights of Sections \ref{sec:efficientEnc} and \ref{sec:fewerQubits}, we can achieve the best of all worlds: a fermion data structure using $\mathcal{I}+O(F)$ qubits, with complexity $O(\log M \log \log M)$ depth, and $O(\mathcal{I})$ gates. The result is formally stated below in Theorem \ref{thm:fewQubitsLowDepth}.

\begin{thm}
    There is a fermion data structure for strings of $M$ qubits having Hamming weight at most $F$ that uses $\mathcal{I}+O(F)$ qubits, with gate complexity $O(\mathcal{I})$ and depth $O(\log M \log \log M)$.
    \label{thm:fewQubitsLowDepth}
\end{thm}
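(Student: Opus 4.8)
The plan is to keep the succinct state encoding $\mathcal{E}_s^{(3)}$ of Section~\ref{sec:fewerQubits} essentially intact---least significant bits as a flat sorted array and most significant bits as the stars-and-bars string---so that the space stays at $\mathcal{I}+O(F)$, and to replace only the two sources of large depth in the circuits of Theorem~\ref{thm:fewQubits}: the $O(F)$-deep linear scan of the most significant bits and the serial bubbling of the least significant registers. The central new ingredient is a balanced prefix-sum tree laid over the $O(F)$-length stars-and-bars string. This tree stores subtree partial sums so that, for any target value $p$ with $G=\lceil\log F\rceil$-bit prefix $p_m$, a single root-to-leaf traversal returns both the rank $p_{start}=\sum_{j<p_m} r_j$ (the number of registers whose most significant bits are below $p_m$) and the boundary $p_{end}=p_{start}+r_{p_m}$, in $O(\log F)$ depth and $O(\log F)$ gates. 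Crucially, because the counts stored in the tree reproduce exactly the stars-and-bars information, it can be maintained in $O(F)$ bits; designing this tree so that it is simultaneously succinct, reversibly updatable, and $O(\log F)$-depth queryable is the data-structural heart of the construction, and is where the classical succinct rank/select toolkit~\cite{jacobson88succinct, navarro16compact} must be rebuilt for the quantum-circuit cost model.

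With the tree in hand I would mark, in low depth, exactly the least significant registers that lie in bin $p_m$: use two select queries to obtain the bin boundaries, decode them to one-hot markers, and take a prefix-XOR along the array to flag the contiguous block $[p_{start},p_{end})$, all in $O(\log F)$ depth and $O(F)$ gates. A $\srank(p,\bb)$ query then reduces to the phase $(-1)^{p_{start}}$ (read off from the parity bit produced by the tree) times the parity of $|\{k\in[p_{start},p_{end}): l_k\le p_l\}|$; the latter is obtained by comparing each flagged register's least significant bits against $p_l$, ANDing with the marker, and XOR-reducing. Since the list is sorted, the very same quantities supply the insertion index $t$ and the presence flag $f_{del}$ needed for $\bflip(p,\bb)$.

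The main obstacle is that succinctness forbids the buffer padding of Section~\ref{subsec:varLowDepth}: inserting $\ket{\infty}$ registers between logical registers would double the least-significant storage to roughly $2\mathcal{I}$, so I am restricted to $O(F)$ ancilla throughout, and this constraint is exactly what shapes the depth. Running all $F$ comparisons in parallel in the low-depth form of Lemma~\ref{lem:comparisonComplexity} would cost $\Theta(\mathcal{I})$ ancilla; instead, with an $O(F)$ budget I can afford only $\Theta(F/\log(M/F))$ simultaneous comparisons, so the $F$ comparisons are batched into $\Theta(\log(M/F))$ rounds of depth $O(\log\log M)$ each, giving total depth $O(\log(M/F)\log\log M)=O(\log M\log\log M)$ while keeping the gate count at $O(F\log(M/F))=O(\mathcal{I})$. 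For $\bflip$ I would then perform the suffix shift of the sorted array (and the matching single-bit insertion or deletion in the stars-and-bars string) with a logarithmic-depth conditional-shift network rather than a serial bubble: the insertion point $t$ is decoded to a one-hot marker, a prefix-OR produces the suffix mask, and a recursive halving/reversal shift moves the block by one position in $O(\log F)$ depth, $O(F)$ ancilla, and $O(\mathcal{I})$ gates, after which the tree is repaired along the $O(\log F)$ affected nodes.

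Adding everything up, the stored state uses $\mathcal{I}-O(F)$ bits for the least significant array, $O(F)$ bits for the stars-and-bars tree, and $O(F)$ reusable ancilla, for $\mathcal{I}+O(F)$ qubits total; every step touches each register a constant number of times, giving $O(\mathcal{I})$ gates; and the comparison batching dominates the depth at $O(\log M\log\log M)$. The remaining bookkeeping that the full proof must discharge is verifying that all intermediate markers and tree ancillas are cleanly uncomputed, so that the resulting $F_j$ and $R_j$ satisfy Definition~\ref{defn:fermion-data-structure} exactly, and that the $O(F)$ ancilla bound and the depth bound are met simultaneously rather than traded against one another.
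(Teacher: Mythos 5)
Your proposal follows essentially the same route as the paper: the Section~\ref{sec:fewerQubits} encoding is kept intact, a prefix-sum tree over the stars-and-bars string supplies low-depth select queries, the comparisons against $p$ are batched into rounds to respect the $O(F)$ ancilla budget, and insertion/deletion is done by a logarithmic-depth conditional shift. The final budgets you quote are achievable. However, two of your central claims about the tree are cell-probe/RAM-model statements that do not survive translation into the quantum circuit model, and one of them breaks correctness as written.

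First, ``a single root-to-leaf traversal returns both $p_{start}$ and $p_{end}$ in $O(\log F)$ depth and $O(\log F)$ gates'' is impossible for a fixed circuit: which child to descend into depends on the stored data, and a light-cone argument shows that any circuit answering a select query must touch essentially all $\Theta(F)$ qubits of the tree, hence use $\Omega(F)$ gates. The paper's Lemma~\ref{lem:lowDepthCounting} instead walks \emph{all} branches in parallel, propagating a (flag, offset) pair to every node and uncomputing on the way back up, at cost $O(F)$ gates and $O(\log F \log\log F)$ depth---this still fits the gate and depth budgets, so this error is repairable. Second, and more seriously, ``the tree is repaired along the $O(\log F)$ affected nodes'' after a $\bflip$ is wrong: your tree is a static range-sum tree over array \emph{positions}, and an insertion or deletion cyclically shifts an entire suffix of the stars-and-bars string, which changes the counts of $\Theta(F)$ nodes in the worst case (every node whose interval meets the shifted suffix changes by $\pm 1$, determined by its boundary bits). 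Repairing only a root-to-leaf path leaves the tree inconsistent with the string, so subsequent queries---and hence subsequent $\srank$ and $\bflip$ calls---return wrong answers. Fixing this requires a parallel whole-tree update that also computes and uncomputes its ancilla flags without referencing the sums it is modifying; this is exactly the content of the paper's Lemma~\ref{lem:lowDepthRotationTree} and is the technical heart of the construction, which your proposal presupposes rather than supplies.
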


Such an encoding is constructed in the remainder of this section, with Lemma \ref{lem:fewQubLowDepEffZp} and Lemma \ref{lem:fewQubLowDepEffXp} demonstrating that it satisfies the requisite efficiencies. This implies a fermion to qubit mapping with the same complexities. 

\begin{corr}
    There exists a fermion encoding of an $M$ mode system having at most $F$ fermions which uses $\mathcal{I}+O(F)$ qubits such that any $k$-local, particle preserving rotation can be implemented with $O(F \log M)$ gates and $O(\log M \log \log M)$ depth.
    \label{corr:fermiEncSuccinctFull}
\end{corr}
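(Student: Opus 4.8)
The plan is to fuse the two encodings already built: the least-significant-bit layout that makes $\srank$ and $\bflip$ cheap and, with buffers, low-depth (Sections~\ref{sec:efficientEnc} and~\ref{subsec:varLowDepth}), together with the stars-and-bars compression of the most-significant bits that achieves the space bound (Section~\ref{sec:fewerQubits}). I would store the LSBs exactly as in the list-with-buffers encoding $\mathcal{E}^{(2)}$, namely $F$ logical registers of width $\lceil\log(M/F)\rceil$ separated by $\ket{\infty_l}$ buffers, contributing $F\lceil\log(M/F)\rceil + O(F) = \mathcal{I}+O(F)$ qubits. The only inherently serial part of the succinct encoding is the treatment of the MSB block: in Lemmas~\ref{lem:fewQubitEffZZZ} and~\ref{lem:fewQubitEffXp} the stars-and-bars string is scanned left-to-right in $\Theta(F)$ sequential steps to recover the boundaries $p_{start},p_{end}$ of the LSB registers whose MSBs equal $p_m$. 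Replacing this flat scan by a structure answering the same prefix-count query in depth $O(\log F)$, while keeping the representation of size $O(F)$, is the whole point of the construction.

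Concretely I would replace the stars-and-bars string by a balanced prefix-sum tree of depth $G=\lceil\log F\rceil$ whose $2^G$ leaves index the possible MSB values, with each node recording the number of fermions whose MSB lies in its subtree. The left-sibling counts along the root-to-leaf path to value $j$ then sum to the prefix count $s(j)=\#\{k:i^{(k)}_m<j\}$, so a single traversal computes $p_{start}=s(p_m)$ and $p_{end}=s(p_m+1)$ reversibly into $O(\log F)$-bit ancillas. Two points make this affordable. First, the tree is stored succinctly: rather than a fixed $\lceil\log F\rceil$-bit field per node, each node records only how its total count splits between its children, so the widths collapse to a total governed by the entropy $\log\binom{F+2^G-1}{2^G-1}=O(F)$ bits. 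Second, because the whole tree occupies $O(F)=O(\mathcal{I})$ qubits (using $\log(M/F)\ge 1$, so $F=O(\mathcal{I})$), any sweep touching each tree-bit a constant number of times costs only $O(\mathcal{I})$ gates, and a level-by-level sweep runs in depth $O(\log F\cdot\log\log M)=O(\log M\log\log M)$ since each level performs $O(\log F)$-bit additions handled by the comparison circuits in $O(\log\log M)$ depth.

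With this primitive I would reprove the two query lemmas. For $\srank(p,\bb)$, a downward sweep routes $p_m$ to its bin and identifies the register block $[p_{start},p_{end})$; the net phase is $(-1)^{\#\{\text{registers}\le p\}}$, which is $+1$ or $-1$ according to the parity of $p_{start}$ plus the number of in-block registers whose LSB is $\le p_l$, computable by $F$ parallel LSB comparisons and an XOR-tree in $O(\mathcal{I})$ gates and $O(\log M\log\log M)$ depth. For $\bflip(p,\bb)$ I combine Lemma~\ref{lem:lowDepthXp} with the tree: the tree query yields the target index $t$ and the flag $f_{del}$, the LSB suffix is shuffled using the buffer registers in depth $O(\log F+\log\log M)$ exactly as in Section~\ref{subsec:varLowDepth}, and the MSB tree is updated by incrementing or decrementing the counts along the single root-to-leaf path to $p_m$. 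Each step touches $O(\mathcal{I})$ gates and $O(\log M\log\log M)$ depth, and all ancillas are uncomputed by running the query in reverse.

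The step I expect to be the main obstacle is the prefix-sum tree itself: it must be simultaneously (i) succinct, using $O(F)$ rather than $\Theta(F\log F)$ qubits, since for $F$ close to $M$ (e.g.\ $F=M/\log M$) one has $F\log F=\omega(\mathcal{I})$ and a naive tree would violate both the space and the gate bounds; and (ii) sweepable by a fixed reversible circuit in $O(\log F)$ depth with $O(1)$ touches per bit. The tension is that succinctness forces the data-dependent, variable-width node fields of (i), whereas a fixed low-depth circuit layout wants uniform registers; reconciling these, routing a query through the ragged tree without decompressing it, and verifying that insertions and deletions preserve both the tree invariant and the entropy bound, is where the real work lies. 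The remainder is a careful but routine recombination of circuits already established in Sections~\ref{sec:efficientEnc}--\ref{sec:fewerQubits}.
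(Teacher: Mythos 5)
Your overall architecture matches the paper's (Section~\ref{sec:fullEnc}): keep the succinct LSB/MSB split of Section~\ref{sec:fewerQubits}, and add a prefix-sum tree so that the $\Theta(F)$-step sequential scan of the MSB data becomes a logarithmic-depth traversal. You also correctly identify the binding constraint: the tree must fit in $O(F)$ qubits, since $F\log F=\omega(\mathcal I)$ when $F$ is close to $M$. But the specific tree you propose does not work as described, and the step you defer as ``where the real work lies'' is exactly the step the paper resolves by a different construction. Your tree is indexed by MSB \emph{values}: $2^G$ leaves, one per bin, each node storing the number of fermions in its subtree of bins. Since a single bin can contain all $F$ fermions, every node --- including leaves --- needs $\lceil\log(F+1)\rceil$ bits under any fixed-width layout, forcing $\Theta(F\log F)$ qubits. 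Your escape hatch, entropy-coded variable-width node fields, makes the physical qubit positions of the fields depend on the stored data; in a superposition over fermion configurations the field boundaries sit at different qubits, so no fixed quantum circuit can address a node, and an insertion that changes a count changes that node's width and shifts the layout of everything after it. This is not a routine detail to be reconciled later; without it you have neither the space bound nor the circuits.

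The paper's construction sidesteps this entirely by building the tree over \emph{positions of the stars-and-bars string} $m=1^{r_0}0 1^{r_1}0\cdots 01^{r_{2^G-1}}$ rather than over bins: node $S_{a,b}$ stores the number of $0$'s among $m_a,\ldots,m_{b-1}$ (Section~\ref{subsec:varFewQubLowDepth}). The stars-and-bars string unarizes the counts, so the value stored at a node is bounded by its interval length $b-a$, a data-independent quantity; node widths are therefore fixed at $O(\log(b-a))$ and shrink geometrically down the tree, giving total space $O(H)=O(F)$ via the recurrence $T(n)=2T(n/2)+O(\log n)$ --- the same entropy count you compute, but realized with a rigid, circuit-addressable layout. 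The query then becomes a \emph{select} (find the position of the $t$-th zero, Lemma~\ref{lem:lowDepthCounting}), from which $p_{start}=i-p_{msb}+1$ is recovered. A consequence you also miss: because the tree mirrors string positions, a $\bflip$ is not a single root-to-leaf increment/decrement as in your bin-indexed tree; it shifts a suffix of $m$ by one position, so the paper needs the parallel rotation machinery of Lemmas~\ref{lem:lowDepthRotationList} and~\ref{lem:lowDepthRotationTree} to cycle the string and repair all affected sublist sums in $O(\log H\log\log H)$ depth. Your $\srank$ phase decomposition (parity of $p_{start}$ plus in-block LSB comparisons) and the overall gate/depth accounting are fine, but they rest on a primitive you have not actually constructed.
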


We will start from the construction described in Section \ref{subsec:simpleSuccinct}, and build on additional data. This data should have a small (additive $O(F)$) footprint, but allow us to retrieve relevant information about the list in low depth.

The main new construction we will need is a succinct tree structure for storing the most significant bits. We will need to compute start and end intervals of register indices whose most significant bits match some value, for which it will suffice to consider queries for the position of the $t$-th zero in a list of $n$ bits; this is because the aforementioned information suffices to determine which registers match a given value on their most significant bits. This is sometimes referred to as a select query.

To answer such queries, we will keep a tree of sublist sums in such a way that a circuit can ``walk down'' all branches of the tree in parallel to quickly identify the location of a given zero. In particular, define $S_{a,b}$ as the total number of zeroes from position $a$ to $b-1$ in the list of most significant bits. If we store the number of zeroes of the first half of the list, this suffices to quickly determine whether the $t$-th zero is in the left or the right half. We recursively store the same information for each half, down to some fixed cutoff. By walking down this tree of (sub)ranks we can quickly find the position of the $t$-th zero.

\subsection{Algebraic Definition}
\label{subsec:varFewQubLowDepth}
An illustrative example of this definition is shown in Figure \ref{fig:succinctTreeExample}. Let $G := \lceil \log F\rceil$ be a cutoff between most and least significant bits, and $H := 2^G + F-1$ be the size of the most significant bits register, which will be stored using the stars and bars method described in Section \ref{subsec:simpleSuccinct} as well as with a new succinct tree data structure. For strings of the form \begin{align}
    x &= e_{i^{(1)}} \oplus e_{i^{(2)}} \oplus ... \oplus e_{i^{(f)}} \nonumber\\
    i^{(1)}&< i^{(2)} < ... < i^{(f)} \nonumber
\end{align}
with $f<F$, let $i_m$ denote the $G$ most significant bits of $i$, and $i_l$ denote the remaining $l$ bits. Let $r_j$ be the number of pointers with most significant bits matching $j$, i.e. the number of values of $k$ such that $i^{(k)}_m=j$. Let $m=1^{r_0} 0 1^{r_1} 0 ... 0 1^{r_{2^G-1}}$ be the bit string representing the most significant bits. Let $S_{a,b}$ denote the number of $0$'s among $m_a,...,m_{b-1}$, i.e. $S_{a, b}=(b-1)-a-m_a-...-m_{b-1}$. Now define  $\mathcal{E}^{(4)} : \{0, 1\}^M \rightarrow \{0,1\}^{\mathcal{I}+O(F)}$ as
\begin{align}
    \mathcal{E}^{(4)}(x) =& \ket{i^{(1)}_l} \cdots \ket{i^{(f)}_l} \ket{\infty_l} \cdots \ket{\infty_l} \,\, || \,\, m \,\, ||\,\,  S_{0, \frac{H}{2}} \,\,||\,\, S_{0, \frac{H}{4}} ... \,\,||\,\, S_{\frac{H}{2}, \frac{3H}{4}}...  \label{eqn:fewQubitLowDepthStateEncDefn}
\end{align}

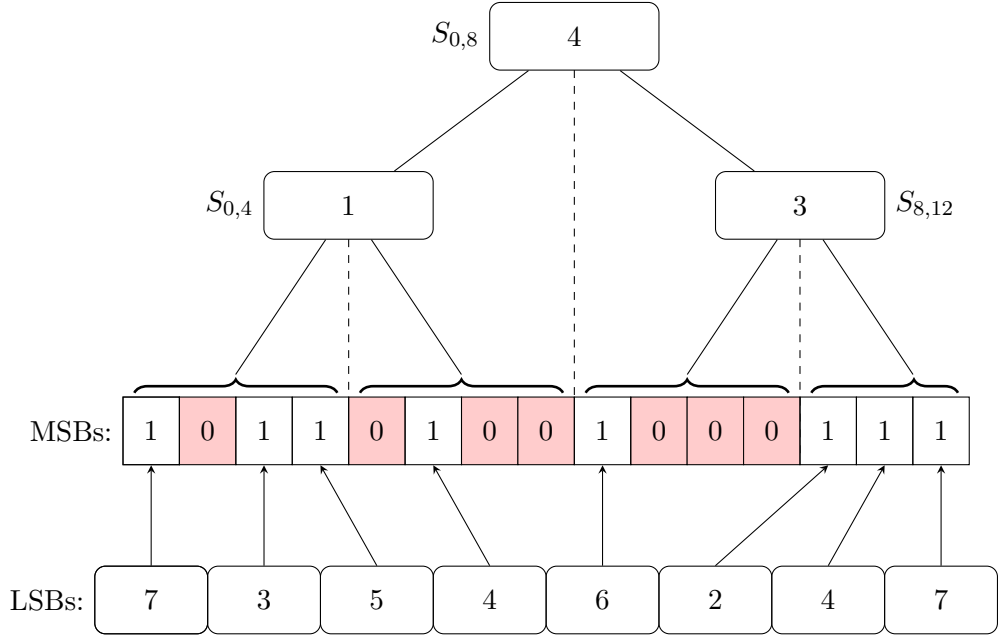
\begin{figure}
    \centering
    
    \begin{tikzpicture}[
  level 1/.style={sibling distance=10.4em},
  level 2/.style={sibling distance=5.2em}, scale=1.5,>=stealth]

    \tikzstyle{bits} = [rectangle, draw, minimum width = 0.75cm, minimum height = 0.9cm]
    \tikzstyle{register} = [rectangle, rounded corners, draw, minimum width = 2.25cm, minimum height = 0.9cm]
    \tikzstyle{registernar} = [rectangle, rounded corners, draw, minimum width = 1.5cm, minimum height = 0.9cm]
     
    \draw (0.25,0) node[register, label=left:$S_{0,8}$] (root) {$4$}
      child { node[register, label=left:$S_{0,4}$] {$1$}
        child { }
        child { }
      }
      child { node[register, label=right:$S_{8,12}$] {$3$}
        child { }
        child { }
      };

    \draw [dashed] (0.25, -0.3) -- (0.25, -3.75);
    \draw [dashed] (-1.75, -1.8) -- (-1.75, -3.75);
    \draw [dashed] (2.25, -1.8) -- (2.25, -3.75);

    \node[bits, label=left:MSBs:] at (-3.5, -3.5){ };
    
    \foreach \n in {1,4,6,7,9,10,11}{
        \node[bits, fill=red!20] (bits\n) at (0.5 * \n-3.5, -3.5){$0$};
    }
    \foreach \n in {0,2,3,5,8,12,13,14}{
        \node[bits] (bits\n) at (0.5 * \n-3.5, -3.5){$1$};
    }

    \foreach \ind in {0,4,8} {
        \draw [decorate,decoration={brace, amplitude=5pt, raise=0.05cm}, line width=1pt] ($(bits\ind.north west)+(3pt,0pt)$) -- ($(bits\the\numexpr\ind+3\relax.north east)-(3pt,0pt)$);
    }
        \draw [decorate,decoration={brace, amplitude=5pt, raise=0.05cm, aspect=0.7}, line width=1pt] ($(bits12.north west)+(3pt,0pt)$) -- ($(bits14.north east)-(3pt,0pt)$);

    \node[registernar, label=left:LSBs:] at (-3.5, -5){ };
    \foreach \msbval [count=\n] in {7,3,5,4,6,2,4,7}{
        \node[registernar] (register\n) at (\n-4.5, -5){$\msbval$};
    }

    \foreach \bitval [count=\regval] in {0,2,3,5,8,12,13,14}{
        \draw [->] (register\regval.north) -- (bits\bitval.south);
    }
    
    \end{tikzpicture}
    \caption{Illustrative example of the succinct tree, with parameters $F=8$ and $M=63$. This stores the pointer list $[7,11,13,20,38,58,60,\infty]$, recalling the convention that the largest storable value ($63$) is identified with $\infty$. Red registers demarcate ``bins'' of different MSB values.}
    \label{fig:succinctTreeExample}
\end{figure}

Where we have a tree with root node $S_{0, H/2}$ (rounding non-integer divisions arbitrarily), and each $S_{a, b}$ either has no children if $b-a=O(1)$ (we sometimes call this the leaf cutoff), or has a left child $S_{a, b+(b-a)/2}$ and right child $S_{b,b+(b-a)/2}$ otherwise. Define $\mathcal{E}_s^{(4)} : \mathcal{H}_{2^M} \rightarrow \mathcal{H}_{2^{\mathcal{I}+O(F)}}$ as the unique linear extension of $\mathcal{E}^{(4)}$. Note that the root of the sublist sum tree stores $O(\log H)$ qubits. To find the total space usage of the tree, we have recurrence \begin{align*}
    T(1) =& O(1) \\
    T(n) =& 2T(n/2) + O(\log n).
\end{align*} 
It follows from the master theorem that the total space usage of the tree is linear in $H$,
\begin{align*}
    T(H) =& O(H).
\end{align*}
Recalling that $H=O(F)$, we therefore only use \begin{align*}
    \underbrace{F \log \frac{M}{F}}_{\text{LSBs}} + \underbrace{O(F)}_{MSBs} = \mathcal{I}+O(F)    
\end{align*}
qubits under such a promise. Correctness is the same condition as described in Section \ref{subsec:algDef} for the implicit list of pointers, though in addition we must now properly maintain the tree data structure. Before constructing efficient circuits, we first explain two key subroutines for making our construction low depth.

\subsection{Parallel index finding}
In order to compare some value $p$ to a pointer which is stored in the succinct representation, we must compare the most significant bits and the least significant bits separately. The comparison against most significant bits is performed by computing a range $[p_{start},...,p_{end})$ of registers whose most significant bits match those of $p$; with this information the least significant bits can be compared as necessary. Here we discuss how to extract $p_{start}$ and $p_{end}$ in low depth.

The subroutine for this employed in Section \ref{subsec:simpleSuccinct}, referred to as $\#_p$, accomplished this by a linear scan. This subroutine counts the number of $0$'s, and finds the position $i$ of the $p_{msb}$-th $0$---this information suffices to compute $p_{start}$ (in particular, $p_{start}=i-p_{msb}+1$). Computing $p_{end}$ can be done similarly. Here we wish to use our sublist sum tree to avoid the high depth this incurs, and along the way will improve the efficiency of this subroutine from $O(F \log F)$ to $O(F)$.

\begin{lem}
    Given a list $m_1,...,m_H$ of bits as well as a sublist sum tree $(S_{0, H/2} (S_{0,H/4} ...) (S_{H/2,3H/4} ...))$ where $S_{a,b}$ counts the number of $0$'s among $m_a,...,m_{b-1}$, there is a reversible circuit which computes the position $i_t$ of the $t$-th $0$ in the list of bits using $O(H)$ gates, $O(H)$ ancilla, and $O(\log H \log \log H)$ depth.
    \label{lem:lowDepthCounting}
\end{lem}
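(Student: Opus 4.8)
The plan is to implement the classic ``walk down the tree'' select query, but in a parallel, reversible, low-depth fashion that exploits the precomputed sublist sums $S_{a,b}$ stored in $\mathcal{E}^{(4)}$. The target is the position $i_t$ of the $t$-th zero among $m_1,\dots,m_H$. Recall that each internal node $S_{a,b}$ stores the number of zeros in the left half of its range, so a single comparison of $t$ against $S_{a,(a+b)/2}$ decides whether the $t$-th zero lies in the left or right child; if it lies in the right child, we recurse with the residual count $t - S_{a,(a+b)/2}$. The naive sequential descent has depth proportional to $\log H$ times the cost of each comparison, which is already acceptable, but to get the stated $O(\log H \log\log H)$ bound cleanly (and to fan the query out correctly) I would phrase it as a \emph{parallel} computation over the whole tree.

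First I would, for every node $S_{a,b}$ of the tree simultaneously, compute the quantity that \emph{would} be passed down to that node if the search reached it, i.e.\ an ancilla register holding the residual target count. Concretely, each node needs to know the sum of the left-child counts along the path from the root to it; these are prefix-sum-like quantities that can be formed level by level. Working top-down one level at a time, the residual count at a child is obtained from the residual count at its parent by subtracting (for a right child) the parent's stored left-half sum, a single $O(\log H)$-bit subtraction. Since there are $O(\log H)$ levels and each level's updates run in parallel across the (at most $H$) nodes, the total depth is $O(\log H)$ multiplied by the depth of one comparison/subtraction on $O(\log H)$-bit numbers. Using the low-depth arithmetic of Lemma~\ref{lem:equalityComplexity} and Lemma~\ref{lem:comparisonComplexity}, each such operation is $O(\log\log H)$ depth, giving the overall $O(\log H \log\log H)$ depth. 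Because every node is touched by only $O(1)$ comparisons/subtractions and there are $O(H)$ nodes each of width $O(\log H)$, a more careful accounting (the sum over levels $\ell$ of $2^\ell \cdot O(\log(H/2^\ell))$) telescopes to $O(H)$ gates and $O(H)$ ancilla, matching the claimed bounds.

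Second, I would identify the correct leaf: the descent selects a unique root-to-leaf path, which corresponds to a leaf block of $O(1)$ bits. At each level I compute a one-hot ``which node is active'' flag from the parent's active flag together with the left-versus-right comparison bit ($t \le S_{a,(a+b)/2}$ or not); propagating these flags down is again $O(\log H)$ levels of $O(\log\log H)$-depth work. At the active leaf, a brute-force $O(1)$ scan of the leaf's bits using the residual count pins down the exact index $i_t$ within that block, and combining the block's base offset with the in-block offset yields $i_t$. Finally, reversibility: every ancilla (residual counts, comparison bits, active flags) is produced by the downward sweep and must be uncomputed. Since the tree sums $S_{a,b}$ and the bits $m$ are left untouched, I can uncompute all scratch data by running the downward sweep in reverse after copying out $i_t$, at the same gate/depth cost.

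The main obstacle I expect is the bookkeeping that makes the \emph{depth} genuinely $O(\log H \log\log H)$ rather than $O(\log^2 H)$ or worse: one must ensure the residual-count and active-flag updates at a given level truly proceed in parallel across all nodes and that fanning the single query index $t$ out to all root-level scratch registers does not serialize. This requires the low-depth comparison/subtraction primitives (with their linear ancilla usage) and a fan-out of depth $O(\log H)$, all of which are dominated by the stated bound. The gate-count telescoping and the uncomputation are routine once the parallel structure is fixed; the delicate part is verifying that the per-level arithmetic width shrinks with depth so that the $\sum_\ell 2^\ell \log(H/2^\ell) = O(H)$ accounting holds, keeping the construction succinct.
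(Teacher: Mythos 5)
Your overall strategy---a parallel sweep down the tree propagating residual counts and active flags, an $O(1)$ scan at the leaves, and uncomputation of all scratch data---is the same as the paper's (see the subroutines in Figures \ref{fig:succinctTreeCountWalkdown} and \ref{fig:succinctTreeCountWalkup}), and your depth analysis matches. But there are two concrete gaps, both located exactly where you flagged the argument as delicate. First, your phase ordering breaks the $O(H)$ accounting: you compute the raw residual count (the value ``that would be passed down to that node if the search reached it'') at \emph{every} node in a first sweep, and only compute active flags in a second sweep. At a node the search does not reach, that counterfactual residual equals $t$ minus the number of zeros preceding the node's range, which can be negative or as large as $t$; it therefore needs $\Theta(\log H)$ (signed) bits at every node, including the $\Theta(H)$ leaves, giving $\Theta(H\log H)$ ancilla and gates rather than $O(H)$. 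The telescoping $\sum_\ell 2^\ell \log(H/2^\ell) = O(H)$ that you invoke is only valid if the flag is propagated \emph{together} with the offset in a single sweep, so that inactive children receive a dummy in-range value (e.g.\ $0$) in a register of width $O(\log(b-a))$ while only the active child carries a genuine residual---this joint propagation is precisely what the paper's walk-down subroutine $T_p^{(d)}$ does, and it is what makes the shrinking widths legitimate.

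Second, you never say how the answer gets out. After the leaf scan, the in-block offset sits at whichever of the $\Theta(H)$ leaves is active, and ``combining the block's base offset with the in-block offset'' followed by ``copying out $i_t$'' hides a real aggregation step: conditionally writing each leaf's base offset (an $O(\log H)$-bit constant) into a single shared output register costs $\Theta(H\log H)$ gates, or $\Theta(H)$ depth if the writes are serialized. The paper resolves this with an explicit upward sweep: each internal node swaps its two children's position registers into its own register (at least one is promised to be all zeros), conditionally adds the left half's length when the answer came from the right child, and uncomputes the children's flags; because positions are represented \emph{relative to the subtree}, the register widths again shrink down the tree and the total remains $O(H)$ gates with $O(\log H)$ levels of $O(\log\log H)$-depth arithmetic (Lemma \ref{lem:comparisonComplexity}). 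This upward combine is not routine uncomputation---it is where the output is actually assembled---and your plan of simply running the downward sweep in reverse only works after something equivalent to it has been performed.
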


\begin{proof}
    Let $t$ denote the target index. Define base-case parameters offset $o_{in}=t$ and flag $f_{in}=\top$, where the flag $f_{in}$ is true if the target is in the current sublist (i.e. at the root it is always somewhere in the full sublist, so it begins true) and offset $o_{in}$ is the value to be searched for if it is in the sublist. Starting at the root of the succinct tree and then recursing on children, perform: \begin{enumerate}[label=(\arabic*)]
        \item Let $S_{a,b}$ denote the current subinterval sum.
        \item If $f_{in}=\top$, determine whether the $o_{in}$-th $0$ is within this list lies to the left or right of index $d$ using $S_{0, H/2}$. \begin{enumerate}[label=(\alph*)]
            \item If $o_{in}$ is greater or equal to $S_{a,b}$, propagate $o_l=0$ (value does not matter) and $f_l=\bot$ to the left subarray, and $o_r=o_{in}-S_{a,b}$, $f_r=\top$ to the right subarray,
            \item If $o_{in}$ is less than $S_{a,b}$, propagate $o_l=o_{in}$ and $f_l=\top$ to the left subarray, and $o_r=0$, $f_r=\bot$ to the right subarray,
            \item If $f_{in}=\bot$, propagate $o_l=0$, $f_l = \bot$ to the left subarray, and $o_r=0$, $f_r=\bot$ to the right subarray.
        \end{enumerate}
        \item Recurse on children, or terminate at a leaf.
    \end{enumerate}
    An illustrative example is depicted in Figure \ref{fig:treeWalkdownCountExample}, and the key circuit subroutine is depicted in Figure \ref{fig:succinctTreeCountWalkdown}. Recursively call this procedure on children leaves until the base case of a single leaf, at which point we perform a sequential scan of the subarray conditioned on the incoming flag (i.e. the standard $\#_p$ operation). This scan produces a constant size number $i$ for each leaf. With this information, now walk back up the tree performing at each layer: \begin{enumerate}[label=(\arabic*)]
        \item Take positions $i_l$ and $i_r$ from left and right children, swap both into target register (at least one is promised to be all $0$),
        \item If the position was found in the right subtree, add the corresponding offset,
        \item Uncompute the children's flags,
        \item Recurse on parent (after both children are done).
    \end{enumerate}
    The circuit for this procedure is depicted in Figure \ref{fig:succinctTreeCountWalkup}. We recursively call this procedure until reaching the root, at which point all ancillas have been uncomputed and we are left with a single value $i$. This value corresponds to the position of the $p_{msb}$-th $1$ in the list. Further, each step uses at most a linear number of ancillas in the number of qubits being touched, and each qubit is touched $O(1)$ times. It follows that the whole procedure uses $O(H)$ ancillas. The depth analysis follows from the fact that each layer consists of $O(1)$ many comparisons on registers of size at most $O(\log H)$. Each comparison can be done with linearly many ancilla in $O(\log \log H)$ depth by Lemma \ref{lem:comparisonComplexity} and there are $O( \log H)$ layers, so the overall depth complexity of this step is $O(\log H \log\log H)$.
\end{proof}

\begin{figure}
    \centering
    \begin{tikzpicture}[
  level 1/.style={sibling distance=10.4em},
  level 2/.style={sibling distance=5.2em}, scale=1.5,>=stealth]

    \tikzstyle{bits} = [rectangle, draw, minimum width = 0.75cm, minimum height = 0.9cm]
    \tikzstyle{register} = [rectangle, rounded corners, draw, minimum width = 2.25cm, minimum height = 0.9cm]
    \tikzstyle{registernar} = [rectangle, rounded corners, draw, minimum width = 1.5cm, minimum height = 0.9cm]

    \draw (0.25,0) node[register, label=left:$S_{0,8}$, label=above:\textcolor{blue}{$3, \top$}] (root) {$4$}
      child { node[register, label=left:$S_{0,4}$] (left) {$1$}
        child { node[] (leftleft) {} }
        child { node[] (leftright) {} }
      }
      child { node[register, label=right:$S_{8,12}$] (right) {$3$}
        child { node[] (rightleft) {} }
        child { node[] (rightright) {} }
      };

    \draw [->, blue] (root) -- (left) node[midway, above left] {$3, \top$};
    \draw [->, blue] (left) -- (leftright) node[midway, above right] {$2, \top$};
    \draw [->, blue] (-1.75,-3) -- (-0.75,-3) node[right, right] {$i_0=2$};
    \draw [decorate,decoration={brace},blue] (-0.75,-3.175) -- (-0.25,-3.175);

    \draw [->, red] (root) -- (right) node[midway, above right] {$0, \bot$};
    \draw [->, red] (right) -- (rightright) node[midway, above right] {$0, \bot$};
    \draw [->, red] (right) -- (rightleft) node[midway, above left] {$0, \bot$};
    \draw [->, red] (left) -- (leftleft) node[midway, above left] {$0, \bot$};

    \node[bits, label=left:MSBs:] at (-3.5, -3.5){ };
    
    \foreach \n in {1,4,6,7,9,10,11}{
        \node[bits, fill=red!20] (bits\n) at (0.5 * \n-3.5, -3.5){$0$};
    }
    \foreach \n in {0,2,3,5,8,12,13,14}{
        \node[bits] (bits\n) at (0.5 * \n-3.5, -3.5){$1$};
    }

    \draw [dashed] (0.25, -2.75) -- (0.25, -4.25);
    \draw [dashed] (-1.75, -2.75) -- (-1.75, -4.25);
    \draw [dashed] (2.25, -2.75) -- (2.25, -4.25);

    \end{tikzpicture}
    \caption{Illustrative example of the comparisons needed to find the third zero in the succinct tree, with parameters $F=8$ and $M=63$. This is the same tree as in Figure \ref{fig:succinctTreeExample}, but with the LSBs excluded for clarity. As the position is walked back up the tree, requisite offsets are added and the ancillas are uncomputed.}
    \label{fig:treeWalkdownCountExample}
\end{figure}
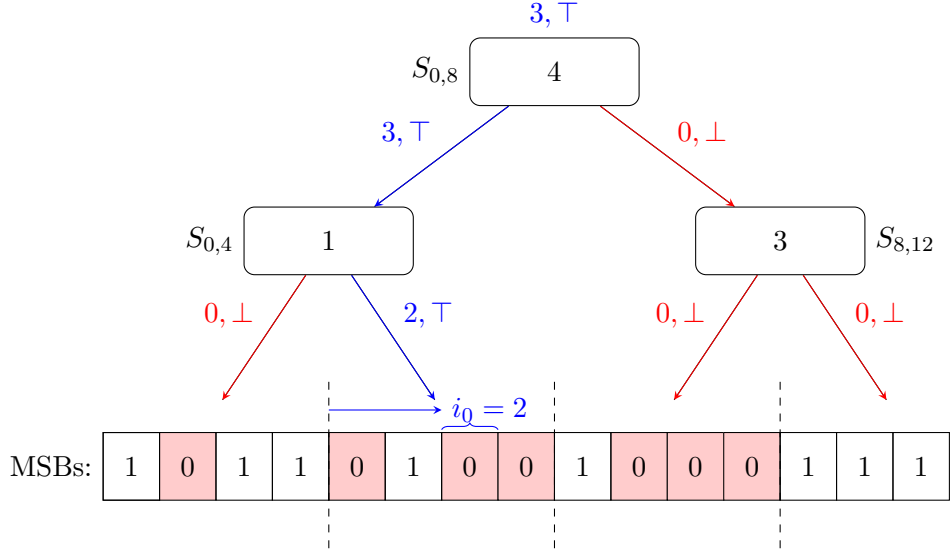

\subsection{Parallel list rotation}
Another key component of fermionic operations in our encoding is unitary insertion/deletion, which must preserve order. The core subroutine here is a list suffix rotation, shifting every element in some suffix down by one, and moving the last register to a target position. This operation either makes space for an element to be inserted, or the inverse circuit fills in the space of an element which was deleted. 

A useful subroutine will be cycling a list downwards and inserting an element, i.e. a low-depth implementation of the $L$ subroutine depicted in Figure \ref{fig:MSBSwapladder}---the unitary insertion can be done with the inverse circuit. This will allow permuting the list of least significant bits, as well as the stars and bars array representing the most significant bits. To accomplish this, assume that each register holds a flag for whether it is included in the suffix to be cycled (referred to as $f_{\geq p}$). Note that this subroutine takes as a promise that the last element is an $\infty$ (i.e. all $1$'s), which holds in our encoding by the assumption that $F$ fermions is not exceeded. We can permute such a list generally using the following lemma.

\begin{lem}
    Given a list of elements $x_1,...,x_q$ each of $n$ qubits, each with a flag $f_{\geq p}$ which is $1$ for only a suffix of the registers, and promised that $x_q = \infty$, for any value $p$ of $n$ bits there is a reversible circuit with $O(q \log n)$ gates, depth $O(\log n)$, and $O(q)$ ancillas that inserts $p$ in between elements marked $\geq p$ and the remaining, as well as removing the last entry (promised to be $\infty$).
    \label{lem:lowDepthRotationList}
\end{lem}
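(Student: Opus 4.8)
The plan is to realize the required suffix rotation not by a serial ``bubble'' (which would cost depth $\Omega(q)$) but by a single parallel copy--swap--clear pass using one fresh ancilla register $a_i$ per data register $x_i$. Write the given flags as $f_i=1$ exactly on the suffix $i\ge k$; the target transformation sends $x_i\mapsto x_i$ for $i<k$, $x_k\mapsto p$, and $x_i\mapsto x_{i-1}$ for $k<i\le q$, with the last register (the promised $\infty$) falling off the end. The key observation is that every register's \emph{new} value depends only on strictly local data: its own flag $f_i$, its predecessor's flag $f_{i-1}$, its predecessor's content $x_{i-1}$, and $p$. Hence each register resolves as ``boundary'' ($f_i=1$, $f_{i-1}=0$, receives $p$), ``interior'' ($f_i=f_{i-1}=1$, receives $x_{i-1}$), or ``untouched'' ($f_i=0$) via a constant-fan-in predicate.

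First I would, in parallel over all registers, compute each register's intended output into its ancilla: conditioned on the boundary predicate copy $p$ into $a_k$, and conditioned on the interior predicate copy $x_{i-1}$ into $a_i$ for $k<i\le q$. Since the sources $\{x_k,\dots,x_{q-1}\}$ and $p$ are each read at most once and the targets $\{a_k,\dots,a_q\}$ are distinct, there are no read/write conflicts and this is a single layer of controlled copies. Next I would swap $x_i\leftrightarrow a_i$ on the suffix, so that each $x_i$ holds its desired output while each $a_i$ temporarily holds the \emph{old} value $x_i^{\mathrm{old}}$.

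The crux---clearing the ancillas reversibly without reintroducing depth $\Omega(q)$---exploits two facts. After the swap, for $k\le i\le q-1$ the stale ancilla satisfies $a_i=x_i^{\mathrm{old}}=x_{i+1}^{\mathrm{new}}$, so a bitwise CNOT from the already-updated successor register $x_{i+1}$ into $a_i$ zeroes it; and the single remaining ancilla $a_q$ holds exactly the dropped constant $\infty$ (all ones), which a layer of $X$ gates uncomputes using no data whatsoever. This is precisely where the promise $x_q=\infty$ is essential. All three passes are $O(1)$ rounds of register-level operations, each controlled by a single flag bit but acting on an $n$-qubit register; distributing each control bit across its $n$ bits through a fan-out tree (and undoing it) costs $O(\log n)$ depth, while touching each of the $O(q)$ registers a constant number of times accounts for the gate budget. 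Crucially, no control is ever fanned out across the $q$ registers, so the depth is independent of $q$.

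The main obstacle I anticipate is exactly this last point: making the rotation reversible \emph{and} $q$-independent in depth while returning every ancilla to $\ket{0}$. A naïve parallel copy leaves the old values behind, and uncomputing them appears to require the post-shift data in a way that could recreate a serial dependency; the resolution is the observation that after the swap each stale ancilla already matches its neighbor's updated register, collapsing the uncomputation into one local CNOT layer, together with the $\infty$ promise that disposes of the lone ancilla having no right neighbor. I would also verify the boundary bookkeeping---that the insertion lands immediately before the first marked element and that the pair $(f_{i-1},f_i)$ correctly singles out the unique boundary register---but this is routine once the suffix structure of the flags is used.
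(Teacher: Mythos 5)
Your circuit logic is sound---the copy--swap--clear pass is a valid reversible implementation of the insertion-rotation, and the uncomputation trick (after the swap, each stale ancilla $a_i$ equals its successor's updated value $x_{i+1}^{\mathrm{new}}$, so a bitwise CNOT clears it, with the lone $a_q$ holding $\infty$ cleared by $X$ gates) is correct and elegant. However, there is a genuine gap against the lemma's resource bounds: you allocate a \emph{full-width} shadow register $a_i$ per data register, which is $\Theta(q\cdot w)$ ancilla qubits for word size $w$, whereas the lemma allows only $O(q)$ ancillas. Your fan-out of each control flag across the bits of a register costs yet more ancillas per register. This is not a cosmetic issue: in the paper this lemma is invoked inside the succinct encoding (Lemma \ref{lem:fewQubLowDepEffXp}) to cycle the $F$ least-significant-bit registers of width $\Theta(\log(M/F))$, under a global ancilla budget of $O(F)$. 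Your shadow registers would cost $\Theta(F\log (M/F))=\Theta(\mathcal I)$ extra qubits, inflating the total space to $\mathcal I + \Omega(\mathcal I)$ and destroying the succinctness claim of Theorem \ref{thm:fewQubitsLowDepth}, which is the entire point of the construction.

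The paper's proof avoids this by using a \emph{single} buffer qubit per register (initialized to $1$) and performing the shift one bit-position at a time: at each round, the same permutation is applied to one fixed bit position of every register in parallel, reusing the $O(q)$ single-qubit buffers, with the already-available flag bits standing in for comparisons. This is parallel across the list but serial across the word, giving constant depth per round, $O(\log n)$ rounds, hence depth $O(\log n)$, gates $O(q\log n)$, and only $O(q)$ ancilla qubits. Your approach can be repaired by exactly this change: run your copy--swap--clear pass bit-position by bit-position with one ancilla qubit per register (the neighbor-CNOT uncomputation and the $X$-gate disposal of the $\infty$ bit work position-wise just as well), at which point it essentially coincides with the paper's construction.
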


\begin{proof}
    The approach mirrors that taken in Lemma \ref{lem:lowDepthXp}, but instead of a full buffer register we will use a single qubit initialized to $1$. We perform the low-depth cycle circuit qubit by qubit; observe that by the premise all of the relevant comparison bits are already available, and so can be used in place of explicitly computing a comparison. Also note that the circuit described in Lemma \ref{lem:lowDepthXp} decides whether the target $p$ is present or not; in our circuit we assume it is not, i.e. all equality comparisons to $p$ are false. In this way, we obtain constant depth per round but require $O(\log n)$ rounds to swap all of the qubits, giving total depth $O(\log n)$. We use $O(q)$ buffer registers as ancillas, and $O(q)$ gates per round of swaps to give $O(q \log n)$ gate complexity.
\end{proof}

Note the difference in the above subroutine from that in Lemma \ref{lem:lowDepthXp}; this subroutine assumes the element is not present. When it is later called to permute a list, we will externally compute which direction to shift the list. Following this, we will either run the subroutine or it's inverse conditioned on the result. With this subroutine in hand, we can now describe the procedure for updating the succinct tree of sublist sums.

\begin{lem}
    Given a list $m_1,...,m_H$ of bits as well as a sublist sum tree $(S_{0, H/2} (S_{0,H/4} ...) (S_{H/2,3H/4} ...))$ extending to a constant cut off $S_{a, a+O(1)}$ where $S_{a,b}$ counts the number of $0$'s among $m_a,...,m_{b-1}$, there is a reversible circuit which cycles the last $t$ bits downwards (i.e. $m_t \rightarrow m_{t+1}$, ..., $m_H \rightarrow m_t$) using $O(H)$ gates, $O(H)$ ancilla, and $O(\log H \log \log H)$ depth.
    \label{lem:lowDepthRotationTree}
\end{lem}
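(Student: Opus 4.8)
The plan is to carry out the suffix cycle as two coordinated passes: one that physically permutes the raw bits $m_t,\dots,m_H$, and one that repairs every stored count $S_{a,b}$ in the tree. For the bit pass I would use the buffer-register rotation technique of Lemma~\ref{lem:lowDepthRotationList} with single-bit elements ($n=1$): once each position $i$ is flagged according to whether $i\ge t$, a single round of conditional adjacent swaps shifts the marked suffix down by one (wrapping $m_H$ into position $t$) in $O(H)$ gates and $O(1)$ depth. The only nontrivial input is the per-position suffix flag, which I would \emph{not} produce by comparing the runtime value $t$ against each position independently (that would cost $\Theta(H\log H)$ gates); instead I fold it into the same top-down tree walk used to repair the counts.

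The heart of the argument is the observation that shifting the window $[t,H]$ down by one changes every stored count by at most one, and that this change is a local function of a constant number of bits. Concretely, for the fixed interval $[a,b)$ associated with a node, the new number of zeros equals the old number plus $[\,\mathrm{enter}=0\,]-[\,\mathrm{leave}=0\,]$, where $\mathrm{leave}$ is the old bit at the right boundary of the interval (or the wrapped bit $m_H$ when the interval ends at position $H$), and $\mathrm{enter}$ is the old bit just to the left of the interval when $a>t$, is the wrapped bit $m_H$ when the interval contains the window start $t$, and is irrelevant (the delta is $0$) when $b\le t$. Checking the three cases — interval entirely left of $t$, interval straddling $t$, and interval strictly inside $(t,H]$, together with the special node whose right end is $H$ and the degenerate case $\mathrm{enter}=\mathrm{leave}=m_H$ giving $\Delta=0$ — is routine casework on where the window boundary and the wraparound seam fall relative to $[a,b)$.

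Given this, I would compute for every node a constant-size flag recording its relationship to $t$ (left of, straddling, or right of) by a single top-down walk that threads $O(1)$ bits of state, exactly as in the index-finding routine of Lemma~\ref{lem:lowDepthCounting}: only the $O(\log H)$ nodes on the root-to-leaf path through $t$ require an actual comparison against a split point, while all other subtrees are marked fully-inside or fully-outside by broadcast, so by the same telescoping sum over levels the walk costs $O(H)$ gates and $O(\log H\,\log\log H)$ depth. Each node then reads its $O(1)$ boundary bits together with the globally fanned-out bit $m_H$, forms $\Delta\in\{-1,0,+1\}$ in $O(1)$ gates, and adds it to its count; incrementing a count of $O(\log(b-a))$ bits costs $O(\log(b-a))$ gates and $O(\log\log H)$ depth, and $\sum_{\text{nodes}}O(\log(b-a))$ again telescopes to $O(H)$. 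Running these increments in parallel, performing the bit-shift pass, and uncomputing all flags yields the claimed totals of $O(H)$ gates, $O(H)$ ancilla, and $O(\log H\,\log\log H)$ depth.

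The main obstacle I anticipate is not the asymptotics but the boundary bookkeeping: making the per-node delta rule genuinely uniform across the straddling node and the wraparound seam (so that a single circuit template applies to every node given only its flag and three bits), and ensuring that the suffix flags and all scratch registers used during the walk-down are cleanly uncomputed, so that only the shifted bit string $m$ and the correctly updated tree survive.
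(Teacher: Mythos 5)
Your proposal is correct and follows essentially the same route as the paper's proof: a top-down walk of the sum tree that threads offset/flag information, a local $\pm 1$ update of each $S_{a,b}$ determined by the bits entering and leaving its interval, a flag-conditioned buffered rotation at the leaf level via Lemma~\ref{lem:lowDepthRotationList}, and an uncomputation pass that is clean precisely because the flags depend only on $t$ and the fixed split points, never on the stored sums (the paper makes this dependence-freeness explicit; your construction has it implicitly). Your complexity accounting (telescoping gate count over levels, $O(\log H)$ levels of $O(\log\log H)$-depth comparisons) also matches the paper's.

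The one substantive difference is the wraparound seam. The paper never moves $m_H$ at all: it invokes the promise (inherited from Lemma~\ref{lem:lowDepthRotationList} and the stars-and-bars usage of this routine) that the inserted value is always a $1$, so the straddling node only loses its exiting bit and the leaf rotation inserts a constant. You instead handle an arbitrary $m_H$, which makes your enter/leave delta rule (including the degenerate $\Delta=0$ case for the node whose right end is $H$) more general and more faithful to the literal lemma statement. The cost is that your claim that the physical shift ``wrapping $m_H$ into position $t$'' is a single round of conditional adjacent swaps in $O(1)$ depth is too strong: moving an arbitrary bit from the end of the array to a runtime-determined position $t$ cannot be done by one round of adjacent swaps. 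This is repairable within your stated budget --- for example, deposit the already fanned-out copy of $m_H$ at the unique flag-transition position and clear the original via an XOR tree, costing $O(H)$ gates and $O(\log H)$ depth --- or it can be avoided entirely by adopting the paper's promise, but as written that step of your bit-level pass does not work.
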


\begin{proof}
    Let $t$ denote the target index. Define base-case parameters offset $o_{in}=t$ and flag $f_{in}=\top$. The flag $f_{in}$ will represent whether the index is beyond the start of the list to shuffle, and the offset will represent how far within a target list the index to shuffle is (or $\infty$ if it is beyond the end). Starting at the root $S_{0, H/2}$ of the succinct tree and walking down to children recursively, perform: \begin{enumerate}[label=(\arabic*)]
        \item Let $S_{a,b}$ denote the range under consideration.
        \item Determine how to update the sublist sum $S_{a,b}$.\begin{enumerate}[label=(\alph*)]
            \item If $f_{in}=\bot$, increment $S_{a,b}$ if $m_{a-1}$ is a $0$ (entering $0$), and decrement $S_{a,b}$ if $m_{b-2}$ is a $0$ (exiting $0$), then short circuit the remaining cases.
            \item If $o_{in}$ is greater or equal to $b-a$, do not update.
            \item If $o_{in}$ is less than $b-a$, decrement $S_{a,b}$ if $m_{b-2}$ is a $0$ (exiting $0$).
        \end{enumerate}
        \item Determine where the offset lies relative to the interval $[a, b)$, and propagate a signal to the corresponding side. \begin{enumerate}[label=(\alph*)]
            \item If $f_{in}=\bot$, propagate $o_l=\text{garbage}$, $f_l = \bot$ to the left subarray, and $o_r=\text{garbage}$, $f_r=\bot$ to the right subarray, and short circuit the remaining cases.
            \item If $o_{in}$ is greater or equal to $b-a$, propagate $o_l=\infty$ and $f_l=\top$ to the left subarray, and $o_r=o_{in}-(b-a)$ (or $\infty$ if $o_{in}=\infty$), $f_r=\top$ to the right subarray.
            \item If $o_{in}$ is less than $b-a$, propagate $o_l=o_{in}$ and $f_l=\top$ to the left subarray, and $o_r=\text{garbage}$ (i.e. value does not matter), $f_r=\bot$ to the right subarray.
        \end{enumerate}
        \item Recurse on children, or terminate at a leaf.
    \end{enumerate}
    The key circuit subroutine for this procedure is depicted in Figure \ref{fig:succinctTreeShiftWalkdown}. Recursively call this procedure on children leaves until the base case of a single leaf, at which point each bit $m_i$ of a given leaf can be marked with a flag corresponding to whether they must participate in the rotation. Following this, we utilize the circuit subroutine described in Lemma \ref{lem:lowDepthRotationList} to rotate the list of $m_i$'s which are marked in low depth---by the promise, the inserted value will always be a $1$. An illustrative example of this entire process is depicted in Figure \ref{fig:treeWalkdownShiftExample}.
    
    To uncompute, walk back up the tree and uncompute all the ancilla flags. We recursively call this procedure until reaching the root, at which point all ancillas have been uncomputed---note that we can perform this without modifying any of the updated $S_{a,b}$ because these sums were not used in computing the ancillas (see Figure \ref{fig:succinctTreeShiftWalkdown} for details).
    
    Further, each step uses at most a linear number of ancillas in the number of qubits being touched, and each qubit is touched $O(1)$ times. It follows that the whole procedure uses $O(H)$ ancillas. The depth analysis follows from the fact that each layer consists of $O(1)$ many comparisons on registers of size at most $O(\log H)$. Each comparison can be done with linearly many ancilla in $O(\log \log H)$ depth by Lemma \ref{lem:comparisonComplexity} and there are $O( \log H)$ layers, so the overall depth complexity of this step is $O(\log H \log\log H)$.
\end{proof}

\begin{figure}
    \centering
    \begin{tikzpicture}[
  level 1/.style={sibling distance=10.4em},
  level 2/.style={sibling distance=5.2em}, scale=1.5]

    \tikzstyle{bits} = [rectangle, draw, minimum width = 0.75cm, minimum height = 0.9cm]
    \tikzstyle{register} = [rectangle, rounded corners, draw, minimum width = 2.25cm, minimum height = 0.9cm]
    \tikzstyle{registernar} = [rectangle, rounded corners, draw, minimum width = 1.5cm, minimum height = 0.9cm]

    \draw (0.25,0) node[register, label=left:$S_{0,8}$, label=above:\textcolor{blue}{$10, \top$}] (root) {$4$}
      child { node[register, label=left:$S_{0,4}$] (left) {$1$}
        child { node[] (leftleft) {} }
        child { node[] (leftright) {} }
      }
      child { node[register, label=right:$S_{8,12}$] (right) {$3$}
        child { node[] (rightleft) {} }
        child { node[] (rightright) {} }
      };

    \draw [->, blue] (root) -- (right) node[midway, above right] {$2, \top$};
    \draw [->, blue] (right) -- (rightleft) node[midway, above left] {$2, \top$};
    \draw [decorate,decoration={brace},blue] (1.25,-3.175) -- (3.75,-3.175) node[midway, above] {flag};

    \draw [->, red] (root) -- (left) node[midway, above left] {$\infty, \top$};
    \draw [->, red] (left) -- (leftleft) node[midway, above left] {$\infty, \top$};
    \draw [->, red] (left) -- (leftright) node[midway, above right] {$\infty, \top$};
    \draw [->, red] (right) -- (rightright) node[midway, above right] {$\text{garb}, \bot$};

    \node[bits, label=left:MSBs:] at (-3.5, -3.5){ };
    
    \foreach \n in {1,4,6,7,9,10,11}{
        \node[bits, fill=red!20] (bits\n) at (0.5 * \n-3.5, -3.5){$0$};
    }
    \foreach \n in {0,2,3,5,8,12,13,14}{
        \node[bits] (bits\n) at (0.5 * \n-3.5, -3.5){$1$};
    }

    \draw [dashed] (0.25, -2.75) -- (0.25, -4.25);
    \draw [dashed] (-1.75, -2.75) -- (-1.75, -4.25);
    \draw [dashed] (2.25, -2.75) -- (2.25, -4.25);

    \draw [->] (bits10.south) to [out=-70, in=-110] (bits11.south);
    \draw [->] (bits11.south) to [out=-70, in=-110] (bits12.south);
    \draw [->] (bits12.south) to [out=-70, in=-110] (bits13.south);
    \draw [->] (bits13.south) to [out=-70, in=-110] (bits14.south);
    \draw [->] (bits14.south) to [out=-110, in=-70] (bits10.south) node[below,xshift=1.25cm,yshift=-1cm] {$L$ subroutine};

    \draw [->, red] (bits11.north) -- (right.south) node[midway, left] {$-1$};
    
    \end{tikzpicture}
    \caption{Illustrative example of the comparisons needed to shift everything at or beyond the tenth bit entry ($0$-based indexing) in the succinct tree, with parameters $F=8$ and $M=63$. The data structure is the same as in Figure \ref{fig:succinctTreeExample}, but with LSBs excluded for clarity. As the offset is walked back up the tree, ancillas are uncomputed. Note the difference in the meaning of the flag between that in Figure \ref{fig:treeWalkdownCountExample}; here it represents whether the target position lies to the right of the array or within it. This is because elements to the right of target position must be shifted whereas those to the left do not, so the two cases must be handled separately.}
    \label{fig:treeWalkdownShiftExample}
\end{figure}
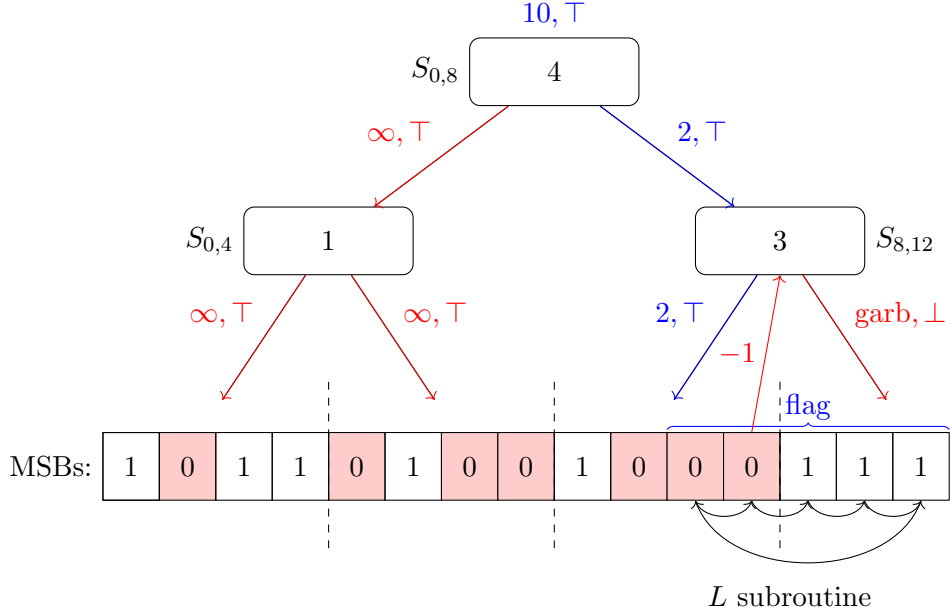

\subsection{Circuit Algorithms}
We now turn to implementations of the operations needed to define a fermion data structure. The implementation will broadly be similar to that in Section \ref{sec:fewerQubits}, but we will here describe how to parallelize the high-depth components.  

\begin{lem}
    Under the succinct list+rank-tree encoding $\mathcal E_s^{(4)}$ described in Section \ref{subsec:varFewQubLowDepth}, a $\srank(p, \bb)$ query has a circuit of $O(\mathcal I)$ gates, depth $O(\log M \log \log M)$, and $O(F)$ ancilla.
    \label{lem:fewQubLowDepEffZp}
\end{lem}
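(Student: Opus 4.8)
The plan is to parallelize the serial succinct sign-rank circuit of Lemma~\ref{lem:fewQubitEffZZZ}, replacing its $O(F\log F)$ linear scan of the most significant bits by a single low-depth walk of the sublist-sum tree. Recall that since the index $p$ is a classical parameter of the circuit family $R_p$, we may split it into a known most-significant part $p_{msb}$ (its top $G=\lceil\log F\rceil$ bits) and least-significant part $p_{lsb}$. Because the pointer list is stored in sorted order, the registers with value $\le p$ form a prefix, and a register $i$ satisfies $\mathrm{val}_i\le p$ exactly when either its bin (most significant bits) is strictly below $p_{msb}$, or its bin equals $p_{msb}$ and $l_i\le p_{lsb}$. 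Thus $\srank(p,\bb)=(-1)^{N}$ with $N=p_{start}+\#\{i\in[p_{start},p_{end}):l_i\le p_{lsb}\}$, where $[p_{start},p_{end})$ is the block of register indices lying in bin $p_{msb}$. It therefore suffices to (i) locate this block and (ii) phase, in parallel, each register in it whose least significant bits clear the threshold.

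First I would compute $p_{start}$ and $p_{end}$ from the most significant bits. Each equals, up to an additive classical constant, the position of the $p_{msb}$-th (resp.\ $(p_{msb}+1)$-th) zero in $m$, so two invocations of the tree walk of Lemma~\ref{lem:lowDepthCounting} produce both in $O(F)$ gates, $O(\log F\log\log F)$ depth, and $O(F)$ ancilla. The essential point---and the step that makes the gate count $O(\mathcal I)$ rather than the $O(F\log M)$ of Section~\ref{sec:fewerQubits}---is that I must convert the two $O(\log F)$-bit numbers $p_{start},p_{end}$ into one flag bit per register marking membership in $[p_{start},p_{end})$ (and, separately, the prefix $[0,p_{start})$) \emph{without} broadcasting them to all $F$ registers and running $F$ index comparisons, which would cost $O(F\log F)$ gates. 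Instead I would decode $p_{start}$ and $p_{end}$ into single marker bits at the corresponding register positions (a reversible demultiplexer, $O(F)$ gates and $O(\log F)$ depth) and run a parallel-prefix XOR along the $F$ registers to fill in the range and prefix indicators $f^{=}_i,f^{<}_i$ in a further $O(F)$ gates and $O(\log F)$ depth.

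With the flags in hand, I would, in parallel over all registers, compare the $O(\log(M/F))$-bit field $l_i$ against the \emph{constant} $p_{lsb}$ using the low-depth comparator of Lemma~\ref{lem:comparisonComplexity}, form $c_i:=f^{<}_i\vee(f^{=}_i\wedge[l_i\le p_{lsb}])$, and apply a single $Z$ to each $c_i$; the product of these phases is exactly $(-1)^{N}=\srank(p,\bb)$. Finally I would uncompute the $c_i$, the flags, the markers, and the two tree walks. Summing the pieces, the gate count is $O(F)$ for the tree and flags plus $O(F\log(M/F))=O(\mathcal I)$ for the $F$ comparisons, i.e.\ $O(\mathcal I)$ overall since $\mathcal I=\Omega(F)$; the depth is dominated by the tree walk and the comparators, $O(\log F\log\log F)+O(\log\log M)+O(\log F)=O(\log M\log\log M)$; and the persistent ancilla (tree scratch and flag registers) is $O(F)$, with the transient comparator scratch accounted as in Lemma~\ref{lem:comparisonComplexity}.

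The main obstacle is the gate budget: $O(\mathcal I)=O(F\log(M/F))$ is genuinely smaller than the $O(F\log M)$ of the serial succinct construction whenever $F$ is not polynomially small, so the proof cannot afford the naive ``scan, then fan $p_{start},p_{end}$ out to every register and compare'' strategy. Getting the bin-membership flags in $O(F)$ gates---by replacing the scan with the tree walk of Lemma~\ref{lem:lowDepthCounting} and replacing $F$ index comparisons with a decode-and-prefix-scan---is the crux, and is also what simultaneously delivers the logarithmic depth. Everything downstream (the LSB comparisons and phasing) is a routine parallelization of the serial circuit, since the threshold $p_{lsb}$ is a classical constant.
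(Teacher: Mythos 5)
Your overall strategy is the same as the paper's: invoke the sublist-sum tree walk (Lemma \ref{lem:lowDepthCounting}) to obtain $p_{start}$ and $p_{end}$ in $O(F)$ gates and low depth, then parallelize the least-significant-bit comparisons of the serial circuit from Lemma \ref{lem:fewQubitEffZZZ}. Your mechanism for distributing the range information---demultiplexing $p_{start},p_{end}$ into one-hot markers and running a prefix-XOR scan to produce per-register flags in $O(F)$ gates---is a genuine refinement over the paper, which instead fans out $O(F/\log M)$ copies of these $O(\log F)$-bit registers and consumes them over $O(\log M)$ sequential rounds; your version also sidesteps the $O(F\log F)$ cost of per-register index comparisons, which you correctly identify as a threat to the $O(\mathcal I)$ gate budget when $F$ is close to $M$.

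However, there is a genuine gap in your ancilla accounting, and it is precisely the problem the paper's round-robin structure exists to solve. In your final step you run all $F$ comparators $[l_i \le p_{lsb}]$ simultaneously using the \emph{low-depth} variant of Lemma \ref{lem:comparisonComplexity}, which requires $\Theta(n)$ ancillas per comparison of $n$-bit values. With $n = \Theta(\log(M/F))$, that is $\Theta(F\log(M/F)) = \Theta(\mathcal I)$ ancillas live at the same moment; calling this scratch ``transient'' does not help, since Definition \ref{defn:fermion-data-structure} counts every ancilla used by $R_j$ toward the space complexity. This violates the lemma's $O(F)$ ancilla claim and destroys succinctness: the structure would occupy $\mathcal I + \Theta(\mathcal I)$ qubits rather than $\mathcal I + O(F)$, exactly in the regime $F = o(M)$ where succinctness matters. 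The paper avoids this by keeping only $O(F/\log M)$ comparisons in flight per round, so the simultaneous comparator scratch is $O\bigl((F/\log M)\cdot\log M\bigr) = O(F)$, and it pays for this with the $O(\log M)$ rounds that produce the stated $O(\log M\log\log M)$ depth. Your proof is repairable within your own framework: perform the $F$ parallel comparisons with the $O(1)$-ancilla, depth-$O(n)$ comparator of Lemma \ref{lem:comparisonComplexity} instead. This yields $O(F)$ total ancilla, $O(F\log(M/F))$ gates, and depth $O(\log(M/F)) \le O(\log M)$ for that stage, which still fits (indeed improves upon) the stated depth bound.
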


\begin{proof}
    This algorithm is similar to the one described in Lemma \ref{lem:fewQubitEffZZZ} and pictured in Figure \ref{fig:FewQubitZZZ}. In particular, the only high depth components of that algorithm are the $\#_p$ algorithm for implementing select queries to the most significant bits, as well as the sequential comparisons. We know from Lemma \ref{lem:lowDepthCounting} that the $\#_p$ subroutine can be done in depth $O(\log F \log \log F)$, with gate complexity $O(F)$ and using at most $O(F)$ ancilla.

    The only remaining high depth part of the algorithm are the sequential comparisons, which can be avoided once the start and end registers are computed by fanning out these values to make $O(F / \log M)$ copies. This uses $O(F)$ ancillas, and can be done in depth $O(\log F)$. Then one can perform $O(F / \log M)$ comparisons in parallel using depth $O(\log \log M)$ each by Lemma \ref{lem:comparisonComplexity} with the approach described in Figure \ref{fig:LSBComparisons}. Each register is touched once, so the gate complexity is $O(F \log M/F)$. After $O(\log M)$ rounds this is completed, giving a total depth of $O(\log M \log \log M)$, and never using more than $O(F)$ ancillas.

    Performing these two steps sequentially, we obtain full depth $O(\log M \log \log M + \log F \log \log F) = O(\log M \log \log M)$. This circuit utilizes $O(F)$ ancilla total, and $O(F\log M/F)+O(F)=O(\mathcal{I})$ many one and two qubit gates.
\end{proof}

\begin{lem}
    Under the succinct list+rank-tree encoding $\mathcal E_s^{(4)}$ described in Section \ref{subsec:varFewQubLowDepth}, a $\bflip(p, \bb)$ query has a circuit of $O(\mathcal I)$ gates, depth $O(\log M \log \log M)$, and $O(F)$ ancilla.
    \label{lem:fewQubLowDepEffXp}
\end{lem}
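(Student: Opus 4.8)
The plan is to follow the serial succinct bit-flip of Lemma \ref{lem:fewQubitEffXp} almost verbatim, but to replace each of its three high-depth ingredients---the linear $\#_p$ scan that locates the range of registers matching $p$ on their most significant bits, the serial ``bubbling'' of the target register through the least-significant list, and the serial shuffle of the stars-and-bars string---with the parallel subroutines established in Lemmas \ref{lem:lowDepthCounting}, \ref{lem:lowDepthRotationList}, and \ref{lem:lowDepthRotationTree}. The eight-step skeleton and the $O(F)$ ancilla budget are inherited unchanged, so the novelty is purely in parallelizing the data movement while keeping every ancilla uncomputed.

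Concretely, I would proceed as follows. First, use the sublist-sum tree together with the select circuit of Lemma \ref{lem:lowDepthCounting} to compute the interval $[p_{start}, p_{end})$ of least-significant registers whose most significant bits equal $p_{msb}$; this costs $O(F)$ gates and $O(\log F \log\log F)$ depth. Second, compute for every register a flag $f_{\geq p}$ recording whether its value is $\geq p$: registers after the interval are automatically marked, those before are automatically unmarked, and within the interval the flag is the outcome of an $O(\log(M/F))$-bit comparison of least-significant bits against $p_l$. Because the list is sorted these flags form a suffix of ones, exactly the input promised by Lemma \ref{lem:lowDepthRotationList}; along the way I also compute the deletion flag $f_{del}$ (true iff $p$ is exactly present) by an equality test at the interval boundary. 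These comparisons are parallelized exactly as in Lemma \ref{lem:fewQubLowDepEffZp}: fan the interval data out into $O(F/\log M)$ copies and run the comparisons in $O(\log M)$ batches of depth $O(\log\log M)$ each. Third, conditioned on $f_{del}$ (fanned out across the registers), run either the least-significant list rotation of Lemma \ref{lem:lowDepthRotationList} to insert $p_l$ or its inverse to delete it; the suffix-of-ones flags tell the subroutine where to open or close the gap, and the promise that the final register is $\infty$ holds because the capacity is never exceeded. Fourth, again conditioned on $f_{del}$, run the tree rotation of Lemma \ref{lem:lowDepthRotationTree} (or its inverse) on the stars-and-bars string at target index equal to (number of registers $< p$) $+\, p_{msb}$, which both migrates the single one between the bin of $\infty$ and bin $p_{msb}$ and maintains every $S_{a,b}$ sum. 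Finally, uncompute $f_{\geq p}$, $f_{del}$, and the interval.

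The complexity bookkeeping then yields the claim. The only steps touching least-significant data cost $O(F\log(M/F)) = O(\mathcal I)$ gates (the parallel comparisons and the list rotation), while the select query and tree rotation cost $O(H) = O(F)$ gates each, for $O(\mathcal I)$ gates total. For depth, the select query and tree rotation contribute $O(\log F \log\log F)$, the list rotation contributes $O(\log(M/F))$, and the parallel least-significant comparisons contribute $O(\log M \log\log M)$, which dominates. Every stage reuses at most $O(F)$ ancillas and returns them to $\ket{0}$, matching the space budget required to preserve succinctness.

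The main obstacle I anticipate is the final uncomputation rather than the forward pass. After the two rotations the encoded data has changed---$p$'s presence is flipped and one one has migrated between bins---so the flags and interval cannot simply be reversed against the original input. As in Lemma \ref{lem:lowDepthXp} and Figure \ref{fig:lowDepthXp}, the fix is to recompute the comparison data from the \emph{updated} state and correct with an $X$: recomputing ``is $p$ present'' now yields $\neg f_{del}$, so XORing it back and flipping restores $f_{del}$ to $\ket{0}$, and the interval and suffix flags are recovered analogously from the new most- and least-significant bits. Verifying that this recomputation is consistent on the boundary register (where the inserted or deleted $p$ sits) is the one place the argument needs genuine care; the remainder is routine given the three parallel subroutines.
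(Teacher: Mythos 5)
Your proposal is correct and follows essentially the same route as the paper's proof: it identifies the same three high-depth components (the $\#_p$ select query, the sequential comparisons, and the shuffles), replaces them with Lemmas \ref{lem:lowDepthCounting}, \ref{lem:lowDepthRotationList}, and \ref{lem:lowDepthRotationTree} plus the fan-out/parallel-comparison trick, and handles uncomputation exactly as the paper does---by recomputing the interval and flags from the \emph{updated} state and using the fact that the presence flag is negated (and $f_{\geq p}$ preserved) by the insertion/deletion. The complexity bookkeeping matches the claimed $O(\mathcal I)$ gates, $O(\log M \log\log M)$ depth, and $O(F)$ ancillas.
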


\begin{proof}

    This algorithm is similar to the one described in Lemma \ref{lem:fewQubitEffZZZ} and pictured in Figure \ref{fig:FewQubitX}. In particular, the high depth components of this algorithm are the $\#_p$ algorithm for finding indices in the succinct representation of the most significant bits, the sequential comparisons, and the array shifting subroutines. We know from Lemma \ref{lem:lowDepthCounting} that the $\#_p$ subroutine can be done in depth $O(\log F \log \log F)$, with gate complexity $O(F)$ and using at most $O(F)$ ancilla, which we use to compute a $p_{start}$ and $p_{end}$.

    Using the computed $p_{start}$ and $p_{end}$, the sequential comparisons can be avoided by fanning out these values to make $O(F / \log M)$ copies, which uses $O(F)$ ancillas. Then one can perform $O(F / \log M)$ comparisons in parallel using depth $O(\log \log M)$ each by Lemma \ref{lem:comparisonComplexity}, using the approach in Figure \ref{fig:LSBComparisons}. After $O(\log M)$ rounds this is completed, giving a total depth of $O(\log M \log \log M)$. Each register is acted on once, so the gate complexity is $O(F \log M/F)$. Using this, we can compute a flag $f_{ins}$ and $f_{\geq p}$, placing a copy at each register of least significant bits $l_1,...,l_F$. Flag $f_{ins}$ is true if and only if $p$ is present in the encoded list, and $f_{\geq p}$ is true at register $r$ if and only if the value encoded by $r$ and it's corresponding MSBs is $\geq p$.

    With these flags placed, we uncompute $p_{start}$ and $p_{end}$. Following this, we permute the registers of the most significant bits using the approach described in Lemma \ref{lem:lowDepthRotationList}, conditioned on $f_{ins}=\top$. If flag $f_{ins}=\bot$, then conditioned on this we perform the inverse circuit to remove the target element. Finally, using $p_{start}$ we cycle the list of most significant bits and update the tree using the approach of Lemma \ref{lem:lowDepthRotationTree}, again conditioning on $f_{ins}$: if $f_{ins}=\top$ then we rotate downwards using the described circuit, if $f_{ins}=\bot$ then we rotate upwards using the inverse circuit. Finally, we compute new $p_{start}', p_{end}'$, using these to uncompute every $f_{ins}$ and each registers $f_{\geq p}$. Note that $f_{\geq p}$ is preserved by the cycling subroutine, making this value easy to uncompute. Further, $f_{ins}$ will be $\top$ if and only if, after the insertion/deletion subroutine, $p$ is present in the list (i.e. the reverse condition of before the subroutine). We finish by uncomputing these $p_{start}'$ and $p_{end}'$.

    Note that every described operation on the least significant bits uses $O(F)$ gates, depth at most $O(\log F \log \log F)$, and $O(F)$ ancillas. Every described operation on the most significant bits uses $O(F \log M/F)$ gates, and depth $O(\log M \log \log M)$, and $O(F)$ ancillas. It follows from $F \leq M$ that the entire operation requires depth $O(\log M \log \log M)$, gate complexity $O(F \log M/F)+O(F)=O(\mathcal{I})$, and uses $O(F)$ ancillas.
\end{proof}

\section{Implicit Fermion Data Structure}
\label{sec:implicit-struct}

In this section, we abandon our previous framework of a sorted list of pointers and aim for an even more concise data structure. Inspired by the techniques of Harrison et al.\ \cite{HNAW22}, we will instead (implicitly) associate all bit strings of length $M$, Hamming weight $f$ with consecutive integers $0, \ldots, \binom{M}{f}-1$. For a system with $F$ fermions, we build a capacity $k$ fermion data structure, which stores bit strings of weight $f \in [F-k, ..., F+k]$. For $f = F+k$, the number of bit strings is bounded by  
\[
\binom{M}{F+k} = \binom{M}{F} \cdot \frac{F! (M-F)!}{(F+k)! (M-F-k)!} \leq \binom{M}{F} \left( \frac{M}{F} \right)^{k}.
\]
The smaller values of $f$ have negligible contribution due to the exponential growth in $k$. In terms of space usage, this translates to $\mathcal{I} + O(k \cdot \log \tfrac{M}{F})$ additional qubits. When $F = \Theta(M)$ and $k = 1$, the overhead is $O(1)$ bits, meaning this is an \emph{implicit} fermionic data structure.

However, fermionic operations on this new data structure use $\mathsf{poly}(M)$ gates; tolerable for our target regime of $F = \Theta(M)$, but potentially exponentially worst than our previous encodings if, e.g., $F = \Theta(\log M)$. Nonetheless, $\mathsf{poly}(M)$ gate complexity compares favorably with prior work using $\mathcal I + O(1)$ qubits, where the same operations cost $O(M^{F+2})$ \cite{HNAW22} or $M^{O(F)}$ \cite{STHCG22} gates. Furthermore, when $F=\Theta(M)$ this gate overhead is comparable to the overheads in prior works that do not achieve near-optimal space efficiency.

In short, we prove the following result.
\begin{thm}
	\label{thm:implicitEncoding}
	There exists a fermion data structure using $\mathcal{I} + O(k\log \tfrac{M}{F})$ qubits where $\bflip$ and $\srank$ operations are possible with $O(MF\mathcal{I})$ gates.
\end{thm}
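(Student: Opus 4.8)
The plan is to encode each feasible bit string $\bb$ by a single integer label: its rank among all strings of Hamming weight in $[F-k,F+k]$, where we order first by Hamming weight and then, within each weight class, lexicographically. The label then lies in $\{0,\dots,\sum_{f=F-k}^{F+k}\binom{M}{f}-1\}$, so the space bound $\mathcal{I}+O(k\log\tfrac{M}{F})$ is exactly the counting already carried out above; the whole task is to implement $\srank$ and $\bflip$ on this label in $O(MF\mathcal{I})$ gates and $O(1)$ ancilla. The central device is a family of orderings $<_j$, $j\in\{0,\dots,M\}$, where $<_j$ sorts first by Hamming weight and then lexicographically on the cyclically rotated string $(b_j,b_{j+1},\dots,b_M,b_1,\dots,b_{j-1})$; the stored label is always the rank under $<_0$. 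Two primitives suffice: (i) given the label under $<_j$, read off or flip the primary bit $b_j$; and (ii) convert the label under $<_j$ into the label under $<_{j\pm1}$. Both $\srank(p,\bb)$ and $\bflip(p,\bb)$ reduce to cycling $<_0\to<_1\to\cdots\to<_p$ and back.

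For the primary-bit primitive, observe that in $<_j$ the bit $b_j$ is the most significant coordinate within each weight class. Fixing weight $f$, the $\binom{M}{f}$ strings split into the $\binom{M-1}{f}$ strings with $b_j=0$, receiving the smaller within-class ranks $0,\dots,\binom{M-1}{f}-1$, followed by the $\binom{M-1}{f-1}$ strings with $b_j=1$. Since $k=O(1)$, the weight $f$ is recovered by $O(1)$ comparisons of the label against the fixed class boundaries $O_f=\sum_{g<f}\binom{M}{g}$, after which the within-class rank $r=L-O_f$ is compared against $\binom{M-1}{f}$ to read $b_j$; applying a $Z$ to the result and uncomputing implements $\srank$ restricted to position $j$. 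To flip $b_j$, note that the tail $(b_{j+1},\dots,b_{j-1})$ is untouched, so its combinadic rank $\rho$ is invariant and the label merely moves to a neighbouring weight class: the map sends $L\mapsto L\pm\Delta_f$ for an explicit weight-dependent shift $\Delta_f$ built from $\binom{M}{f}$ and $\binom{M-1}{f\pm1}$, an $O(\mathcal{I})$-gate reversible addition conditioned on the recovered $b_j$ and $f$.

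The ordering-conversion primitive is the combinatorial heart of the construction. Because a cyclic rotation preserves Hamming weight, passing from $<_j$ to $<_{j+1}$ permutes labels within each fixed weight class and leaves the boundaries $O_f$ fixed, so it suffices to implement, on the within-class rank, the rank transformation induced by the one-step rotation $(b_j,\dots)\mapsto(b_{j+1},\dots,b_j)$. The plan is to realise this by the standard unrank/rerank route through the combinatorial number system: decode the rank into its combinadic (the sorted positions of the string's $O(F)$ ones in the current frame), shift every position cyclically by one, and re-encode. Each direction is a sweep over the $O(F)$ summands of the combinadic, each summand a comparison-and-(add/subtract) of an $O(\mathcal{I})$-bit binomial coefficient; using in-place reversible adders these sweeps cost $O(F\mathcal{I})$ gates while reusing only $O(1)$ ancilla, and they are manifestly invertible, which yields the $<_j\to<_{j-1}$ direction for free.

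With these primitives, $\bflip(p,\bb)$ is implemented by converting $<_0\to<_p$ (that is, $p$ applications of the conversion primitive), flipping the now-primary bit $b_p$, and converting back $<_p\to<_0$, for a total of $O(p)\cdot O(F\mathcal{I})+O(\mathcal{I})=O(MF\mathcal{I})$ gates and $O(1)$ ancilla. For $\srank(p,\bb)$ we instead walk $<_0\to<_1\to\cdots\to<_p$, at step $j$ reading the primary bit $b_j$ and applying the phase $(-1)^{b_j}$, so the accumulated phase is exactly $\prod_{n\le p}(-1)^{b_n}=\srank(p,\bb)$, and then walk the ordering back to $<_0$; this is again $O(MF\mathcal{I})$ gates. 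I expect the main obstacle to be the conversion primitive: one must verify that the one-step rotation acts as a clean permutation of combinadic ranks and, more delicately, that the unrank/rerank can be carried out \emph{in place} with only $O(1)$ ancilla, so that the ancilla budget does not inflate the $\mathcal{I}+O(k\log\tfrac{M}{F})$ space bound. Getting the boundary bookkeeping between weight classes correct when $\bflip$ crosses from weight $f$ to $f\pm1$, and confirming that the resulting map is the intended involution, is the other point demanding care.
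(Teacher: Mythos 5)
Your high-level skeleton matches the paper's: encode each configuration by its rank under ``Hamming weight, then lexicographic'' order, define a family of orderings $<_j$ under which bit $j$ is primary, implement $\bflip$/$\srank$ on the primary bit, and cycle $<_0 \to <_1 \to \cdots \to <_p$ and back, for $O(M)$ conversions at $O(F\mathcal{I})$ gates each. Your primary-bit primitive is also sound: with $b_j$ most significant inside a weight class, reading or flipping $b_j$ reduces to comparisons against fixed binomial boundaries and a conditional constant addition, just as in the paper's Lemma on the operators $\mathsf{X}_j, \mathsf{Z}_j$.

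The genuine gap is exactly where you flag it: the ordering-conversion primitive. Your route --- unrank the label into its combinadic (the sorted list of the positions of the ones), shift the positions, and re-rank --- requires storing that intermediate list, which takes $\Theta(F\log M)$ ancilla qubits. Reversibility does not rescue this: the standard compute/copy/uncompute pattern still needs the full position list to exist at some intermediate time. In the theorem's target regime $F=\Theta(M)$, $k=O(1)$, this is $\Theta(M\log M)$ ancillas against a space budget of $\mathcal{I}+O(1)=\Theta(M)+O(1)$, so the claimed space bound fails. Moreover, your fallback hope that the rotation-induced rank permutation decomposes into ``$O(F)$ comparison-and-add sweeps'' in place is not substantiated, and there is a structural reason to doubt it: under your cyclic-rotation orderings, one conversion step maps the contiguous block of strings with $b_j=0$ (ranks $0,\dots,\binom{M-1}{f}-1$ within a weight class) onto the set of strings whose \emph{last} cyclic bit is $0$, which is scattered non-uniformly through the new order, so the permutation is not a composition of contiguous block moves. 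This is precisely the problem the paper's choice of orderings is engineered to avoid: the paper's $<_j$ sorts secondarily by the Hamming weight of the length-$(M-j)$ suffix and by \emph{reversed} lexicographic order on the prefix, which forces the conversion $\pi_{j\mapsto j+1}$ to decompose into $O(F)$ swaps of contiguous ``chapters'' followed by constant-stride de-interleaves (Theorem \ref{thm:efficient-cycle-perm}), each realizable directly on the label register by modular addition and bit permutations with $O(1)$ ancillas, using padding to powers of two (Section \ref{sec:padding}, Theorem \ref{thm:power_interleave_full}). Without either that change of ordering or some genuinely in-place unrank/rerank algorithm (which you do not provide and which is not known to exist with $O(1)$ ancilla), your proof does not establish the theorem.
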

As previously discussed, the most interesting regime is $k = O(1)$, $F = \Theta(M)$ since we have $\mathcal{I} + O(1)$ space usage. 

The high-level construction associates configurations/bit strings directly to the numbers. First, we imagine the configurations are mapped to \emph{consecutive} numbers $0, 1, \ldots$, where the configurations are ordered first by weight and breaking ties with lexicographic order. The operations $\bflip(1, \bb)$ and $\srank(1, \bb)$ are relatively easy in this configuration, since the configurations are nearly sorted by $b_1$ (after Hamming weight, of which we allow only $2k+1$ values). 

In fact, for each bit $j$ we define an ordering of the configurations/bit strings where $\bflip(j, \bb)$ and $\srank(j, \bb)$ are comparatively easy. However, the proof of Theorem~\ref{thm:implicitEncoding} depends on using extra qubits to align blocks of configurations with powers of two. Section~\ref{sec:padding} spells out the details: how many ancilla qubits, how much padding per block, and where it exists in the array when we do not need it.

\subsection{Notation}
\label{subsec:notation}

\subsubsection{Bit strings and orderings}

Let \( \strings{m}{f} \subseteq \{ 0, 1 \}^{*} \) represent the set of bit strings of length $m$ and Hamming weight $f$.\footnote{Not to be confused with the Stirling numbers, which use the same notation, but will not appear in this paper.} Clearly $\strings{m}{f}$ has $\binom{m}{f} = \frac{m!}{(m-f)!f!}$ elements. Hence, $\strings{m}{f}$ is non-empty if and only if $0 \leq f \leq m$, and there is a nice recursive decomposition, 
\begin{equation}
\label{eqn:pascal}
\strings{m}{f} = 0 \strings{m-1}{f} \cup 1 \strings{m-1}{f-1},
\end{equation}
analogous to Pascal's rule for binomial coefficients. We use the notation $\sstrings{m}{f}$ to represent the same set of strings, but sorted in lexicographic order. Note that the above decomposition works analogously for the sorted list, replacing the union with concatenation. We use the notation $\sstrings{m}{f}^R$ to denote the same set of strings but sorted by \textit{reversed} lexicographic order, i.e. the most significant bit is the right-most (the comparison is $x^R \prec y^R$ where $x^R$ is string reversal and $\prec$ is ordinary lexicographic order). Note that this is not the inverse of lexicographic order.

\subsubsection{Subroutines}
\label{subsec:array-subroutines}

In this section, we introduce important subroutines for permuting a range of integers, which we view as manipulating blocks of a hypothetical array $\mathcal A$ --- rotating, interleaving, de-interleaving, etc. Outside the bounds of the array, we require the permutation to be the identity. 

\begin{prop}
	\label{prop:direct_sum}
	Suppose $U$ is a unitary of the form $U_{<b} \oplus U_{\geq b}$, meaning that it acts on basis states $\ket{0}, \ldots, \ket{b-1}$ and $\ket{b}, \ket{b+1}, \ldots$ separately. Then we can implement $U' = U_{<b} \oplus I$ using a controlled-$U$ gate, $O(1)$ ancillas, and $O(n)$ gates, where $n$ is the number of qubits.
\end{prop}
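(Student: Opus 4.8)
The plan is to reduce $U'$ to a single controlled-$U$ by computing a one-qubit flag that records whether the register's contents lie in the low block $\{0,\dots,b-1\}$, applying the controlled-$U$ against that flag, and then uncomputing the flag. The structural observation that makes this work is that $U=U_{<b}\oplus U_{\geq b}$ preserves the partition of basis states into the classes ``$<b$'' and ``$\geq b$'': for any $x<b$ we have $U\ket{x}=U_{<b}\ket{x}$, a superposition supported entirely on states $\ket{y}$ with $y<b$, and symmetrically $U$ maps the high block into itself. Thus the predicate $[x<b]$ is an invariant of $U$, which is exactly what will let me restore the flag at the end.

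Concretely, first I would invoke a comparison subroutine (the constant-comparison circuits of Appendix~\ref{app:compCircs}, run in an $O(1)$-ancilla, $O(n)$-gate mode rather than the low-depth mode) to compute, into a fresh ancilla $\ket{0}$, the bit $f=[x<b]$, where $x$ is the integer stored in the $n$-qubit register; this maps $\ket{x}\ket{0}\mapsto\ket{x}\ket{[x<b]}$ reversibly. Second, I would apply the given controlled-$U$ gate using the flag qubit as control: on the branches with $f=1$ (exactly the low block) this applies $U$, which acts there as $U_{<b}$, and on the branches with $f=0$ it acts trivially, so the net action on the register is precisely $U_{<b}\oplus I$, with the flag still attached. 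Third, since the comparison outcome is unchanged by the controlled-$U$ (low-block states stay in the low block under $U_{<b}$, and high-block states are untouched), I can apply the comparison circuit a second time to XOR $[x<b]$ back into the flag ancilla, returning it to $\ket{0}$ and disentangling it; tracking either type of branch shows the flag returns to $\ket{0}$ in every computational-basis component.

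For the resource count, the two comparison passes cost $O(n)$ gates and use only $O(1)$ ancillas each (the flag plus a constant amount of internal scratch that the comparison restores), the controlled-$U$ is a single gate by hypothesis, and no further ancillas are required, giving the claimed $O(n)$ gates and $O(1)$ ancillas.

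The main obstacle, and essentially the only content beyond bookkeeping, is justifying the clean uncomputation of the flag; this is not automatic for an arbitrary controlled unitary but follows here from the block-diagonal hypothesis $U=U_{<b}\oplus U_{\geq b}$, which conserves the comparison predicate across the controlled-$U$. I would also verify that the comparison admits an $O(1)$-ancilla implementation, since the low-depth variants elsewhere in the paper (Lemma~\ref{lem:comparisonComplexity}) use $\Theta(n)$ ancillas and would violate the $O(1)$-ancilla budget: a sequential ripple-style comparison against the fixed constant $b$ suffices, trading low depth for constant ancilla usage, which is acceptable because the statement imposes no depth requirement.
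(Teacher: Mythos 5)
Your proposal is correct and follows essentially the same route as the paper's proof: compute the comparison flag $[x<b]$ into an ancilla, apply the controlled-$U$ on that flag, and uncompute the flag by a second comparison, with the uncomputation justified precisely by the block-diagonal structure of $U$. Your added care about using the $O(1)$-ancilla, $O(n)$-depth mode of the comparison circuit (Lemma~\ref{lem:comparisonComplexity} offers both modes) is a reasonable bit of bookkeeping that the paper leaves implicit.
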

\begin{proof}
	Test if the register is $<b$ and put the result into an ancilla. Apply $U$ controlled on the ancilla. Test if the register is $<b$ again to uncompute the ancilla. Note that uncomputing depends on the direct sum structure of $U$. 
\end{proof}

\noindent
We now introduce the first operation, rotation. 
\begin{defn}
	For $i \geq 0$ and $0 \leq k < n$, let $\Call{rotate}{i, k, n}$ be the permutation 
	\[
	j \to \begin{cases}
		j & \text{if $j < i$ or $j \geq i+n$,} \\
		j+k & \text{if $i \leq j < i+n-k$,} \\
		j+k-n & \text{if $i+n-k \leq i+n$.}
	\end{cases}
	\]
	In other words, rotate the subarray $\mathcal{A}[i..i+n-1]$ by $k$. 
\end{defn}

\begin{prop}
	\label{prop:prefix_rotation}
	For $0 \leq k < n$, there is a circuit with $O(1)$ ancillas implementing $\Call{Rotate}{0, k, n}$.
\end{prop}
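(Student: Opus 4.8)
The plan is to recognize that the prefix rotation $\Call{Rotate}{0,k,n}$ acts as the identity on every register value $j \ge n$ and, on the range $0 \le j < n$, is exactly modular addition by the constant $k$: it sends $j \mapsto j+k$ when $j < n-k$ and $j \mapsto j+k-n$ when $j \ge n-k$, i.e. $j \mapsto (j+k) \bmod n$. Thus it suffices to build a reversible circuit for constant modular addition that leaves everything outside $[0,n)$ fixed, using only $O(1)$ ancillas. I will treat the comparison and constant-addition circuits of Appendix~\ref{app:compCircs} as black boxes, each of which uses $O(1)$ persistent ancillas (any internal scratch being uncomputed within the call) and $O(m)$ gates on an $m$-qubit register.

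First I would restrict attention to the window $[0,n)$ using the same device as in Proposition~\ref{prop:direct_sum}: compute a guard flag $a$ that is $1$ iff $j < n$ into one ancilla via a comparison circuit, and control every subsequent operation on $a$. Because modular addition keeps the value inside $[0,n)$, re-testing whether the \emph{final} register value is $<n$ recovers $a$, so the guard can be uncomputed at the end. With the register now promised to lie in $[0,n)$, the core is a three-step modular adder: (1) compute a second flag $c$ that is $1$ iff $j \ge n-k$ (this marks exactly the values that will wrap around); (2) add the constant $k$ in place; (3) controlled on $c$, subtract the constant $n$ in place. After these steps the register holds $v = (j+k)\bmod n$.

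The one point requiring care is uncomputing $c$ without spending further ancillas, which I would handle by reading the relevant case directly off the output value $v$. When $c=0$ we have $j < n-k$, so $v = j+k \in [k,n)$; when $c=1$ we have $j \ge n-k$, so $v = j+k-n \in [0,k)$. Hence $c$ equals the predicate $v < k$, and a single additional comparison of $v$ against the constant $k$ uncomputes $c$ cleanly. Uncomputing the outer guard $a$ as described then completes the circuit. The total ancilla count is $O(1)$ (the flags $a$, $c$, plus the reusable scratch internal to each comparison/addition), which establishes the claim; the gate count is $O(m)$, dominated by the two constant (controlled) additions and the four comparisons.

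The main obstacle, and the step I would be most careful about, is exactly this uncomputation bookkeeping: a naive modular adder would leave the wrap-around flag entangled with the data, and the identity relating $c$ to the \emph{output} ($c=1 \iff v<k$) is what lets us erase it in place rather than accumulating garbage---without it the ``$O(1)$ ancillas'' guarantee would fail. The edge cases $k=0$ (where no value wraps, so $c\equiv 0$ and $v=j$) and $j \ge n$ (which must pass through untouched under the guard $a$) should be checked against this identity, and both are consistent with it.
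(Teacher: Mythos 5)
Your proof is correct, but it takes a genuinely different route from the paper's. You implement $\Call{Rotate}{0,k,n}$ directly as a guarded constant modular adder: guard flag $a=(j<n)$ in the style of Proposition~\ref{prop:direct_sum}, wrap flag $c=(j\ge n-k)$, in-place addition of $k$, $c$-controlled subtraction of $n$, and then the key observation that $c$ can be erased by comparing the \emph{output} against $k$ (since $v<k \iff$ the value wrapped) --- this identity checks out, and it is exactly what keeps the ancilla count at $O(1)$; the intermediate arithmetic is naturally carried out mod $2^m$ on the $m$-qubit register, which does not disturb correctness since both flags are computed/uncomputed from genuine in-range values. The paper instead reduces everything to the power-of-two case: when $n=2^m$ the rotation is just addition with the final carry forgotten, and for general $n$ it pads the array to a power-of-two length, doubles it, and realizes the rotation as a block exchange (``\textsf{SWAP}'') via a short sequence of whole-array and MSB-controlled half-array rotations. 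The trade-off: your construction is more self-contained and elementary, relying only on the comparison/constant-adder primitives of Appendix~\ref{app:compCircs} and needing no extra address space for padding; the paper's construction, while requiring room for padding, deliberately manufactures the power-of-two rotation and block-\textsf{SWAP} as standalone primitives that are reused heavily later (Lemma~\ref{lem:efficientSingleQubsOrder}, Theorem~\ref{thm:efficient-cycle-perm}, and the interleaving machinery of Section~\ref{sec:padding}). Note that nothing is lost by your route, since a swap of adjacent blocks $A,B$ is itself the rotation $\Call{Rotate}{0,|B|,|A|+|B|}$, so your circuit also yields the \textsf{SWAP} subroutine the paper needs downstream.
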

\begin{proof}
    For indices less than $n$, the map is addition (modulo $n$) by $k$, i.e., $j \mapsto (j+k) \bmod n$. When $n$ is a power of $2$, a simple ripple adder can add $k$ and forget the final carry to achieve this with $1$ ancilla qubit. (See \cite{CDKM04}; our task is actually even simpler since one summand is fixed)
	
	Otherwise, let us pad the array to a power of two length, and then double it. Since the length is a power of two, we can rotate the whole list by an arbitrary amount. By construction, each half is a power of two, so by controlling the rotation on the most significant bit, we can arbitrarily rotate either half.
	
	We claim that the following sequence of such operations will exchange $A$ and $B$, as desired. We will also refer to this operation as a ``\textsf{SWAP}'' between $A$ and $B$.
	
	\begin{center}
		\begin{tikzpicture}[start chain=1 going right, node distance=-0.1mm,
			start chain=2 going right,
			start chain=3 going right,
			start chain=4 going right,
			start chain=5 going right]
			\edef\sizetape{0.7cm}
			\edef\vertspace{1.2cm}
			\tikzstyle{tape}=[draw,minimum size=\sizetape]
			
			\node [on chain=1,draw=none,minimum width=1em,text height=1em] (tape1) {};
			\node [on chain=1,tape,minimum width=3em,fill=red!20,text height=1em] {\textbf{A}};
			\node [on chain=1,tape,minimum width=2em,fill=blue!20,text height=1em] {\textbf{B}};
			\node [on chain=1,tape,minimum width=1em,text height=1em] {};
			\node [on chain=1,tape,minimum width=6em,text height=1em] (mid1) {};
			\node [on chain=1,draw=none,minimum width=1em,text height=1em] (end1) {};
			
			\node [on chain=2 going right,below=\vertspace of tape1,draw=none,minimum width=1em,text height=1em] (tape2) {};
			\node [on chain=2,tape,minimum width=2em,fill=blue!20,text height=1em] {\textbf{B}};
			\node [on chain=2,tape,minimum width=4em,text height=1em] {};
			\node [on chain=2,tape,minimum width=3em,text height=1em] (mid2) {};
			\node [on chain=2,tape,minimum width=3em,fill=red!20,text height=1em] {\textbf{A}};
			\node [on chain=2,draw=none,minimum width=1em,text height=1em] (end2) {};
			
			\node [on chain=3 going right,below=\vertspace of tape2,draw=none,minimum width=1em,text height=1em] (tape3) {};
			\node [on chain=3,tape,minimum width=4em,text height=1em] {};
			\node [on chain=3,tape,minimum width=2em,fill=blue!20,text height=1em] {\textbf{B}};
			\node [on chain=3,tape,minimum width=3em,text height=1em] (mid3) {};
			\node [on chain=3,tape,minimum width=3em,fill=red!20,text height=1em] {\textbf{A}};
			\node [on chain=3,draw=none,minimum width=1em,text height=1em] (end3) {};
			
			\node [on chain=4 going right,below=\vertspace of tape3,draw=none,minimum width=1em,text height=1em] (tape4) {};
			\node [on chain=4,tape,minimum width=4em,text height=1em] {};
			\node [on chain=4,tape,minimum width=2em,fill=blue!20,text height=1em] {\textbf{B}};
			\node [on chain=4,tape,minimum width=3em,fill=red!20,text height=1em] (mid4) {\textbf{A}};
			\node [on chain=4,tape,minimum width=3em,text height=1em] {};
			\node [on chain=4,draw=none,minimum width=1em,text height=1em] (end4) {};
			
			\node [on chain=5 going right,below=\vertspace of tape4,draw=none,minimum width=1em,text height=1em] (tape5) {};
			\node [on chain=5,tape,minimum width=2em,fill=blue!20,text height=1em] {\textbf{B}};
			\node [on chain=5,tape,minimum width=3em,fill=red!20,text height=1em] {\textbf{A}};
			\node [on chain=5,tape,minimum width=1em,text height=1em] {};
			\node [on chain=5,tape,minimum width=6em,text height=1em] {};
			\node [on chain=5,draw=none,minimum width=1em,text height=1em] (end5) {};
			
			\draw (tape1.south east) edge[decorate,decoration={brace,amplitude=5pt,mirror,raise=2mm}] node[below=4mm] {$\gets |A|$} (end1.south west);
			
			\draw (tape2.south east) edge[decorate,decoration={brace,amplitude=5pt,mirror,raise=2mm}] node[below=4mm] {$\gets |B|$} (mid2.south west);
			
			\draw (mid3.south west) edge[decorate,decoration={brace,amplitude=5pt,mirror,raise=2mm}] node[below=4mm] {$\to |A|$} (end3.south west);
			
			\draw (tape4.south east) edge[decorate,decoration={brace,amplitude=5pt,mirror,raise=2mm}] node[below=4mm] {$\gets 2^{m} - |B|$} (end4.south west);
		\end{tikzpicture}
	\end{center}
	
	\noindent
	Here $m$ is chosen so that $2^m \geq |A| + |B|$. 
\end{proof}

\begin{prop}
    \label{prop:shift}
    If we can apply an operation $U$ to the beginning of the array ($\ket{0}, \ldots, \ket{b-1}$) then we can apply it to an arbitrary subarray (i.e., $\ket{i}, \ldots, \ket{i+b-1}$). 
\end{prop}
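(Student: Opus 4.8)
The plan is to reduce the arbitrary-subarray case to the prefix case by conjugating $U$ with a cyclic rotation, which is exactly the primitive that Proposition~\ref{prop:prefix_rotation} provides. First I would bring the target window to the front: apply the prefix rotation $\Call{Rotate}{0, b, i+b}$, which by Proposition~\ref{prop:prefix_rotation} costs only $O(1)$ ancillas. Tracking indices through the definition of $\Call{rotate}{\cdot}$, this permutation sends positions $i, i+1, \ldots, i+b-1$ to $0, 1, \ldots, b-1$, pushes the initial block $0, \ldots, i-1$ up to $b, \ldots, i+b-1$, and fixes every cell at index $\geq i+b$. After this step the contents of the window $[i, i+b-1]$ occupy the first $b$ cells of $\mathcal{A}$.

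Next I would apply the given operation $U$ to the beginning of the array, and then undo the rotation with its inverse $\Call{Rotate}{0, i, i+b}$ (again supplied by Proposition~\ref{prop:prefix_rotation}). Writing $R := \Call{Rotate}{0, b, i+b}$, the full circuit is $R^{-1} U R$. The key observation is that $U$ is supported on $\ket{0}, \ldots, \ket{b-1}$ and acts as the identity on all larger basis states (the $U_{<b} \oplus I$ form of Proposition~\ref{prop:direct_sum}). Consequently the conjugation is the identity outside the window $[i, i+b-1]$: the displaced block now sitting at $[b, i+b-1]$ is untouched by $U$ and carried back to $[0, i-1]$ by $R^{-1}$, while the cells at index $\geq i+b$ are never moved at all. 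On the window itself, $R$ aligns it with the prefix, $U$ acts, and $R^{-1}$ returns it, so the net effect is precisely $U$ applied to $\ket{i}, \ldots, \ket{i+b-1}$.

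The only points that need verification are bookkeeping: checking that the choice $k=b$, $n=i+b$ really sends the window to the front, and that the undo step is a rotation by $i$ rather than some other amount (it is, since rotating a length-$n$ prefix by $b$ and then by $n-b=i$ composes to a full rotation by $n$, the identity). I do not expect a genuine obstacle here; the real content of the proposition is that \emph{the ability to act on a prefix upgrades for free to the ability to act on any contiguous block}, at the price of two invocations of Proposition~\ref{prop:prefix_rotation} (hence $O(1)$ additional ancillas) plus a single call to $U$.
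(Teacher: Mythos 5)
Your proof is correct and follows exactly the paper's approach: the paper's (two-sentence) proof likewise conjugates $U$ by a prefix rotation from Proposition~\ref{prop:prefix_rotation} to move the subarray to the front and back. Your index bookkeeping — $\Call{Rotate}{0,b,i+b}$ to align the window, and its inverse $\Call{Rotate}{0,i,i+b}$ to undo it — is accurate, including the observation that correctness outside the window relies on $U$ having the $U_{<b}\oplus I$ form.
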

\begin{proof}
    Prefix rotation is enough to move any subarray to the beginning. Conjugate $U$ by that rotation to achieve the desired operation.
\end{proof}

\begin{corr}
    \label{cor:rotation}
    For any $i \geq 0$, $0 \leq k < n$, there is a circuit implementing $\Call{Rotate}{i, k, n}$ with $O(1)$ ancillas. 
\end{corr}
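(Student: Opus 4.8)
The plan is to obtain $\Call{Rotate}{i,k,n}$ by transporting the already-constructed prefix rotation to the appropriate subarray, so the entire argument is a one-line composition of the two immediately preceding results. Proposition~\ref{prop:prefix_rotation} supplies a circuit for $\Call{Rotate}{0,k,n}$---the rotation of the prefix block $\ket{0},\dots,\ket{n-1}$---using only $O(1)$ ancillas, and Proposition~\ref{prop:shift} provides the generic tool for relocating any prefix operation to an arbitrary offset by conjugating with a prefix rotation. I would therefore instantiate Proposition~\ref{prop:shift} with $U=\Call{Rotate}{0,k,n}$ and block length $b=n$, and apply it at offset $i$.

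Concretely, the conjugation first cyclically rotates the prefix of length $i+n$ (by amount $n$) so that positions $i,\dots,i+n-1$ are carried to positions $0,\dots,n-1$, then applies $U$ to that freshly aligned prefix, then undoes the alignment rotation. Each of the three pieces is itself a prefix rotation and hence costs $O(1)$ ancillas by Proposition~\ref{prop:prefix_rotation}, so the whole construction stays within $O(1)$ ancillas. Positions outside $[i,i+n)$ are sent to themselves by the pair of outer rotations and are left untouched by $U$, matching the ``identity outside the subarray'' clause in the definition of $\Call{rotate}{i,k,n}$.

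The only thing that genuinely needs checking is the index bookkeeping: that conjugating the prefix rotation by the alignment rotation reproduces exactly the piecewise map defining $\Call{Rotate}{i,k,n}$, including that the residual shift by $k$ inside the block wraps around correctly. This is a routine composition-of-permutations verification rather than a real obstacle---once the offsets are tracked (the boundary case $i=0$ reduces directly to Proposition~\ref{prop:prefix_rotation}, and for $i \geq 1$ one has $0 \leq n < i+n$ so the alignment rotation is well-defined), the corollary follows immediately.
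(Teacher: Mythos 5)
Your proposal is correct and follows essentially the same route as the paper: the paper also observes that $\Call{Rotate}{i,k,n}$ is a shifted instance of $\Call{Rotate}{0,k,n}$ and invokes Proposition~\ref{prop:shift} (conjugation by a prefix rotation) together with Proposition~\ref{prop:prefix_rotation}. Your write-up merely unpacks the conjugation and index bookkeeping that the paper leaves implicit, which is fine.
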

\begin{proof}
    Clearly $\Call{Rotate}{i, k, n}$ is a shifted instance of $\Call{Rotate}{0, k, n}$, and Proposition~\ref{prop:shift} finishes the proof.
\end{proof}

The other chief operation is interleaving two lists of blocks. 
\begin{defn}
	  Let $\Call{interleave}{}$ be the permutation which maps blocks $A_1, \ldots, A_n$ and $B_1, \ldots, B_n$ as follows 
	\[
	A_1 A_2 \cdots A_n B_1 B_2 \cdots B_{n} \mapsto A_1 B_1 A_2 B_2 \cdots A_{n} B_{n}
	\]
	where $|A_1|= \cdots = |A_n|$ and $|B_1| = \cdots = |B_n|$.
\end{defn}
We will assume we have such an algorithm for general $n$, $|A_i|$, $|B_i|$, although space/efficiency constraints will eventually limit us to Theorem~\ref{thm:power_interleave_full}. For now, we proceed with these operations.

\subsection{Algorithm invariants}
\label{subsec:algorithm-orderings}
For a fermion data structure of capacity $k$, the set of configurations is $\mathcal L = \strings{M}{F-k} \cup ... \cup \strings{M}{F+k}$. The initial encoding of a string $x \in \mathcal L$ is given by the order of $x$ under the ordering by Hamming weight with ties broken by lexicographic order. That is, the initial ordering is $\mathcal L_0 = \sstrings{M}{F-k} + ... + \sstrings{M}{F+k}$. We define \begin{align*}
    l_0(x) &= i \text{ s.t. } \mathcal L_0[i] = x,
\end{align*}
and we call $l_0(x)$ the ``label'' of $x$ in the list $\mathcal L_0$. In different stages of the algorithm the strings will be ordered differently: we will use subscripts to denote this. The subscripts will take values from $0$ up to $M$, and will usually be denote with the letter $j$. At a given stage $j$, it will also be useful to define the concept of a ``book'' and a ``chapter'' of configurations.

\begin{defn}
    A \emph{book} of configurations at level $j$ is a maximal subset of configurations $S \subset \sstrings{M}{F}$, such that every $x \in S$ has a size-$(M-j)$ suffix of the same hamming weight. In particular, $|x[j+1,...,M]|_{HW}$ is a constant for all $x \in S$.
\end{defn}

\begin{defn}
    A \emph{chapter} of configurations at level $j$ is a maximal subset $S$ of a level $j$ book configurations, such that every $x \in S$ has the same size-$(j)$ prefix. In particular, $x[1,...,j]$ is the same string for all $x \in S$.
\end{defn}

We can analogously define a book or chapter of labels as the maximal subset of labels of a book or chapter of configurations. The high level idea of our algorithm is that it will be easy to manipulate books and chapters of labels using the subroutines discussed in Section \ref{subsec:array-subroutines}. At the $j$-th level, it will be helpful to visualize storing the implicit list of $j$ level books (and the chapters within said books) in a certain convenient ordering. This ordering will allow us to perform operations on the $j$-th bit of the configuration efficiently, and will enable us to transition efficiently to the neighboring $j-1$-th and $j+1$-th levels efficiently as well. 

As a warm-up, we will consider flipping the first bit of the configuration, given just the label string. Our algorithm for performing the $\bflip$ or $\srank$ operators will largely be similar. Clearly a $\bflip(j, \bb)$ is a permutation on the label set, and $\srank(j, \bb)$ applies a conditional phase to some subset of the labels. Using the encoding above, we can efficiently flip the first bit of $\bb$, or apply a phase dependent on this first bit, acting only on the label $l_0(\bb)$. Consider first the action of flipping the first bit, i.e. $\bflip(1, \bb)$. For the case of $k=1$, $F=2$, and $M=4$, the permutation is depicted in Figure \ref{fig:firstBitSwap}. In particular, note that it is simply the product of two different \textsf{SWAP} subroutines: between the chapter of prefix $0$ in the $\sstrings{4}{1}$ book and the chapter of prefix $1$ in the $\sstrings{4}{2}$ book, and between the chapter of prefix $0$ n the $\sstrings{4}{2}$ book and the chapter of prefix $1$ in the $\sstrings{4}{3}$ book.

\begin{figure}
    \begin{center}
   	\begin{tikzpicture}[start chain=1 going below, start chain=2 going below, node distance=-0.1mm]
		\node [on chain=1] (L0001) {$0001$};
		\node [on chain=1] (L0010) {$0010$};
		\node [on chain=1] (L0100) {$0100$};
		\node [on chain=1] (L1000) {$1000$};
		\node [on chain=1] (L0011) {$0011$};
		\node [on chain=1] (L0101) {$0101$};
		\node [on chain=1] (L0110) {$0110$};
		\node [on chain=1] (L1001) {$1001$};
		\node [on chain=1] (L1010) {$1010$};
		\node [on chain=1] (L1100) {$1100$};
		\node [on chain=1] (L0111) {$0111$};
		\node [on chain=1] (L1011) {$1011$};
		\node [on chain=1] (L1101) {$1101$};
		\node [on chain=1] (L1110) {$1110$};
		
		\node [on chain=2, right=3cm of L0001] (R0001) {$0001$};
		\node [on chain=2] (R0010) {$0010$};
		\node [on chain=2] (R0100) {$0100$};
		\node [on chain=2] (R1000) {$1000$};
		\node [on chain=2] (R0011) {$0011$};
		\node [on chain=2] (R0101) {$0101$};
		\node [on chain=2] (R0110) {$0110$};
		\node [on chain=2] (R1001) {$1001$};
		\node [on chain=2] (R1010) {$1010$};
		\node [on chain=2] (R1100) {$1100$};
		\node [on chain=2] (R0111) {$0111$};
		\node [on chain=2] (R1011) {$1011$};
		\node [on chain=2] (R1101) {$1101$};
		\node [on chain=2] (R1110) {$1110$};
		
		\draw (L1000.east) edge[->,red] (R1000.west);
		
		\draw (L0100.east) edge[->] (R1100.west);
		\draw (L0010.east) edge[->] (R1010.west);
		\draw (L0001.east) edge[->] (R1001.west);
		\draw (L1100.east) edge[->] (R0100.west);
		\draw (L1010.east) edge[->] (R0010.west);
		\draw (L1001.east) edge[->] (R0001.west);
		
		\draw (L0110.east) edge[->] (R1110.west);
		\draw (L0101.east) edge[->] (R1101.west);
		\draw (L0011.east) edge[->] (R1011.west);
		\draw (L1110.east) edge[->] (R0110.west);
		\draw (L1101.east) edge[->] (R0101.west);
		\draw (L1011.east) edge[->] (R0011.west);
		
		\draw (L0111.east) edge[->,red] (R0111.west);
		
		\draw (L1000.west) edge[decorate,decoration={brace,amplitude=5pt,raise=0mm}] node[left=3mm] {$\displaystyle \sstrings{4}{1}$} (L0001.west);
		
		\draw (L1100.west) edge[decorate,decoration={brace,amplitude=5pt,raise=0mm}] node[left=3mm] {$\displaystyle \sstrings{4}{2}$} (L0011.west);
		
		\draw (L1110.west) edge[decorate,decoration={brace,amplitude=5pt,raise=0mm}] node[left=3mm] {$\displaystyle \sstrings{4}{3}$} (L0111.west);
		
		\node[right=3mm of R1000] {$1 \sstrings{3}{0}$};
		
		\draw (R0100.east) edge[decorate,decoration={brace,amplitude=5pt,mirror,raise=0mm}] node[right=3mm] {$0\sstrings{3}{1}$} (R0001.east);
		\draw (R1100.east) edge[decorate,decoration={brace,amplitude=5pt,mirror,raise=0mm}] node[right=3mm] {$1\sstrings{3}{1}$} (R1001.east);
		\draw (R0110.east) edge[decorate,decoration={brace,amplitude=5pt,mirror,raise=0mm}] node[right=3mm] {$0\sstrings{3}{2}$} (R0011.east);
		\draw (R1110.east) edge[decorate,decoration={brace,amplitude=5pt,mirror,raise=0mm}] node[right=3mm] {$1\sstrings{3}{2}$} (R1011.east);
		
		\node[right=3mm of R0111] {$0 \sstrings{3}{3}$};
	\end{tikzpicture}
    \end{center}
    \caption{An example of $\bflip$ on the first bit of $\strings{4}{1} \cup \strings{4}{2} \cup \strings{4}{3}$, pictured in ordering $\mathcal L_0$. Due to the lexicographic ordering, $\sstrings{4}{2}$ is $0\sstrings{3}{2}$ followed by $1\sstrings{3}{1}$. The $\bflip$ exchanges these with (respectively) the $0 \sstrings{3}{1}$ tail of $\sstrings{4}{1}$, and the $1 \sstrings{3}{2}$ head of $\sstrings{4}{3}$. When $\bflip$ is not possible, e.g., on $1000$ or $0111$, our convention is to do nothing. The simple structure of this operation enables it to be done efficiently.}
    \label{fig:firstBitSwap}
\end{figure}

In fact, the above is a special case of a more general fact that certain operations are easy under certain orderings. In particular, we will define $M$ many orderings $<_j$ for $s \in [M]$, such that there is an efficient circuit for flipping the $j$-th bit or applying a phase conditioned on it when the labels are ordered by the $<_j$. We define this ordering as follows.

\begin{defn}
    Let $x, y \in \mathcal L$ and $s \in [M]$. Let $x=x_1 \Vert x_2, y = y_1 \Vert y_2$ where $x_1$ and $y_1$ are prefixes of size $j$. Letting $\prec$ denote lexicographic order and $a^R$ denote the reversal of string $a$, we define the order $<_j$ as 
    \begin{align}
        x <_j y :=& \begin{cases}
            \top & \text{ if } |x|_{HW} < |y|_{HW} \\
            \top & \text{ if } |x|_{HW} = |y|_{HW} \text{ and } |x_2|_{HW} < |y_2|_{HW} \\
            \top & \text{ if } |x|_{HW} = |y|_{HW} \text{ and } |x_2|_{HW} = |y_2|_{HW} \text{ and } x_1^R \Vert x_2 \prec y_1^R \Vert y_2 \\
            \bot & \text{ otherwise}
        \end{cases}
    \end{align}
\end{defn}

In words, sort (a) first by over all hamming weight, breaking ties by (b) hamming weight on the suffix, breaking further ties by (c) lexicographic order on the reversal of the prefix, and breaking ties in all three prior orders by (d) lexicographic order on the suffix. We will refer to these cases by (a-d) following this. Observe that $<_0$ is precisely the ordering of the initial list $\mathcal L_0$, as the prefix is of size zero. Each ordering above naturally defines a set of labelling functions for the configurations.

\begin{defn}
    Define $\mathcal L_j$ as a list containing all elements of $\mathcal L$ such that for any $x, y \in \mathcal L$ we have that $x$ appears before $y$ if $x <_j y$. Furthermore, we define a ``block'' as a contiguous list of $x \in \mathcal L$ that share the same hamming weight and length $j$ prefix, followed by vacuous labels padding to the nearest power of two. We require that $\mathcal L_j$ is formed by the concatenation of blocks in the order given by $<_j$.
\end{defn}

Note that the labelling $l_j$ keeps all level $j$ books of labels contiguous and distinct. Within each book, $l_j$ keeps all level $j$ chapters contiguous and distinct, and sorted by reverse lexicographic order of the size-$(j)$ prefix defining said chapters. Further, chapters within a book appear at constant stride.

Now, for any $j \in [1,...,M]$ (note $j\neq 0$) let us define the action of a ``logical'' bit-flip $\tilde X_j$ as the permutation induced by $\bflip(j, \bb)$ on the label space determined by $l_j$, and a ``logical'' phase-flip $\tilde Z_j$ similarly. In particular, we define them as follows.

\begin{defn}
    Let $x \in \mathcal L$ be a bit string with label $l_j(x)$ given by labelling function $l_j$. We define $\tilde X_j$ as the quantum circuit acting on labels such that \begin{align*}
        \tilde X_j \ket{l_j(x)} =& \ket{l_j(\bflip(j, x))},
    \end{align*}
    and the operator $\tilde Z_j$ as a quantum circuit acting on labels such that \begin{align*}
        \tilde Z_j \ket{l_j(x)} =& \begin{cases}
            \ket{l_j(x)} & \text{ if } x[j] = 0 \\
            -\ket{l_j(x)} & \text{ if } x[j] = 1
        \end{cases}
    \end{align*}
\end{defn}

Note that the action of $\tilde X_j$ and $\tilde Z_j$ depend only on $x$, but as explicit circuits will depend on the labelling function $l_j$: the details of $l_j$ will determine the complexity of $\tilde X_j$ and/or $\tilde Z_j$. We define these only for label function $l_j$, as we aim to construct $l_j$ such that the corresponding $\tilde X_j$ and $\tilde Z_j$ can be performed efficiently. Clearly, the ability to perform $\tilde X_j$ and $\tilde Z_j$ operators suffice to generate the operators necessary for a fermion data structure.

For the remainder of this section, we will ignore the padding to powers of two in order to simplify explanations: it's only purpose is to simplify subroutines. We discuss this aspect in more detail in Section \ref{sec:padding}.

\begin{lem}
    Let $\mathcal L_j$ be the list as above, which defines a label $l_j(x)$ for any $x \in \mathcal L$. Then there are quantum circuits $A_j, B_j$ using $O(F \mathcal I)$ gates and $O(1)$ ancillas such that \begin{align*}
        \mathsf{X}_j \ket{l_j(x)} =& \tilde X_j \ket{l_j(x)} \\
        \mathsf{Z}_j \ket{l_j(x)} =& \tilde Z_j \ket{l_j(x)}.
    \end{align*}
    \label{lem:efficientSingleQubsOrder}
\end{lem}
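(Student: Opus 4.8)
The plan is to exploit the block structure that the ordering $<_j$ imposes on $\mathcal L_j$. Because the tie-break (c) sorts chapters by the \emph{reversed} prefix, bit $j$ (the last prefix bit) acts as the most significant coordinate among the chapters of a book, so within every level-$j$ book the chapters split into two contiguous halves: those with $x[j]=0$ first, then those with $x[j]=1$. These halves are exactly the label ranges on which $\tilde Z_j$ and $\tilde X_j$ act nontrivially. First I would record the precise sizes: a book of total weight $w$ and suffix weight $s$ has prefix weight $w-s$, contains $\binom{j-1}{w-s}$ chapters with $x[j]=0$ and $\binom{j-1}{w-s-1}$ with $x[j]=1$, each chapter holding $\binom{M-j}{s}$ elements ordered by (d), lexicographic order of the suffix. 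All of these boundary indices are cumulative sums of binomial coefficients, hence known classically and hard-codeable as comparison thresholds.

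For $\mathsf{Z}_j$: the operator $\tilde Z_j$ applies $-1$ exactly on the $x[j]=1$ half of each book, a contiguous label range. There are $O(F)$ books overall, since for each of the $2k+1$ total weights $w$ the suffix weight $s$ ranges over at most $w+1 \le F+k+1$ values. Thus $\tilde Z_j$ phases a union of $O(F)$ intervals, and each interval phase is implemented by two threshold comparisons on $\mathcal I$-bit labels into an ancilla, a single-qubit phase, and uncomputation, costing $O(\mathcal I)$ gates and $O(1)$ ancillas (in the spirit of Proposition~\ref{prop:direct_sum}). Summing over books gives $O(F\mathcal I)$ gates and $O(1)$ ancillas.

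For $\mathsf{X}_j$: I would decompose $\tilde X_j$ into transpositions of equal-size blocks, mirroring the two-\textsf{SWAP} picture of Figure~\ref{fig:firstBitSwap}. Flipping bit $j$ fixes the first $j-1$ prefix bits $u$ and the suffix $v$ while toggling $x[j]$, so it carries $u0v$ in book $(w,s)$ to $u1v$ in book $(w+1,s)$. The key combinatorial check is that the $x[j]=0$ sub-block of book $(w,s)$ and the $x[j]=1$ sub-block of book $(w+1,s)$ have the same size $\binom{j-1}{w-s}\binom{M-j}{s}$ and are ordered identically (reverse-lex of $u$, then lex of $v$, the fixed leading bit being irrelevant), so the flip restricted to this pair is precisely a block swap with the identity internal correspondence. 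Pairing $(w,s)\leftrightarrow(w+1,s)$ for $w\in\{F-k,\dots,F+k-1\}$ covers every flip that stays within capacity; the unmatched sub-blocks (the $x[j]=1$ half of the lightest books and the $x[j]=0$ half of the heaviest) are left fixed, matching the convention of Equation~\ref{eqn:flip-query-defn}. Since each book contributes its two halves to two different swaps, all swaps act on pairwise-disjoint label ranges, so $\tilde X_j$ is genuinely their (commuting) product. Each swap exchanges two equal-size, generally non-adjacent blocks $A$ and $B$ separated by a middle region, which is achieved with $O(1)$ applications of the rotation circuit of Corollary~\ref{cor:rotation} (for instance, rotate the enclosing subarray to bring $B$ to the front, then rotate the remaining tail), each a ripple-adder rotation on an $\mathcal I$-qubit register costing $O(\mathcal I)$ gates and $O(1)$ ancillas, with Proposition~\ref{prop:shift} repositioning them to arbitrary offsets. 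With $O(F)$ swaps at $O(\mathcal I)$ each we obtain $O(F\mathcal I)$ gates and $O(1)$ ancillas.

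The main obstacle is exactly the bookkeeping of the previous paragraph: proving that $\tilde X_j$ factors into clean, equal-size, pairwise-disjoint block swaps with trivial internal correspondence, which rests entirely on the interaction of the reverse-lexicographic prefix order (c) with the fact that bit $j$ is the last prefix bit. Everything else is routine, since the block boundaries are explicit binomial sums, the interval phases and block swaps reduce to comparisons and the already-established rotation primitives, and the capacity convention disposes of the boundary books. I would also remark that the stated counts assume $k=O(1)$; a general $k$ inflates the gate count by a factor $O(k)$.
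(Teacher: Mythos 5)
Your proposal is correct and follows essentially the same route as the paper's proof: the paper also decomposes $\tilde X_j$ into $O(F)$ (more precisely $O(kF)$) block swaps between the contiguous, equal-size chapter ranges with $x[j]=0$ in one book and $x[j]=1$ in the next-heavier book (implemented via the rotation-based \textsf{SWAP} of Proposition~\ref{prop:prefix_rotation}), and implements $\tilde Z_j$ by comparison-gated phases on the contiguous $x[j]=1$ ranges, with the same $O(F\mathcal I)$ gate and $O(1)$ ancilla accounting. Your write-up additionally spells out the size and disjointness bookkeeping (and in fact states the weight pairing $(w,s)\leftrightarrow(w+1,s)$ more cleanly than the paper's text, which contains a small typo on the weight of $\mathcal X$), but these are the same ideas, not a different argument.
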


\begin{proof}
    Let us begin by constructing $\mathsf{X}_j$. Consider parameters $n \in [F-k, ..., F+k-1]$ and $f \in [0, ..., n]$. Let $\mathcal X$ be a subset of the book of configurations of hamming weight $n+1$ whose suffix of length $M-j$ has hamming weight $f$; specifically $\mathcal X$ are those chapters whose $j$-th bit is a $0$. By the fact that books are distinct and contiguous, and the fact that chapters are sorted by reverse lexicographic order on size-$(j)$ prefix (hence the $j$-th bit is most significant), $\mathcal X$ is a contiguous set of chapters at constant stride. We will use $[x, ..., x+d]$ to denote the range of labels in $\mathcal X$.
    
    Let $\mathcal Y$ be a subset of the book of configurations of hamming weight $n+1$ whose suffix of length $M-j$ has hamming weight $f$; specifically $\mathcal Y$ is those chapters whose $j$-th bit is a $1$. Note that, by the same argument above, $\mathcal Y$ is a contiguous set of chapters. Further, both $\mathcal Y$ and $\mathcal X$ are sets of the same size, i.e. $\mathcal Y$ is a range $[y, ..., y+d]$. This is because a suffix $s_2$ of size $(M-j)$ and hamming weight $f$, and a prefix $s_1$ of size $(j-1)$ and hamming weight $n-f$, determines a string $s_1 \Vert 0 \Vert s_2$ in $\mathcal X$ and $s_1 \Vert 1 \Vert s_2$ in $\mathcal Y$. Further, all such strings are of this form.
    
    It also follows that sets $\mathcal X$ and $\mathcal Y$ will be swapped by $\tilde X_j$, with their internal orders preserved. This is because, using the form described previously, the elements of $\mathcal Y$ and $\mathcal X$ are sorted by lexicographic rank of $s_1^R \Vert s_2$, as the $j$-th bit is and relevant hamming weights are fixed.
    
    To implement this operation, we simply perform a \textsf{SWAP} subroutine between $[x, ..., x+d]$ and $[y, ..., y+d]$, using the procedure described in Proposition \ref{prop:prefix_rotation}.

    The values $x, y, d$ can be efficiently calculated classically offline, as they will only be used as fixed values in the circuit (again, we ignore padding here, but the calculation can straightforwardly be adapted to include it). Therefore, said \textsf{SWAP} implements $\tilde X_j$ on the sets $\mathcal X$ and $\mathcal Y$ using $O(\mathcal I)$ gates. To perform the full $\tilde X_j$, we simply repeat this operation for every value of $f$ and $n$, of which there are $O(F) \cdot O(k)$ possibilities, giving full complexity $O(F \mathcal I)$ for constant $k$. Each iteration takes $O(1)$ ancilla which can be uncomputed between rounds, so overall $O(1)$ ancilla are required.

    To perform a $\tilde Z_j$ we can similarly perform $O(F)$ repetitions for all values of $f, n$. For a given pair $(f, n)$ defining sets $\mathcal X$ and $\mathcal Y$ as before, we will apply a phase to all labels of bitstrings in $\mathcal Y$ (but not $\mathcal X$). Recall that $\mathcal Y$ is precisely the labels of strings having a $1$ at position $j$ (and hamming weight $n$ with suffix hamming weight $f$). We can compute a flag for whether a given label resides in $\mathcal Y$, i.e. whether it is within $[y, ..., y+d]$, apply a Pauli $Z$ to this flag, and then uncompute. This again takes $O(\mathcal I)$ gates to compute the flag, and $O(F)$ iterations to attain complexity $O(F \mathcal I)$. Each iteration needs only $O(1)$ ancillas which are uncomputed at the end, so the whole routine takes $O(1)$ ancilla.
\end{proof}

\subsection{Traversing the layers}

Given Lemma \ref{lem:efficientSingleQubsOrder}, it is clear that we would like to be able to permute the labels in such a way that they correspond to the rank under any order $<_j$, for differing $j$'s. We will further maintain the invariant that the labels appear in ``blocks'' sharing the same $j$-size prefix, and padded out to the nearest multiple of two. Let us define permutations $\pi_{0\mapsto 1}$, $\pi_{1\mapsto 2}, ..., \pi_{M-1 \mapsto M}$ that act on the labels. Writing the action implied on the full list of labels, we require that these permutations satisfy \begin{align*}
    \pi_{j \mapsto j+1} (\mathcal L_j) :=& \mathcal L_{j+1}.
\end{align*}
Suppose for now that we can implement these permutations $\pi_{j \mapsto j+1}$ in complexity $O(F \mathcal I)$ (as circuits on the label strings with $O(1)$ ancilla): this is proven in Theorem \ref{thm:efficient-cycle-perm}. Using this fact, we could then implement an arbitrary prefix of $\tilde Z_j$ operators or instance of $\tilde X_j$ with $O(MF\mathcal I)$ gates and $O(1)$ ancilla. Noting that $\srank$ and $\bflip$ are precisely these two operations, this suffices to prove our main theorem.

\begin{proof}[Proof of Theorem \ref{thm:implicitEncoding}]
    First consider applying a $\tilde Z_j$ on all $j < p$, i.e. the $\srank(j, \bb)$ operator, on label string $l_0(x)$. We first apply $\pi_{0 \mapsto 1}$ to obtain $l_1(x)$. We then apply $\tilde Z_1$, which can be done in $O(F \mathcal I)$ gates by Lemma \ref{lem:efficientSingleQubsOrder}. We follow with permutation $\pi_{1 \mapsto 2}$, transforming $l_1(x) \mapsto l_2(x)$. Again calling Lemma \ref{lem:efficientSingleQubsOrder}, we can now perform $\tilde Z_2$. We continue with $\pi_{2 \mapsto 3}$ to attain $l_3(x)$, then $\tilde Z_3$, and etc. continuing up to $\tilde Z_p$. After this sequence, we have $l_p(\srank(p, x))$. To return to the original encoding we simply undo the re-ordering permutations, applying $\pi_{p-1 \mapsto p}^\dagger$ down to $\pi_{1 \mapsto 2}^\dagger$, finally obtaining $l_1(\srank(p, \tilde Z_p))$.

    To apply a  $\bflip(p, x)$ to $l_1(x)$ for some $x \in \mathcal{L}$, we can again simply apply the sequence of permutations $\pi_{1 \mapsto 2}$ up to $\pi_{p-1 \mapsto p}$ to obtain $l_p(x)$. We then apply $\tilde X_p$ by the procedure in Lemma \ref{lem:efficientSingleQubsOrder} to obtain $l_p(\bflip(p, x))$, and then undo the re-ordering permutations, applying $\pi_{p-1 \mapsto p}^\dagger$ down to $\pi_{1 \mapsto 2}^\dagger$ to obtain $l_1(\bflip(p, x))$.

    In either case, there are $O(M)$ many alternations of re-ordering permutation and $\tilde X_j$/$\tilde Z_j$, each of which requiring $O(F \mathcal I)$ gates. Hence, the overall complexity of the operation would be $O(M F \mathcal I)$.
\end{proof}

It remains to show that we can implement the permutations in the desired efficiency, which is the main technical component of this section.

\begin{thm}
    Let $\pi_{j \mapsto j+1}$ be the permutation on bitstrings $l_j(x)$ for $x \in \mathcal L$, as defined above. There is a quantum circuit $C$ that satisfies \begin{align}
        C \ket{l_j(x)} =& \ket{l_{j+1}(x)},
    \end{align}
    meaning $C$ implements the permutation $\pi_{j \mapsto j+1}$ on label strings. Further, $C$ uses $O(1)$ ancillas and requires $O(F \mathcal I)$ quantum gates.
    \label{thm:efficient-cycle-perm}
\end{thm}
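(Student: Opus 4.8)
The plan is to recognize $\pi_{j \mapsto j+1}$ as a permutation of contiguous \emph{blocks} rather than of individual labels, and then realize that block permutation using only the rotation and interleaving primitives of Section~\ref{subsec:array-subroutines}. Write each $x \in \mathcal L$ as $x = P \,\Vert\, c \,\Vert\, S$, where $P = x[1..j]$ is the size-$j$ prefix, $c = x_{j+1}$ is the pivot bit, and $S = x[j+2..M]$ is the size-$(M-j-1)$ tail; set $g = |S|_{HW}$. First I would check that the two orderings agree on two things: criterion (a), so that $\pi_{j \mapsto j+1}$ is block-diagonal across the $O(k)$ total-Hamming-weight classes; and the internal order of every finest group in which $n, P, c, g$ are all fixed, which in both $<_j$ and $<_{j+1}$ is simply lexicographic order on $S$. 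Consequently $\pi_{j \mapsto j+1}$ fixes the contents of each such atomic block and only moves the blocks around. The single genuine difference between the two orderings is the role of $c$: under $<_j$ it is the leading bit of the suffix, whereas under $<_{j+1}$ it becomes the leading bit of the reversed prefix. In block-key terms, I must re-sort the atomic blocks from the level-$j$ key $(n,\,c+g,\,P^{R},\,c)$ to the level-$(j+1)$ key $(n,\,g,\,c,\,P^{R})$.

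Next I would implement this re-sorting inside a fixed total-weight class $n$ in two stages. A level-$j$ book is the set of blocks with a fixed suffix weight $f = c+g$; within it the blocks are ordered by $(P^R, c)$, i.e. the $c=0$ and $c=1$ blocks for each prefix $P$ appear \emph{interleaved} by $P$. Stage one de-interleaves each book on the bit $c$ (the \textsc{interleave} subroutine run in reverse), producing the segment $L_f$ of all $c=0$ blocks (suffix weight $g=f$) followed by the segment $R_f$ of all $c=1$ blocks (suffix weight $g=f-1$), each sorted by $P^R$. Stage two regroups by the new book index $g$: the target level-$(j+1)$ book $g$ must equal $L_g$ (the $c=0$ blocks from old book $f=g$) followed by $R_{g+1}$ (the $c=1$ blocks from old book $f=g+1$). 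Thus the post-de-interleave layout $L_0 R_0 L_1 R_1 L_2 R_2 \cdots$ must become $L_0 R_1 L_1 R_2 L_2 R_3 \cdots$, which is a shift of the $R$-segments left by one book relative to the fixed $L$-segments; with the slots padded to a common size this is a de-interleave of the $L$/$R$ slots, a single \textsc{rotate} of the $R$-region by one slot (Corollary~\ref{cor:rotation}), and a re-interleave. All segment start indices, widths, and strides are fixed integers determined by binomial coefficients, so they can be computed classically offline and hard-coded, exactly as the values $x,y,d$ were in Lemma~\ref{lem:efficientSingleQubsOrder}.

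For the count, there are $O(k)=O(1)$ weight classes and $O(F)$ books per class (the suffix weight ranges over $O(n)=O(F)$ values), and each book requires only a constant number of rotation/interleave primitives. Each primitive acts on the $O(\mathcal I)$-qubit label register with $O(1)$ ancillas and $O(\mathcal I)$ gates by Proposition~\ref{prop:prefix_rotation} and Corollary~\ref{cor:rotation}, so the whole circuit uses $O(F\mathcal I)$ gates and $O(1)$ ancillas, matching the claim.

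The hard part will be the block bookkeeping together with the unequal block sizes: the atomic blocks have sizes $\binom{M-j-1}{g}$, which differ across $g$, so a naive \textsc{interleave} or \textsc{rotate} of unequal blocks is not well defined. This is precisely the reason for padding every block to the nearest power of two, deferred to Section~\ref{sec:padding}; I would need the padded blocks (and the $L$/$R$ slots) to share a common power-of-two width so that the de-interleave and rotation primitives apply verbatim, and I would need the interleaving cost bound of Theorem~\ref{thm:power_interleave_full} to keep the per-book work within the $O(\mathcal I)$ budget. The remaining care is purely combinatorial: verifying that the two-stage de-interleave-then-shift reproduces $<_{j+1}$ exactly, including the empty boundary blocks (such as $R_0$ and the top book) that absorb the endpoints of the shift.
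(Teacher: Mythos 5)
Your block-level combinatorics is sound and matches the paper's analysis: the sort keys $(n,\,c+g,\,P^R,\,c)$ and $(n,\,g,\,c,\,P^R)$ are correct, the identification of the new book $g$ as $L_g$ followed by $R_{g+1}$ is correct, and your stage-one per-book de-interleave is legitimate (within a fixed book all $c=0$ chapters share the size $\binom{M-j-1}{f}$ and all $c=1$ chapters share the size $\binom{M-j-1}{f-1}$, so Theorem~\ref{thm:power_interleave_full} applies after per-chapter power-of-two padding). The paper performs the same two ingredients in the opposite order: within each level-$j$ book it first \textsf{SWAP}s each prefix's adjacent $c{=}0$/$c{=}1$ chapter pair (uniform sizes within a book, hence parallelizable), then de-interleaves; after that \emph{no inter-book motion is needed at all}, because the array then reads $[\text{$1$-half of book }f][\text{$0$-half of book }f][\text{$1$-half of book }f{+}1]\cdots$ and the new book boundaries simply fall between the two halves of each old book.

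The genuine gap is your stage-two implementation. Realizing the shift $L_0R_0L_1R_1\cdots \mapsto L_0R_1L_1R_2\cdots$ as ``de-interleave the $L/R$ slots, \textsc{rotate} the $R$-region by one slot, re-interleave'' requires all $L$-slots (and all $R$-slots) to share a common width: both the \textsc{interleave} primitive and a rotation ``by one slot'' are only defined for uniform block sizes. But $|L_f| = \binom{j}{F-f}\binom{M-j-1}{f}$ varies with $f$ by super-constant factors, and padding every slot to $\max_f |L_f|$ inflates the array by an $\omega(1)$ factor --- at constant filling the sizes are concentrated on $\Theta(\sqrt{M})$ values of $f$, so $\max_f|L_f| = \Theta\bigl(\binom{M-1}{F}/\sqrt{M}\bigr)$ and padding the $\Theta(M)$ slots to this width costs a factor $\Theta(\sqrt{M})$, i.e.\ $\Theta(\log M)$ extra address qubits. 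That violates the $O(1)$-ancilla claim, which is the entire point of this construction; the paper's pool-based padding only ever rounds each block to the nearest power of two (constant-factor inflation) and cannot equalize blocks whose sizes are different binomial coefficients. The fix is simple and stays inside your framework: since the boundary segment is empty, the desired rearrangement is exactly the product of the $O(F)$ \emph{disjoint} swaps $L_f R_f \mapsto R_f L_f$ of adjacent contiguous (unequal) blocks, each realizable by the rotation-based \textsf{SWAP} of Proposition~\ref{prop:prefix_rotation} with $O(1)$ ancillas and $O(\mathcal I)$ gates --- at which point your proof becomes the paper's proof with the swap and de-interleave steps exchanged, and the stated $O(F\mathcal I)$ gate and $O(1)$ ancilla bounds hold.
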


\begin{proof}
    Note first that the permutation $\pi_{j \mapsto j+1}$ does not re-order strings of a different hamming weight: under all orderings $<_j$, a string of lesser hamming weight is always lesser. Hence, we can consider applying the re-ordering $\pi_{j \mapsto j+1}$ for only a subset of strings of a fixed hamming weight, say $F$. We can then repeat this procedure $O(k)$ times, with cycles as necessary, to perform the procedure on labels of strings of any hamming weight in $F-k,...,F+k$. We construct a circuit that takes the label under $<_j$ of $x \in \strings{M}{F}$, meaning the number strings $y \in \strings{M}{F}$ that satisfy $y<_j x$. The circuit produces the label under $<_{j+1}$ of $x$, meaning the number strings $y \in \strings{M}{F}$ that satisfy $y<_{j+1} x$. The action of this circuit is depicted in Figure \ref{fig:listITerationsExample3}.

    To ease notation, let us define the list product of lists-of-strings $[x_1,...,x_m]$ and $[y_1,...,y_n]$ as $$[x_1, ..., x_m] \times [y_1, ..., y_n] = [x_1\Vert y_1, ..., x_1 \Vert y_n, x_2 \Vert y_1, ..., x_m\Vert y_n],$$
    and the list sum, $+$, of two lists as the concatenation of the two lists. These operations should be thought of as notational short-hand, not algebraic operations: note for instance that the list product is not necessarily distributive over list addition. In general, the implicit array of labelled values is of the following form. \begin{align}
        \mathcal Q :=& \sstrings{j}{j}^R \times \sstrings{M-j}{F-j} + \sstrings{j}{j-1}^R \times \sstrings{M-j}{F-j+1}  + \cdots + \sstrings{j}{0}^R \times \sstrings{M-j}{F}. \nonumber
    \end{align}
    We can re-write $\mathcal Q$ as 
     \begin{align}
        \mathcal Q =& \sum_{m=0}^{F} \sstrings{j}{j-m}^R \times \sstrings{M-j}{F-j+m} \label{eqn:arrayDecompDirect}\\
        =& \sum_{m=0}^{F} \underbrace{\sstrings{j}{j-m}^R \times \underbrace{\left(0\sstrings{M-j-1}{F-j+m} + 1\sstrings{M-j-1}{F-j+m-1}\right)}_{\text{Chapter}}}_{\text{Book}}, \label{eqn:arrayDecompFactored}
    \end{align}
    where we have denoted the $j$-th level book by the product term, and the chapters being sub-arrays of said books with a fixed prefix. We can similarly define the $j+1$-th level books and chapters, but observe that they are not yet sorted. However, for a given level $j+1$ book $B$, we have that it's chapters $C_1,...,C_h$ for some $h$ are already sorted. Note that the $j+1$-th level book $B$ with size-$(M-j-1)$ suffix hamming weight $m$ is formed from two parts: one part from the $j$-th level book $B_l$ with size-$(M-j)$ suffix hamming weight $m$ and a $0$ at position $j+1$, and one part from the $j$-th level book $B_r$ with size-$(M-j)$ suffix hamming weight $m+1$ and a $1$ at position $j+1$. The book $B_l$ contributes all strings which have a $0$ at position $j+1$, and the book $B_r$ contributes all strings which have a $1$ at position $j+1$. Furthermore, all strings from book $B_r$ appear after any string in book $B_l$. Finally, from the inductive hypothesis the chapters of book $B_l$ and $B_r$ are each correctly sorted by reverse-lexicographic on prefix and lexicographic on suffix order (ignoring the $j+1$-th bit, which is fixed), we have that the chapters of $B$ are properly sorted. By the same reasoning, each chapter is contiguous.

    Observe however that the books are out of order, as their chapters are jumbled together. In particular, note that a given level $j-1$ book $B$, as labelled in Equation \ref{eqn:arrayDecompFactored} and indexed by $m$, contains first strings with size-$(M-j-1)$ suffix having hamming weight $F-j+m+1$, followed by those with hamming weight $F-j+m$. These are out of order, and precisely $\binom{j-1}{j-m-1}$ copies per term are out of order pairs after taking the list product.

    Hence the only way the desired order is broken is that books are not properly sorted, while for a given book it's chapters and their contents appear in order. We can fix this by a simple \textsf{SWAP} then \textsf{DEINTERLEAVE} procedure, pictured in Figure \ref{fig:listITerationsExample3}. First, we swap the order of the two sum terms in the final product in Equation \ref{eqn:arrayDecompDirect}, which can be done with $O(F)$ many parallel \textsf{SWAP} operators (one case for each value of the hamming weight of size-$(j)$ prefix). This gives list \begin{align}
        \textsf{SWAP}^{\otimes F}\mathcal Q &= \sum_{m=0}^{F} \sstrings{j}{j-m}^R \times \left(1\sstrings{M-j-1}{F-j+m-1} + 0\sstrings{M-j-1}{F-j+m}\right),\label{eqn:arrayDecompSwapped}
    \end{align}
    Note that this operation does not change the order of the chapters of any given book, just re-ordering the chapters between books.
    
    Following this, we can \textsf{DEINTERLEAVE} over each half of the chapters of the $j$-level books, with have determined by the $j+1$-th bit. We do this using the procedure outlined in Lemma \ref{lem:power_interleave}. This preserves order for a given $j+1$-level book, as its chapters are always on the same side of a \textsf{DEINTERLEAVE} operation. However, this orders the $j+1$-level books to be sorted, as each chapter is of the $j$-th book is sorted by hamming weight of size-$(M-j-1)$ prefix. We are then left with the list \begin{align}
        \mathcal Q_F &= \textsf{DEINTERLEAVE}^{\otimes F}\textsf{SWAP}^{\otimes F}\mathcal Q \nonumber \\
        =& \sum_{m=0}^{F} \sstrings{j+1}{j+1-m}^R \times \left(\sstrings{M-j-1}{F-j+m-1} \right),\label{eqn:arrayDecompFinal}
    \end{align}
    which is ordered by $<_{j+1}$. Each \textsf{SWAP/DEINTERLEAVE} requires $O(\mathcal I)$ gates, and there are $O(F)$ of each, so the complexity is $O(F \mathcal I)$. Similarly, each requires $O(1)$ ancilla which can be uncomputed, so the whole operation requires $O(1)$ ancilla.

\end{proof}

\subsection{Handling Padding}
\label{sec:padding}

In this section, we finally discuss \emph{padding} --- in-bounds indices of $\mathcal{A}$ which are not associated with a fermion configuration. The purpose of padding is to align configurations to powers of $2$, which makes it trivial to apply operations in parallel, see below. 
\begin{prop}
	Let $\mathcal{A}$ be an array. Given an $\ell$-bit operation $U$ (on an array of length $2^{\ell}$) and a length $k$, there is trivial circuit which applies $U$ to $k$ consecutive subarrays $\mathcal{A}[i \cdot 2^{\ell}..(i+1) \cdot 2^{\ell}-1]$ in parallel, where $0 \leq i < k$.
\end{prop}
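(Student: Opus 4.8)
The plan is to exploit the fact that each block has length exactly $2^{\ell}$ and begins at a multiple of $2^{\ell}$, so that the block structure coincides perfectly with the binary place-value decomposition of the array index. First I would write any index $j$ as $j = i \cdot 2^{\ell} + r$ with $0 \le r < 2^{\ell}$ and $i = \lfloor j/2^{\ell} \rfloor$; in the qubit register encoding $j$, the $\ell$ lowest-order qubits hold the within-block offset $r$ and the remaining high-order qubits hold the block number $i$. The subarray $\mathcal{A}[i \cdot 2^{\ell}..(i+1)2^{\ell}-1]$ is then precisely the set of basis states whose high part equals $i$ and whose low $\ell$ bits range over all of $\{0, \ldots, 2^{\ell}-1\}$.

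The circuit is then simply $U$ applied to the $\ell$ lowest-order qubits of the index register, with the identity on everything else; formally this is the operator $I \otimes U$, where $I$ acts on the high-order qubits and $U$ on the low $\ell$ qubits. Because it is the identity on the high register, it is block-diagonal with respect to the decomposition by block index $i$, and on the block with high part $i$ it acts exactly as $U$ on the coordinate $r$. Hence a single application realizes $U$ on all blocks $i$ simultaneously, using no ancillas and no gates beyond those of $U$ itself --- which is why the circuit is ``trivial''.

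The only point needing care is staying within bounds, i.e.\ applying $U$ only to the blocks $0 \le i < k$. Here I would observe that $I \otimes U$ leaves the high register, and hence the block index, invariant, so it can never transport amplitude from an in-bounds block to an out-of-bounds one or vice versa. When $k$ is a power of two and the high register has exactly $\log_2 k$ qubits, every high-part value names an in-bounds block and there is nothing further to check. When $k$ is not a power of two the high register may in principle represent values $i \ge k$, but these correspond to indices outside the length-$k \cdot 2^{\ell}$ array and carry zero amplitude under the maintained invariant, while the block-diagonal structure guarantees they remain untouched. This is exactly the role served by the power-of-two padding introduced in this section: aligning the configurations so that every block boundary falls on a binary place value is precisely what makes the parallel application reduce to a bare application of $U$ to $\ell$ qubits.
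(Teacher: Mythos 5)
Your index decomposition --- low $\ell$ bits as within-block offset, high bits as block index, so that a blockwise operation is $I \otimes U$ --- is exactly the observation underlying the paper's construction. However, your circuit omits the bounds check, and that omission is a genuine gap rather than a removable convenience. The proposition (together with the convention of Section~\ref{subsec:array-subroutines} that these operations act as the identity outside the targeted range) requires a circuit that acts as the identity on every block with $i \geq k$; your $I \otimes U$ instead applies $U$ to \emph{all} blocks of the full index space, which is a different operator whenever $k$ is smaller than the total number of blocks. Your defense --- that blocks with $i \geq k$ ``carry zero amplitude under the maintained invariant'' --- does not hold in the settings where this proposition is invoked. In the per-Hamming-weight swaps and de-interleaves inside Lemma~\ref{lem:efficientSingleQubsOrder} and Theorem~\ref{thm:efficient-cycle-perm}, the region beyond the $k$ targeted blocks is not dead space: it holds labels of other chapters, books, Hamming-weight sectors, and the padding pool, all of which carry amplitude (or structure that later steps rely on) and must be left untouched. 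Applying $U$ there corrupts live data. Nor can you in general arrange for the high register to have exactly $\log_2 k$ qubits, since the register size is fixed once by the data structure, not chosen per operation.

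The fix is precisely the paper's (short) proof: compute the predicate ``high-order bits $< k$'' into an ancilla, apply $U$ to the $\ell$ low-order qubits controlled on that ancilla, and repeat the comparison to uncompute the ancilla --- the same compare/apply/uncompare pattern as Proposition~\ref{prop:direct_sum}. Uncomputation is valid because the controlled-$U$ never touches the high-order bits, so the comparison yields the same value before and after. This costs one comparison and its inverse plus a controlled-$U$, and it makes the implemented operator literally equal to ``$U$ on blocks $0 \leq i < k$, identity elsewhere,'' which is what composition and inversion of these subroutines requires.
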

\begin{proof}
    Compare all but the $\ell$ least significant bits of the index with $k$, and store the result in an ancilla. Apply $U$ to the $\ell$ least significant bits when the rest are $<k$. Repeat the test to uncompute and clear the ancilla. 
\end{proof}
We also note that we can shift the parallel operation with Proposition~\ref{prop:shift}, so we have considerable flexibility where we apply it. 

\noindent
In other words, finding the quotient and remainder is \emph{much} easier in base $2$ when the divisor is a power of $2$. For the same reasons, power of $2$ block lengths dramatically simplify $\Call{interleave}{}$.

We first discuss our strategy for padding, the amount of extra space required ($O(1)$), and how the padding moves through the computation. After that, we look at the specific padding requirements to do $\Call{interleave}{}$ efficiently. 

\subsubsection{High-Level Strategy}

The number of bit strings in a typical block, $\binom{m}{f}$, is not a power of $2$. Rounding up to the next power of $2$ can nearly double its size. On top of this, the $\Call{interleave}{}$ construction in the next subsection depends on a further factor of two increase. In the worst case, we quadruple the length of any particular block. Since we operate on \emph{disjoint} blocks at any one time, it is sufficient to quadruple the array length with padding. This translates to only two ancilla qubits, not counting ancillas used for adders and controlled gates.

Initially, the padding exists in a \emph{pool} at the end of the array. Rotation (Corollary~\ref{cor:rotation}) makes it easy to insert padding from the pool into any position in the array, or conversely, return padding from the array back to the pool. Since there is overhead to adding or removing padding, it is convenient to leave each block padded to a power of $2$. More precisely, as we traverse the layers, we keep ``chapters'' in the current layer padded to a power of $2$, since this is necessary to use Theorem~\ref{thm:power_interleave_full} for the de-interleaving step. 

\subsubsection{Interleaving with Padding}

To realize the $\Call{interleave}{}$ subroutine, we quickly find that we need to divide by $|A_i|$ or $|B_i|$ or $|A_i| + |B_i|$. The division circuit is considerably more complicated when these divisors are not powers of two. In this section we show how to do $\Call{interleave}{}$ with the use of padding to align the blocks to power-of-two lengths.  

To begin, consider the case of identical, power-of-two block lengths: $|A_i| = |B_i| = 2^k$.

\begin{lem}
	\label{lem:power_interleave}
	There is a subroutine $\Call{simple-interleave}{}$ which permutes 
	\[
	A_1 A_2 \cdots A_n B_1 B_2 \cdots B_{n} \mapsto A_1 B_1 A_2 B_2 \cdots A_{n} B_{n}
	\]
	where $A_1, \cdots, A_{n}, B_1, \cdots B_{n}$ are blocks of length $2^k$.
\end{lem}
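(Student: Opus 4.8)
The plan is to exploit the fact that every one of the $2n$ blocks has the same power-of-two length $2^k$, so that a position in the array factors cleanly into a block index and an intra-block offset. Writing an index as $I = b\cdot 2^k + r$ with $0 \le r < 2^k$ and $0 \le b < 2n$, the low $k$ qubits of the register hold the offset $r$ and the remaining high qubits hold the block index $b$. The $\Call{interleave}{}$ permutation never moves data across block boundaries and never changes the offset within a block; it only rearranges whole blocks. Hence it acts as the identity on the low $k$ qubits and applies a single ``perfect shuffle'' permutation $\sigma$ to the high qubits, where
\[
\sigma(b) = \begin{cases} 2b & 0 \le b < n, \\ 2(b-n)+1 & n \le b < 2n. \end{cases}
\]
One checks directly that $\sigma$ sends the $A$-block at position $b=i-1$ to the even position $2(i-1)$ and the $B$-block at position $b = n+i-1$ to the odd position $2(i-1)+1$, producing exactly $A_1 B_1 \cdots A_n B_n$. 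So it suffices to realize $\sigma$ reversibly on the block-index register.

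Next I would implement $\sigma$ arithmetically. First compute a flag $g = [\,b \ge n\,]$ with one comparison into an ancilla; conditioned on $g$, subtract $n$ from the block register, leaving it holding $b' = b - gn \in \{0,\ldots,n-1\}$; finally map $(b',g) \mapsto 2b' + g$ in place. The last step is the only delicate one: since $b' < n$, the most significant qubit of the block register is guaranteed to be $0$ after the subtraction, so inserting $g$ as a new least-significant bit and shifting $b'$ up by one position is exactly a cyclic rotation of the tuple $(g, \text{block register})$ by one slot. This rotation simultaneously writes the correct output $2b'+g$ and returns the ancilla to $\ket{0}$ (it receives the old, zero, top bit), leaving no garbage. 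Each ingredient is a standard comparator/adder on $O(\log n)$-bit numbers plus $O(\log n)$ swaps, so the whole subroutine costs $O(\log n)$ gates and $O(1)$ ancillas.

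Finally I would address the identity-outside-bounds requirement. Since $\sigma$ is a genuine permutation of $\{0,\ldots,2n-1\}$ (equivalently $\sigma = \sigma_{<2n}\oplus I$), I will gate the entire construction on the test $[\,b < 2n\,]$ in the manner of Proposition~\ref{prop:direct_sum}, or equivalently invoke Proposition~\ref{prop:shift} to confine the operation to the relevant subarray; either way the circuit is the identity on out-of-range indices and the ancilla overhead stays $O(1)$. As a sanity check, when $n$ is itself a power of two the whole construction degenerates to the case in which $\sigma$ is simply a one-position cyclic rotation of the $\log(2n)$ block-index qubits, implementable by swaps with no arithmetic at all.

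I expect the main obstacle to be the reversible, ancilla-clean realization of the recombination $(b',g)\mapsto 2b'+g$: the comparison and conditional subtraction are routine, but one must verify carefully that the top bit of $b'$ freed by the subtraction is precisely what turns the ``shift in $g$'' step into a clean cyclic rotation that automatically uncomputes the flag.
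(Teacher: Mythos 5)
Your proof is correct, and it is in fact more complete than the paper's own argument, which is only sketched for the general case. Both you and the paper start identically: since all blocks share the length $2^k$, the permutation fixes the low $k$ index qubits and acts as the perfect shuffle $\sigma$ on the block-index register, and when $n$ is a power of two $\sigma$ is just a one-slot cyclic rotation of the block-index bits. Where you diverge is in handling general $n$. The paper rounds the $A$- and $B$-lists up to a power-of-two number of blocks by drawing dummy blocks from the padding pool (via Corollary~\ref{cor:rotation}) and then applies the bit-rotation case; you instead implement $\sigma$ arithmetically for arbitrary $n$: compute $g=[\,b\ge n\,]$, conditionally subtract $n$, then cyclically rotate the tuple $(g,\text{block register})$, which simultaneously writes $2b'+g$ and retires the ancilla because the freed top bit is $0$. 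One subtlety worth making explicit: the block register generally has more qubits than $\lceil\log 2n\rceil$ (the array extends beyond the $2n$ blocks), so the clean wrap-around of the full-register rotation relies on the high bits being zero --- this holds precisely because you gate the entire construction on $[\,b<2n\,]$ in the style of Proposition~\ref{prop:direct_sum}, so your out-of-bounds fix is also what makes the rotation sound; since the gated map sends the in-range set to itself, the control flag uncomputes correctly. Your approach buys independence from the padding pool and stays at $O(1)$ ancillas using ordinary comparator/adder circuits, whereas the paper's approach keeps the step as pure block moves and wire permutations, which meshes with its global padding bookkeeping (Theorem~\ref{thm:power_interleave_full}) where blocks are maintained power-of-two padded anyway.
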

\begin{proof}
	Let $i$ be an arbitrary index. Since the blocks are shuffled and all the same length, $i \bmod 2^k$ does not change and does not affect the high order bits. Let us make the simplifying assumption that $k = 0$. 
	
	When $n$ is a power of $2$, there is a particularly nice bitwise expression for the permutation: rotate right. I.e., shift the bits right and let the most significant wrap around to the least significant. 
	
	Otherwise, use rotations (Corollary~\ref{cor:rotation}) to extend the $A_i$ and $B_i$ lists to have a power of two blocks, where the extra blocks are entirely padding drawn from the pool. We can then apply a power of two

\end{proof}

\newcommand{\blank}{\textvisiblespace}

\begin{figure}
	\begin{tikzpicture}[start chain=1 going right, node distance=-0.1mm,
		start chain=2 going right,
		start chain=3 going right,
		start chain=4 going right,
		start chain=5 going right,
		start chain=6 going right,
		every on chain/.append style={text height=1em}]
		\edef\sizetape{0.7cm}
		\edef\vertspace{1.2cm}
		\tikzstyle{xtape}=[draw,minimum size=\sizetape,inner sep=0pt]  
		\tikzstyle{atape}=[draw,minimum size=\sizetape,fill=red!20]  
		\tikzstyle{btape}=[draw,minimum size=\sizetape,fill=blue!20]  
		\tikzstyle{dtape}=[draw=none,minimum size=\sizetape]  
		\tikzstyle{ntape}=[draw=none,minimum size=\sizetape]  
		
		\node [on chain=1,ntape,minimum width=1.0em] (tape1) {};
		\node [on chain=1,atape,minimum width=2.0em] (A11) {$\mathbf{A}_1$};
		\node [on chain=1,xtape,minimum width=1.0em] {};
		\node [on chain=1,atape,minimum width=2.0em] (A21) {$\mathbf{A}_2$};
		\node [on chain=1,xtape,minimum width=1.0em] {};
		\node [on chain=1,dtape,minimum width=2.0em] {$\ldots$};
		\node [on chain=1,atape,minimum width=2.0em] (An1) {$\mathbf{A}_n$};
		\node [on chain=1,xtape,minimum width=1.0em] {};
		\node [on chain=1,btape,minimum width=3.5em] (B11) {$\mathbf{B}_1$};
		\node [on chain=1,xtape,minimum width=2.5em] {};
		\node [on chain=1,btape,minimum width=3.5em] (B21) {$\mathbf{B}_2$};
		\node [on chain=1,xtape,minimum width=2.5em] {};
		\node [on chain=1,dtape,minimum width=2.0em] {$\ldots$};
		\node [on chain=1,btape,minimum width=3.5em] (Bn1) {$\mathbf{B}_n$};
		\node [on chain=1,xtape,minimum width=2.5em] {};
		\node [on chain=1,xtape,minimum width=9.0em] (pool1) {\textcolor{gray}{\textit{pool}}};
		\node [on chain=1,ntape,minimum width=1.0em] (end1) {};
		
		\node [on chain=2,ntape,minimum width=1.0em,below=\vertspace of tape1] (tape2) {};
		\node [on chain=2,atape,minimum width=2.0em] (A12) {$\mathbf{A}_1$};
		\node [on chain=2,xtape,minimum width=1.0em] {};
		\node [on chain=2,atape,minimum width=2.0em] (A22) {$\mathbf{A}_2$};
		\node [on chain=2,xtape,minimum width=1.0em] {};
		\node [on chain=2,dtape,minimum width=2.0em] {$\ldots$};
		\node [on chain=2,atape,minimum width=2.0em] (An2) {$\mathbf{A}_n$};
		\node [on chain=2,xtape,minimum width=1.0em] {};
		\node [on chain=2,xtape,minimum width=9.0em] (pool2) {\textcolor{gray}{\textit{pool}}};
		\node [on chain=2,btape,minimum width=3.5em] (B12) {$\mathbf{B}_1$};
		\node [on chain=2,xtape,minimum width=2.5em] {};
		\node [on chain=2,btape,minimum width=3.5em] (B22) {$\mathbf{B}_2$};
		\node [on chain=2,xtape,minimum width=2.5em] {};
		\node [on chain=2,dtape,minimum width=2.0em] {$\ldots$};
		\node [on chain=2,btape,minimum width=3.5em] (Bn2) {$\mathbf{B}_n$};
		\node [on chain=2,xtape,minimum width=2.5em] {};
		\node [on chain=2,ntape,minimum width=1.0em] (end2) {};
		
		\node [on chain=3,ntape,minimum width=1.0em,below=\vertspace of tape2] (tape3) {};
		\node [on chain=3,atape,minimum width=2.0em] (A13) {$\mathbf{A}_1$};
		\node [on chain=3,xtape,minimum width=4.0em] {};
		\node [on chain=3,atape,minimum width=2.0em] (A23) {$\mathbf{A}_2$};
		\node [on chain=3,xtape,minimum width=4.0em] {};
		\node [on chain=3,dtape,minimum width=2.0em] {$\ldots$};
		\node [on chain=3,atape,minimum width=2.0em] (An3) {$\mathbf{A}_n$};
		\node [on chain=3,xtape,minimum width=4.0em] {};
		\node [on chain=3,btape,minimum width=3.5em] (B13) {$\mathbf{B}_1$};
		\node [on chain=3,xtape,minimum width=2.5em] {};
		\node [on chain=3,btape,minimum width=3.5em] (B23) {$\mathbf{B}_2$};
		\node [on chain=3,xtape,minimum width=2.5em] {};
		\node [on chain=3,dtape,minimum width=2.0em] {$\ldots$};
		\node [on chain=3,btape,minimum width=3.5em] (Bn3) {$\mathbf{B}_n$};
		\node [on chain=3,xtape,minimum width=2.5em] {};
		\node [on chain=3,ntape,minimum width=1.0em] (end3) {};
		
		\node [on chain=4,ntape,minimum width=1.0em,below=\vertspace of tape3] (tape4) {};
		\node [on chain=4,atape,minimum width=2.0em] (A14) {$\mathbf{A}_1$};
		\node [on chain=4,xtape,minimum width=4.0em] {};
		\node [on chain=4,btape,minimum width=3.5em] (B14) {$\mathbf{B}_1$};
		\node [on chain=4,xtape,minimum width=2.5em] {};
		\node [on chain=4,atape,minimum width=2.0em] (A24) {$\mathbf{A}_2$};
		\node [on chain=4,xtape,minimum width=4.0em] {};
		\node [on chain=4,btape,minimum width=3.5em] (B24) {$\mathbf{B}_2$};
		\node [on chain=4,xtape,minimum width=2.5em] {};
		\node [on chain=4,dtape,minimum width=4.0em] (dots4) {$\ldots$};
		\node [on chain=4,atape,minimum width=2.0em] (An4) {$\mathbf{A}_n$};
		\node [on chain=4,xtape,minimum width=4.0em] {};
		\node [on chain=4,btape,minimum width=3.5em] (Bn4) {$\mathbf{B}_n$};
		\node [on chain=4,xtape,minimum width=2.5em] {};
		\node [on chain=4,ntape,minimum width=1.0em] (end4) {};
		
		\node [on chain=5,ntape,minimum width=1.0em,below=\vertspace of tape4] (tape5) {};
		\node [on chain=5,atape,minimum width=2.0em] (A15) {$\mathbf{A}_1$};
		\node [on chain=5,btape,minimum width=3.5em] (B15) {$\mathbf{B}_1$};
		\node [on chain=5,xtape,minimum width=0.5em] {};
		\node [on chain=5,xtape,minimum width=6.0em] {};
		\node [on chain=5,atape,minimum width=2.0em] (A25) {$\mathbf{A}_2$};
		\node [on chain=5,btape,minimum width=3.5em] (B25) {$\mathbf{B}_2$};
		\node [on chain=5,xtape,minimum width=0.5em] {};
		\node [on chain=5,xtape,minimum width=6.0em] {};
		\node [on chain=5,dtape,minimum width=4.0em] {$\ldots$};
		\node [on chain=5,atape,minimum width=2.0em] (An5) {$\mathbf{A}_n$};
		\node [on chain=5,btape,minimum width=3.5em] (Bn5) {$\mathbf{B}_n$};
		\node [on chain=5,xtape,minimum width=0.5em] {};
		\node [on chain=5,xtape,minimum width=6.0em] {};
		\node [on chain=5,ntape,minimum width=1.0em] (end5) {};
		
		\node [on chain=6,ntape,minimum width=1.0em,below=\vertspace of tape5] (tape6) {};
		\node [on chain=6,atape,minimum width=2.0em] (A16) {$\mathbf{A}_1$};
		\node [on chain=6,btape,minimum width=3.5em] (B16) {$\mathbf{B}_1$};
		\node [on chain=6,xtape,minimum width=0.5em] {};
		\node [on chain=6,atape,minimum width=2.0em] (A26) {$\mathbf{A}_2$};
		\node [on chain=6,btape,minimum width=3.5em] (B26) {$\mathbf{B}_2$};
		\node [on chain=6,xtape,minimum width=0.5em] {};
		\node [on chain=6,dtape,minimum width=4.0em] {$\ldots$};
		\node [on chain=6,atape,minimum width=2.0em] (An6) {$\mathbf{A}_n$};
		\node [on chain=6,btape,minimum width=3.5em] (Bn6) {$\mathbf{B}_n$};
		\node [on chain=6,xtape,minimum width=0.5em] {};
		\node [on chain=6,xtape,minimum width=18.0em] {\textcolor{gray}{\textit{pool}}};
		\node [on chain=6,ntape,minimum width=1.0em] (end6) {};
		
		\draw (B11.south west) edge[decorate,decoration={brace,amplitude=5pt,mirror,raise=1mm}] node[below=3mm] {\Call{rotate}{} pool in} (end1.south west);
		\draw (A12.south west) edge[decorate,decoration={brace,amplitude=5pt,mirror,raise=1mm}] node[below=3mm] {\Call{simple-interleave}{}} (pool2.south east);
		\draw (A13.south west) edge[decorate,decoration={brace,amplitude=5pt,mirror,raise=1mm}] node[below=3mm] {\Call{simple-interleave}{}} (end3.south west);
		
		\draw (A14.south east) edge[decorate,decoration={brace,amplitude=5pt,mirror,raise=1mm}] node[below=3mm] {parallel \Call{rotate}{}} (A24.south west);
		\draw (A24.south east) edge[decorate,decoration={brace,amplitude=5pt,mirror,raise=1mm}] node[below=3mm] {''} (dots4.south west);
		\draw (An4.south east) edge[decorate,decoration={brace,amplitude=5pt,mirror,raise=1mm}] node[below=3mm] {''} (end4.south west);
		
		\draw (A15.south west) edge[decorate,decoration={brace,amplitude=5pt,mirror,raise=1mm}] node[below=3mm] {$\Call{simple-interleave}{}^{-1}$} (end5.south west);
	\end{tikzpicture}
	\caption{The mechanics of $\textsc{interleave}$ in terms of easier subroutines}
    \label{fig:interleave}
\end{figure}
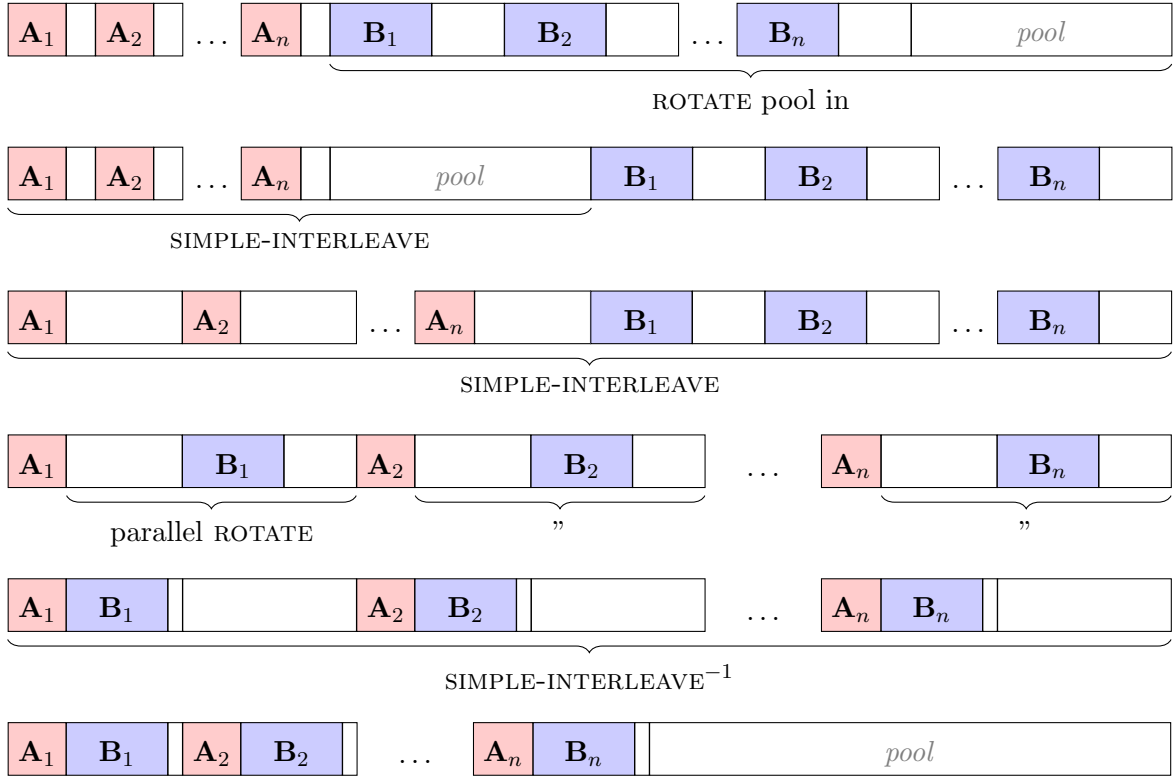

\begin{thm}
    \label{thm:power_interleave_full}
	There exists a circuit which interleaves blocks $A_1, \ldots, A_n$ of length $2^{k_A}$ with blocks $B_1, \ldots, B_n$ of length $2^{k_B}$, producing blocks $A_1 B_1, \ldots, A_n B_n$ of length $2^{k_{AB}}$. See Figure~\ref{fig:blockInterleaving} for the intended result. 
\end{thm}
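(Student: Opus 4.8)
The plan is to reduce the general, unequal-block-length \textsf{interleave} to the equal-length power-of-two case already handled by \textsf{simple-interleave} (Lemma~\ref{lem:power_interleave}), paying for the reduction only with padding drawn from the pool together with rotations (Corollary~\ref{cor:rotation}) and parallel block operations. Assume without loss of generality that $k_A \le k_B$ (the case $k_A > k_B$ is symmetric, padding the $B_i$ instead, and is in fact slightly easier as noted below); then the target block length is $2^{k_{AB}} = 2^{k_B+1}$, since $2^{k_A}+2^{k_B}\le 2^{k_B+1}$. The key observation is that once each $A_i$ has been padded out to length $2^{k_B}$, the padded $A$-list and the $B$-list consist of equal, power-of-two-length blocks, so a single \textsf{simple-interleave} merges them.

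Concretely, I would proceed as in Figure~\ref{fig:interleave}. First, use a prefix rotation (Corollary~\ref{cor:rotation}) to move the padding pool from the tail of the array to sit between the $A$-list and the $B$-list, giving $A_1\cdots A_n\,\|\,\mathrm{pool}\,\|\,B_1\cdots B_n$. Second, pad each $A_i$ (length $2^{k_A}$) out to length $2^{k_B}$ by interleaving the $A$-blocks with pool blocks: one application of \textsf{simple-interleave} on equal blocks of length $2^{k_A}$ produces $A_i\|\mathrm{pad}$ of length $2^{k_A+1}$ with $A_i$ at the front, and iterating this doubling $k_B-k_A$ times (treating the growing padded block as the new unit each round) yields $\hat A_i = A_i\|\mathrm{pad}$ of length $2^{k_B}$. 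Third, apply \textsf{simple-interleave} to the equal-length lists $\hat A_1\cdots\hat A_n$ and $B_1\cdots B_n$ (both with block length $2^{k_B}$) to obtain $\hat A_1 B_1\cdots \hat A_n B_n$, where each pair occupies a block of length $2\cdot 2^{k_B}=2^{k_{AB}}$. Fourth, each such block currently reads $A_i\|\mathrm{pad}\|B_i$; apply one fixed cyclic rotation inside every length-$2^{k_{AB}}$ block in parallel (the parallel-operation proposition of Section~\ref{sec:padding} composed with rotation, using Proposition~\ref{prop:shift} for placement) to move the interior padding to each block's tail, leaving $A_i\|B_i\|\mathrm{pad}$. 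This already realizes the claimed output of blocks $A_iB_i$ of length $2^{k_{AB}}$; if a consolidated pool is preferred, one final \textsf{simple-interleave}$^{-1}$ sweeps the trailing padding of every block back into a single pool at the end of the array.

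For correctness, I would check that each step is a permutation acting as the identity outside the array bounds, that the number of blocks $n$ need not be a power of two (Lemma~\ref{lem:power_interleave} already extends the block count using pool blocks), and that the intermediate layout is exactly $A_i\|\mathrm{pad}\|B_i$ so that the final rotation is block-independent and hence parallelizable. The ancilla accounting reduces to that already established for \textsf{simple-interleave} and rotation, each of which uses $O(1)$ ancillas; since at any moment we operate on \emph{disjoint} blocks, the up-to-factor-four inflation of block lengths is absorbed by the pool within the $O(1)$-qubit padding budget of Section~\ref{sec:padding}.

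The main obstacle is the padding bookkeeping rather than any single permutation: one must ensure the pool always holds enough cells to pad every $A_i$ up to $2^{k_B}$ and to supply the power-of-two extension of the block count, and that all such padding is returned, since \textsf{interleave} is invoked repeatedly inside Theorem~\ref{thm:efficient-cycle-perm} and so must compose cleanly. The doubling step that spreads each $A_i$ into a length-$2^{k_B}$ slot is exactly where the construction's characteristic "further factor of two" space overhead originates; verifying that this spreading can be carried out while keeping each chapter aligned to a power of two---the standing precondition for reusing \textsf{simple-interleave}---is the crux of the argument.
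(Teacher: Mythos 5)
Your proposal is correct and follows essentially the same route as the paper's proof: pad the shorter blocks to the common power-of-two length via repeated \textsc{simple-interleave} doublings drawn from the pool, merge the two equal-length lists with one more \textsc{simple-interleave}, use a parallel in-block rotation to move each block's interior padding to its tail, and finish with an inverse \textsc{simple-interleave} to return surplus padding to the pool. This matches the paper's construction (and Figure~\ref{fig:interleave}) step for step, with your remarks on padding bookkeeping corresponding to the paper's discussion in Section~\ref{sec:padding}.
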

\begin{proof}
    See Figure~\ref{fig:interleave} for a worked example of this procedure. 
    
    To start, if the input blocks are not the same length, i.e., $2^{k_A} \neq 2^{k_B}$, then add padding to the smaller of the two from the pool. Doubling the block length can be achieved with \Call{simple-interleave}{} above, and multiple applications naturally compose since \Call{simple-interleave}{} is achieved by a bit permutation. 
	
	Once the $A$ and $B$ blocks are the same (power of $2$) length, apply \Call{simple-interleave}{} to interleave them. The result is that $A_i$ and $B_i$ are separated by $A_i$'s padding, so the next step is to rotate $B_i$ into position, in parallel over all $i$. Parallelism is possible here because the stride between rotations is a power of $2$. 
	
	Finally, the padding of $A$ combined with the padding of $B$ may be more than we need. An inverse \Call{simple-interleave}{} step can remove that padding and return it to a pool at the end. 
\end{proof}

\begin{figure}
	\centering

	\begin{tikzpicture}[start chain=1 going right, node distance=-0.1mm,
		start chain=2 going right,
		every on chain/.append style={text height=1em}]
		\edef\sizetape{0.7cm}
		\edef\vertspace{2.5cm}
		\tikzstyle{tape}=[draw,minimum size=\sizetape]
		
		\node [on chain=1,draw=none,minimum width=1em] (tape1) {};
		\node [on chain=1,tape,minimum width=2em,fill=red!20] (A1) {$\mathbf{A}_1$};
		\node [on chain=1,tape,minimum width=1em] {};
		\node [on chain=1,tape,minimum width=2em,fill=red!20] (A2) {$\mathbf{A}_2$};
		\node [on chain=1,tape,minimum width=1em] {};
		\node [on chain=1,tape,minimum width=2em,dashed] {$\ldots$};
		\node [on chain=1,tape,minimum width=2em,fill=red!20] (An) {$\mathbf{A}_n$};
		\node [on chain=1,tape,minimum width=1em] {};
		\node [on chain=1,tape,minimum width=4.5em,fill=blue!20] (B1) {$\mathbf{B}_1$};
		\node [on chain=1,tape,minimum width=1.5em] {};
		\node [on chain=1,tape,minimum width=4.5em,fill=blue!20] (B2) {$\mathbf{B}_2$};
		\node [on chain=1,tape,minimum width=1.5em] {};
		\node [on chain=1,tape,minimum width=2em,dashed] {$\ldots$};
		\node [on chain=1,tape,minimum width=4.5em,fill=blue!20] (Bn) {$\mathbf{B}_n$};
		\node [on chain=1,tape,minimum width=1.5em] {};
		\node [on chain=1,tape,minimum width=7em] {\textcolor{gray}{\textit{pool}}};
		\node [on chain=1,draw=none,minimum width=1em] (end1) {};
		
		\node [on chain=2,draw=none,minimum width=1em,below=\vertspace of tape1] (tape2) {};
		\node [on chain=2,tape,minimum width=2em,fill=red!20] (A1prime) {$\mathbf{A}_1$};
		\node [on chain=2,tape,minimum width=4.5em,fill=blue!20] (B1prime) {$\mathbf{B}_1$};
		\node [on chain=2,tape,minimum width=5.5em] {};
		\node [on chain=2,tape,minimum width=2em,fill=red!20] (A2prime) {$\mathbf{A}_2$};
		\node [on chain=2,tape,minimum width=4.5em,fill=blue!20] (B2prime) {$\mathbf{B}_2$};
		\node [on chain=2,tape,minimum width=5.5em] {};
		\node [on chain=2,tape,minimum width=2em,dashed] {$\ldots$};
		\node [on chain=2,tape,minimum width=2em,fill=red!20] (Anprime) {$\mathbf{A}_n$};
		\node [on chain=2,tape,minimum width=4.5em,fill=blue!20] (Bnprime) {$\mathbf{B}_n$};
		\node [on chain=2,tape,minimum width=5.5em] {};
		\node [on chain=2,draw=none,minimum width=1em] (end2) {};
		
		\draw (A1.south) edge[dotted,->,>=stealth,in=90,out=270] (A1prime.north);
		\draw (A2.south) edge[dotted,->,>=stealth,in=90,out=270] (A2prime.north);
		\draw (An.south) edge[dotted,->,>=stealth,in=90,out=270] (Anprime.north);
		\draw (B1.south) edge[dotted,->,>=stealth,in=90,out=270] (B1prime.north);
		\draw (B2.south) edge[dotted,->,>=stealth,in=90,out=270] (B2prime.north);
		\draw (Bn.south) edge[dotted,->,>=stealth,in=90,out=270] (Bnprime.north);
		
		\draw (A1.north west) edge[decorate,decoration={brace,amplitude=5pt,raise=1mm}] node[above=2mm] {$2^{k_A}$} (A2.north west);
		\draw (B1.north west) edge[decorate,decoration={brace,amplitude=5pt,raise=1mm}] node[above=2mm] {$2^{k_B}$} (B2.north west);
		\draw (A1prime.south west) edge[decorate,decoration={brace,amplitude=5pt,mirror,raise=1mm}] node[below=2mm] {$2^{k_{AB}}$} (A2prime.south west);
		
	\end{tikzpicture}
	
	\caption{Schematic of the interleaving operation.}
	\label{fig:blockInterleaving}
\end{figure}
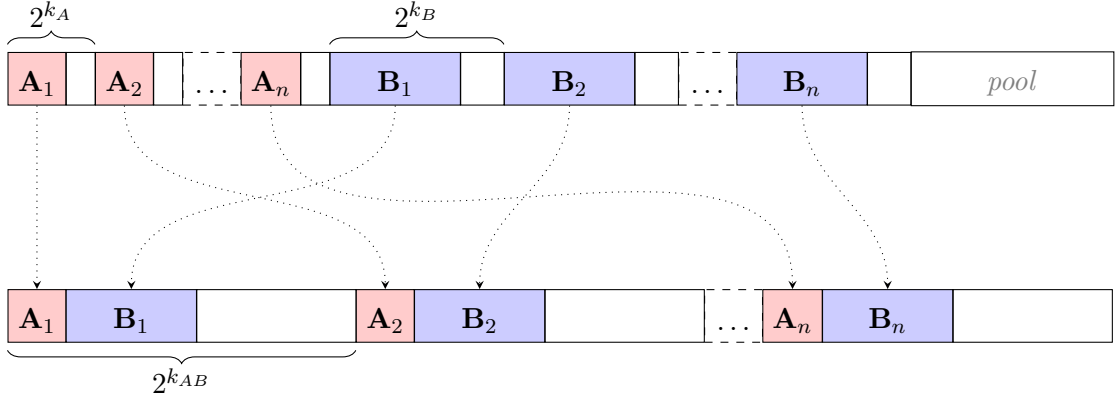

\begin{figure}
    \centering
    \begin{tikzpicture}[
    level 1/.style={sibling distance=2.75cm, level distance=2cm},
    every node/.style={circle,solid, draw=black,thin, minimum size = 0.5cm},
    emph/.style={edge from parent/.style={dashed,red,very thick,draw}},
    norm/.style={edge from parent/.style={solid,black,thin,draw}}]

    \tikzstyle{bits} = [rectangle, draw, minimum width = 0.75cm, minimum height = 0.9cm]
    \tikzstyle{register1} = [rectangle, draw, minimum width = 3.5cm, minimum height = 0.9cm]
    \tikzstyle{register2} = [rectangle, draw, minimum width = 2.75cm, minimum height = 0.9cm]
    \tikzstyle{register3} = [rectangle, draw, minimum width = 2cm, minimum height = 0.9cm]
    \tikzstyle{register4} = [rectangle, draw, minimum width = 1.25cm, minimum height = 0.9cm]

    \draw (0.25,0) node[register1] (root) {$\sstrings{M}{F}$}
      child[dotted] { node[register2] (l) {$0\sstrings{M-1}{F}$}
      }
      child[dotted] { node[register2] (r) {$1\sstrings{M-1}{F-1}$}
      };
      \draw [decoration={calligraphic brace, amplitude=5pt}, decorate, line width=1.25pt]
  ($(r.south east)-(0,0.1)$) -- ($(l.south west)-(0,0.1)$) node[below=0.2cm,midway,rectangle,draw=none] {\textsf{DEINTERLEAVE}} ;

      \path [<->] ($(r.north)+(0,0.1)$) edge [bend right=20]  node[above,midway,rectangle,draw=none] {\textsf{SWAP}} ($(l.north)+(0,0.1)$);

  \begin{scope}[on background layer]
      \draw[draw=red!20, fill=red!20] ($(l.north west)$)  rectangle ($(l.south east)$);
      \draw[draw=blue!20, fill=blue!20] ($(r.north west)$)  rectangle ($(r.south east)$);
    \end{scope}

    \draw (7.75,0) node[register1, draw=none] (root2) {}
      child[dotted] { node[register2] (l2) {$1\sstrings{M-1}{F-1}$} edge from parent[draw=none]
      }
      child[dotted] { node[register2] (r2) {$0\sstrings{M-1}{F}$} edge from parent[draw=none]
      };

      \draw [|->] ($(r.east) + (0.2, 0)$) -- ($(l2.west) + (-0.2, 0)$) node[above=0.1cm,midway,rectangle,draw=none] {$\pi_{0\mapsto 1}$};
      
  \begin{scope}[on background layer]
      \draw[draw=red!20, fill=red!20] ($(r2.north west)$)  rectangle ($(r2.south east)$);
      \draw[draw=blue!20, fill=blue!20] ($(l2.north west)$)  rectangle ($(l2.south east)$);
    \end{scope}
    \end{tikzpicture}
    
    \hrulefill\par\textcolor{white}{.}\par
    \begin{tikzpicture}[
    level 1/.style={sibling distance=2cm, level distance=2cm},
    every node/.style={circle,solid, draw=black,thin, minimum size = 0.5cm},
    emph/.style={edge from parent/.style={dashed,red,very thick,draw}},
    norm/.style={edge from parent/.style={solid,black,thin,draw}}]

    \tikzstyle{bits} = [rectangle, draw, minimum width = 0.75cm, minimum height = 0.9cm]
    \tikzstyle{register1} = [rectangle, draw, minimum width = 2.75cm, minimum height = 0.9cm]
    \tikzstyle{register2} = [rectangle, draw, minimum width = 2cm, minimum height = 0.9cm]
    \tikzstyle{register3} = [rectangle, draw, minimum width = 2cm, minimum height = 0.9cm]
    \tikzstyle{register4} = [rectangle, draw, minimum width = 1.25cm, minimum height = 0.9cm]

    \draw (0.25,6) node[register1] (root) {$11\sstrings{M-2}{F-2}$}
      child[dotted] { node[register2] (l) {$110\sstrings{M-3}{F-2}$}
      }
      child[dotted] { node[register2] (r) {$111\sstrings{M-3}{F-3}$}
      };
      \draw [decoration={calligraphic brace, amplitude=5pt}, decorate, line width=1.25pt]
  ($(r.south east)-(0,0.1)$) -- ($(l.south west)-(0,0.1)$) node[below=0.2cm,midway,rectangle,draw=none] {\textsf{DEINTERLEAVE}} ;

      \path [<->] ($(r.north)+(0,0.1)$) edge [bend right=20]  node[above,midway,rectangle,draw=none] {\textsf{SWAP}} ($(l.north)+(0,0.1)$);

  \begin{scope}[on background layer]
      \draw[draw=blue!20, fill=blue!20] ($(l.north west)$)  rectangle ($(l.south east)$);
      \draw[draw=ForestGreen!20, fill=ForestGreen!20] ($(r.north west)$)  rectangle ($(r.south east)$);
    \end{scope}
    
    \draw (4.25,6) node[register1] (root2) {$10\sstrings{M-2}{F-1}$}
      child[dotted] { node[register2] (l2) {$100\sstrings{M-3}{F-1}$}
      }
      child[dotted] { node[register2] (r2) {$101\sstrings{M-3}{F-2}$}
      };

      \path [<->] ($(r2.north)+(0,0.1)$) edge [bend right=20]  node[above,midway,rectangle,draw=none] {\textsf{SWAP}} ($(l2.north)+(0,0.1)$);

      \begin{scope}[on background layer]
          \draw[draw=red!20, fill=red!20] ($(l2.north west)$)  rectangle ($(l2.south east)$);
          \draw[draw=blue!20, fill=blue!20] ($(r2.north west)$)  rectangle ($(r2.south east)$);
      \end{scope}

      \draw (8.25,6) node[register1] (root3) {$01\sstrings{M-2}{F-1}$}
      child[dotted] { node[register2] (l3) {$010\sstrings{M-3}{F-1}$}
      }
      child[dotted] { node[register2] (r3) {$011\sstrings{M-3}{F-2}$}
      };
      \draw [decoration={calligraphic brace, amplitude=5pt}, decorate, line width=1.25pt]
  ($(r3.south east)-(0,0.1)$) -- ($(l2.south west)-(0,0.1)$) node[below=0.2cm,midway,rectangle,draw=none] {\textsf{DEINTERLEAVE}} ;

      \path [<->] ($(r3.north)+(0,0.1)$) edge [bend right=20]  node[above,midway,rectangle,draw=none] {\textsf{SWAP}} ($(l3.north)+(0,0.1)$);

  \begin{scope}[on background layer]
      \draw[draw=red!20, fill=red!20] ($(l3.north west)$)  rectangle ($(l3.south east)$);
      \draw[draw=blue!20, fill=blue!20] ($(r3.north west)$)  rectangle ($(r3.south east)$);
    \end{scope}
    
      \draw (12.25,6) node[register1] (root4) {$00\sstrings{M-2}{F}$}
      child[dotted] { node[register2] (l4) {$000\sstrings{M-3}{F}$}
      }
      child[dotted] { node[register2] (r4) {$001\sstrings{M-3}{F-1}$}
      };
      \draw [decoration={calligraphic brace, amplitude=5pt}, decorate, line width=1.25pt]
  ($(r4.south east)-(0,0.1)$) -- ($(l4.south west)-(0,0.1)$) node[below=0.2cm,midway,rectangle,draw=none] {\textsf{DEINTERLEAVE}} ;

      \path [<->] ($(r4.north)+(0,0.1)$) edge [bend right=20]  node[above,midway,rectangle,draw=none] {\textsf{SWAP}} ($(l4.north)+(0,0.1)$);
    
  \begin{scope}[on background layer]
      \draw[draw=black!20, fill=black!20] ($(l4.north west)$)  rectangle ($(l4.south east)$);
      \draw[draw=red!20, fill=red!20] ($(r4.north west)$)  rectangle ($(r4.south east)$);
    \end{scope}

      \draw [|->] ($(6.25, 2.5)$) -- ($(6.25, 1.7)$) node[right=0.1cm,midway,rectangle,draw=none] {$\pi_{2 \mapsto 3}$};

    \draw (0.25, 3) node[register1, draw=none] (broot) {}
      child[dotted] { node[register2] (bl) {$111\sstrings{M-3}{F-3}$} edge from parent[draw=none]
      }
      child[dotted] { node[register2] (br) {$110\sstrings{M-3}{F-2}$} edge from parent[draw=none]
      };

    \draw (4.25, 3) node[register1, draw=none] (broot2) {}
      child[dotted] { node[register2] (bl2) {$101\sstrings{M-3}{F-2}$} edge from parent[draw=none]
      }
      child[dotted] { node[register2] (br2) {$011\sstrings{M-3}{F-2}$} edge from parent[draw=none]
      };
      
    \draw (8.25, 3) node[register1, draw=none] (broot3) {}
      child[dotted] { node[register2] (bl3) {$100\sstrings{M-3}{F-1}$} edge from parent[draw=none]
      }
      child[dotted] { node[register2] (br3) {$010\sstrings{M-3}{F-1}$} edge from parent[draw=none]
      };
      
    \draw (12.25, 3) node[register1, draw=none] (broot4) {}
      child[dotted] { node[register2] (bl4) {$001\sstrings{M-3}{F-1}$} edge from parent[draw=none]
      }
      child[dotted] { node[register2] (br4) {$000\sstrings{M-3}{F}$} edge from parent[draw=none]
      };

  \begin{scope}[on background layer]
      \draw[draw=ForestGreen!20, fill=ForestGreen!20] ($(bl.north west)$)  rectangle ($(bl.south east)$);
      \draw[draw=blue!20, fill=blue!20] ($(br.north west)$)  rectangle ($(br2.south east)$);
      \draw[draw=red!20, fill=red!20] ($(bl3.north west)$)  rectangle ($(bl4.south east)$);
      \draw[draw=black!20, fill=black!20] ($(br4.north west)$)  rectangle ($(br4.south east)$);
    \end{scope}
    \end{tikzpicture}
    \caption{Depiction of the permutations $\pi_{0\mapsto 1}$ and $\pi_{2\mapsto 3}$ (padding not depicted). Rectangles denote chapters, and books are color-coded. The initial list is illustrated in two layers, with the first matching the decomposition in Equation \ref{eqn:arrayDecompDirect} and the second matching the decomposition in Equation \ref{eqn:arrayDecompFactored} (though both represent the pre-permutation ordering). The arrows indicate where the \textsf{SWAP} operations occur, and the brackets indicate the \textsf{DEINTERLEAVE} operations---alternating shades denote the blocks to be deinterleaved. The final array on the opposite side of the $\mapsto$ arrow indicates the array after the permutation, as in Equation \ref{eqn:arrayDecompFinal}.}
    \label{fig:listITerationsExample3}
\end{figure}
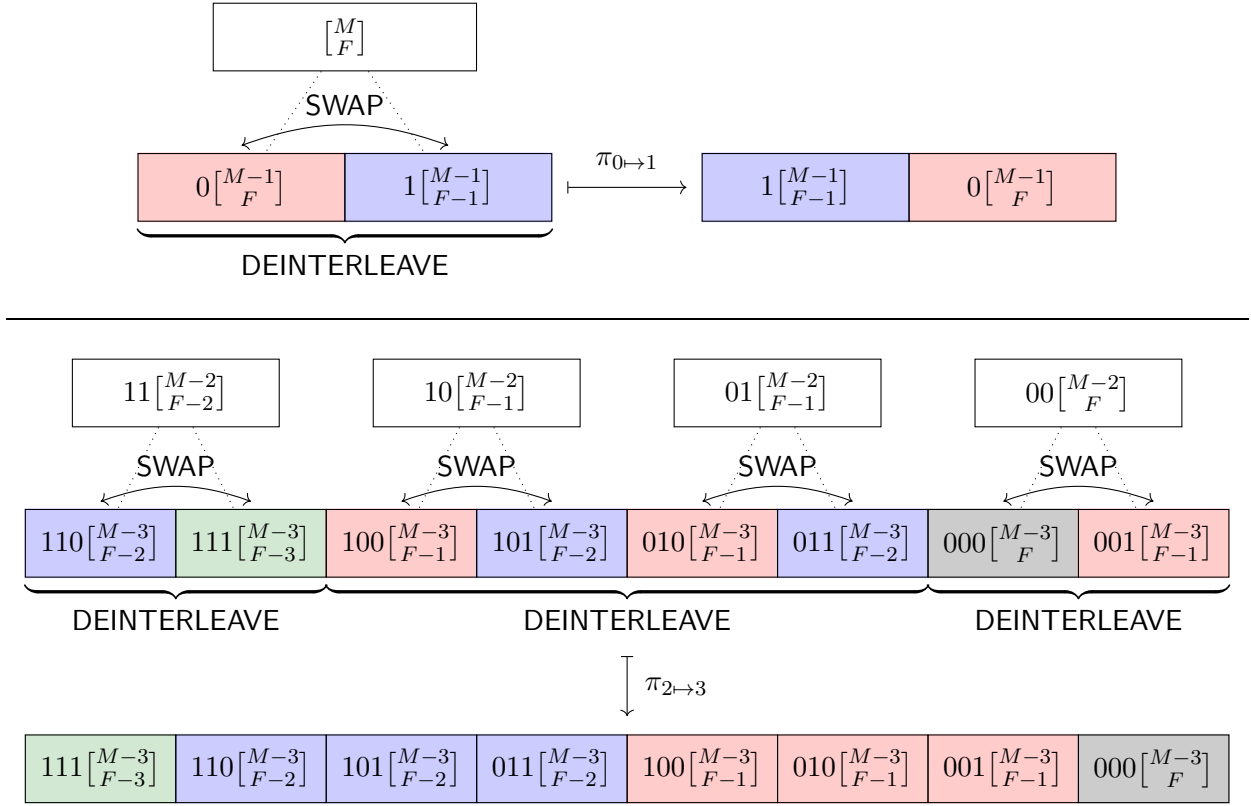

\section{Applications}
\label{sec:applications}

In this section we give a more detailed comparison of our encoding to prior encodings, and outline scenarios where ours outperforms prior work. We will focus on the succinct construction, e.g. in Section \ref{sec:fullEnc}, as it is likely the more practical of the two constructions.

For concreteness we consider the parameter regime where the number of fermions is polynomially smaller than the number of modes, i.e. $F=M^c$ for some constant $0<c<1$, though we note that our encoding is efficient outside this regime as well. This regime is well motivated for applications such as quantum chemistry, where the number of modes is preferred to be as large as possible to accurately model the continuum, but algorithm's polynomial scaling with $M$ prevents $M$ from being exponentially larger than $F$.

The primary application we find is in randomized simulation algorithms such as qDRIFT \cite{C19}, where our encoding gives a polynomial space savings with only logarithmic depth overhead. We also show that our encoding allows for an optimal implementation of a second quantized \textsc{SELECT} subroutine, used in linear combination of unitary methods \cite{CW12}, no matter how $M$ and $F$ relate.

\subsection{Space Efficient Randomized Simulation}
Randomized product formulas like the qDRIFT algorithm \cite{C19} are strong candidates where our encoding will be useful, as the complexity of such algorithms depend heavily on the fermion encoding used. The qDRIFT algorithm requires $O((\lambda t)^2 / \epsilon)$ fermionic rotation operators to evolve for time $t$ within error $\epsilon$, where $\lambda$ is the sum of the magnitude of Hamiltonian coefficients when written in a second quantized basis. The scaling with $\lambda$ is preferable to scaling with $M$ when the magnitude of coefficients varies drastically, as is often the case for quantum chemistry Hamiltonians. 
\begin{thm}
    \label{thm:qdrift-algorithm}
    For a particle preserving local fermionic Hamiltonian $H$ on an $M$ mode system, there is a second quantized algorithm for evolving an $F$ fermion state to time $t$ using space $\mathcal{I}+O(F)$, with gate complexity $O\left(\epsilon^{-1} (\lambda t)^2 \mathcal{I}\right)$,
    and circuit depth $O\left(\epsilon^{-1} (\lambda t)^2 \log M \log \log M \right)$, where $\lambda$ denotes the sum of magnitudes of coefficients in $H$ and $\epsilon$ denotes the target diamond-norm error.
\end{thm}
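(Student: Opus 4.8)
The plan is to treat this as a direct composition of two ingredients: the correctness and sample complexity of the qDRIFT protocol \cite{C19}, and the per-rotation cost guaranteed by our succinct data structure (Theorem \ref{thm:fewQubitsLowDepth}, equivalently Corollary \ref{corr:fermiEncSuccinctFull}). First I would recall the qDRIFT channel. Write the term-wise particle-preserving Hamiltonian from Equation \ref{eqn:generalKlocalHam} as $H = \sum_j h_j H_j$ with each $H_j$ a normalized Hermitian, $k$-local, number-preserving term and $\lambda = \sum_j |h_j|$. qDRIFT samples an index $j$ with probability $|h_j|/\lambda$, applies the rotation $\exp(-i\operatorname{sign}(h_j)\tau H_j)$ with a fixed per-step angle $\tau = \lambda t / N$, and repeats $N$ times. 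Campbell's analysis \cite{C19} shows that the resulting mixed channel approximates $\exp(-iHt)(\cdot)\exp(iHt)$ to diamond-norm error $\epsilon$ once $N = O((\lambda t)^2/\epsilon)$. So it suffices to implement $N$ independent single-term fermionic rotations on the encoded state.

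Next I would invoke the encoding. Instantiate the data structure of Section \ref{sec:fullEnc} with capacity $F, k$, where $k = O(1)$ is the locality of $H$; by the discussion in Section \ref{subsec:simulating-fermions} and Appendix \ref{app:fermionOverflow} this capacity is exactly what is needed for each intermediate Majorana rotation to stay representable, while only costing an additive $O(F)$ (or $O(\log M)$) qubits, already absorbed into the $\mathcal{I} + O(F)$ budget. Each sampled rotation $\exp(-i\operatorname{sign}(h_j)\tau H_j)$ is a particle-preserving rotation generated by a product of $O(1)$ creation/annihilation operators, hence decomposes (as in the hopping example of Section \ref{subsec:simulating-fermions}) into $O(1)$ Majorana rotations, each implementable via Appendix \ref{app:ptclePresRotations} using $O(1)$ calls to the $\srank$/$\bflip$ circuits plus a single-qubit rotation. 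By Lemmas \ref{lem:fewQubLowDepEffZp} and \ref{lem:fewQubLowDepEffXp}, each such call costs $O(\mathcal{I})$ gates and $O(\log M \log\log M)$ depth, with the $O(F)$ ancillas reused across rounds. Multiplying by the $N$ rounds gives total gate complexity $O(N \mathcal{I}) = O(\epsilon^{-1}(\lambda t)^2 \mathcal{I})$ and depth $O(N \log M \log\log M) = O(\epsilon^{-1}(\lambda t)^2 \log M \log\log M)$, all on a single register of $\mathcal{I} + O(F)$ qubits.

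The one point requiring care is transporting the diamond-norm guarantee from the logical Fock channel to the encoded channel. Since $\mathcal{E}_s$ is a linear injection of $\mathcal{H}^{(fock)}_{M,F,k}$ onto a set of computational basis states, it extends to an isometry $V$ onto the code subspace, and each encoded rotation equals $V g_f V^\dagger$ on that subspace (this is precisely the invariance established in the Remark following Definition \ref{defn:fermion-data-structure}). Conjugation by a fixed isometry is diamond-norm non-increasing and, restricted to states supported on the code subspace, distance-preserving, so the qDRIFT error bound carries over to the encoded mixed channel. I expect this error-transfer step to be the main, though mild, obstacle: one must confirm that the randomized channel never leaves the encoded subspace—guaranteed here by term-wise particle preservation together with the capacity-$k$ promise, which ensures every intermediate Majorana-rotated state remains representable—so that no additional leakage error is incurred beyond the $\epsilon$ from qDRIFT itself.
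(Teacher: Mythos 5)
Your proposal is correct and is essentially the paper's own argument: the paper proves this theorem in one sentence by composing qDRIFT's $O((\lambda t)^2/\epsilon)$ rotation count with the per-rotation guarantees of the Section \ref{sec:fullEnc} encoding (Theorem \ref{thm:fewQubitsLowDepth}), exactly as you do. Your additional care about capacity-$k$ instantiation and transporting the diamond-norm bound through the encoding isometry is a fleshed-out version of details the paper delegates to Section \ref{subsec:simulating-fermions}, Appendix \ref{app:fermionOverflow}, and the Remark following Definition \ref{defn:fermion-data-structure}, not a different route.
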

This theorem follows straightforwardly from using the qDRIFT algorithm \cite{C19} with our encoding in Section \ref{sec:fullEnc}. In the regime discussed above, the space complexity of this algorithm is  \begin{align*}
    \mathcal{I}+O(F) = O(M^c \log M^{1-c}) = O(M^{c} \log M),
\end{align*}
with gate complexity \begin{align*}
    O\left(\epsilon^{-1} (\lambda t)^2 \mathcal{I}\right) = O(\epsilon^{-1} (\lambda t)^2 M^{c} \log M),
\end{align*}
and circuit depth \begin{align*}
    O\left(\epsilon^{-1} (\lambda t)^2 \log M \log \log M \right).
\end{align*}

We now compare these complexities against those achieved by other second-quantized encodings in our parameter regime.

\paragraph{Space Complexity.} The most space-efficient prior second quantized encodings in this regime that are amenable to randomized simulation methods are the qubit efficient \cite{STHCG22} and permutation basis \cite{HNAW22}, but these encodings would incur an exponential ($M^{F}$) gate and depth overhead. Barring these encodings, the next best are the segment code \cite{SW18, S19}, and the optimal degree code \cite{KFHM22}. We use polynomially less space than the former ($O(M^{c} \log M)$ vs.\ $\Omega(M)$) and quadratically less than the latter ($O(F \log M)$ vs.\ $\Omega(F^2 \log^4 M)$).

\paragraph{Circuit Complexity} When $c < 1/2$, the prior best second quantized encoding in terms of space usage (barring codes with exponential gate overhead) is the optimal degree code \cite{KFHM22}. Compared to this encoding, along with better space efficiencies we also improve on the $O(F^2 \log^5 M)$ gate complexity of fermion operations. Our scaling of $O(\mathcal{I})$ (with $\mathcal{I} < F \log M$) is always quadratically better in dependence on $F$, and at least four powers better in dependence on $\log M$. Further, while the depth is not analyzed in the optimal degree encoding a simple counting argument implies that it is at least $\Omega((\lambda t)^2 \log M / \epsilon)$, which is at most a negligible $O(\log \log M)$ better than our encoding. Compared to other space efficient second-quantized encodings, we give at least a quadratic improvement in circuit complexity while also removing many log factors.

Allowing space inefficient encodings, the most gate efficient prior encoding in our regime is the Bravyi-Kitaev encoding \cite{BK02}, which uses $O(\log M)$ gates to implement a fermionic rotation. This is the encoding used in the original resource estimates for the qDRIFT algorithm, making it a natural competitor. When the order of fermionic rotations is selected randomly, the Bravyi-Kitaev encoding of these operators cannot be significantly parallelized. The depth for full simulation will generically be $\Omega((\lambda t)^2 / \epsilon)$ (and potentially larger if fermion operations take super-constant depth to implement), which is only a logarithmic factor lower than in our encoding. Furthermore, the Bravyi-Kitaev encoding requires polynomially more qubits ($O(M)$ versus $O(M^c \log M)$) than our encoding in this regime---though fewer gates ($O(\log M)$ versus $O(M^c \log M)$). This tradeoff is therefore more beneficial the smaller $c$ is.

\subsection{Optimal SELECT Subroutine}
In the context of post-Trotter methods, our encoding method can be relevant not only for saving space but also for saving gates. We illustrate this by giving a provably optimal implementation of the \textsc{SELECT} subroutine, a key component in the linear combination of unitaries algorithm for Hamiltonian simulation \cite{CW12}. This algorithm involves expressing the Hamiltonian as a linear combination of unitaries \begin{align*}
    H=\alpha_1 U_1 + ... + \alpha_q U_q,
\end{align*}
then block encoding the Hamiltonian using a \text{PREPARE} oracle that creates a superposition over every number in $j \in[q]$, weighted by amplitude $\alpha_j$, in conjunction with as a so-called \textsc{SELECT} oracle. The \textsc{SELECT} oracle has a control wire indicating a number $j \in [q]$, and conditioned on $j$ performs the unitary $U_j$ on the physical state---this oracle is the only part whose implementation is dependent on the fermion encoding used. One way to split a $k$-local second quantized fermionic Hamiltonian into a linear combination of unitaries is to split every $a_j$ into Majoranas $a_j=\gamma_{2j-1} + i \gamma_{2j}$. The $\gamma_j$ operators are unitary, meaning the full Hamiltonian is now a linear combination of unitaries. Using our encoding, we can perform a \textsc{SELECT} oracle in the optimal complexity. Specifically, we show how to perform an $k$-local product of Majorana operators, given $k$ index wires denoting which Majorana's appear in the product. This translates to a sequence of $O(k)$ many $\srank$ and $\bflip$ operations, so in particular it suffices to do a single $\srank(j, \bb)$ or $\bflip(j, \bb)$ operator controlled on an index wire $\ket{j}$.

\begin{thm}
    \label{thm:select}
    Using the encoding described in Section \ref{sec:fullEnc}, a coherently controlled $\srank(\ket{j}, \bb)$ or $\bflip(\ket{j}, \bb)$ can be performed with $O(\mathcal{I})$ gates.
\end{thm}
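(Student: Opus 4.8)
The plan is to reuse the circuits of Lemma~\ref{lem:fewQubLowDepEffZp} and Lemma~\ref{lem:fewQubLowDepEffXp} essentially verbatim, observing that in those constructions the fixed index $p$ is consumed \emph{only} as an input to comparison and arithmetic subroutines. Writing $p = p_m \Vert p_l$ for its most and least significant bits, the value enters through: the select query of Lemma~\ref{lem:lowDepthCounting} used to compute $p_{start}, p_{end}$ (with $p_m$ as the target count fed to the tree walk-down), the comparisons of each register's least significant bits against $p_l$, and, in the $\bflip$ case, the computation of the target pointer $t$, the flags $f_{ins}$ and $f_{\geq p}$, and the modified bit position $t + p_m$ in the stars-and-bars array. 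Each of these is either a comparison or an addition of two $O(\log M)$-bit numbers.

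The key observation is that these primitives cost the same $O(\log M)$ gates whether one operand is a hard-coded constant or a second quantum register, since a ripple comparator (or adder) reads its two arguments symmetrically. Consequently, replacing the classical constant $p$ by a quantum register $\ket{j}$ and substituting register--register comparators and adders for the register--constant versions leaves the gate counts of Lemma~\ref{lem:fewQubLowDepEffZp} and Lemma~\ref{lem:fewQubLowDepEffXp} unchanged, namely $O(\mathcal I)$.

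Concretely, to implement the controlled $\srank(\ket{j}, \bb)$ I would decompose $\ket{j} = \ket{j_m}\ket{j_l}$, run the tree walk-down of Lemma~\ref{lem:lowDepthCounting} with target register $j_m$ to obtain $p_{start}, p_{end}$, fan these out, perform the LSB comparisons against $j_l$ in parallel, apply the conditioned $Z$ phase, and uncompute. For the controlled $\bflip(\ket{j}, \bb)$ I would follow the same steps to compute $t$, $f_{ins}$, and $f_{\geq p}$ from $\ket{j}$, run the list-rotation and tree-update subroutines of Lemma~\ref{lem:lowDepthRotationList} and Lemma~\ref{lem:lowDepthRotationTree} (whose only dependence on $p$ is through the already-computed flags and through the offset $t + j_m$, again a register addition), and uncompute the flags and range registers. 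Since a single controlled $\srank$ or $\bflip$ on $\ket{j}$ is exactly the atomic primitive to which the preceding discussion reduces a controlled Majorana, this yields the claimed $O(\mathcal I)$-gate \textsc{SELECT}.

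The one point requiring care is that the index register $\ket{j}$ is a control and must be returned unmodified. This is automatic: in both lemmas every ancilla is uncomputed using $p$, and since $\ket{j}$ is never consumed (it only feeds comparators and adders, all of which read it non-destructively), it survives intact and is available for each uncomputation step. I would also verify that no step in the original fixed-$p$ proofs silently exploited classical knowledge of $p$ to prune gates; inspection confirms that every use of $p$ is mediated by a data-dependent subroutine (the MSB string and LSB registers being quantum, hence the tree walk and comparisons are required regardless), so the constant was never a source of savings and the coherent version incurs no overhead. I expect this uncomputation bookkeeping to be the only subtle part of the argument, as everything else is an immediate translation of the earlier lemmas.
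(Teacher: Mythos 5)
Your proposal is correct and is essentially the paper's own argument: the paper's proof of Theorem~\ref{thm:select} defers the key step to Appendix~\ref{app:coherentlyControlled}, which makes exactly your observation --- that the circuits of Section~\ref{sec:fullEnc} depend on $p$ only through comparisons (and arithmetic) against its binary representation, so by Lemmas~\ref{lem:comparisonComplexity} and~\ref{lem:equalityComplexity} these can be performed against a quantum register $\ket{j}$ with no change in asymptotic gate count, using the same fan-out trick for ancillas. Your write-up is simply a more explicit, subroutine-by-subroutine instantiation of that argument (the paper additionally wraps it in the LCU/Majorana-product framing, which is not needed for the statement as given).
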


\begin{proof}
    Consider the $M$ fermion Hamiltonian \begin{align*}
        H^{(m)} =& \alpha_1 U_1^{(m)} + ... + \alpha_q U_q^{(m)}
    \end{align*}
    where every $U_i^{(m)}$ is a product of Majorana operators. When the fermionic Hamiltonian $H$ is $k$-local, then every $U_j^{(m)}$ appearing in $H^{(m)}$ has $k$-many Majorana operators. It follows that we can implement the action of $U_j^{(m)}$ within the encoding described in Section \ref{sec:fullEnc} with complexity $O(\mathcal{I})$. From the discussion in Section \ref{app:coherentlyControlled}, we can further coherently condition which $U_j^{(m)}$ is performed using a control register, introducing at most a constant factor overhead. For this \textsc{SELECT} oracle we therefore obtain circuits of size $O(\mathcal{I})$. Note that in general there are many ways to split a Hamiltonian into a linear combination of unitaries, and we leave to future work whether \textsc{SELECT} has an efficient implementation under alternative choices.
\end{proof}

As argued above, this is sufficient to implement a \textsf{SELECT} oracle in the same complexity. We now turn to lower bounds.

\begin{thm}
    \label{thm:select-lower-bound}
    Using any fermion to qubit mapping, a coherently controlled, second-quantized, $k$-local particle preserving fermion operation (a \textsc{SELECT} oracle) requires $\Omega(\mathcal{I})$ gates.
\end{thm}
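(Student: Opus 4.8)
The plan is to argue that the \textsc{SELECT} circuit must act nontrivially on $\Omega(\mathcal I)$ of the physical qubits; since each gate touches at most two qubits, this immediately yields an $\Omega(\mathcal I)$ lower bound on the gate count. Write the oracle as a unitary $U$ on a control register, the physical register, and ancillas, and let $A$ denote the set of physical qubits on which $U$ acts nontrivially (those $i$ for which $U \neq I_i \otimes (\cdots)$). I would first reduce to a statement about single (uncontrolled) operators: because a \textsc{SELECT} oracle is block-diagonal in the computational basis of the control register, fixing the control to a classical value $\ket{j}$ turns the circuit for $U$ into a subcircuit implementing the single selected operator $O_j$ on the physical register, using no more gates than $U$. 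Consequently, if some qubit $i$ lies in the support of $O_j$ for any selectable $j$, then $i$ must be touched by a gate of $U$; hence $A \supseteq \bigcup_j \mathrm{supp}(O_j)$.

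It then remains to lower bound $\left|\bigcup_j \mathrm{supp}(O_j)\right|$ by $\mathcal I$, where the $O_j$ range over the particle-preserving $k$-local operators the oracle can select. It suffices to consider the hopping operators $a_i^\dagger a_{i'}$, which are $2$-local and particle preserving. Here I would invoke connectivity of the weight-$F$ code: any two weight-$F$ Fock states are related by a sequence of hoppings, so applying the encoded hopping operators connects the entire encoded code space. A physical qubit $i \notin \bigcup_j \mathrm{supp}(O_j)$ is therefore acted on trivially by every hopping, and since every code state is reachable from every other by these operators, qubit $i$ takes a fixed value across all $\binom{M}{F}$ encoded basis states. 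The encoded states thus differ only on the qubits in $\bigcup_j \mathrm{supp}(O_j)$, so $2^{\left|\bigcup_j \mathrm{supp}(O_j)\right|} \ge \binom{M}{F}$, giving $\left|\bigcup_j \mathrm{supp}(O_j)\right| \ge \lceil \log \binom{M}{F}\rceil = \mathcal I$.

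Combining the two steps gives $|A| \ge \mathcal I$, and since a two-qubit gate touches at most two qubits, $U$ must contain at least $\mathcal I/2 = \Omega(\mathcal I)$ gates, which proves the claim. The main obstacle I anticipate is making the middle step fully rigorous: I must argue carefully that an untouched physical qubit is genuinely \emph{constant} across the whole encoded code space rather than merely locally invariant, which relies on (a) the block-diagonal structure letting me extract each $O_j$ as a bona fide subcircuit acting on the physical register, and (b) the hopping operators acting transitively on the weight-$F$ slice, so that invariance under all of them forces a global constant value. A secondary technical point is to confirm that fixing control qubits to a basis value does not enlarge the physical-qubit support (gates straddling control and physical qubits degenerate to single-qubit gates on the physical side), so that the containment $A \supseteq \bigcup_j \mathrm{supp}(O_j)$ holds exactly as stated.
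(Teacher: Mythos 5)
Your proposal is correct and takes essentially the same approach as the paper: the paper likewise argues that a \textsc{SELECT} circuit with $\lfloor\frac{1}{2}(\mathcal{I}-1)\rfloor$ two-qubit gates could touch at most $\mathcal{I}-1$ qubits, and that restricting to those qubits would yield a fermion encoding on $\mathcal{I}-1$ qubits (since the oracle can perform arbitrary Majorana pairs, untouched qubits carry no information), contradicting the counting bound $\mathcal{I}$. Your version merely phrases the argument directly rather than by contradiction, and spells out the transitivity-of-hoppings step that the paper leaves implicit.
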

\begin{proof}
    The lower bound for representing $F$ fermions in $M$ modes is $\mathcal{I}$ qubits, due to a straightforward counting argument. Suppose there were a generic implementation of the second quantized \textsc{SELECT} oracle of the aforementioned kind which used $\lfloor \frac{1}{2}(\mathcal{I}-1)\rfloor$ one- and two-qubit gates, and implemented any $k$-local fermion term for a $k > 1$. It follows that such an oracle would act on at most $\mathcal{I}-1$ qubits, so consider keeping only the $\mathcal{I}-1$ qubits it acts on. We are then left with a fermion encoding on $\mathcal{I}-1$ qubits, as by assumption the \textsc{SELECT} oracle is capable of performing arbitrary pairs of Majorana operators; a contradiction. Hence, any generic implementation of the \textsc{SELECT} oracle requires $\Omega(\mathcal{I})$ gates, matching our upper bound up to constant factors.
\end{proof}

To compare this result against existing second quantized encodings, note that in all prior encodings (summarized in Table \ref{tab:fewFermionEncTable}) the complexity of a coherently controlled fermion operator is at least linear in the number of qubits, or exponential in the case of prior succinct encodings. For every gate-efficient encoding, the number of qubits is polynomially larger than $\mathcal{I}$ in our regime $F=M^c$, from which it follows that the gate complexity is at least polynomially larger than $\mathcal{I}$. Concretely, when $c<1/2$ the best prior implementation of the select oracle is the optimal degree encoding, with gate complexity $\Omega(F^2 \log^5 M)$. Our implementation of complexity $O(\mathcal{I})$ where $\mathcal{I} < F \log M$ is quadratically better in $F$ dependence, and at least four powers better in $\log M$ dependence. We note also that in the parameter regime $c>1/2$ our encoding maintains at least a factor $\tilde O(M^{1-c})$ advantage in gate complexity over prior work, giving a polynomial advantage over the whole regime.

\subsection{Comparison to First Quantized Encodings}
\label{subsec:compareFirstQuant}

First quantized algorithms maintain a list of pointers representing fermion positions similar to the encoding described in Section \ref{sec:efficientEnc}, but antisymmetrize the state over all possible permutations. The fermion phases then arise from the exchange statistics of the registers, allowing for a different class of efficient operations. Comparisons against first quantization are subtle, as algorithms within these two paradigms can have drastically different complexities for simulating similar systems. In general one would expect first quantized algorithms to outperform second quantized algorithms when $F$ is extremely small compared to $M$, but there is evidence that second quantized algorithms can remain advantageous when $F$ is polynomially smaller than $M$ \cite{Su2021nearlytight}. This motivates the parameter regime we consider here.
\paragraph{Space Complexity.} Using Stirling's approximation we can rewrite \begin{align}
    \mathcal{I} =& F \log \frac{M}{F} + O(F).
\end{align}
In our parameter regime of interest, we therefore have \begin{align}
    \mathcal{I} =& (1-c) \cdot F \log M + O(F) \label{eqn:succinctBenchmark}
\end{align}
Note that the $O(F)$ term is asymptotically less than the $(1-c) \cdot F \log M$ term, and as a corollary the asymptotic space usage of our encoding ($\mathcal{I}+O(F)$) is a factor $1+o(1)$ from optimal, i.e. it is succinct. Standard first-quantized encodings \cite{SBWRB21} use $F \lceil\log M\rceil$, which is asymptotically a factor $\frac{1}{1-c}$ above $\mathcal{I}$, which is not succinct. Our encodings saves more space the closer $c$ is to $1$.

\paragraph{Gate Complexity.} Both our representation and first-quantized representations support efficient (i.e. at most linear in the number of qubits) fermion operations of the first-quantized or second-quantized variety, respectively. Differences in gate complexity, therefore, stem primarily from differences in algorithms built in a first-quantized versus second-quantized picture. Many of the algorithms in these different paradigms are designed for different contexts, making a fair comparison difficult.

\section{Conclusion and Open Problems}

We presented two second-quantized fermion encodings which are almost exactly optimal in terms of space usage, yet allows gate efficient and low depth second-quantized fermion rotations. In the context of randomized simulation algorithms, the succinct encoding allows for a polynomial space savings with only a logarithmic depth overhead. Along the way, we presented a more simple succinct encoding which did not achieve low depth, as well as a low depth encoding that did not achieve succinctness, though both of these encodings may be of independent interest due to their simplicity. This work leaves open the following questions.

\begin{enumerate}[label=(\arabic*)]
    \item In first quantization, a list of positions is stored in an antisymmetrized wavefunction with $F \lceil \log M \rceil$ qubits in standard encodings, allowing for efficient first quantized fermion operations. Can such a representation be made succinct while still antisymmetrizing?
    \item Is there an efficient fermionic Fourier transform circuit for our encodings?
    \item Can the qubit count be improved beyond $\mathcal{I}+O(F)$ without exceeding gate complexity $O(\mathcal{I})$? We conjecture a lower bound on the qubits required to maintain this gate complexity is $\mathcal{I}+\Omega(F)$.

    \item When compiled to a universal gate set like $\mathrm{CNOT}, \mathrm{H}, T$, the complexities of each gate is the same as the overall complexity in our encodings. In many error correcting codes certain gates are harder than others (for instance $T$ gates are more expensive in a surface code), motivating analysis of each gate individually. Can the number of $T$ gates be reduced in this construction?
    \item Can the degree of the $\textsf{poly}(M)$ circuit complexity scaling in our implicit encoding be improved?
\end{enumerate}

\section*{Acknowledgements}
The authors thank James Watson for helpful conversation about fermion encodings, and Andrew Childs for providing feedback on an early draft of this manuscript. JC is supported by the US Department of Energy grant no.\ DESC0020264. LS thanks the Joint Center for Quantum Information and Computer Science (QuICS) where a large portion of this research occurred. 

\pagebreak

\bibliography{ref}
\pagebreak
\appendix
\section{Comparison Circuits}
The two comparison operations we will need are equality and order. 
\begin{lem}
    Equality between two $n$-bit registers $A, B$ can be reversibly computed with $O(n)$ gates, with either $O(1)$ ancillae and depth $O(n)$, or $O(n)$ ancillae and depth $O(\log n)$. \label{lem:equalityComplexity}
\end{lem}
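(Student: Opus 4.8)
The plan is to reduce equality testing to a single fan-in (AND/OR) reduction over the bitwise XOR of the two registers, and then to realize that reduction in two different ways depending on whether we optimize ancilla count or depth. First I would compute, for each position $i \in [n]$, the indicator $e_i := \neg(a_i \oplus b_i)$, which equals $1$ exactly when $a_i = b_i$. Since $A = B$ if and only if every $e_i = 1$, the desired equality bit is just the AND $e_1 \wedge \cdots \wedge e_n$, written into a fresh output qubit. The whole task is therefore: (i) compute the $e_i$, (ii) AND them into the output, and (iii) uncompute the $e_i$ so that $A$, $B$, and all ancillas are restored. Each $e_i$ costs $O(1)$ gates, so steps (i) and (iii) contribute $O(n)$ gates in either regime, and the distinction between the two claimed trade-offs lives entirely in step (ii).

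For the $O(n)$-ancilla, $O(\log n)$-depth version I would allocate one ancilla $c_i$ per position, set $c_i = e_i$ in parallel (two CNOTs and an $X$, depth $O(1)$), and then combine the $c_i$ with a balanced binary tree of Toffoli gates: each level halves the number of live values and writes the pairwise ANDs into fresh ancillas, so the tree has depth $O(\log n)$, uses $O(n)$ ancillas, and its root holds the equality bit. Copying the root into the output with a CNOT and then running the tree and the $e_i$-computation in reverse uncomputes every ancilla, giving total gate count $O(n)$ and depth $O(\log n)$.

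For the $O(1)$-ancilla, $O(n)$-depth version I would instead fold the XOR in place: apply $\mathrm{CNOT}(b_i \to a_i)$ for all $i$, so the $A$ register now holds $a_i \oplus b_i$, then apply an $X$ to each $a_i$, so $A$ holds the $e_i$. The equality bit is then an $n$-controlled Toffoli with all of $A$ as controls, targeting the output; undoing the $X$'s and the in-place CNOTs afterwards restores $A$ and $B$ exactly. The only nonroutine ingredient is implementing this $n$-control Toffoli within the budget, for which I would invoke the standard ancilla-efficient decomposition (e.g.\ Barenco et al.) realizing an $n$-controlled NOT with a single (possibly borrowed) ancilla in $O(n)$ gates and $O(n)$ depth.

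The main obstacle is precisely this last point. A naive AND cascade that stores each partial product uses $\Theta(n)$ ancillas, which is acceptable for the first regime but violates the $O(1)$-ancilla requirement of the second; I would therefore lean on the known ancilla-efficient multi-controlled-NOT construction rather than re-deriving it, and observe that its inherently linear structure is exactly what forces the $O(n)$ depth here, realizing the depth-for-space trade-off that the statement permits.
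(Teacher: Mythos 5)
Your proposal is correct and matches the paper's proof in essentially every respect: both reduce equality to a bitwise XOR (the paper uses open controls where you use explicit negations) followed by a width-$n$ fan-in gate that is uncomputed afterwards, with the multi-controlled NOT realized either via a single-ancilla linear-depth decomposition (the paper cites Gidney where you cite Barenco et al.) or via a balanced recursive tree with $O(n)$ ancillas in $O(\log n)$ depth.
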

\begin{proof}
    Equality between two such registers $A, B$ can be done in place by a bitwise conditional xor from $A$ to $B$, fan-in gate, then uncomputing, as depicted in Figure \ref{fig:simpleEqCirc}. A width $n$ Toffoli can be implemented in depth $O(n)$ using one ancilla \cite{G15}, or alternatively in $O(\log n)$ depth with $O(n)$ ancilla using a simple recursive construction.
\end{proof}
As a corollary, equality with a fixed value can be done with the same depth. To see this, consider in Figure \ref{fig:simpleEqCirc} if $B$ were instead a fixed value. We could then eliminate register $B$, replacing all conditional nots with explicit nots where necessary.

\begin{lem}
    Order comparison $A<B$ between two registers of size $n$ can be reversibly computed with $O(n)$ gates, with either $O(1)$ ancillae and depth $O(n)$, or $O(n)$ ancillae and depth $O(\log n)$.\label{lem:comparisonComplexity}
\end{lem}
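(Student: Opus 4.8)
The plan is to reduce order comparison to the carry-out bit of a binary addition, which lets me reuse standard reversible adder constructions to hit both complexity tradeoffs. Recall that for unsigned $n$-bit integers, $A < B$ holds if and only if the subtraction $A - B$ underflows, equivalently if and only if $A + \overline{B} + 1$ (computed in two's complement, where $\overline{B}$ is the bitwise complement of $B$) produces \emph{no} carry out of the top bit. Thus it suffices to compute a single carry-out bit, negate it, and copy it into a fresh output qubit, and then uncompute everything so that $A$ and $B$ are returned unchanged.

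For the $O(1)$-ancilla, $O(n)$-depth version, I would thread a ripple carry through the bits from least to most significant, exactly as in the Cuccaro-style adder \cite{CDKM04} already invoked in Proposition~\ref{prop:prefix_rotation}. Each bit position performs an $O(1)$-gate update of the running carry using a constant number of ancillae, and after the top carry has been extracted into the output qubit, the entire ripple is run in reverse to clear the ancillae and restore the inputs. This doubles the gate count and depth but keeps both at $O(n)$, and the ancilla count remains $O(1)$.

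For the $O(n)$-ancilla, $O(\log n)$-depth version, I would instead compute the carry-out using a parallel-prefix (carry-lookahead) network. Define at each position the generate/propagate pair $(g_i, p_i) = (\overline{a_i} \wedge b_i,\; a_i \oplus b_i)$, encoding whether position $i$ forces the comparison or defers to less significant positions; these combine under the standard associative carry operator $(g,p) \bullet (g',p') = (g \vee (p \wedge g'),\, p \wedge p')$. A balanced prefix tree (Kogge--Stone or Brent--Kung) computes the overall carry in $O(\log n)$ depth with $O(n)$ gates and $O(n)$ ancillae, and the tree is uncomputed by running it backwards, again at the cost of only constant factors.

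The main obstacle I expect is reversibility bookkeeping rather than the combinatorial logic: in both variants I must ensure that every intermediate carry and prefix bit is uncomputed so that only the single Boolean answer survives and the inputs are untouched, which is what lets the comparator be used as a drop-in subroutine. For the logarithmic-depth version this requires that the prefix network be uncomputable within the same $O(\log n)$ depth, which holds because running any reversible circuit in reverse preserves its depth and gate count up to a factor of two. Once the output bit is secured before uncomputation, correctness of $A < B$ follows immediately from the carry-out characterization above.
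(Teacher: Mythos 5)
Your approach is essentially the paper's: the paper likewise reduces the comparison to a two's-complement subtraction $A-B$ (equivalently, the carry of $A+\overline{B}+1$), extracts the sign/carry bit into the output qubit, and uncomputes, invoking the ripple-carry adder of \cite{CDKM04} for the $O(1)$-ancilla, $O(n)$-depth variant and the carry-lookahead adder of \cite{DKRS06} --- which is exactly the quantum parallel-prefix network you describe --- for the $O(n)$-ancilla, $O(\log n)$-depth variant. Your reversibility bookkeeping (compute, copy out, run in reverse) matches the paper's circuit as well, so the structure of the argument is the same, just with the internals of the adders spelled out rather than cited.

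One concrete slip in those internals: your generate/propagate pair $(g_i, p_i) = (\overline{a_i} \wedge b_i,\; a_i \oplus b_i)$ mixes two conventions and, taken literally, computes the wrong function. If the signal being chained is the borrow of $A-B$ (consistent with your generate $\overline{a_i}\wedge b_i$), then a borrow propagates through position $i$ precisely when $a_i = b_i$, so the propagate must be the XNOR $\overline{a_i \oplus b_i}$ (or the inclusive variant $\overline{a_i} \vee b_i$), not the XOR; the XOR propagate is the convention for the carry of the \emph{sum} $A+B$. With your definitions, a position where $a_i=b_i=0$ kills an incoming borrow instead of passing it on: for $A=00$, $B=01$ one gets $g_1 \vee (p_1 \wedge g_0) = 0 \vee (0 \wedge 1) = 0$, i.e.\ the network reports $A \geq B$, which is false. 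Flipping the propagate polarity (or, equivalently, using generate $a_i \wedge \overline{b_i}$ and propagate $a_i \oplus \overline{b_i}$ for the carry of $A + \overline{B} + 1$) fixes this, and the rest of your argument, including the associative prefix operator and the depth/ancilla accounting, goes through unchanged.
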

\begin{proof}
    A comparison $A<B$ can be done by subtracting $A-B$ via twos-complement, reporting the sign of the result (and considering $0$ the same as positive sign), then uncomputing the sum, as depicted in Figure \ref{fig:simpleLessThanCirc}. The addition can be computed with $O(1)$ ancillae and $O(n)$ depth \cite{CDKM04, WLLQW16} or with $O(n)$ ancillae and $O(\log n)$ depth \cite{DKRS06}, proving the claim.
\end{proof}
Again a corollary of this is that the same complexity holds for comparing one register to a fixed number, where the second register in the addition circuit is replaced by some fixed value.

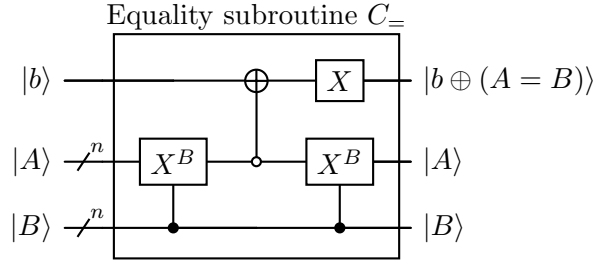
\begin{figure}[H]
    \centering
    \begin{quantikz}
        \lstick{$\ket{b}$} & \qw & \qw \gategroup[3,steps=3,style={inner sep=6pt}]{Equality subroutine $C_=$} &  \targ{} & \gate{X} & \qw \rstick{$\ket{b \oplus (A = B)}$} \\
        \lstick{$\ket{A}$} & \qwbundle[]{n} & \gate{X^B} & \octrl{-1} & \gate{X^B} & \qw \rstick{$\ket{A}$} \\
        \lstick{$\ket{B}$} & \qwbundle[]{n} & \ctrl{-1} & \qw & \ctrl{-1} & \qw \rstick{$\ket{B}$}
    \end{quantikz}
    \caption{Simple equality circuit, where $X^s$ denotes an $X$ on any bit where the corresponding bit in $s$ is $1$; these are transversely controlled in the figure. An open control node on a register indicates a Toffoli with an open control on all bits.}
    \label{fig:simpleEqCirc}
\end{figure}
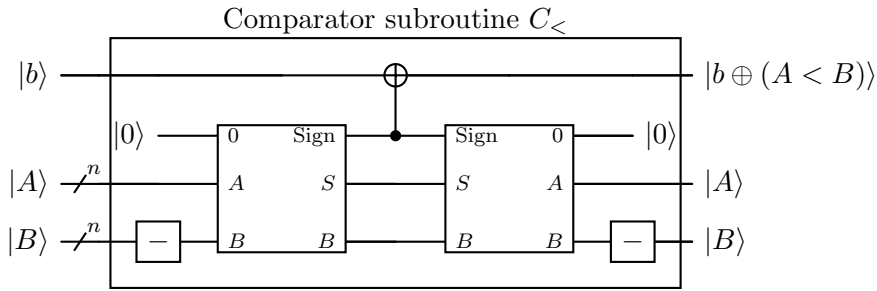
\begin{figure}[H]
    \centering
    \begin{quantikz}
        \lstick{$\ket{b}$} &\qw & \qw \gategroup[4,steps=5,style={inner
sep=6pt}]{Comparator subroutine $C_{<}$}& \qw &  \targ{} & \qw & \qw & \qw \rstick{$\ket{b \oplus (A < B)}$} \\
        \setwiretype{n} & & \lstick{$\ket{0}$} & \gate[3][1.7cm]{}\gateinput{$0$}\gateoutput{Sign}\setwiretype{q} & \ctrl{-1} & \gate[3][1.7cm]{}\gateinput{Sign}\gateoutput{$0$} & \qw \rstick{$\ket{0}$} \\
        \lstick{$\ket{A}$} & \qwbundle[]{n} & \qw &\gateinput{$A$}\gateoutput{$S$} & \qw & \gateinput{$S$}\gateoutput{$A$} & \qw & \qw \rstick{$\ket{A}$} \\
        \lstick{$\ket{B}$} & \qwbundle[]{n} & \gate{-} &\gateinput{$B$}\gateoutput{$B$} & \qw & \gateinput{$B$}\gateoutput{$B$} & \gate{-} & \qw \rstick{$\ket{B}$}
    \end{quantikz}
    \caption{Simple comparison circuit, where ``$-$'' denotes twos-complement negation and the Sign output of the adder (unlabelled box) is $1$ if the result is $<0$, in twos-complement. Wires beginning and ending inside the box are ancillae.}
    \label{fig:simpleLessThanCirc}
\end{figure}
\label{app:compCircs}

\section{Technicalities}

In this appendix, we explain various technicalities related to our encoding. In Appendix \ref{app:ptclePresRotations} we describe how to transform a circuit for applying a Pauli operator to applying the rotation generated by said Pauli operator. In Appendix \ref{app:coherentlyControlled} we outline why our circuit algorithms support coherently controlled operations. Finally, in Appendix \ref{app:fermionOverflow} we describe the behaviour of our encoding when $F$ fermions is exceeded, and how to avoid this possibility.

\subsection{Fermion rotation circuits}
\label{app:ptclePresRotations}
Consider the encoding of a Majorana product $V=\gamma_{i_1}...\gamma_{i_k}$ where $k$ is even and the $i_j$ are distinct. Note that $V$ is hermitian with $V^2=I$ if $k$ is twice an even integer ($k=4,8,etc.$) and otherwise $iV$ is hermitian with $(iV)^2=I$. We give circuits for $V$, which is equivalent to $iV$ up to a global phase; however this phase must be tracked to implement a controlled $iV$, e.g. by performing an $S$ gate on the control wire. We will want to implement either the rotation $\exp(-i\theta V)$ or $\exp(\theta V)$, depending on which is unitary.

Suppose we have a circuit $U$ which satisfies $U^2=I$, which maps to a Majorana product in the way discussed above. We can implement the rotation $\exp(iU\theta)$ using the circuit depicted in Figure~\ref{fig:involRotationCirc}, adapted from Kirby et al.~\cite{KFHM22}. This introduces a constant factor overhead as we need to control the encoded circuits, but we note that in our construction logical operations (comparisons, equality, addition, etc.) do not need to be controlled, as they are always uncomputed; only swaps and register exchanges.
\begin{figure}[H]
    \centering
    \begin{quantikz}
        \setwiretype{n} & \gategroup[2,steps=11,style={inner sep=6pt}]{Subroutine $\exp(iUt)$} & \lstick{$\ket{0}$} & \gate{\mathrm{H}}\setwiretype{q} & \ctrl{1} & \gate{\mathrm{H}} & \gate{R_z(2\theta)} & \gate{\mathrm{H}} & \ctrl{1} & \gate{\mathrm{H}} & \rstick{$\ket{0}$}&  \setwiretype{n} \\
        \lstick{$\ket{\psi}$} & & & \qw & \gate{U} & \qw & \qw & \qw & \gate{U} & \qw & \qw & & 
    \end{quantikz}
    \caption{A circuit for rotations generated by involutory operator $U$, adapted from \cite{KFHM22}. Wires which begin and end inside the box are ancillae.}
    \label{fig:involRotationCirc}
\end{figure}
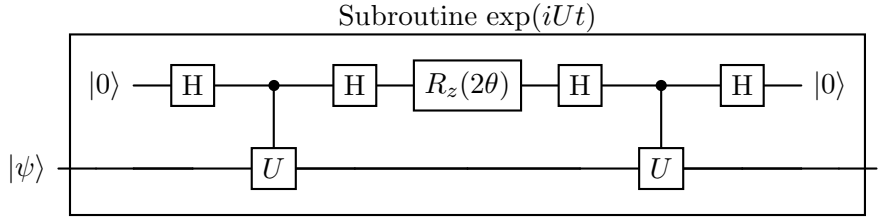
Further, note that this incurs some additional control wires on the ancilla qubit. Note that this does not affect our depth requirement, as we can fan out $O(F)$ copies of the flag such that whenever parallel controlled operations are required there is an available copy. Every algorithm in Section \ref{sec:fewerQubits} and Section \ref{sec:fullEnc} requires at most $O(F)$ many parallel operations to be controlled (note that uncomputed operations, such as comparisons, generically do not need to be controlled), so the number of ancillas remains $O(F)$.

\subsection{Coherently controlled rotations}
\label{app:coherentlyControlled}
In all of the encodings described in this paper, we assumed that the fermion operator is known classically and design a fixed circuit for that fermion operator. However, in Section \ref{subsec:varFewQubLowDepth} and prior, this dependence only enters in terms of comparisons against the binary number representing either (a) the position $p$ of an $\bflip(p, \bb)$ operation or (b) the end-position $p$ of a $\srank(p, \bb)$. It follows that these comparisons could be done against a quantum register using the approach outlined in Lemma \ref{lem:comparisonComplexity} and Lemma \ref{lem:equalityComplexity}. Further, this will not affect the depth or gate complexity of these circuits, and using the fan-out trick described in Lemma \ref{lem:fewQubLowDepEffZp} can be done with $O(F)$ ancilla qubits.

\subsection{Fermion overflows}
\label{app:fermionOverflow}
We note that the limits $F+k$ and $F-k$ on the number of fermions should not be exceeded at any point in the computation; if it is then the circuit behaviour is undefined. To ensure this, to simulate a system with $F$ many conserved fermions and a $k$-local Hamiltonian we can set capacity $F-k$ through $F+k$. Then, any $k$-local set of Majorana operators acts on a state that begins with $F$ many fermions and adds or removes at most $k$, which does not leave the capacity. If we perform the Majorana operators corresponding to a particle preserving rotation, then because these are guaranteed to commute, after all are performed the state will be left with $F$ many fermions. Further, for a $k$-local particle preserving rotation each product of $k$ Majorana operators acts on the same $k$ sites, so their product cannot change the full occupation by more than $k$. Hence simply a capacity $F-k$ through $F+k$ suffices.

If we consider the behaviour of the \textsc{SELECT} oracle described in Section \ref{sec:applications}, note that after many applications of the oracle the state may include contributions from garbage states beyond the $F$ fermion limit. However, so long as $F=F'+k$ (where again $k$ is the Hamiltonian locality and $F'$ is the conserved number of fermions) the \textsc{PREPARE}, \textsc{SELECT}, $\textsc{PREPARE}^\dagger$ protocol used in linear combination of unitary methods still block encodes the Hamiltonian. The undefined behaviour outside the block encoding does not matter for Hamiltonian simulation.

Another noteworthy point is the extra space that setting a larger cap incurs. Whenever $F > \log M$, the additional space used is $O(\log M)$. Furthermore, when $F=\Theta(M)$ the additional space is an additive $O(1)$. We will therefore ignore this point, though it is worth noting that it is an implicit assumption in our efficiency guarantees.

In a similar vein, encoding all Fock states of Hamming weight $\leq F$ (rather than exactly equal to $F$) again incurs some space loss. However, for $F=(0.5-\epsilon)M$ for any constant $\epsilon > 0$ this additional space usage is again an additive $O(1)$, and so we will ignore it.

\section{Circuit Subroutines}
\label{app:circSubroutines}

This section contains a list of circuit subroutines used, their description, and outlines how they could be implemented with the desired efficiency.

\begin{figure}[H]
    \centering
    \begin{quantikz}[wire types={q,q,q,q,n,q,q}, classical gap=0.7mm]
        \lstick{$\ket{0}$} & & & \gate[7][1.5cm]{I_p} \gateinput{$0$}\gateoutput{$f_{del}$} & \ctrl{2} & & \octrl{2} & & \ctrl{6} & \octrl{6} & &  \gate[7][1.5cm]{I_p^\dagger} \gateinput{$f_{del}$}\gateoutput{$1$} & \\
        \lstick{$\ket{0}$} & \qwbundle{\log F} & &{I_p}\gateinput{$0$}\gateoutput{$t$} & \control{} & & \control{} & \gate{U_{+p}} & \control{} & \control{} & \gate{U_{-p}} & \gateinput{$t$}\gateoutput{$0$}& \\
        \lstick{$\ket{l_1}$} & \qwbundle{\log M/F} & & & \gate[4]{L} & & \gate[4]{L^{\dagger}} & & & & &  & \\
        \lstick{$\ket{l_2}$} & \qwbundle{\log M/F} & & & & & & & & & & & \\
        \lstick{$\ldots$} & & & & & & & & & &\\
        \lstick{$\ket{l_F}$} & \qwbundle{\log M/F} & & & & \gate{\infty_l \leftrightarrow p_l}& & & & & & & \\
        \lstick{$\ket{m}$} & \qwbundle{O(F)} & & & & & & & \gate{L} & \gate{L^\dagger} & & & 
    \end{quantikz}
    \caption{The full few-qubit circuit for a $\bflip(p, \bb)$ operation, using initialization subroutine $I$ (Figure \ref{fig:FewQubitInitX}) and list cycling subroutine $L$ (Figure \ref{fig:MSBSwapladder}). Where subroutine $L$ is conditioned on the register $t$, $t$ is taken to be the input $i$ to the subroutine; when controlled on the bit $f_{del}$ this is a standard controlled operation. For a $\log M$ bit number $n$, let $n_m$ be the most significant bits and $n_l$ the least significant.}
    \label{fig:FewQubitX}
\end{figure}

\begin{figure}
    \centering
    \begin{quantikz}[wire types={q,q,q,n,q,q,q}, classical gap=0.7mm, column sep=0.35cm]
        \lstick{$\ket{0}$} & \qw & & & \slice[style=black]{(1): Compute start and end} & \targ{} & \gate{Z} & \targ{} & \targ{} & \gate{Z} & \targ{} \slice[style=black]{(2-3):Apply phases} & & & \\
        \lstick{$\ket{l_0}$} & \qwbundle{\log M/F} & & & & \gate{C_{\leq p}^{(0)'}}\wire[u][1]{q} & & \gate{C_{\leq p}^{(0)'}}\wire[u][1]{q} & & & & & & \\
        \lstick{$\ket{l_1}$} & \qwbundle{\log M/F} & & & & & & & \gate{C_{\leq p}^{(1)'}}\wire[u][2]{q} & & \gate{C_{\leq p}^{(1)'}}\wire[u][2]{q} & & & \\
        \lstick{$\ldots$} & & & & & & & & & \ddots & & & & & & \\
        \lstick{$\ket{m}$} & \qwbundle{O(F)} & & \gate{\#_{p_m}}\wire[d][1]{q} & \gate{\#_{p_m+1}}\wire[d][2]{q} & & & & & & & \gate{\#_{p_m+1}}\wire[d][2]{q} & \gate{\#_{p_m}}\wire[d][1]{q} & \\
        \lstick{$\ket{0}$} & \qwbundle{\log F} & & \targ{} & & \control{} & & \control{} & \control{} & & \control{} & & \targ{} & \\
        \lstick{$\ket{0}$} & \qwbundle{\log F} & & & \targ{} & \ctrl{-5} & & \ctrl{-5} & \ctrl{-5} & & \ctrl{-5} & \targ{} & & 
    \end{quantikz}
    \caption{The full few-qubit circuit for a $\srank(p, \bb)$ operation. Denote the most-significant bits of $p$ as $p_m$ (comparisons against $p_m$ interpret $p_m$ as a $G$ bit number), and the least-significant bits as $p_l$. Registers $l_0, l_1,...$ from the least significant bit array. Register $m$ contains the most-significant bit data.}
    \label{fig:FewQubitZZZ}
\end{figure}

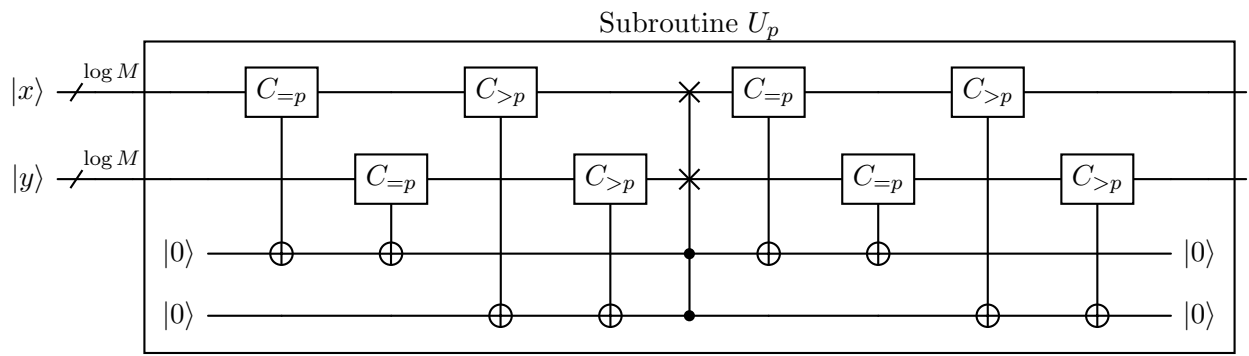
\begin{figure}
    \centering
    \begin{quantikz}
        \lstick{$\ket{x}$} & \qwbundle{\log M} &  & \gategroup[4,steps=13,style={inner sep=6pt}]{Subroutine $U_p$} & & \gate{C_{=p}} \wire[d][2]{q} & & \gate{C_{>p}} \wire[d][3]{q} & & \targX{}\wire[d][3]{q} & \gate{C_{=p}} \wire[d][2]{q} & & \gate{C_{>p}} \wire[d][3]{q} & & & & \\
        \lstick{$\ket{y}$} & \qwbundle{\log M} & & & & & \gate{C_{=p}} \wire[d][1]{q} & & \gate{C_{>p}} \wire[d][2]{q} & \targX{} & & \gate{C_{=p}} \wire[d][1]{q} & & \gate{C_{>p}} \wire[d][2]{q} & & & \\
        \setwiretype{n} & & & & \lstick{$\ket{0}$} &\targ{}\setwiretype{q} & \targ{} & & & \control{} & \targ{} & \targ{} & & & \rstick{$\ket{0}$} & \setwiretype{n} \\
        \setwiretype{n} & & & & \lstick{$\ket{0}$} & \setwiretype{q} & & \targ{} & \targ{} & \control{} & & & \targ{} & \targ{} & \rstick{$\ket{0}$} & \setwiretype{n}
    \end{quantikz}
    \caption{Subroutine $U_p$ swaps registers $x, y$ if one is equal to $p$ and the other is greater than $p$, using two ancillas. Note that $U_p = U_p^{\dag}$.}
    \label{fig:UpSubroutineForBubble}
\end{figure}

\begin{figure}
    \begin{center}
        \begin{quantikz}[wire types={q,n}, classical gap=0.7mm]
            \lstick{$\ket{x}$} & \qwbundle{\log M} & & & \gate{X^{a}}\gategroup[2,steps=9,style={inner sep=6pt}]{Subroutine $a \leftrightarrow b$} & \ocontrol{1}\wire[d][1]{q} & \gate{X^{a+b}} & \ocontrol{1}\wire[d][1]{q} & \gate{X^{a+b}} & \ocontrol{1}\wire[d][1]{q} & \gate{X^{a+b}} & \ocontrol{1}\wire[d][1]{q} & \gate{X^{a}} & \\
            & & & & \lstick{$\ket{0}$} & \targ{}\setwiretype{q} & & \targ{} & \control{}\wire[u][1]{q} & \targ{} & & \targ{} & \rstick{$\ket{0}$} & \setwiretype{n} &
        \end{quantikz}
    \end{center}
    \caption{A circuit to exchange two arbitrary classical basis states $\ket{a}$ and $\ket{b}$, where $X^s$ denotes an $X$ on any bit where the corresponding bit in $s$ is $1$. This circuit is denoted as $\boxed{a\leftrightarrow b}$.}
    \label{fig:exchangeClassicalsCircuit}
\end{figure}
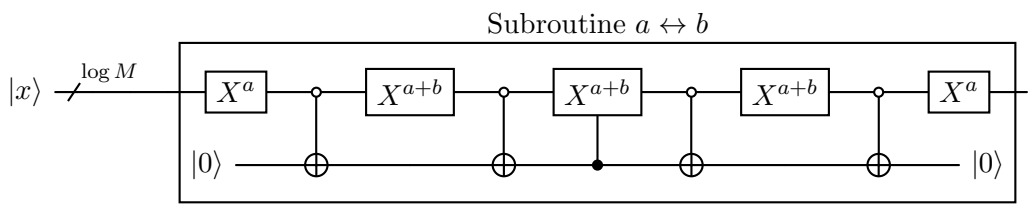

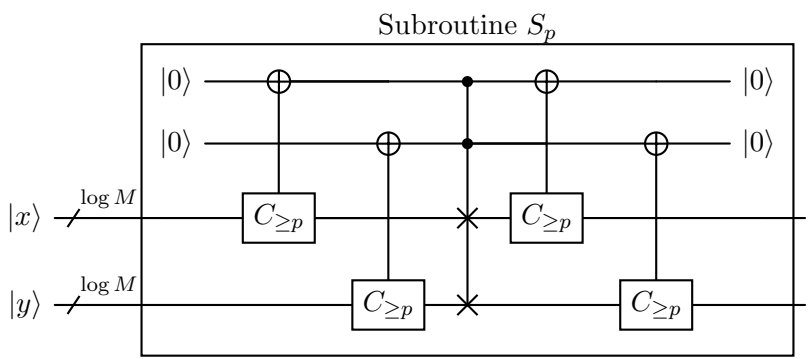
\begin{figure}
    \centering
    \begin{quantikz}[wire types={n,n,q,q}, classical gap=0.7mm]
        & & &\gategroup[4,steps=9,style={inner sep=6pt}]{Subroutine $S_p$} & \lstick{$\ket{0}$} & \targ{}\setwiretype{q} & \qw & \ctrl{2} & \targ{} & & \rstick{$\ket{0}$} & \setwiretype{n}& \\
        & & & & \lstick{$\ket{0}$} & \setwiretype{q} & \targ{} & \control{} &\qw  & \targ{} & \rstick{$\ket{0}$}& \setwiretype{n} & \\
        \lstick{$\ket{x}$} & \qwbundle{\log M} & & & & \gate{C_{\geq p}}\wire[u][2]{q} & \qw & \swap{1} & \gate{C_{\geq p}}\wire[u][2]{q} & & & & \\
        \lstick{$\ket{y}$} & \qwbundle{\log M} & & & & & \gate{C_{\geq p}}\wire[u][2]{q} & \targX{} & & \gate{C_{\geq p}}\wire[u][2]{q} & & &
    \end{quantikz}
    \caption{Subroutine $S_p$ for low depth. Swaps two registers $x, y$ if they are both greater than or equal to $p$.}
    \label{fig:swapIfGreaterP}
\end{figure}

\begin{figure}
    \centering
    \begin{quantikz}[wire types={n,n,q,q,q}, classical gap=0.7mm]
        & & & \gategroup[5,steps=9,style={inner sep=6pt}]{Subroutine $E_p$} & \lstick{$\ket{0}$} & \targ{}\setwiretype{q} & \qw & \ctrl{3}& \qw & \targ{} & \rstick{$\ket{0}$} & \setwiretype{n} \\
        & & & & \lstick{$\ket{0}$} & \setwiretype{q} & \targ{} & \control{} & \targ{} & & \rstick{$\ket{0}$} & \setwiretype{n} \\
        \lstick{$\ket{x}$} & \qwbundle{\log M} & & & & \gate{C_{<p}}\wire[u][2]{q}& \qw & \qw & \qw & \gate{C_{<p}}\wire[u][2]{q} & \qw & & \\
        \lstick{$\ket{y}$} & \qwbundle{\log M} & & & & \qw & \qw& \gate{p \leftrightarrow \infty} & \qw & \qw & \qw & & \\
        \lstick{$\ket{z}$} & \qwbundle{\log M} & & & & \qw & \gate{C_{>p}}\wire[u][3]{q} & \qw  & \gate{C_{>p}}\wire[u][3]{q} & & & &
    \end{quantikz}
    \caption{Subroutine $E_p$ for low depth. If the first register $x$ is $x<p$ and the third register $z$ is $z>p$, it exchanges the middle register $y: p \leftrightarrow \infty$.}
    \label{fig:delPLocal}
\end{figure}
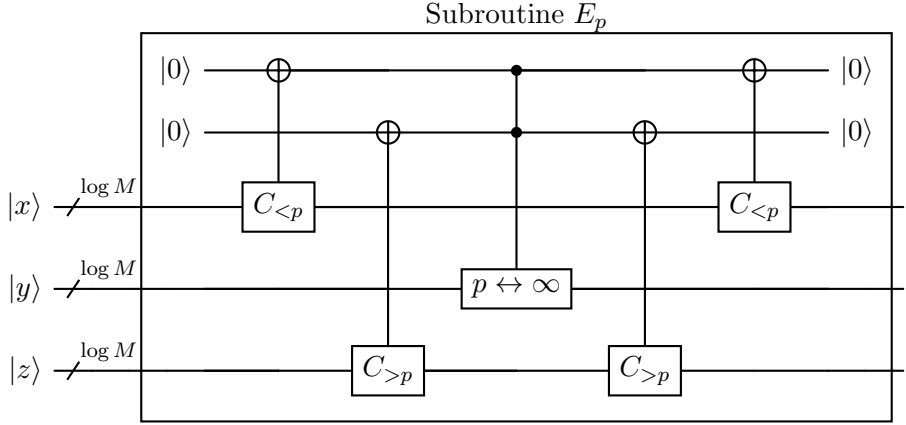

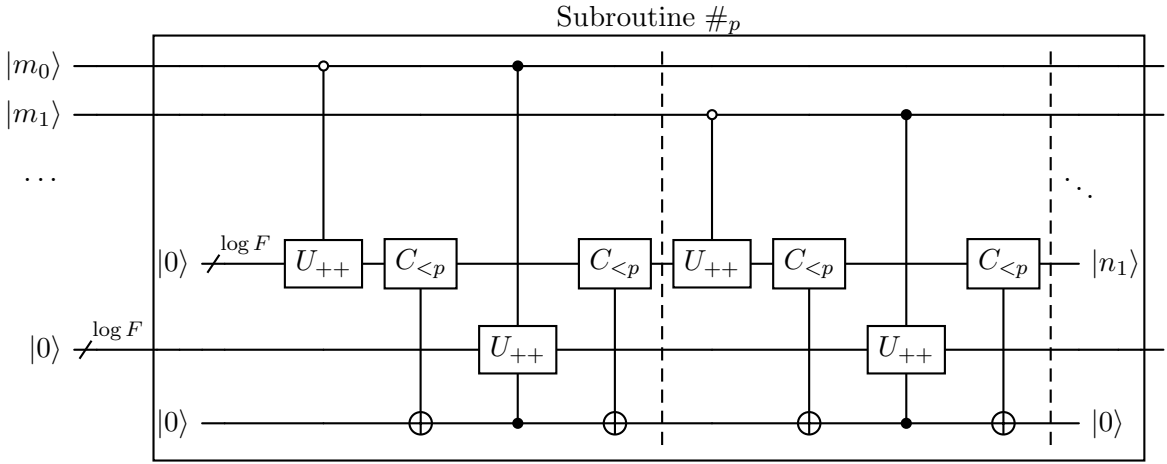
\begin{figure}
    \centering
    \begin{quantikz}[wire types={q,q,n,n,q,n}, classical gap=0.7mm, column sep=0.3cm]
        \lstick{$\ket{m_0}$} & &[0.5cm] &\gategroup[6,steps=13,style={inner sep=6pt}]{Subroutine $\#_p$} & & &[0.5cm] \octrl{3} & & \ctrl{4} & \slice[style=black]{}& & & & \slice[style=black]{} & & & & \\
        \lstick{$\ket{m_1}$} & & & & & & & & & & \octrl{2} & & \ctrl{3} & & & & & \\
        \lstick{\ldots} & & & & & & & & & & & & & & \ddots & \\
        & & && \lstick{$\ket{0}$} & \qwbundle{\log F}\setwiretype{q} & \gate{U_{++}} & \gate{C_{<p}}\wire[d][2]{q} & & \gate{C_{<p}}\wire[d][2]{q} & \gate{U_{++}} & \gate{C_{<p}}\wire[d][2]{q} & & \gate{C_{<p}}\wire[d][2]{q} & \rstick{$\ket{n_1}$} & \setwiretype{n} \\
        \lstick{$\ket{0}$} & \qwbundle{\log F} & & & & & & & \gate{U_{++}} & & & & \gate{U_{++}} & & & & & \\
        & & & & \lstick{$\ket{0}$} & \setwiretype{q}& & \targ{} & \ctrl{-1} & \targ{} & & \targ{} & \ctrl{-1} & \targ{} & \rstick{$\ket{0}$} & \setwiretype{n} 
    \end{quantikz}
    \caption{Subroutine $\#_p$ for counting the number of $1$'s before the $p$-th $0$, where $U_{++}$ increments the register by $1$ and dashed lines separate the steps for each bit. The ancilla register counts the number of $0$'s in the whole list, so it will always end in state $\ket{n_1}$ for $n_1=2^{\lceil \log F\rceil}-1$ (the number of bins minus one) and can therefore be uncomputed.}
    \label{fig:MSBScanningSubroutine}
\end{figure}

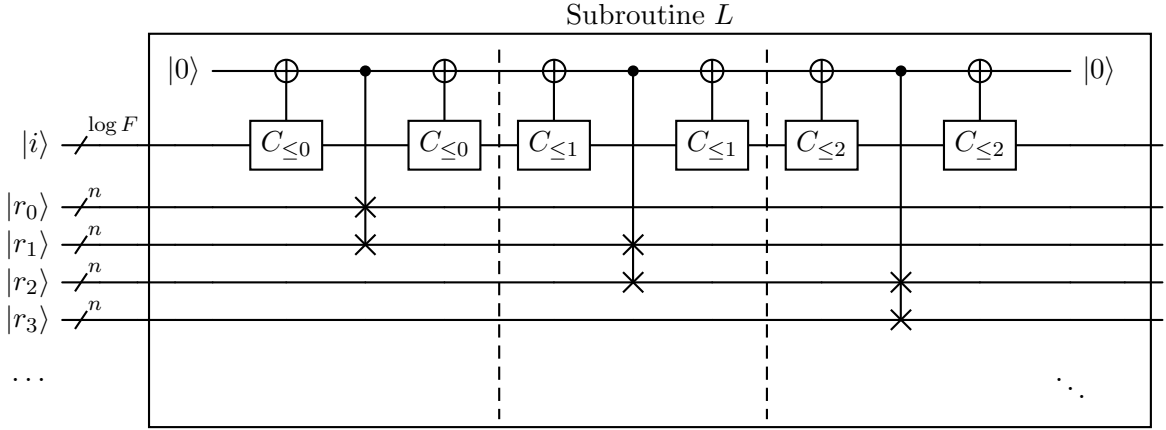
\begin{figure}
    \centering
    \begin{quantikz}[wire types={n,q,q,q,q,q,n}, classical gap=0.7mm]
        & & & \gategroup[7,steps=13,style={inner sep=6pt}]{Subroutine $L$} & \lstick{$\ket{0}$} & \targ{}\setwiretype{q} & \ctrl{2} & \targ{} \slice[style=black]{}& \targ{} & \ctrl{3} & \targ{}\slice[style=black]{} & \targ{} & \ctrl{4} & \targ{} & \rstick{$\ket{0}$} & \setwiretype{n} & \\
        \lstick{$\ket{i}$} & \qwbundle{\log F} & & & & \gate{C_{\leq 0}}\wire[u][1]{q} & & \gate{C_{\leq 0}}\wire[u][1]{q}& \gate{C_{\leq 1}}\wire[u][1]{q} & & \gate{C_{\leq 1}}\wire[u][1]{q} & \gate{C_{\leq 2}}\wire[u][1]{q} & & \gate{C_{\leq 2}}\wire[u][1]{q} & & & \\
        \lstick{$\ket{r_0}$} & \qwbundle{n} & & & & & \swap{1} & & & & & & & & & & \\
        \lstick{$\ket{r_{1}}$} & \qwbundle{n} & & & & & \targX{} & & & \swap{1} & & & & & & & \\
        \lstick{$\ket{r_{2}}$} & \qwbundle{n} & & & & & & & & \targX{} & & & \swap{1} & & & & \\
        \lstick{$\ket{r_{3}}$} & \qwbundle{n} & & & & & & & & & & & \targX{} & & & & \\
        \lstick{\ldots} & & & & & & & & & & & & & & \ddots & & 
    \end{quantikz}
    \caption{Subroutine $L$ for shuffling the $i$-th element ($i$ given in the top register) of a list of registers of arbitrary size to the end (or when inverted, shuffle the last element to the $p$-th position), otherwise maintaining order.}
    \label{fig:MSBSwapladder}
\end{figure}

\begin{figure}
    \centering
    \begin{quantikz}[wire types={q,q,q,n,n,n,n}, classical gap=0.7mm]
        \lstick{$\ket{r_i}$} & \qwbundle{\log M/F} &[0.4cm] &\gategroup[7,steps=12,style={inner sep=6pt}]{Subroutine $C_{\leq p}^{(i)'}$} & & \gate{C_{\leq p_l}}\wire[d][3]{q} & & & & & & & \gate{C_{\leq p_l}}\wire[d][3]{q} & & & \\
        \lstick{$\ket{p_{start}}$} & \qwbundle{\log F} & & & & & \gate{C_{\geq i}}\wire[d][3]{q} & & & & & \gate{C_{\geq i}}\wire[d][3]{q} & & & & \\
        \lstick{$\ket{p_{end}}$} & \qwbundle{\log F} & & & & & & \gate{C_{<i}}\wire[d][3]{q} & & & \gate{C_{<i}}\wire[d][3]{q} & & & & & \\
        & & & & \lstick{$\ket{0}$} & \targ{}\setwiretype{q} & & & \ctrl{3} & & & & \targ{} & \rstick{$\ket{0}$} & \setwiretype{n} & \\
        & & & & \lstick{$\ket{0}$} & \setwiretype{q} & \targ{} & & \control{}& \octrl{2} & & \targ{} & & \rstick{$\ket{0}$} & \setwiretype{n} & \\
        & & & & \lstick{$\ket{0}$} & \setwiretype{q} & & \targ{} & \control{}& & \targ{} & & & \rstick{$\ket{0}$} & \setwiretype{n} & \\
        \lstick{$\ket{0}$} & \setwiretype{q} & & & & & & & \targ{} & \targ{} & & & & & & 
    \end{quantikz}
    \caption{Subroutine $C_{\leq p}^{(i)'}$ for comparing the $i$-th (in $0$ based indexing) register to $p$, given the least-significant bits as well as a start/end range of registers matching the most-significant bits of $p$. $p_l$ refers to the least significant bits of $p$. Note that $i$ is a fixed register, so the circuit can depend on $i$ even inside circuits that are coherently controlled. The result is placed into the last wire. Other logical comparisons in this format can be done similarly.}
    \label{fig:LSBComparisons}
\end{figure}
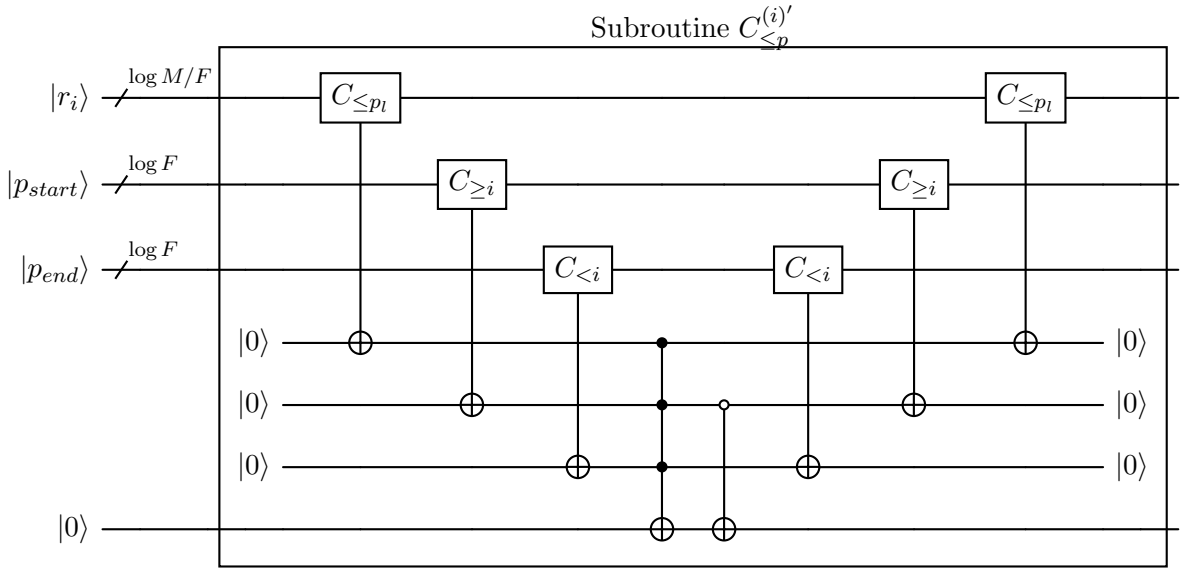

\begin{figure}
    \centering
    \begin{quantikz}[wire types={q,q,q,q,q,n,q,n,n}, classical gap=0.7mm, column sep=0.3cm]
        \lstick{$\ket{0}$} & &[1.3cm] \gategroup[9,steps=14,style={inner sep=6pt}]{Subroutine $I_p$} & & & & \slice[style=black]{} & \targ{} \slice[style=black]{}& & & & \slice[style=black]{}& & & & & & \\
        \lstick{$\ket{0}$} & \qwbundle{\log F} & & & & & & & \gate{X^{(0)}} & \gate{X^{(1)}} & \gate{X^{(2)}} & & & & & & & \\
        \lstick{$\ket{l_0}$} & \qwbundle{\log M/F} & & & & & & \gate{C_{=p}^{(0)'}}\wire[u][2]{q} & \gate{C_{\geq p}^{(0)'}}\wire[u][1]{q} & \gate{C_{<p}^{(0)'}}\wire[u][1]{q} & & & & & & & & \\
        \lstick{$\ket{l_1}$} & \qwbundle{\log M/F} & & & & & & \gate{C_{=p}^{(1)'}}\wire[u][1]{q} & &\gate{C_{\geq p}^{(1)'}}\wire[u][1]{q} & \gate{C_{< p}^{(1)'}}\wire[u][2]{q} & & & & & & & \\
        \lstick{$\ket{l_2}$} & \qwbundle{\log M/F} & & & & & & \gate{C_{=p}^{(2)'}}\wire[u][1]{q} & & & \gate{C_{\geq p}^{(2)'}}\wire[u][1]{q}& & & & & & & \\
        \lstick{$\ldots$} & & & & & & & & & & &\ddots & & & & &  & & &  \\
        \lstick{$\ket{m}$} & \qwbundle{O(F)} & & & &  \gate{\#_{p_m}}\wire[d][1]{q} & \gate{\#_{p_m+1}}\wire[d][2]{q} & & & & & & \gate{\#_{p_m+1}}\wire[d][2]{q} & \gate{\#_{p_m}}\wire[d][1]{q} & & & & \\
        & & & \lstick{$\ket{0}$} & \qwbundle{\log F} \setwiretype{q}& \targ{} & & \control{} &\control{} &\control{} &\control{} & & & \targ{} & \rstick{$\ket{0}$} &\setwiretype{n} & & \\
        & & & \lstick{$\ket{0}$} & \qwbundle{\log F}\setwiretype{q} & & \targ{} & \ctrl{-4} & \ctrl{-6} &\ctrl{-5} &\ctrl{-4} & & \targ{} & &\rstick{$\ket{0}$} & \setwiretype{n} & & & 
    \end{quantikz}
    \caption{Subroutine $I_p$ for initializing the $f_{del}$ flag (top wire) and target (second from top wire) used in the few qubit encoding of an $X_p$. Denote the most-significant bits of $p$ as $p_m$ (comparisons against $p_m$ interpret $p_m$ as a $G$ bit number), and the least-significant bits as $p_l$. }
    \label{fig:FewQubitInitX}
\end{figure}
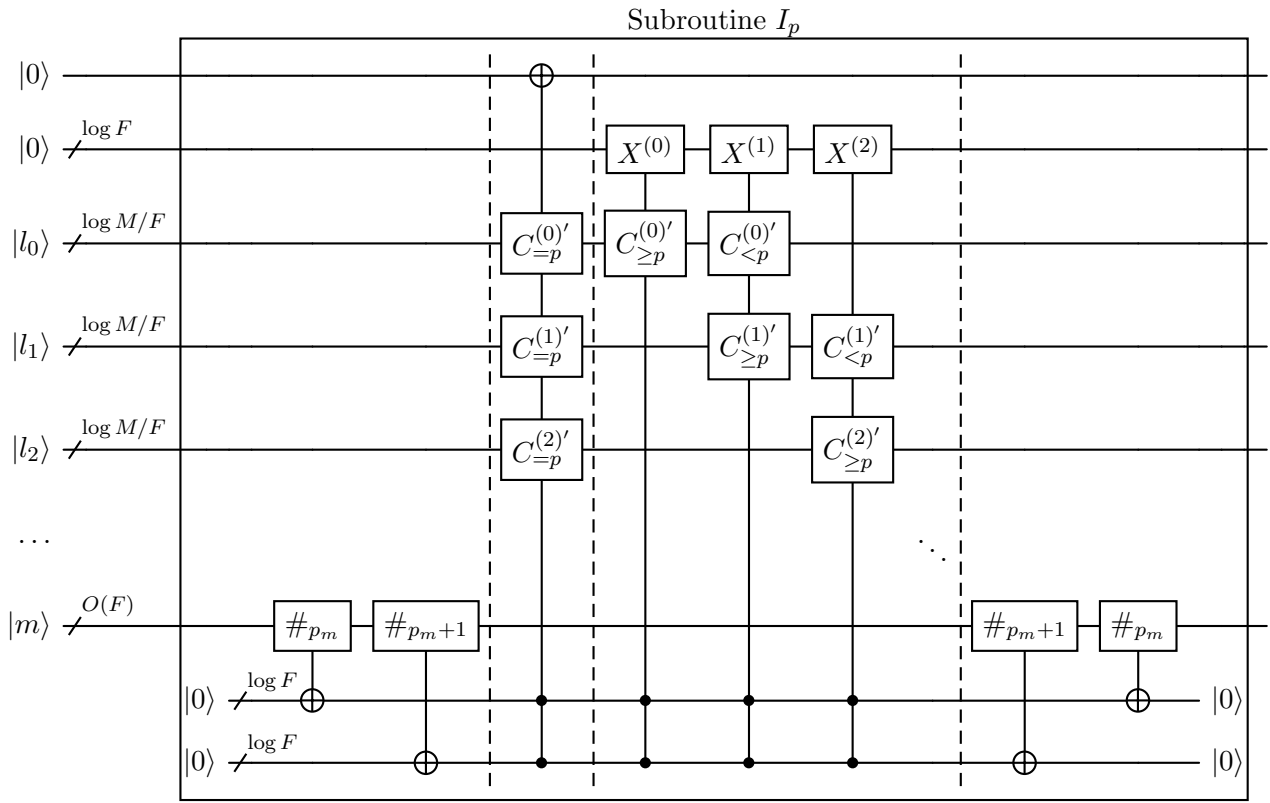

\begin{figure}
    \centering 
    \begin{quantikz}[classical gap=0.7mm, column sep=0.3cm, row sep=0.7cm]
        \lstick{$\ket{o_{in}}$} & \qwbundle{\log (b-a)} & & [1cm]\gategroup[7,steps=9,style={inner sep=6pt}]{Subroutine $T_p^{(d)}$} & & & & & \gate[2]{C_{S_{a,b} < o_{in}}}\wire[d][6]{q}& & \ctrl{4} & \gate[2]{o_{in}-S_{a,b}}\wire[d][6]{q} & & \rstick{$\ket{o_{in}}$} \\
        \lstick{$\ket{S_{a, b}}$} & \qwbundle{\log (b-a)} & & & & & & & & & & & & \rstick{$\ket{S_{a,b}}$} \\
        \lstick{$\ket{f_{in}}$} & & & & & & & & \control{} & \ctrl{4} & & & & \rstick{$\ket{f_{in}}$} \\
        \setwiretype{n} & & & & \lstick{$\ket{0}$} & \qwbundle{\log \left(\frac{b-a}{2}\right)}\setwiretype{q} & & & & & \targ{} & & & \rstick{$\ket{o_l}$} \\
        \setwiretype{n} & & & & \lstick{$\ket{0}$} & \setwiretype{q} & & & & \targ{} & \control{} & & & \rstick{$\ket{f_l}$} \\
        \setwiretype{n} & & & & \lstick{$\ket{0}$} & \qwbundle{\log \left(\frac{b-a}{2}\right)}\setwiretype{q} & & & & & & \targ{} & & \rstick{$\ket{o_r}$} \\
        \setwiretype{n} & & & & \lstick{$\ket{0}$} & \setwiretype{q} & & & \targ{} & \ocontrol{} & & \control{} & & \rstick{$\ket{f_r}$}
    \end{quantikz}
    \caption{Subroutine $T_p^{(d)}$ for walking down the succinct tree to find the $p$-th $1$ from the left. Marks the leaf whose sub-array has the $p$-th $1$, as well as the $p$-th $1$'s rank (i.e. number of preceding $1$'s) within said sub-array. Gates controlled on a bit and a register and targeting a register will copy the controlled register into the target if the control bit is on. When there is a register size mismatch, only the least significant bits are placed.}
    \label{fig:succinctTreeCountWalkdown}
\end{figure}
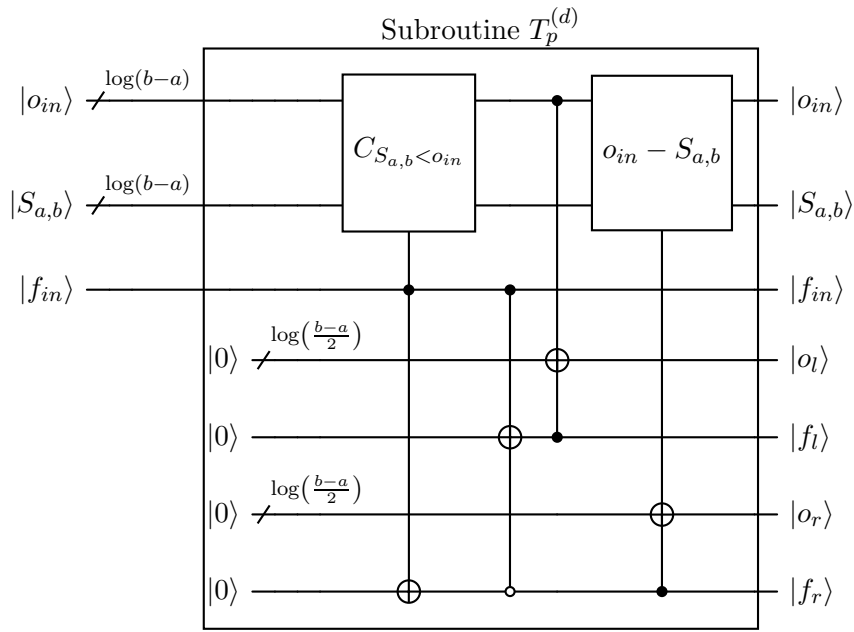

\begin{figure}
    \centering
     
    \begin{quantikz}[classical gap=0.7mm, column sep=0.3cm, row sep=0.7cm]
        \setwiretype{n} & & & \gategroup[10,steps=12,style={inner sep=6pt}]{Subroutine $T_p^{(u)}$} & \lstick{$\ket{0}$} & \qwbundle{\log \left(b-a\right)}\setwiretype{q} & & & &\swap{2} & \gate{U_{+(b-a)}} & \swap{1}& & & & & \rstick{$\ket{i}$} \\
        \lstick{$\ket{i_l}$} & \qwbundle{\log (b-a)} & \hphantomgate{vwide} & & & & & & & & & \targX{}& & \rstick{$\ket{0}$} & \setwiretype{n} &  \\
        \lstick{$\ket{i_r}$} & \qwbundle{\log (b-a)} & & & & & & & & \targX{} & & & & \rstick{$\ket{0}$} & \setwiretype{n} &  \\
        \lstick{$\ket{o_{in}}$} & \qwbundle{\log (b-a)} & & & & & & & & & & & \gate[7]{(T_p^{(d)})^\dagger} & & & & \rstick{$\ket{o}$} \\
        \lstick{$\ket{S_{a, b}}$} & \qwbundle{\log (b-a)} & & & & & & & & & & & & & & & \rstick{$\ket{S_{a,b}}$} \\
        \lstick{$\ket{f_{in}}$} & & & & & & & & & & & & & & & & \rstick{$\ket{f_{in}}$} \\
        \lstick{$\ket{o_l}$} & \qwbundle{\log \left(\frac{b-a}{2}\right)} & & & & & & & & & & & & \rstick{$\ket{0}$} & \setwiretype{n} &  \\
        \lstick{$\ket{f_l}$} & & & & & & & & & & & \ctrl{-6} & & \rstick{$\ket{0}$} \\
        \lstick{$\ket{o_r}$} & \qwbundle{\log \left(\frac{b-a}{2}\right)}\setwiretype{q} & & & & & & & & & & & & \rstick{$\ket{0}$} & \setwiretype{n} &  \\
        \lstick{$\ket{f_r}$} & \setwiretype{q} & & & & & & & & \ctrl{-7} & \ctrl{-9} & & & \rstick{$\ket{0}$} & \setwiretype{n} & 
    \end{quantikz}
    \caption{Subroutine $T_p^{(u)}$ for walking up the succinct tree recursively to find the $p$-th $1$ from the left. Combines the position $i$ from the left $(i_l)$ or right $(i_r)$ subtree with the fixed offset to determine position in the union of the two subtrees, and uncomputes intermediate information. Swaps between registers of different sizes swap only the first bits of the larger register.}
    \label{fig:succinctTreeCountWalkup}
\end{figure}
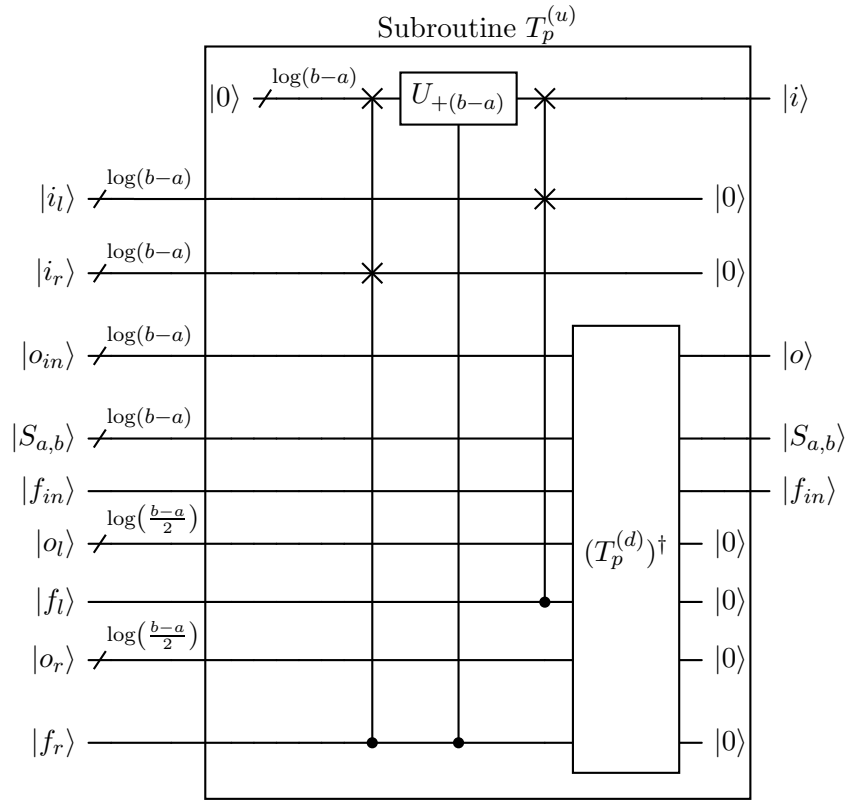

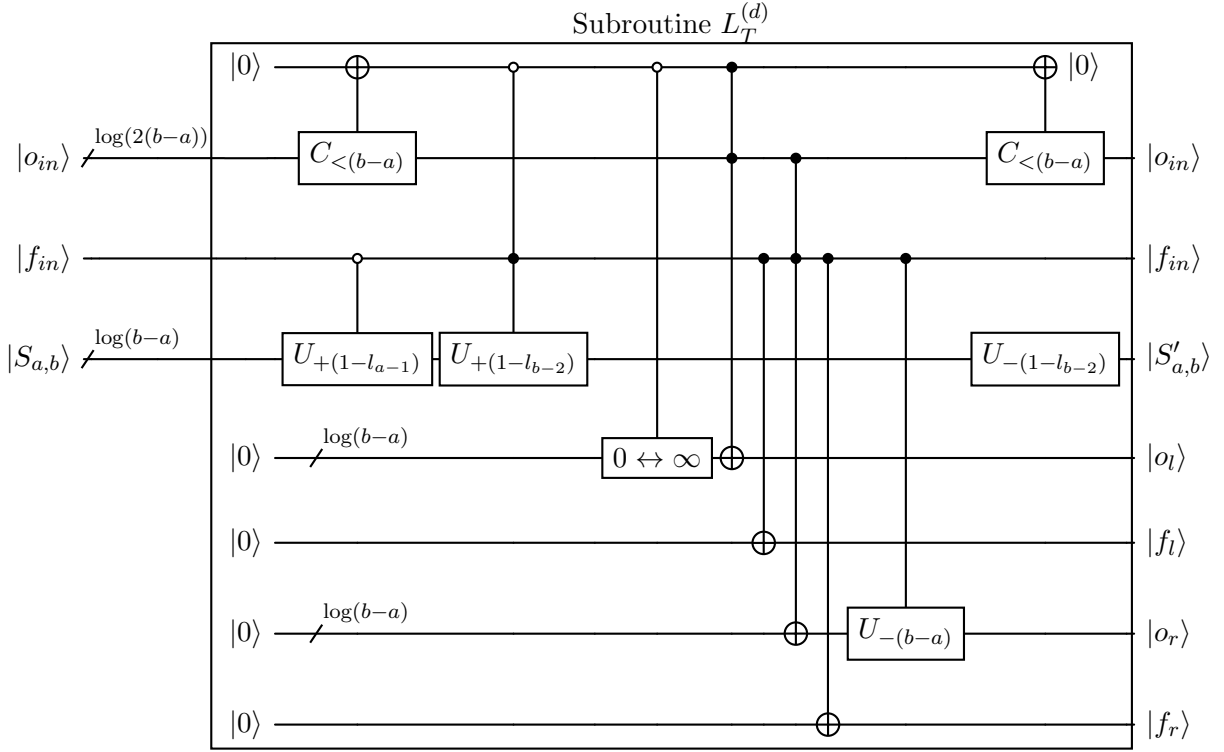
\begin{figure}
    \centering
     
    \begin{quantikz}[classical gap=0.7mm, column sep=0.1cm, row sep=0.7cm]
        \setwiretype{n} & & & \gategroup[8,steps=12,style={inner sep=1pt}]{Subroutine $L_T^{(d)}$} & \lstick{$\ket{0}$} & \targ{} \setwiretype{q} & \octrl{3} & & \octrl{4}& \ctrl{1} & & & & & \targ{} \rstick{$\ket{0}$}& \setwiretype{n} & \\
        \lstick{$\ket{o_{in}}$} & \qwbundle{\log (2(b-a))} & & \hphantomgate{v} & & \gate{C_{<(b-a)}}\wire[u][1]{q} & & & & \ctrl{3} & & \ctrl{5} & & & \gate{C_{<(b-a)}}\wire[u][1]{q} & & \rstick{$\ket{o_{in}}$} \\
        \lstick{$\ket{f_{in}}$} & &\hphantomgate{verywide} & & & \octrl{1} & \control{} & & & & \ctrl{3} & \control{} & \ctrl{5} & \ctrl{4} & & & \rstick{$\ket{f_{in}}$} \\
        \lstick{$\ket{S_{a, b}}$} & \qwbundle{\log (b-a)} & & & & \gate{U_{+(1-l_{a-1})}} & \gate{U_{+(1-l_{b-2})}} & & & & & & & & \gate{U_{-(1-l_{b-2})}} & & \rstick{$\ket{S_{a,b}'}$} \\
        \setwiretype{n} & & & & \lstick{$\ket{0}$} & \qwbundle{\log \left(b-a\right)}\setwiretype{q} & & & \gate{0 \leftrightarrow \infty} & \targ{} & & & & & & & \rstick{$\ket{o_l}$} \\
        \setwiretype{n} & & & & \lstick{$\ket{0}$} & \setwiretype{q} & & & & & \targ{} & & & & & & \rstick{$\ket{f_l}$} \\
        \setwiretype{n} & & & & \lstick{$\ket{0}$} & \qwbundle{\log \left(b-a\right)}\setwiretype{q} & & & & & & \targ{} & & \gate{U_{-(b-a)}} & & & \rstick{$\ket{o_r}$} \\
        \setwiretype{n} & & & & \lstick{$\ket{0}$} & \setwiretype{q} & & & & & & & \targ{} & & & & \rstick{$\ket{f_r}$}
    \end{quantikz}
    \caption{Subroutine $L_T^{(d)}$ for walking down the succinct tree when called recursively to shift a prefix of the array downwards. $f_{in}$ denotes whether the offset lies to the right of position  $a$, and if $f_{in}=\top$ then $o_{in}$ is how far to the right of $a$ the rotated position is (or $\infty$ if it is beyond $b$). Note that a gate controlled on a register XOR's each bit of the register into the target; if the target is smaller then only the least significant bits are XORed. This subroutine adjusts sublist sums depending on the elements which will be shifted, and propagates the relevant information to left (left flag $f_l$ and left offset $o_l$) and right (right flag $f_r$ and right offset $o_r$) subtrees. Further, define $L_T^{(u)}$ to be the inverse of $L_T^{(d)}$, except with every gate touching $S_{a,b}$ removed. In this way, $L_T^{(u)}$ uncomputes the flag and offset used in $L_T^{(d)}$, while leaving the $S_{a,b}$ registers unaffected.}
    \label{fig:succinctTreeShiftWalkdown}
\end{figure}

\end{document}